\theoremstyle{plain}
\newtheorem{theorem}{Theorem}[section]
\theoremstyle{definition}
\newtheorem{definition}[theorem]{Definition}
\newtheorem{example}[theorem]{Example}
\newtheorem{lemma}[theorem]{Lemma}
\newtheorem{thm}[theorem]{Theorem}
\newtheorem{prop}[theorem]{Proposition}
\newtheorem{postulate}{\textcolor{orange}{Postulate}}
\newtheorem{postulateapp}{\textcolor{orange}{Postulate}}
\newtheorem{postulateqsfirst}{\textcolor{orange}{Postulate}}
\newtheorem{postulateqsentropy}{\textcolor{orange}{Postulate}}
\numberwithin{equation}{section}
\numberwithin{figure}{section}
\def\@xfootnote[#1]{%
  \protected@xdef\@thefnmark{#1}%
  \@footnotemark\@footnotetext}
\title{ 
Tangible phenomenological thermodynamics
}
\author{\normalsize Philipp Kammerlander and Renato Renner \\
\normalsize{Institute for Theoretical Physics, ETH Z\"urich, Wolfgang-Pauli-Str. 27, 8093 Z\"urich, Switzerland}}
\date{\normalsize \today}
\begin{document}

\maketitle

\begin{abstract}\noindent
In this paper, the foundations of classical pheno\-meno\-logical thermo\-dynamics are being thoroughly re\-visited.
A new rigorous basis for thermodynamics is laid out in the main text and presented in full detail in the appendix. 
All relevant concepts, such as work, heat, internal energy, heat reservoirs, reversibility, absolute temperature and entropy, are introduced on an abstract level 
and connected through traditional results, such as Carnot's Theorem, Clausius' Theorem and the Entropy Theorem.
The paper offers insights into the basic assumptions one has to make in order to formally introduce a phenomenological thermodynamic theory.
This contribution is of particular importance 
when applying phenomenological thermodynamics to systems, such as black holes, where the microscopic physics is not yet fully understood.
%
%
%
Altogether, this work can serve as a basis for a complete and rigorous introduction to thermodynamics in an undergraduate course which follows the traditional lines as closely as possible. 
\end{abstract}

\newpage

\begin{multicols}{2}
\tableofcontents
\end{multicols}

\newpage


\part*{Introduction}
\addcontentsline{toc}{part}{Introduction}
\label{sec:intro}

The first steps towards a theoretical formulation of thermo\-dynamics have been made more than 200 years ago \cite{Clausius50, Rankine50, Kelvin51, Clausius54, Maxwell71} and today we are still applying the theory to 
systems which are studied in the context of extending our currently best microscopic theories.
An example are black holes, with which we hope to find out how to come up with a theory of quantum gravity \cite{Bardeen73}.
In the past centuries the applications of thermodynamics have moved from steam engines and locomotives to systems of almost all orders of magnitude in size and levels of complexity, from single photons to solar systems and galaxy clusters.
And even with a focus on the traditional foundations, for the moment ignoring the most current investigations, it is worth having a closer look at the basis of a theory
which is so universally applicable.

\subsection*{Motivation}
\addcontentsline{toc}{subsection}{Motivation}

Among the many books that have been written and lectures held about the subject \cite{Carnot24, Planck97, Caratheodory09, Fowler39, Planck14, Fermi56, Feynman63, Giles64, Buchdahl66, Zemansky68, Pauli73, Adkins83, Callen85, Huang87, Neumaier07, Thess11, Hulse18, Graf05}, the two common ways to introduce thermodynamics are the \emph{phenomenological} and the \emph{statistical} approaches.
In the former, the laws of thermodynamics are postulated and the theory is built on them.
In the latter, the macroscopic thermodynamic properties of a system are derived by investigating its microscopic degrees of freedom using methods from statistical physics, e.g.\ \cite{Salem06}.
In particular, when taking the statistical approach, one of the main goals is to derive the fundamental laws from microscopic considerations. 
The work presented here is solely concerned with the phenomenological paradigm. Microscopic properties of a system may be intuitively used at times, but only to give intuition in certain examples, if at all.\\

Even when restricted to the phenomenological approach there exist many different views on the foundations of the theory and how it should be introduced. 
This may be unimportant for applications e.g.\ in engineering, but the different and seemingly contradicting alternatives are confusing and unsatisfactory from a conceptual point of view.
Modern literature on systematic approaches to thermodynamics is scarce. An exception is the approach presented in \cite{LY99,LY02,Thess11}.
However, in numerous conversations with fellow physicists the authors realized that their uneasy feeling about the foundations of phenomenological thermodynamics, and especially about how it is taught in undergraduate courses, is shared by many others.
Hence this paper.\\

Besides the educational value, the increasing interest in the new research field of quantum thermodynamics is another good reason for looking more deeply into the basics of classical phenomenological thermodynamics. 
Recently various ideas have been proposed on how to use thermodynamics for a description of small (quantum) systems. In particular, a considerable amount of effort has gone into the quest for finding thermodynamic applications where quantum systems can outperform their classical counterparts.
Nevertheless, the community is still far from reaching an agreement on the conclusions drawn from a quantum analysis of thermodynamic systems. 
An example is the ongoing controversy about the definition of work in the presence of quantum coherent states \cite{Talkner07, Aberg14, Perarnau17}.
In the authors' view this is also because a fair comparison between classical and quantum thermodynamics is bound by the limited agreement on the basics of the classical theory itself.\\

In this paper we will introduce a new way of laying out the foundations of thermodynamics allowing for a derivation of the theory along the general lines of the traditional approach.
Starting from the very basic concepts such as systems, processes and states, we develop a rigorous language to formulate the first and second laws of thermodynamics, which can be seen as the core postulates of the theory. 
These are then used to prove powerful standard results like Carnot's Theorem or Clausius' Theorem, which in turn allow us to define thermodynamic entropy with all its very useful properties.\\

Besides the achievement of making all basic thermodynamic concepts precise without losing contact to their intuitive meanings, this paper offers insights into the basic assumptions one has to make in order to formally introduce a phenomenological thermodynamic theory. 
For instance, we show that the zeroth law is redundant as a postulate, and we capture precisely what properties a heat reservoir must have. 
We also introduce a definition of quasistatic processes arising from initially discrete considerations which allows one to use the usual formulas for computing internal energy and entropy differences over piecewise continuously differentiable paths in the state space of a thermodynamic system.

\subsection*{Outline of the paper}
\addcontentsline{toc}{subsection}{Outline of the paper}

In Section~\ref{sec:systemsprocessesstates} the notions of thermodynamic \emph{systems}, \emph{processes} and \emph{states} are introduced. 
Section~\ref{sec:work} introduces the \emph{work cost} of thermodynamic processes and related notions, whereas Section~\ref{sec:firstlaw} uses all terms introduced up to this point to state the \emph{first law} and derive the \emph{internal energy} function from it. 
In Section~\ref{sec:equivalentsys} we intuitively describe the notion of what it means for two systems to be \emph{copies} of one another, which is made formally precise in the appendix. 
Section~\ref{sec:heat} treats the notions \emph{heat} and \emph{heat reservoirs} which are then used in Section~\ref{sec:secondlaw} to state the \emph{second law} and briefly discuss its immediate consequences.
The more complex but also more important consequences of the second law, namely \emph{Carnot's Theorem} and \emph{absolute temperature}, are discussed in Section~\ref{sec:carnot} and Section~\ref{sec:abstemp}, respectively.
Using the definition of absolute temperature for heat reservoirs it is then possible to define the \emph{absolute temperature of a heat flow} in Section~\ref{sec:theatflow}, which finally allows us to state and prove \emph{Clausius' Theorem} and introduce \emph{thermodynamic entropy} in Section~\ref{sec:clausius}. 

Having introduced the very basics of every thermodynamic theory, the paper then offers several insights into more common applications and advances the abstract framework in order to apply it to traditional settings. 
In Section~\ref{sec:quasistatic} the discrete considerations are extended to account for quasistatic processes with associated continuous and differentiable quantities. 
This is followed by the discussion of the notion of temperature for other systems than heat reservoirs in Section~\ref{sec:tsys}. 
The thoroughly worked out example of the ideal gas is presented in Section~\ref{sec:exideal}, 
while further explorations into topics such as scaling thermodynamic systems and the principle of maximum entropy is presented in Section~\ref{sec:scaling}.


\newpage
\part*{Thermodynamics}
\addcontentsline{toc}{part}{Thermodynamics}

\section{Systems, processes and states}
\label{sec:systemsprocessesstates}

\begin{center}
\fcolorbox{OliveGreen}{white}{\begin{minipage}{0.98\textwidth}
\centering
\begin{minipage}{0.95\textwidth}
\ \\
\textcolor{black}{\textbf{Postulates:}}
thermodynamic systems, thermodynamic processes and states, concatenation of processes\\
\ \\
\textcolor{black}{\textbf{New notions:}}
 atomic systems, involved, composition of systems, 
state space\\
\ \\
\textcolor{black}{\textbf{Technical results}} for this section can be found in Appendices~\ref{app:systems} and \ref{app:processesstates}.

\paragraph{\textcolor{black}{Summary:}}
Through postulates the notions {systems}, {composition of systems}, {thermodynamic processes} and {their input and output states} are introduced. A further postulate makes sure that two thermodynamic processes 
can be {concatenated}.
From these postulates further notions, among them {disjoint systems}, {subsystems} and {state space} are defined. 
The states of composite systems are defined by the states of their subsystems. 
Basic properties of the composition operation are discussed as well as the correspondence of systems and composition with finite set theory and set union. 
\vspace{.1cm}

\end{minipage}
\end{minipage}}
\end{center}

\ \\

We start by introducing all basic principles and postulates from which we will derive the consequences for the theory step by step. At times it is going to be abstract. Table \ref{tab:example} together with other examples discussed in the text compensates for the technical nature of this section by listing the main concepts together with different examples for an easier intuitive understanding. This table is recycled from our previous paper on the zeroth law \cite{Kammerlander18}.

Postulates are in orange. Definitions, lemmas, propositions, theorems and other derived concepts appear with black titles. \\

The section on the basic principles is about the notions of thermodynamic systems, thermodynamic processes and thermodynamic states. Admittedly, we could also call them systems, processes and states at this point, without the specification ``thermodynamic'', as there is nothing thermodynamic about them yet. The thermodynamic aspects will come in later.
At this point, we are concerned with the very basic notions needed to formulate thermodynamic assumptions, such as the first and second laws.


\subsection{Thermodynamic systems}

Sometimes physical systems are said to be a ``subset of the universe''. For the purposes presented here this view, taken literally, is valid. 
We think of systems as finite non-empty subsets of the ``world''. Composing two systems corresponds to uniting the sets.
Singletons are seen as indivisible systems.

\begin{postulate}[Thermodynamic systems]
\label{post:systems}
The \emph{thermodynamic world} is a 
set $\Omega$ 
and the \emph{set of thermodynamic systems} consists of finite non-empty subsets of the thermodynamic world, $\mathcal{S}:= \{ S\subset\Omega \,|\, 0<|S|<\infty \}$.
\end{postulate}
%

The structure of systems as sets is illustrated in Figure~\ref{fig:systems}. The natural way to \emph{compose} two systems is to unite the sets. The system composed from $S_1\in\mathcal{S}$ and $S_2\in\mathcal{S}$ is denoted by $S=S_1\vee S_2$. 
For the \emph{composition of systems} we introduce the notation $\vee$ instead of $\cup$, which is normally used to denote the union of sets. 
However, $\vee$ is nothing else but the union $\cup$ applied to finitely many representing sets of the systems.

\newpage

\vspace*{-2cm}
\begin{table}[h!]
\hskip-2cm
\begin{tabular}{p{2.3cm} | p{5cm} p{5.5cm} p{5cm}}
& $1$ mole of ideal gas $A$ 
& two ideal gases $S=A\vee A'$, $A\hat=A'$
& water tank, $N\gg1$ moles, $\tilde R$ \\
\hline
intuitive \newline description 
& A container filled with an ideal gas. It is possible to read off the pressure inside the container and one can vary the volume by moving a piston.
& Two such ideal gases composed. They can still be addressed individually but now they could also be thermally connected. The system is described by the composition $A\vee A'$.
& A very large water tank may be microscopically complex but is here described in the simplest non-trivial way, by its energy. It is an approximate reservoir. \\ 
 & & & \\
 state space  \newline $\Sigma_S$
& $\Sigma_{A} = (\mathbbm{R}_{>0})^2 \ni \sigma = (p,V)  $ \newline pressure and volume\protect\footnotemark 
& $\Sigma_{S} = (\mathbbm{R}_{>0})^4 \ni \sigma = (p_1,V_1,p_2,V_2)$ \newline both pressures and volumes 
& $\Sigma_{\tilde R} = [E_{\rm min},E_{\rm max}] \ni \sigma = E$ \newline (internal) energy\protect\footnotemark \\
 & & & \\
processes \newline (examples \newline from $\mathcal{P}$) 
& $\bullet$ connecting the gas thermally \newline
\phantom{$\bullet$} to another system (irreversible \newline
\phantom{$\bullet$} in general) \newline 
\phantom{$\bullet$} $(p_{\rm in}, V_{\rm in}) \mapsto (p_{\rm out}, V_{\rm out}=V_{\rm in})$ \newline 
$\bullet$ isothermal expansion or com- \newline
\phantom{$\bullet$} pression (generally reversible) \newline 
\phantom{$\bullet$} $(p_{\rm in}, V_{\rm in}) \mapsto (p_{\rm out}, V_{\rm out})$ s.t.\newline \phantom{$\bullet$} \,$p_{\rm in}V_{\rm in} = p_{\rm out}V_{\rm out}$  \newline
$\bullet$ Carnot cycle with the gas 
& $\bullet$ all processes from the left column \newline
\phantom{$\bullet$} applied to one of the gases individ- \newline
\phantom{$\bullet$} ually \newline 
$\bullet$ thermally connecting the first gas \newline
\phantom{$\bullet$} to the second and the second to a \newline
\phantom{$\bullet$} reservoir (typically irreversible) 
& 
$\bullet$ connecting the tank thermally \newline 
\phantom{$\bullet$} to another system, thereby let- \newline 
\phantom{$\bullet$} ting them exchange energy (ir- \newline 
\phantom{$\bullet$} reversible in general) \newline
$\bullet$ using the tank in a thermal en- \newline 
\phantom{$\bullet$} gine together with a cyclic ma- \newline
\phantom{$\bullet$} chine and another tank \\
 & & & \\
work processes \newline (examples \newline from $\mathcal{P}_S$) 
& $\bullet$ shaking the gas, or applying \newline
\phantom{$\bullet$} friction to it (irreversible) \newline 
\phantom{$\bullet$} $(p_{\rm in}, V_{\rm in}) \mapsto (p_{\rm out}, V_{\rm out}=V_{\rm in})$ \newline 
\phantom{$\bullet$} with $p_{\rm out} > p_{\rm in}$ \newline 
$\bullet$ expanding or compressing \newline
\phantom{$\bullet$} the otherwise perfectly iso- \newline
\phantom{$\bullet$} lated gas\protect\footnotemark (reversible, if done \newline
\phantom{$\bullet$} optimally) \newline 
\phantom{$\bullet$} $(p_{\rm in}, V_{\rm in}) \mapsto (p_{\rm out}, V_{\rm out})$ s.t. \newline \phantom{$\bullet$} $p_{\rm in}V_{\rm in}^\gamma = p_{\rm out}V_{\rm out}^\gamma$ 
& $\bullet$ all processes from the left column \newline
$\bullet$ thermally connecting the two \newline
\phantom{$\bullet$} gases (irreversible in general) \newline
$\bullet$ isolating the gases and compress- \newline
\phantom{$\bullet$} ing one while expanding the other \newline
\phantom{$\bullet$} (reversible) \newline
$\bullet$ applying friction to one of the \newline
\phantom{$\bullet$} gases while thermally connecting \newline
\phantom{$\bullet$} them, thereby heating up the \newline
\phantom{$\bullet$} other one, too (irreversible) 
& $\bullet$ raising the internal energy of \newline
\phantom{$\bullet$} $\tilde R$  by doing work on it: this \newline 
\phantom{$\bullet$} could be done for instance by \newline
\phantom{$\bullet$} shaking it, applying friction, \newline
\phantom{$\bullet$} doing electrical work by let- \newline 
\phantom{$\bullet$} ting a current flow through a \newline 
\phantom{$\bullet$} resistance\\
 & & & \\
work function & For classical mechanical systems the infinitesimal work done is always $\delta W = \vec{F}\cdot{\rm d}\vec{s}$. When shaking the gas or applying friction $\vec{F}$ is the applied force and d$\vec{s}$ the line segment of the trajectory along which the force acts. If the volume is changed without friction the infinitesimal work done can be simplified to $\delta W = -p{\rm d}V$.
& The total work done on the composite system is the sum of the amounts of work done on each individual system, $\delta W = \delta W_A + \delta W_{A'}$, where $\delta W_X$ is defined as in the left column for both gases individually.
& Depending on the  mechanism that is used to heat up the tank the work function will look different. For instance, in the case of increasing the internal energy by means of an electrical current $I(t)$ flowing through a resistance $Z$ the work done on $R$ can be written as $W = \int ZI^2 {\rm d}t$.

\end{tabular}
\caption{The basic concepts such as states, state spaces, processes, work processes, composition of systems, work cost functions and (an approximation of) reservoirs are illustrated with simple examples, recycled from a previous paper \cite{Kammerlander18}.}
\label{tab:example}
\end{table}

\vspace{-.6cm}

\addtocounter{footnote}{-2}
\footnotetext{We think of a truly ideal gas that remains an ideal gas even if $V$ becomes very small and $p$ very high. In the case of a real gas one could restrict the state space further, e.g.\ to $\Sigma_S = (0,p_{\rm max})\times(V_{\rm min},\infty)$.}


\addtocounter{footnote}{1}
\footnotetext{If one wants to use the tank as a reservoir, think of the energy range $[E_{\rm min},E_{\rm max}]$ as a relatively small interval in a much larger spectrum of the system. As long as the energy is in this interval, the tank can exchange energy with other systems while keeping its properties and behaviour relative to other systems invariant (to good approximation). Intuitively one can always achieve this by taking a large enough system. 
When taking the limit towards an infinitely large water tank, $N\rightarrow\infty$ such that \smash{$\tfrac{N}{V} = \rm const.$} ($V$ the volume) $R'$ becomes an exact reservoir. 
For instance, when choosing \smash{$E_{\rm min / max} \propto \pm\sqrt{N}$},
in this limit the state space $\Sigma_{R'}$ will be the real line, i.e.\ $E_{\rm min / max} \rightarrow \mp\infty$ but still occupy only a very small part of the actual spectrum of the tank.
Reservoirs should be thought of as infinitely large systems that do not change their behaviour when exchanging finite amounts of energy with other systems.}

\addtocounter{footnote}{1}
\footnotetext{A compression or expansion of a perfectly isolated ideal gas leads to the known state changes as given in the table, where $\gamma=\tfrac{c_p}{c_V}$ is the heat capacity ratio.}
\newpage

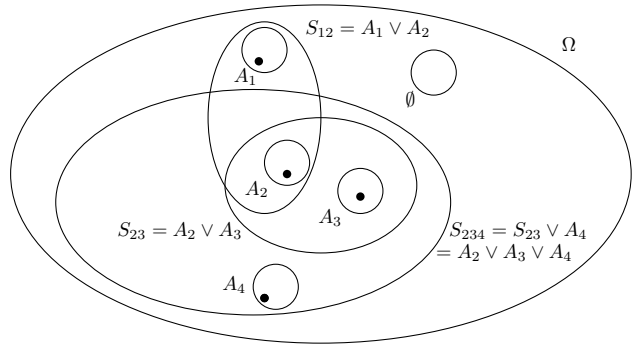
\begin{figure}
\hskip-1cm
\begin{center}
	\begin{tikzpicture}[scale=0.75, every node/.style={transform shape}]	
	
	\draw[] (0,0) ellipse (5.5cm and 3cm);
	\node at (4.4,2.3) {$\Omega$};
	
	\draw[] (-1,2.2) ellipse (.4cm and .4cm);
	\node at (-1.1,2) [circle,fill,inner sep=1.5pt]{};
	\node at (-1,2) [below left]{$A_1$};
	
	\draw[] (2,1.8) ellipse (.4cm and .4cm);
	\node at (1.8,1.6) [below left]{$\emptyset$};
	
	\draw[] (0,-.2) ellipse (1.7cm and 1.2cm);
	\node at (-2.5,-1) {$S_{23} = A_2\vee A_3$};

	\draw[] (-.6,.2) ellipse (.4cm and .4cm);
	\node at (-.6,0) [circle,fill,inner sep=1.5pt]{};
	\node at (-.8,0) [below left]{$A_2$};

	\draw[] (.7,-.3) ellipse (.4cm and .4cm);
	\node at (0.7,-.4) [circle,fill,inner sep=1.5pt]{};
	\node at (0.5,-.5) [below left]{$A_3$};

	\draw[] (-1.2,-.5) ellipse (3.5cm and 2cm);
	\node at (3.5,-1) {$S_{234}=S_{23}\vee A_4$};
	\node at (3.25,-1.4) {$=A_2\vee A_3\vee A_4$};
	
	\draw[] (-.8,-2) ellipse (.4cm and .4cm);
	\node at (-1,-2.2) [circle,fill,inner sep=1.5pt]{};
	\node at (-1.2,-1.7) [below left]{$A_4$};

	\draw[] (-1,1) ellipse (1cm and 1.7cm);
	\node at (-1,-2.2) [circle,fill,inner sep=1.5pt]{};
	\node at (-.4,2.3) [above right]{$S_{12}= A_1 \vee A_2$};

	\end{tikzpicture}
\end{center}
\caption{Abstract illustration of the structure of thermodynamic systems:
Systems are finite non-empty subset of the thermodynamic world $\Omega$. 
The empty set is not a thermodynamic system.
Singletons are called atomic systems and are usually called $A_i$ to indicate this in the notation. 
Just like the elements of a set determine the set, the atomic systems contained in a system (as subsets) determine the system. 
Multi-element subsets of $\Omega$ are composed systems.
The composition operation for systems is nothing else but the union of the sets which the systems represent. 
}
\label{fig:systems} 
\end{figure}

Since systems are finite sets, the composition operation applied to finitely many systems will output a system again.
To simplify the notation for the composition of more than two systems we sometimes write
\begin{align}
S = S_1\vee \cdots \vee S_n
\equiv \bigvee_{S\in\{S_1,\dots,S_n\}} S 
\equiv \bigvee \{ S_1,\dots,S_n \} \,.
\end{align}

The set of singletons of $\Omega$ denoted by $\mathcal{A} := \{ A\subset\Omega \,|\, |A|=1 \}$ is called the \emph{set of atomic systems}. 
This set is a subset of the set of thermodynamic systems. Non-atomic systems are those that are composed of more than one but finitely many different atomic systems, whereas atomic systems are considered indivisible, i.e.\ they are not composed of other systems. 

However, one should not take this as a statement about physical indivisibility. Rather it is an abstract structure the user of the theory needs to choose.
As an example, one user could define a container of gas as an atomic system. Another user might say that there are different compartments of gas within this container, which could have different pressures and could be separated into further subsystems. In the latter case, the system would not be atomic. 
The two users would then work with different thermodynamic theories.

Another aspect shown by this example is the fact that an atomic system need not be small. The gas container could contain an arbitrary amount of substance and still be seen as an atomic system.\\

The concept of an atomic system is helpful for the definition and postulates regarding basic concepts such as work or states. When defining and postulating these for atomic systems and in addition specifying how they behave under composition, these concepts automatically extend to arbitrary thermodynamic systems.\\


Just like we can compose two systems with $\vee$ (i.e.\ by set union) we can as well consider the \emph{intersection} of two systems. This corresponds to taking the intersection of the sets. referring to Figure~\ref{fig:systems}, we see for instance that 
$S_{12}\wedge S_{23} = A_2$.

Postulate~\ref{post:systems} requires that the empty set is not a thermodynamic system. To some extent, this is a matter of taste. We do not want ``nothing'' to be called a system, as it would be difficult to argue that this ``nothing'' has states on which processes can act. Without states and processes acting on them, one can not do anything physical with it, hence we can exclude it from the set of system from the beginning.

Nevertheless, we write $S_1\wedge S_2 = \emptyset$ to say that the two systems $S_1,S_2\in\mathcal{S}$ are \emph{disjoint}. This equality should be seen as notation only, as $\emptyset$ is not a system.
As an example, in Figure~\ref{fig:systems} we see that $A_1\wedge S_{23}=\emptyset$. \\

Given a system $S\in\mathcal{S}$, its \emph{subsystems} are defined as its non-empty subsets, i.e.\
the set of subsystems $\mathrm{Sub}(S) := \{ S'\in\mathcal{S} \ | \ S'\subset S \}$.

The set of subsystems $\mathrm{Sub}(S)$ contains $S$ itself since $S\vee S=S$. In particular, it does not only contain the proper subsystems and is therefore always non-empty. 
The subsystems of a system $S\in\mathcal{S}$, which are in addition atomic, are called \emph{atomic subsystems} and are summarized in the set $\mathrm{Atom}(S) := \{ A\in\mathcal{A} \,|\, A\in\mathrm{Sub}(S) \}$.
Examples from Figure~\ref{fig:systems} are $S_{23}\in\mathrm{Sub}(S_{234})$ and $A_1\in\mathrm{Atom}(S_{12})$.
Just like elements of a set determine the set, the atomic subsystems of a system determine the system.
In a more formal notation that is, for any $S\in\mathcal{S}$ it holds
\begin{align}
S = \bigvee \mathrm{Atom}(S)\,.
\end{align}

We conclude the abstract introduction of systems with some comments. 
First, a thermodynamic system is interpreted as a specific physical system rather than a type of system. For instance, $A_1$ may stand for the specific box containing one mole of gas standing in front of us, whereas $A_2$ may be the label for a box filled with half a mole of another gas standing in a friend's lab.\footnote{
In a previous paper \cite{Kammerlander18} the introduction and interpretation of systems was different. Instead thinking of systems as \emph{types}, we now think of them as specific instances.}
Nevertheless, one needs to be able to talk about systems of the same type, for instance when the two boxes in the different labs are identical in construction and contain the same amount of the same gas. The notion of two different systems being of the same type will be introduced later in Section~\ref{sec:equivalentsys} based on their thermodynamic properties. When two systems of the same type are discussed, their labels $A_1$ and $A_2$ can be seen as their names, while their type is their thermodynamic DNA.\footnote{
Identical twins need names, too. It is not possible to distinguish them just by their DNA.}

Second, we think of a single thermodynamic system $S$ as closed. Nevertheless, it is not necessary to formalize the term ``closed system'' as the coming postulates make sure that the properties of a thermodynamic system are such that the intuition one has about closed systems is captured also in this framework. 

Third, it is important to keep in mind that composing two systems does not mean that they are thermally or otherwise coupled. So far composition is independent of any thermodynamic properties. Two systems composed are simply described as one new system.
When disjoint systems are composed, we speak of \emph{disjoint composition}. 
In this case the composition is akin to the tensor product structure from quantum theory, where $\mathcal{H}_1\otimes\mathcal{H}_2$ allows for a composite description of the two systems with associated Hilbert spaces 
$\mathcal{H}_1$ and $\mathcal{H}_2$ but does not make any statements about possible interactions between the subsystems.
The composition of disjoint systems is the one that is more familiar with the usual ways of composing two systems in physics.
Many concepts concerning composition that will be introduced are about disjoint composition.

%

\subsection{Thermodynamic processes and states}

After having introduced the structure of systems, atomic systems and subsystems, as well as their interrelations, we move on to the postulate about thermodynamic processes and states.

\begin{postulate}[Thermodynamic processes and states]
\label{post:processesstates}
The non-empty \emph{set of thermodynamic processes} (also simply \emph{processes}) that the theory allows for is denoted by $\mathcal{P}$.
A thermodynamic process $p\in\mathcal{P}$ specifies the \emph{initial and final states} of a finite and non-zero number of
\emph{involved} atomic systems $A\in\mathcal{A}$ by means of 
the functions 
$\lfloor\cdot\rfloor_A: \mathcal{P}\rightarrow\Sigma_A$ 
and $\lceil\cdot\rceil_A: \mathcal{P}\rightarrow\Sigma_A$.  
If $A$ is \emph{not involved} the two functions are undefined.
\end{postulate}

The notion of \emph{being involved} is sometimes also phrased the other way around. That is, instead of saying that $A\in\mathcal{A}$ is involved in $p\in\mathcal{P}$, we may say that $p$ \emph{acts on} $A$.\\

Thermodynamic processes are thought of as procedures
but do not necessarily correspond to curves in the state space. 
However, for atomic systems that are involved in a process $p\in\mathcal{P}$, the process has a well-defined initial and final state captured by the functions $\lfloor \cdot \rfloor_A$ and
$\lceil \cdot \rceil_A$, respectively.
Since these functions are always either both defined or both undefined, we will only talk about the input state whenever we discuss the question whether a certain atomic system is involved in the process or not.
For an atomic system $A\in\mathcal{A}$ the co-domain of the functions $\lfloor \cdot \rfloor_A$ and $\lceil \cdot \rceil_A$ is the union of their well-defined outputs and denoted by $\Sigma_A$.
It is called the \emph{set of states of $A$}. \\

We want the set of states of different atomic systems to be disjoint.
W.l.o.g.\ if $A_1,A_2\in\mathcal{A}$ are different, $A_1\neq A_2$, then $\Sigma_{A_1}\cap\Sigma_{A_2}=\emptyset$. If this was not the case, simply extend the descriptions of the states with a unique label for each atomic system, e.g. its name. 
This assumption will allow us to easily extend the definition of state changes for states of composite systems. It also simplifies the analysis of state spaces of equivalent systems in Section~\ref{sec:equivalentsys}. \\

We do not explicitly require the states to be what is usually called an ``equilibrium state'' \cite{Callen85,Huang87,Thess11}.
The way we introduce the theory, thermodynamics does not rely on a basic notion of ``equilibrium''. 
Without the necessity to define this term, we are able to treat states as thermodynamic states which could intuitively be seen as non-equilibrium states. For instance, one could think of a system that has fluctuations on the time scale of $\sim 1\mathrm{ms}$, but thermodynamic processes on this system happen on a time scale of $\sim 1\mathrm{h}$. Intuitively this fluctuating state might not be counted as equilibrium, but in the context of processes that take orders of magnitudes more time, it is reasonable to work with average values, which are then constant over the longer time scale. 

Even if the fluctuations are visible ripples on the water surface of a swimming pool (i.e.\ they can be detected) the thermodynamic processes may ignore this and the state of the pool is a valid thermodynamic state.

Although the term ``equilibrium'' is not defined as a physical concept, thermodynamic states cannot be anything. For instance, a necessary property is that they must allow for identity processes (Definition~\ref{def:id}.
This will follow from the first law (Postulate~\ref{post:first}) and it implies that it is always possible to leave a thermodynamic state invariant.
In our view, it is an asset of the framework that it does not have to formally define the term ``equilibrium'' and is nevertheless able to capture all relevant properties of ``equilibrium states''.  \\

We are not assuming that a set $\Sigma_A$ contains a continuum of states. This may very well be the case for many examples, but it is not a necessary assumption for the theory.
The concept of a state is rather abstract up to here. It will become more comprehensible with further postulates and results derived from them. \\

Since processes are seen as procedures there may very well exist more than one process with the same input and output states. The initial and final states of a thermodynamic process do not contain all thermodynamically relevant information associated with it. It is helpful to keep in mind this view of processes in the remainder of the paper.\\



For an arbitrary thermodynamic system $S=A_1\vee\cdots\vee A_n\in\mathcal{S}$ with pairwise disjoint atomic systems $\{A_i\}_{i=1}^n$ the state change under a thermodynamic process $p\in\mathcal{P}$ is given by the state changes on its atomic subsystems in terms of the functions
\begin{align}
\label{eq:statecomp}
\begin{split}
\lfloor \cdot \rfloor_S :\,\, &\mathcal{P} \rightarrow \Sigma_S\\
&p \, \mapsto \lfloor p \rfloor_S := \{ \lfloor p \rfloor_{A_1},\dots,\lfloor p \rfloor_{A_n} \}
\end{split}
\end{align}
and likewise for $\lceil \cdot \rceil_S$. States of a non-trivially composite system are called \emph{joint states}.


As a consequence of this definition and the fact that different atomic systems have disjoint state spaces, the state space of a disjointly composite system can be seen as the Cartesian product of its subsystems, up to ordering.
Furthermore, the initial state on $\lfloor p \rfloor_S$ is defined if and only if $\lfloor p \rfloor_A$ is defined for all atomic subsystems $A\in\mathrm{Atom}(S)$.
This, on the other hand, is the case if and only if the final states $\lceil p \rceil_A$ are defined for all atomic subsystems $A\in\mathrm{Atom}(S)$ and thus $\lceil p \rceil_S$ is defined if and only if $\lfloor p \rfloor_S$ is.\\

To simplify the notation we use the same symbol for joint states as we used for composite systems.
That is, we write 
$\lfloor p \rfloor_S \equiv \bigvee_{A\in\mathrm{Atom}(S)} \lfloor p \rfloor_A \equiv 
\lfloor p \rfloor_{A_1}\vee\cdots\vee\lfloor p \rfloor_{A_n}$ 
for $S=A_1\vee\cdots\vee A_n$. 
For two disjoint systems $S_1\wedge S_2=\emptyset$ with $\sigma_1\in\Sigma_{S_1}$ and $\sigma_2\in\Sigma_{S_2}$ this reads 
$\sigma_1\vee\sigma_2 = \sigma_2\vee\sigma_1\in\Sigma_{S_1\vee S_2}$.
The fact that $\vee$ is commutative when composing states is no issue as state spaces of different systems are disjoint, i.e.\ it is always clear which symbol ($\sigma_1$ or $\sigma_2$) describes the state of which subsystem ($S_1$ or $S_2$).\\

Notice that the terms \emph{involved} and \emph{not involved} cannot be directly extended to arbitrary systems. These consist of atomic systems, but it could happen that an atomic subsystem is involved in a process, while another one is not. An arbitrary system containing the two would then intuitively be involved. However, its state change is undefined, as there is at least one atomic subsystem whose state change is undefined.\\


The view of composite systems as independent but collectively described subsystems is manifest in the representation of the state space of composite systems. It allows for all combinations of subsystem states, not just for pairs which are in ``equilibrium'' with each other -- a term that is anyway not defined at this point. 
Consequently, states of subsystems in composite systems can still be separately treated. 
The subsystems' states 
$\sigma_1\in\Sigma_{S_1}$ and $\sigma_2\in\Sigma_{S_2}$ can be extracted as the corresponding entries of the ``tuple'' $\sigma_1\vee \sigma_2 \in\Sigma_{S_1\vee S_2}$.
In particular, this implies that if $\lfloor p \rfloor_S$ 
is defined for a (composite) system $S$, then $\lfloor p \rfloor_{S'}$ 
is automatically well-defined for all subsystems $S'\in\mathrm{Sub}(S)$ as well. 
Likewise, if all subsystems of $S$ have a well-defined state change in some thermodynamic process, then so does $S$. 

On the other hand, the structure of state spaces also makes clear that if a proper subsystem $S'\in\mathrm{Sub}(S)$ has well-defined state changes in a thermodynamic process $p\in\mathcal{P}$ this does not imply that the same holds for $S$. 
There might exist other subsystems of $S$ with undefined state change.\\

We do not aim at describing correlations between thermodynamic systems in this framework. The Cartesian product of two state spaces does not allow for a representation of correlations.
Nevertheless, this is not to say that correlations between thermodynamic systems do not appear. Our description just ignores them by describing the total state of a composite system as the collection of the reduced states of its subsystems. Consequently, correlations between thermodynamic systems cannot be exploited by thermodynamic processes -- they have to function whether correlations are present or not. 
Even though it may sound paradoxical, this view does not automatically exclude applying the theory to problems regarding the thermodynamics of correlations \emph{inside} a thermodynamic system (e.g.~\cite{Perarnau15}). For instance, thinking of two qubits that may be correlated, these correlations may have a thermodynamic interpretation captured by the theory. However, thermodynamically the two ``quantum-subsystems'' would then have to be summarized in a single atomic system. Otherwise the correlations would necessarily have to be neglected in our thermodynamic description of the state. 

This example shows that the thermodynamic subsystem structure introduced by $\vee$ must not be confused with the subsystem structure from an underlying physical theory (such as quantum mechanics).

\subsection{Concatenating two processes}

We conclude the section with one more postulate. 
Individual thermodynamic processes are obviously necessary objects in order to formulate the theory. 
However, individual processes are not enough, as in many protocols it will be crucial to be able to describe what happens if one process is executed after another. For this, we postulate the existence of a concatenation operation, similar to concatenating functions. The result of a concatenated process will be another process, namely the one which consists of the consecutive execution of the two processes that are concatenated.

\begin{postulate}[Concatenation of processes]
\label{post:conc}
Let $p,p'\in\mathcal{P}$ such that for all $A\in\mathcal{A}_p\cap\mathcal{A}_{p'}$ it holds
$\lceil p \rceil_A = \lfloor p' \rfloor_A$.
Then $p$ and $p'$ can be \emph{concatenated} to form a new process denoted by $p'\circ p\in\mathcal{P}$,
which represents the consecutive execution of $p$ followed by $p'$. 
If $\mathcal{A}_p\cap\mathcal{A}_{p'} = \emptyset$,
then in addition $p'\circ p = p \circ p'$, i.e.\ concatenation commutes.
An atomic system $A\in\mathcal{A}$ is involved in the concatenated process $p'\circ p$ if and only if $A\in\mathcal{A}_p\cup\mathcal{A}_{p'}$.
For the involved atomic systems the initial and final states are
\begin{align}
\label{eq:inputconc}
\lfloor p'\circ p \rfloor_A = 
\begin{cases}
\lfloor p \rfloor_A\,, \quad &\text{if } A \in\mathcal{A}_p \\
\lfloor p' \rfloor_A\,, \quad &\text{otherwise} \\
\end{cases}
\end{align}
and the final state follows the same rules with swapped roles for $p$ and $p'$.
\end{postulate}


If there is an atomic system for which the initial state of $p'$ and the final state of $p$ are defined but do not match, it is not possible to concatenate the two processes. 
Hence, in this sense, $\circ: \mathcal{P}\times\mathcal{P} \rightarrow\mathcal{P}$ is a partially defined function and $\mathcal{P}$ is closed under $\circ$.\\

The assignment of input and output states for concatenated thermodynamic processes can intuitively be justified as follows. 
If input and output states of both $p$ and $p'$ are defined for an atomic system $A$ then the input state of the concatenation $p'\circ p$ of the two processes must be equal to the input of $p$ on 
$A$ and accordingly the output must be equal to the output of $p'$ on $A$.

Now suppose both $p$ and $p'$ have undefined input and output states on $A$. This is interpreted as ``system $A$ is involved in neither of the processes $p$ and $p'$''.
Thus, the concatenation of the two processes does not act on $A$ either and its input and output states are undefined as well.

Turning to the case where only one of the processes, say $p$, has undefined input state (and thus also undefined output state), it helps to think of an undefined state as a variable. The input and output state of $p$ on $A$ are in this case not determined as $A$ is not involved in $p$. However, when concatenating $p$ with $p'$ to $p' \circ p$ the total process does act on $A$, since $p'$ does. 
Therefore the input state on $A$ of the concatenation, which was variable according to $p$, is set to be the input state of $p'$. In the same way it is argued that for undefined $\lceil p' \rceil_A$ the output state $\lceil p'\circ p \rceil_A$ is set to be equal to $\lceil p \rceil_A$ if the latter is defined. \\

Every thermodynamic system is a composition of atomic systems and reduced states determine the global state of a composite system. Therefore, the rules for initial and final states of concatenated processes for atomic systems translate to rules for arbitrary systems.\\

This concludes the first section with basic postulates on the structure of the theory. 
A more formal discussion of processes and states can be found in Appendix~\ref{app:processesstates}.
We now move on to the ``real thermodynamic'' interplay of systems, states and processes.

\newpage
\section{Work and work processes}
\label{sec:work}

\begin{center}
\fcolorbox{OliveGreen}{white}{\begin{minipage}{0.98\textwidth}
\centering
\begin{minipage}{0.95\textwidth}
\ \\
\textcolor{black}{\textbf{Postulates:}}
work for atomic systems, additivity of work, freedom of description\\
\ \\
\textcolor{black}{\textbf{New notions:}}
work function, 
work process, 
identity process, cyclic, catalytic and re\-versible process\\
\ \\
\textcolor{black}{\textbf{Technical results}} 
for this section can be found in Appendix~\ref{app:work}.

\paragraph{\textcolor{black}{Summary:}}
Two postulates guarantee that for any atomic system there exists a {work function} assigning the work cost of a process on that system and that this function is {additive under concatenation}.
The notion of a work function is then extended to exist for all systems such that it is additive under composition. 
The terms {work process}, {identity process}, {cyclic process} and {catalytic process} are introduced. 
Regarding catalytic processes, a further postulate guarantees that {a catalytic system can also be left out of the description of a process}. The final postulate of this section essentially says that we have a freedom in where to draw the line between things that we explicitly describe in the theory and means that may be used to implement processes, but are not modelled in the theory.
The definition and discussion of {reversible work processes} concludes the section.
\vspace{.1cm}

\end{minipage}
\end{minipage}}
\end{center}

\ \\

So far we have discussed many aspects of systems, states and to some extent processes. However, the thermodynamic component in these considerations has been missing. 
This changes now, as we introduce work cost functions.
They open the discussion on energy flows and will lead to the laws of thermodynamics. 

\subsection{Thermodynamic work}

\begin{postulate}[Work]
\label{post:work}
For any atomic system $A\in\mathcal{A}$ exists a function 
$W_A:\mathcal{P}\rightarrow\mathbbm{R}$ that maps a thermodynamic process $p$ to $W_A(p)$,
the \emph{work done on system $A$} by performing $p$. 
The value $W_A(p)$ is positive 
whenever positive work is done on $A$ while executing $p$.
If the system $A$ is not involved in $p$, $A\notin\mathcal{A}_p$, then $W_A(p)=0$ necessarily.
\end{postulate}

Work is seen as the form of energy which can be controlled perfectly. It is in the operator's control to decide when what amount of work flows into or out of a specific system, and in what form this work is done on or drawn from it. She does so by deciding which state change is induced in what manner, i.e.\ by choosing the process to be implemented. The thermodynamic process contains the information about the work flows exchanged with the involved systems.

How to compute the work $W_A(p)$ of a certain process $p$ does not follow from the thermodynamic theory. On the contrary, it is something that the theory takes as an input. 
Typically the work cost is computed using a more fundamental theory such as classical mechanics, electrodynamics or maybe even quantum mechanics. 
For examples see Table \ref{tab:example}.\\

In some texts, e.g.\ in \cite{Callen85}, work is termed the energy coming from ``work reservoirs''. Here, the concept of a work reservoir is not fundamental and the physical system from which the energy termed work comes is usually not modelled explicitly as a thermodynamic system.
On the other hand, an explicit modelling is not excluded either. The choice of what to explicitly include in the thermodynamic description is up to the user of the theory, as long as all postulates are satisfied.

\begin{postulate}[Additivity of work under concatenation]
\label{post:addwork}
If for two processes $p,p'\in\mathcal{P}$ the concatenation $p'\circ p$ is well-defined, then the work cost of the concatenated process equals the sum of the work costs of the individual processes.
That is, for all atomic systems $A\in\mathcal{A}$ it holds that 
$W_A(p'\circ p) = W_A(p) + W_A(p')$ is \emph{additive}.
\end{postulate}

For atomic systems that are neither involved in $p'$ nor $p$ this statement is trivial, as $0=0+0$ always holds. 
However, for atomic systems that are involved in at least one of the concatenated processes the statement is important to relate the work costs of the individual processes to the one of the concatenated process. 
Additivity under concatenation supports the interpretation of work as a currency. If a quantity of work is ``invested'' now and some other quantity after that, in total the sum of the two has been invested.\\


Based on the work cost function for atomic systems it is possible to define the work cost function for arbitrary systems in $\mathcal{S}$ in an intuitive way.

\begin{definition}[Work function for arbitrary systems]
\label{def:work}
Let $S\in\mathcal{S}$ be an arbitrary thermodynamic system. We define its \emph{work cost function} (also simply \emph{work function}) $W_S:\mathcal{P}\rightarrow\mathbbm{R}$ by
\begin{align}
W_S = \sum_{A\in\mathrm{Atom}(S)} W_A\,.
\end{align}
\end{definition}

The work function $W_A(p)$ for atomic systems yields the work done on $A$ during a process $p$. Hence, the work done on an arbitrary system $S$ is the sum of the work done on all its atomic subsystems. 
By defining the work cost function of an arbitrary system as such we automatically obtain additivity for disjoint systems $S_1,S_2\in\mathcal{S}$ (Lemma~\ref{lemma:addworkcompapp}.
That is, for all $p\in\mathcal{P}$ it holds that $W_{S_1\vee S_2}(p) = W_{S_1}(p) + W_{S_2}(p)$.
Furthermore, additivity under concatenation naturally extends to arbitrary systems (Lemma~\ref{lemma:addworkconcapp})).
If $p'\circ p$ is defined, then for all $S\in\mathcal{S}$ it holds that 
$W_S(p'\circ p) = W_S(p)+W_S(p')$.\\

The existence of work cost functions $W_S$ for all thermodynamic systems $S\in\mathcal{S}$ gives insights into the interpretation of the structure of systems as the following example illustrates.

\begin{example}[Different views on the structure of systems]
\label{ex:A1A2}
Consider Figure~\ref{fig:exA1A2}, where two different views on a scenario are depicted. 
The panels (a) and (b) show the same physical situation with different thermodynamic descriptions.

In (a) there are two systems $A_1$ and $A_2$, depicted as cylinders filled with gases, each with a piston through which work can be done on or drawn from the gas. 
The individual work functions $W_{A_1}$ and $W_{A_2}$ sum up to the total work cost on the composite system $S=A_1\vee A_2$. 
Importantly, by means of the transmission it is possible to address the cylinders individually, in particular the individual work functions are both well-defined. 
A thermodynamic process involving $S$ must specify what part of the total work $W_S$ goes into which subsystem, e.g. in terms of determining the transmission ratio. 
States of $S$ specify the states of the subsystem and have the form $\sigma = \sigma_1\vee \sigma_2$ for $\sigma_i\in\Sigma_{A_i}$. 

In (b) the physically identical system to (a) is described by a thermodynamic system $A$ which cannot be decomposed into subsystems, i.e.\ $A$ is atomic. 
The user of the framework may not know about the more complex structure of $A$ or she may just ignore it. 
There is still a piston through which work can be done on or drawn from $A$. However, the transmission ratio from (a) is fixed and there are no individual work functions $W_{A_i}$ as there are no subsystems $A_i$ that could be addressed individually.
\end{example}

Importantly, both views (a) and (b) lead to valid thermodynamic descriptions within the framework, as long as the chosen view is held on to consistently. 
Which view to take is up to the user of the theory.

We conclude that in general, by Postulate~\ref{post:work}, if a system has subsystems then in any thermodynamic process the splitting of the total work is determined and one can compute the work done on any of the subsystems.

\begin{figure}
\hskip-1cm
\begin{center}
	\begin{tikzpicture}[scale=0.65, every node/.style={transform shape}]	

	\draw[] (-.5,5.7) node[above] {\Large{(a)}} ;
	\draw[xshift=14cm] (-.5,5.7) node[above] {\Large{(b)}} ;

	\draw[-] (0,0) -- (3,0) ;
	\draw[-] (0,0) -- (0,2) ;
	\draw[-] (0,2) -- (3,2) ;
	\draw[] (-.5,1) node[] {\Large{$A_2$}};
%
	\draw[-, very thick] (2.5,0) -- (2.5,2);
	\draw[-, very thick] (2.5,1) -- (5,1);
	\draw[<->] (3.5,1.2) node[above right] {\Large{$W_{A_2}$}} -- (4.5,1.2);

	\draw[-,yshift=3cm] (0,0) -- (3,0) ;
	\draw[-,yshift=3cm] (0,0) -- (0,2) ;
	\draw[-,yshift=3cm] (0,2) -- (3,2) ;
	\draw[yshift=3cm] (-.5,1) node[] {\Large{$A_1$}};

	\draw[-, very thick,yshift=3cm] (2,0) -- (2,2);
	\draw[-, very thick,yshift=3cm] (2,1) -- (5,1);
	\draw[<->,yshift=3cm] (3.5,1.2) node[above right] {\Large{$W_{A_1}$}} -- (4.5,1.2);

	\fill[draw=black,color=lightgray,opacity=0.5] (0.01,0.02) -- (2.47,0.02) -- (2.47,1.98) -- (0.01,1.98) ;
	\fill[yshift=3cm,draw=black,color=lightgray,opacity=0.9] (0.01,0.02) -- (1.97,0.02) -- (1.97,1.98) -- (0.01,1.98) ;

	\draw[-,xshift=-1cm] (6,2.4) node[rotate=90, below, yshift = -.1cm] {\LARGE{transmission}} -- (6,.5) -- (7,.5) -- (7,4.5) -- (6,4.5) -- (6,2.4);
	\draw[-, very thick] (6,2.5) -- (8.5,2.5);
	\draw[<->] (7,2.7) node[above right] {\Large{$W_S = W_{A_1} + W_{A_2}$}} -- (8,2.7);

	\draw[-] (6.5,2.7) -- (6.5,5.5) -- (-1,5.5) -- (-1,-.5) -- (6.5,-.5) -- (6.5,2.3);
	\draw[] (3,-1) node[] {\Large{$S = A_1\vee A_2$}};

	\begin{scope}[xshift = 14cm]
		
		\draw[-, color=lightgray] (0,0) -- (3,0) ;
		\draw[-, color=lightgray] (0,0) -- (0,2) ;
		\draw[-, color=lightgray] (0,2) -- (3,2) ;
		\draw[color=lightgray] (-.5,1) node[] {\Large{$A_2$}};
%
		\draw[-, very thick, color=lightgray] (2.5,0) -- (2.5,2);
		\draw[-, very thick, color=lightgray] (2.5,1) -- (5,1);
		\draw[<->, color=lightgray] (3.5,1.2) node[above right] {\Large{$W_{A_2}$}} -- (4.5,1.2);
	
		\draw[-,yshift=3cm, color=lightgray] (0,0) -- (3,0) ;
		\draw[-,yshift=3cm, color=lightgray] (0,0) -- (0,2) ;
		\draw[-,yshift=3cm, color=lightgray] (0,2) -- (3,2) ;
		\draw[yshift=3cm, color=lightgray] (-.5,1) node[] {\Large{$A_1$}};
	
		\draw[-, very thick,yshift=3cm, color=lightgray] (2,0) -- (2,2);
		\draw[-, very thick,yshift=3cm, color=lightgray] (2,1) -- (5,1);
		\draw[<->,yshift=3cm, color=lightgray] (3.5,1.2) node[above right] {\Large{$W_{A_1}$}} -- (4.5,1.2);

		\fill[draw=black,color=lightgray,opacity=0.05] (0.01,0.02) -- (2.47,0.02) -- (2.47,1.98) -- (0.01,1.98) ;
		\fill[yshift=3cm,draw=black,color=lightgray,opacity=0.1] (0.01,0.02) -- (1.97,0.02) -- (1.97,1.98) -- (0.01,1.98) ;

		\draw[-,xshift=-1cm, color=lightgray] (6,2.4) 
		-- (6,.5) -- (7,.5) -- (7,4.5) -- (6,4.5) -- (6,2.4);
		\draw[-, very thick] (6,2.5) -- (8.5,2.5);
		\draw[<->] (7,2.7) node[above right] {\Large{$W_A$}} -- (8,2.7);

		\draw[-] (6.5,2.7) -- (6.5,5.5) -- (-1,5.5) -- (-1,-.5) -- (6.5,-.5) -- (6.5,2.3);
		\draw[] (3,-1) node[] {\Large{$A$}};
	\end{scope}

	\end{tikzpicture}
\end{center}
\caption{Two different thermodynamic descriptions of the same physical setting. 
Both views are valid thermodynamic descriptions and the user has to decide which one is taken. The user must then consistently work with the chosen view. 
(a) The total system is seen as a composite system $S=A_1\vee A_2$. A process must thus specify what amount of the total work $W_S$ goes to which subsystem, e.g.\ by determining the transmission ratio. For instance, the total work done could be zero, but the work $W_{A_1}$ is drawn from the subsystem $A_1$ and $W_{A_2} = -W_{A_1}$ is done on $A_2$.
Importantly, each subsystem can individually be addressed.
(b) When describing the same setting as an atomic system $A$, i.e.\ a system without further subsystems, there is no transmission ratio to choose. Only the total work $W_A$ has a meaning and the internal structure of $A$, however complex it may be, is neglected in the description. 
}
\label{fig:exA1A2} 
\end{figure}
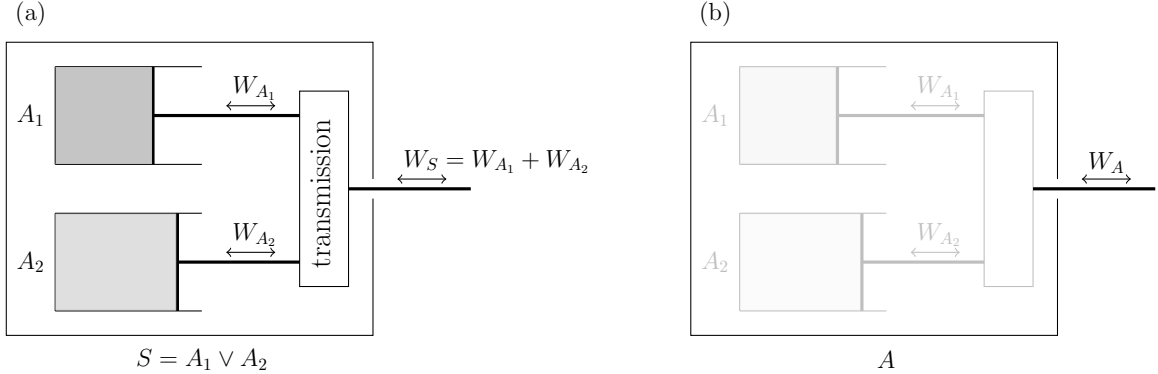

\subsection{Work processes}

Among all thermodynamic processes that act on a system $S$, there are those that act exactly on $S$ and on no other system. These processes deserve special attention as they are the ones addressed in the first law. 

\begin{definition}[Work process]
\label{def:wp}
For $S\in\mathcal{S}$ a process $p\in\mathcal{P}$ is a \emph{work process on $S$} 
if all its atomic subsystems are involved in $p$ and no other atomic systems are.
That is, $p$ is a work process on $S$ if $S=\bigvee \mathcal{A}_p$.
The \emph{set of work processes on $S$} is denoted by $\mathcal{P}_S$.
\end{definition}

In simple words the above definition says that a process is called a work process on a system $S$ if $S$ is the biggest involved system in $p$.
It is not difficult to see that the set $\mathcal{P}_S$ is closed under concatenation of work processes (Lemma~\ref{lemma:closedapp}).\\

Any thermodynamic process can be seen as a work process on a large enough system.
This is a direct consequence of the fact that for any thermodynamic process $p\in\mathcal{P}$ the set $\mathcal{A}_p$ 
is finite and non-empty, which is part of Postulate~\ref{post:processesstates} on thermodynamic processes and states. The system defined as $S:=\bigvee \mathcal{A}_p$ is then such that $p\in\mathcal{P}_S$.
Importantly, as the composition of finitely many atomic systems, $S\in\mathcal{A}$ is indeed a system. \\

This observation makes work processes the central tool to understand thermodynamics. They are the basic building blocks necessary to consistently describe thermodynamic processes. 
If we understand work processes, we understand all thermodynamic processes.
Also it allows us to make definitions and statements for work processes that, if necessary, can then be extended to notions and more general statements for thermodynamic processes.
For instance, work processes are relevant in the formulation of the first law, but of course the first law has consequences for any thermodynamic process.\\

The view of thermodynamic processes as work processes on a large enough system is reminiscent of completely positive trace preserving maps and unitary maps in quantum mechanics, where the Stinespring dilation makes sure that for any cptp map can be seen as a unitary map on a larger system. 
However, the analogy does not go as far as telling us which of the thermodynamic processes are reversible, as is the case with unitary and non-unitary evolution in quantum mechanics.\\


We next consider the special case of two disjoint systems $S_1 \wedge S_2=\emptyset$. Let $p_i\in\mathcal{P}_{S_i}$ for $i=1,2$ be two work processes on $S_1$ and $S_2$, respectively. Such work processes can always be concatenated both as $p_2\circ p_1$ and $p_1\circ p_2$ since the sets of involved atomic systems are disjoint. Namely they are $\mathrm{Atom}(S_1)$ for $p_1$ and on $\mathrm{Atom}(S_2)$ for $p_2$. 
Also, Postulate~\ref{post:conc} on concatenation requires $p_2\circ p_1 = p_1\circ p_2$ in this case.
Furthermore, by construction of the concatenation operation, the atomic systems involved in 
$p_2\circ p_1$ are exactly $\mathrm{Atom}(S_1)\cup\mathrm{Atom}(S_2) = \mathrm{Atom}(S_1\vee S_2)$.
Therefore, the concatenated process $p_1\circ p_2 = p_2\circ p_1\in\mathcal{P}_{S_1\vee S_2}$ is a work process on $S_1\vee S_2$. 
For such a \emph{joint work process} we use the notation $p_1\vee p_2 := p_1\circ p_2$ (Definition~\ref{def:jointwpapp}).
On the other hand, using this notation means that we are dealing with work processes and systems of the kind discussed in this paragraph.\\

%

This notation implies an embedding (an injective mapping) from $\mathcal{P}_{S_1}\times\mathcal{P}_{S_2}$ to $\mathcal{P}_{S_1\vee S_2}$ and images of this mapping $(p_1,p_2)\mapsto p_1\vee p_2$ stand for the parallel execution of the work process $p_1$ on subsystem $S_1$ and of $p_2$ on $S_2$.
Hence, what was achievable by means of work processes on the individual systems $S_1$ and $S_2$ can still be realized as work processes on the composite system $S_1\vee S_2$.
In this sense, composition respects work processes.\\

As a further consequence, the input and output states of a joint work process 
$p_1\vee p_2$ are given by 
$\lfloor p_1\vee p_2 \rfloor_{S_1\vee S_2}  
= \lfloor p_1\rfloor_{S_1} \vee \lfloor p_2 \rfloor_{S_2}
\in\Sigma_{S_1\vee S_2} $
and likewise for $\lceil\cdot\rceil$.
While the state space of a composite system consists of joint states only, this is not the case with work processes.
In general, the set $\mathcal{P}_{S_1\vee S_2}$ contains more work processes than just the joint work processes of its subsystems. An example of a more general work process on a composite system is thermally connecting two subsystems and letting them exchange energy.\\

For a joint work process $p_1\vee p_2$ the total work cost $W_{S_1\vee S_2}(p_1\vee p_2)$ is the sum of the local work costs, i.e.\ $W_{S}(p_1\vee p_2) = W_{S_1}(p_1) + W_{S_2}(p_2)$ (Lemma~\ref{lemma:wjointapp}).
This is a consequence of both the additivity of work under concatenation and under composition (for disjoint systems).

\subsection{Processes with special properties}

A special kind of work processes on a system are \emph{identity processes}. They are defined in the following.

\begin{definition}[Identity process]
\label{def:id}
An \emph{identity process on $S$}, where $S\in\mathcal{S}$ is an arbitrary thermodynamic system, is a work process $\mathrm{id}_S\in\mathcal{P}_S$ on $S$ with $\lfloor \mathrm{id}_S \rfloor_S = \lceil \mathrm{id}_S \rceil_S$ and zero work cost on all atomic subsystems of $S$, $W_A(\mathrm{id}_S)=0$ for all
$A\in\mathrm{Atom}(S)$. 
\end{definition}

We sometimes write $\mathrm{id}_S^\sigma$ for an identity process on $S$ with initial and final state $\sigma\in\Sigma_S$, to indicate on which state the identity process acts. 
This notation manifests that there is not a single identity map that can be applied to an arbitrary input state. An identity process, just like any other thermodynamic process, determines its input and output state.
Identity processes will be discussed further after the first law.\\

Besides identity processes, there are other ways in which a thermodynamic process can act trivially on a system. 

\begin{definition}[Cyclic and catalytic process]
\label{def:cyccat}
Given a system $C\in\mathcal{S}$ an arbitrary thermodynamic process $p\in\mathcal{P}$ is called \emph{cyclic on $C$} if 
$\lceil p \rceil_C = \lfloor p \rfloor_C$.\\
The process is called \emph{catalytic on $C$} if 
it is cyclic on $C$ and in addition $W_C(p)=0$.\footnote{Notice that the work costs of $p$ for subsystems of $C$ do not have to be zero, only the total work done on $C$ does.}
\end{definition}

The definition of a cyclic process on $S$ which is in addition a work process on $S$ differs from a identity process on $S$ by the missing requirement on the work costs on atomic subsystems.

As a consequence of Definition~\ref{def:cyccat}, if a process $p$ is cyclic on two systems $S_1$ and $S_2$ then it is also cyclic on their composition $S_1\vee S_2$. 
This holds independently of whether the two systems are disjoint or not.
If the systems are in addition disjoint, then the same holds for catalytic processes.\\

Consider now the case of a work process 
$p\in\mathcal{P}_{S\vee C}$ on a disjointly composite system $S\vee C$.
If the process is cyclic on $C$ it is still possible that non-zero work flows into $C$ occur. Hence, leaving $C$ out of the description would in this case leave open the question where these work flows go.
However, if the process is catalytic on $C$, i.e.\ $W_C(p)=0$ in addition, one may wonder why such a process is not considered a work process on $S$ alone. 
On the one hand technically it is not a work process on $S$ according to Definition~\ref{def:wp}. On the other hand, even though it acts on $C$, it does so in the most trivial sense, and could hence intuitively be considered a work process. 
We resolve this issue with an additional postulate on ``where to draw the line'', stating that whenever a process is catalytic on a part of a system, there exists a corresponding work process on rest with the same thermodynamic properties, but without acting on the catalytic part.

\begin{postulate}[Freedom of description]
\label{post:freedom}
For $S,C\in\mathcal{S}$ disjoint, let $p\in\mathcal{P}_{S\vee C}$ be such that $p$ is catalytic on $C$, i.e.\ $p$ is cyclic on $C$ and fulfils $W_C(p)=0$.
Then there exists a work process $\tilde p \in\mathcal{P}_S$ on $S$ alone such that 
$\lfloor \tilde p \rfloor_{S} = \lfloor p \rfloor_{S}$ and
$\lceil \tilde p \rceil_{S} = \lceil p \rceil_{S}$ as well as
$W_{A}(\tilde p) = W_{A}(p)$ for all $A\in\mathrm{Atom}(S)$.
\end{postulate}

This postulate automatically implies that $W_{C'}(\tilde p) = 0$ for all subsystems $C'\in\mathrm{Sub}(C)$ of $C$ in the new process, since neither of them is involved in $\tilde p$.
Likewise, $W_{S'}\tilde p) = W_{S'}(p)$ for al	l subsystems $S'\in\mathrm{Sub}(S)$ since their work costs on general subsystems are computed through the work costs on atomic subsystems.
Any other system that was not acted on by $p$ neither is by $\tilde p$.
Also, since the state of a composite system determines the states of all possible subsystems and vice versa, it holds that
$\lfloor \tilde p \rfloor_{S'} = \lfloor p \rfloor_{S'}$ and
$\lceil \tilde p \rceil_{S'} = \lceil p \rceil_{S'}$ for all $S'\in\mathrm{Sub}(S)$.\\

Postulate~\ref{post:freedom} is about where to draw the line between objects that thermodynamics explicitly describes and such that are not part of the theory but may nevertheless be used when executing a process. 
More specifically, whether a catalytic system $C$ is mentioned explicitly in the description of the process $p$, or whether it is suppressed, does not matter. The postulate states that if there is a process in which $C$ is made explicit, then there is also one in which it is not. The important point about this is that the thermodynamic properties of both $p$ and $\tilde p$ regarding the system $S$ are the same. \\

This section is concluded with the definition and the discussion of reversible work processes. 

\begin{definition}[Reversible processes]
\label{def:rev}
A work process $p\in\mathcal{P}_S$ on a system $S\in\mathcal{S}$ is called \emph{reversible} if there exists another work process $p^\mathrm{rev}\in\mathcal{P}_S$ on $S$, the \emph{reverse} work process, such that $p^\mathrm{rev}\circ p$ is an identity process. 
\end{definition}

As the definition suggests, given a specific reversible work processes $p\in\mathcal{P}_S$, there may be more than one reverse work process. Nevertheless, their thermodynamic effect on the system $S$ are all the same.
In the following we may also just write \emph{reversible process} and \emph{reverse process} instead of emphasizing that we talk about \emph{work} processes on specified systems. 
In general, whenever it is clear from the context whether the object is a work process or a thermodynamic process, we will just write process.\\

Reversibility of arbitrary thermodynamic processes does not require a separate definition. Since every thermodynamic process is a work process on some large enough system, the thermodynamic process is called reversible if it is reversible as a work process.\\

While identity processes have a zero work cost on any involved system, the work cost of a reverse work process is the negative forward process (Lemma~\ref{lemma:reverseworkapp}).
More precisely, if $p\in\mathcal{P}_S$ is a reversible work process on the system $S$ with a reverse process $p^\mathrm{rev}\in\mathcal{P}_S$, then 
$W_A(p^\mathrm{rev}) = -W_A(p)$ for all atomic subsystems $A\in\mathrm{Atom}(S)$.
This is easily seen since $p^\mathrm{rev}\circ p$ is an identity process and thus fulfils
$0=W_A(p^\mathrm{rev}\circ p) = W_A(p)+W_A(p^\mathrm{rev})$.\\

In Appendix~\ref{app:work} it is shown in detail that if a concatenated process $p = p_2\circ p_1$ is reversible, then so are the processes $p_1$ and $p_2$ (Proposition~\ref{prop:p1p2revapp}).
This result holds for any processes $p_1,p_2\in\mathcal{P}$ as long as their concatenation is defined. In particular, they need not be work processes on the same or on disjoint systems.

\newpage
\section{The first law}
\label{sec:firstlaw}

\begin{center}
\fcolorbox{OliveGreen}{white}{\begin{minipage}{0.98\textwidth}
\centering
\begin{minipage}{0.95\textwidth}
\ \\
\textcolor{black}{\textbf{Postulates:}}
the first law\\
\ \\
\textcolor{black}{\textbf{New notions:}} preorder on state space, internal energy\\
\ \\
\textcolor{black}{\textbf{Technical results}} 
for this section can be found in Appendix~\ref{app:firstlaw}.

\paragraph{\textcolor{black}{Summary:}}
The {first law} is stated as a postulate. It makes sure that for any two states there is a work process transforming one into the other. Furthermore, the total work cost of work process must not depend on anything except for its input and output states.
Through the existence of work processes connecting two states a {preorder} on the state space is introduced, which will come in handy later.
Due to this preordered structure it follows that identity processes exist for any state on any system.
Finally, the first law and its immediate consequences are used to defined the {internal energy} of a system, which is then shown to be additive. 

\vspace{.1cm}

\end{minipage}
\end{minipage}}
\end{center}

\subsection{The formal statement and its immediate consequences}

The first law will imply that every change in ``internal energy'' of a system is equal to the ``sum of work and heat''. However, in order to state this properly we first need definitions saying what ``internal energy'' and ``heat'' is. 
The first law lays the basics for these definitions.

\begin{postulate}[The first law]
\label{post:first}
For any system $S\in\mathcal{S}$ the following two statements hold:
\begin{itemize}
	\item [(i)]
	For any pair of states $\sigma_1,\sigma_2\in\Sigma_S$ there is a work process 
	$p\in\mathcal{P}_S$ on $S$ with $\lfloor p \rfloor_S = \sigma_1$ and 
	$\lceil p \rceil_S = \sigma_2$ or there is a work process $p'\in\mathcal{P}_S$ on $S$ 
	with $\lfloor p' \rfloor_S = \sigma_2$ and $\lfloor p' \rfloor_S = \sigma_1$.
	\item [(ii)] The total work cost of a work process $p\in\mathcal{P}_S$ on $S$, $W_S(p)$, 
	only depends on $\lfloor p \rfloor_S$ and $\lceil p \rceil_S$ 
	and not on any other details of the process. 
\end{itemize}
\end{postulate}

By the first law, the set of states of a system is preordered.
A \emph{preordered set} is a set $\mathcal{M}$ together with a relation $\rightarrow$ such that 	
the relation is (i) reflexive, i.e.\ $\forall m\in\mathcal{M}:\, m\rightarrow m$, 
and (ii) transitive, i.e.\ if both $m\rightarrow m'$ and $m'\rightarrow m''$, then $m\rightarrow m''$.


\begin{definition}[Preordered states]
\label{def:preorder}
For any system $S\in\mathcal{S}$ the \emph{preorder} $\rightarrow$ on $\Sigma_S$ is established by the reachability via a work process, i.e.\ for $\sigma,\sigma'\in\Sigma_S$ define 
\begin{align}
\label{eq:preorder}
\sigma\rightarrow\sigma'\, :\Leftrightarrow\, \exists p\in\mathcal{P}_S \ \mathrm{s.t.}\ 
\lfloor p \rfloor_S = \sigma,\ \lceil p \rceil_S = \sigma'\,.
\end{align}
\end{definition}

Processes $p\in\mathcal{P}_S$ can be seen as labels of the preordered pairs. In this sense, if one wants to precisely state which work process is responsible for the preordering of two states, one can write 
$\sigma\stackrel{p}{\rightarrow}\sigma'$ if the work process $p$ is such that $\lfloor p \rfloor_S = \sigma$ and $\lceil p \rceil_S = \sigma'$.
As mentioned before, there may be more than one label for a preordered pair.

The relation introduced in Eq.~\ref{eq:preorder} is reflexive since for all systems $S$ and all states $\sigma\in\Sigma_S$ there exists work process $p\in\mathcal{P}_S$ such that 
$\sigma\stackrel{p}{\rightarrow}\sigma$.
This is a consequence of Postulate~\ref{post:first} (i).
Furthermore, if $\sigma\stackrel{p}{\rightarrow}\sigma'$ and 
$\sigma'\stackrel{p'}{\rightarrow}\sigma''$ then 
$\sigma\stackrel{p'\circ p}{\longrightarrow}\sigma''$ is preordered too by means of the concatenated process $p'\circ p$.
Hence the relation is also transitive, which makes it a preorder.\\
%
%

Restricting our considerations to atomic systems $A\in\mathcal{A}$ for the moment, the reflexivity of the preorder $\rightarrow$ offers further insights. 
Let $q\in\mathcal{P}_A$ be a process which relates $\sigma\stackrel{q}{\longrightarrow}\sigma$. Since this process only acts on $A$ and initial and final states match, it can be concatenated with itself. The state change of the process in which $q$ is applied twice is obviously the same as the state change under $q$ itself. Therefore Postulate~\ref{post:first} (ii) requires 
\begin{align}
W_A(q) \stackrel{\mathrm{(ii)}}{=}
W_A(q \circ q) = 
W_A(q) + W_A(q) = 2 W_A(q)\,,
\end{align}
which is to say that the work cost of such processes is zero. Obviously, this makes it an identity process on $A$ for the state $\sigma$ (Lemma~\ref{lemma:idexistatomicapp}).

According to the first law, such processes exists for any state on any atomic system. We deduce that identity processes exist for all atomic subsystems and all states.
Using the decomposition of an arbitrary system into its atomic subsystems, it is then possible to construct identity process for all states of arbitrary systems.
To see this, let $S=A_1\vee \cdots \vee A_n$ with different atomic systems $A_i\neq A_j$ and consider an arbitrary joint state $\sigma = \sigma_1\vee \cdots \vee \sigma_n$. For the atomic states we know that corresponding identity processes $\mathrm{id}_{A_i}^{\sigma_i}$ exist. Therefore, the joint process $\mathrm{id}_S^\sigma := \mathrm{id}_{A_1}^{\sigma_1}\vee \mathrm{id}_{A_n}^{\sigma_n}$ is a cyclic work process on $S$. But this process also fulfils $W_{A_i}(\mathrm{id}_S^\sigma) = 0$ for all $i=1,\dots,n$ by construction, thus it is an identity process.
We conclude that identity processes exist for all states of all thermodynamic systems (Lemma~\ref{lemma:idexistapp}).\\

The existence of identity processes for any state in the theory suggests that what is called a state here can intuitively be seen as an ``equilibrium state''. It implies that it is possible to essentially ``do nothing'' while the state does not change. 
Certainly, this is a property one would expect of an object called equilibrium state. 
Notice that this is a consequence of the first law rather than an additional postulate on or definition of equilibrium. 
Hence, we remain in the position to claim that no further assumptions are made regarding what can be called a thermodynamic state. 

As an example of what is usually considered a non-equilibrium state, take the current position and momentum of an oscillating pendulum as its state. Since both position and momentum are parameters evolving in time even when no action is taken from the outside, such a definition of the state would not work here intuitively. Even if we ``do nothing'' the state would oscillate, i.e.\ change, and would thus not be considered an ``equilibrium state''.\\

In the previous section we mentioned the embedding of $\mathcal{P}_{S_1}\times\mathcal{P}_{S_2}$ for two disjoint systems $S_1$ and $S_2$ in the set of work processes $\mathcal{P}_{S_1\vee S_2}$.
Together with the existence of identity processes for all states this observation is strengthened. We can now conclude that not only the Cartesian product but also the individual sets $\mathcal{P}_{S_1}$ and $\mathcal{P}_{S_2}$ are represented in the set of work processes on the composite system.
For any state $\sigma_2\in\Sigma_{S_2}$ and any work process $p_1\in\mathcal{P}_{S_1}$ the embedding $(p_1, \mathrm{id}_{S_2}^{\sigma_2})\mapsto p_1 \vee \mathrm{id}_{S_2}^{\sigma_2}$ allows for a representation of $p_1$ in $\mathcal{P}_{S_1\vee S_2}$.\\

We also mentioned that the set of work processes of a disjointly composite system $S_1\vee S_2$ contains more than just joint work processes. 
With the first law we can now argue for this statement more precisely.
Consider two systems $S_1$ and $S_2$ with states $\sigma_1,\sigma_1'\in\Sigma_{S_1}$ 
which are preordered such that $\sigma_1 \rightarrow \sigma_1'$ but 
$\sigma_1' \nrightarrow\sigma_1$ and likewise for $\sigma_2,\sigma_2'\in\Sigma_{S_2}$.
I.e.\ the work processes that label the preorderings of $\sigma_i\rightarrow\sigma_i'$ are irreversible.
Then the joint work processes on $S_1\vee S_2$ will make sure that the joint states
$(\sigma_1,\sigma_2) \rightarrow (\sigma_1',\sigma_2)$, 
$(\sigma_1,\sigma_2') \rightarrow (\sigma_1',\sigma_2')$,
$(\sigma_1,\sigma_2) \rightarrow (\sigma_1,\sigma_2')$ and 
$(\sigma_1',\sigma_2) \rightarrow (\sigma_1',\sigma_2')$ are all preordered, in particular comparable.
But then, since the work processes labelling these preorderings are irreversible, the joint states
$(\sigma_1,\sigma_2')$ and $(\sigma_1',\sigma_2)$ cannot be preordered in either direction with joint work processes. Therefore, if any two states of any system must be comparable, as the first law requires, there must be more than just joint work processes in $\mathcal{P}_{S_1\vee S_2}$.


\subsection{Internal energy}

The first law eventually guarantees that every system $S$ has a well-defined internal energy function. The existence of such a function relies on the fact that the total work cost of a work process on $S$ only depends on the initial and final state of the process, and not on any other property of it. 
This is not true for arbitrary thermodynamic processes on $S$, which allow for different work costs on $S$ even though they induce the same state transfer. 
In general, the work cost depends on how exactly the process is carried out.\\

We are now in the position to define the internal energy function of a system.

\begin{definition}[Internal energy]
\label{def:Ustrong}
For a system $S\in\mathcal{S}$ fix an arbitrary reference state $\sigma_0\in\Sigma_S$ and an arbitrary reference energy $U_S^0\in\mathbbm{R}$.
The \emph{internal energy of a state $\sigma\in\Sigma_S$} is defined as
\begin{align}
\label{eq:Ustrong1}
&U_S(\sigma) := U_S^0 + W_S(p) \,, \qquad  \,
\text{where $p\in\mathcal{P}_S$ is s.t.\ 
$\lfloor p \rfloor_S =\sigma_0$ and $\lceil p \rceil_S = \sigma$.}\\
\label{eq:Ustrong2}
&U_S(\sigma) := U_S^0 - W_S(p') \,, \qquad  
\text{where $p'\in\mathcal{P}_S$ is s.t.\ 
$\lfloor p' \rfloor_S =\sigma$ and $\lceil p' \rceil_S = \sigma_0$.}
\end{align} 
\end{definition}

Since only differences $\Delta U_S$ of internal energies physically matter, the choice of $U_S^0$ is arbitrary for now.
Likewise, the reference state $\sigma_0\in\Sigma_S$ is arbitrary independently for each system $S$. 

\begin{definition}[State function]
\label{def:statefunc}
A \emph{state function on a system $S\in\mathcal{S}$} (also \emph{state variable}) is a function $Z:\Sigma_S\rightarrow\mathcal{Z}$ from the state space $\Sigma_S$ to a target space $\mathcal{Z}$. The co-domain $\mathcal{Z}$ is typically $\mathbbm{R}^n$, most often $n=1$. 
\end{definition}

When a system $S$ undergoes a process $p\in\mathcal{P}$ we denote the change in any state function $Z_S$ using an abbreviated notation by $\Delta Z_S(p) := Z_S(\lceil p \rceil_S) - Z_S(\lfloor p \rfloor_S)$.\footnote{This of course only works if a ``minus operation`'' is defined on the co-domain $\mathcal{Z}$. For all practical purposes considered here this is the case.} 
On the left hand side the dependence of $\Delta Z_S$ on $p$ may be omitted if the context makes clear which process is meant.\\

The internal energy function $U_S$ is a well-defined state function on $S$ for any system $S\in\mathcal{S}$ (Lemma~\ref{lemma:Uwelldefapp}).
Investigating this function further, it follows in Proposition~\ref{prop:Uadd} 
that it inherits the additivity property for composite systems from the work functions.
That is, for two disjoint systems $S_1\wedge S_2=\emptyset$ we always have that $\Delta U_S = \Delta U_{S_1} + \Delta U_{S_2}$.
Notice that we only talk about differences in internal energies here since the absolute values depend on the constants $U_{S_i}^0$ of the two systems.
In general, if $Z_A$ is a state variable but $A\notin\mathcal{A}_p$ is not involved in the process $p$, we write $\Delta Z_A(p)=0$. \\


\newpage
\section{Equivalent systems}
\label{sec:equivalentsys}

\begin{center}
\fcolorbox{OliveGreen}{white}{\begin{minipage}{0.98\textwidth}
\centering
\begin{minipage}{0.95\textwidth}
\ \\
\textcolor{black}{\textbf{Postulates:}}
arbitrarily many copies of atomic systems\\
\ \\
\textcolor{black}{\textbf{New notions:}}
thermodynamic isomorphism, equivalence of atomic and arbitrary systems\\
\ \\
\textcolor{black}{\textbf{Technical results}} 
for this section can be found in Appendix~\ref{app:equivalentsys}.

\paragraph{\textcolor{black}{Summary:}}
The notion of {equivalent systems} ({copies of systems}) is introduced and shown to be sensible. 
The intuitive results one would expect from such a notion are explained here and justified with technical results in Appendix~\ref{app:equivalentsys}. 
Finally, we postulate the {existence of arbitrarily many copies of atomic systems}.
\vspace{.1cm}

\end{minipage}
\end{minipage}}
\end{center}

\ \\

In the beginning we have emphasized that elements of $\mathcal{S}$ are seen as specific physical instances rather than types of systems. Nevertheless, it will be crucial to be able to talk about copies of systems or, in other words, systems of the same type. 
These are systems that can be interchanged without any noticeable thermodynamic differences, even though they are different systems. 
It is possible to turn this intuition into a mathematical concept through the notion of a {thermodynamic isomorphism}.
Some ideas presented here are inspired by T. Kriv\'{a}chy's master's thesis \cite{Krivachy17} which the authors supervised.

\subsection{Thermodynamic isomorphisms}

\begin{definition}[Thermodynamic isomorphism]
\label{def:isomorphism}
The pair of \emph{bijective} maps $\varphi: \mathcal{P}\rightarrow\mathcal{P}$, 
$\varphi_\mathcal{A}: \mathcal{A}\rightarrow\mathcal{A}$
is called a \emph{thermodynamic isomorphism} if for any thermodynamic processes $p,p'\in\mathcal{P}$ and any atomic system $A\in\mathcal{A}$ it holds
\begin{itemize}
	\item [(i)] $\varphi(p'\circ p) = \varphi(p')\circ\varphi(p)$ 
	whenever the concatenation $p'\circ p$ or $\varphi(p')\circ\varphi(p)$
	is defined,
	\item [(ii)] $A$ is involved in $p$ if and only if $\varphi_\mathcal{A}(A)$ is involved in $\varphi(p)$, and
	\item [(iii)] $W_{\varphi_\mathcal{A}(A)}(\varphi(p)) = W_{A}(p)$.
\end{itemize}
\end{definition}

The requirements on an isomorphism reveal the fundamental structures behind the discussed thermodynamic concepts. Namely these are (i) the \emph{thermodynamic processes} with \emph{concatenation}, (ii) \emph{atomic systems}, linked to processes through the notion of \emph{an atomic system being involved in a process}, and (iii) \emph{work}.

Remember that in the beginning we already established that input and output states are always either both defined or undefined. Hence (ii) is equivalent to saying that:
$\lfloor \varphi(p)\rfloor_{\varphi_\mathcal{A}(A)}$ is defined $\Leftrightarrow$ 
$\lfloor p \rfloor_{A}$ is defined.\\

Typically when defining isomorphisms of algebraic structures one first introduces the notion of a homomorphism. 
One can do so, but it would go beyond the purposes presented here. 


Both mappings, $\varphi$ and $\varphi_\mathcal{A}$, are part of the definition of a thermodynamic isomorphism. 
However, they are not independent degrees of freedom. In Appendix~\ref{app:equivalentsys}, where the details of this section are discussed, we prove in Lemma~\ref{lemma:isomdofapp} that if both $\varphi,\varphi_\mathcal{A}$ and $\varphi, \varphi_\mathcal{A}'$ are thermodynamic isomorphisms, then $\varphi_\mathcal{A} = \varphi_\mathcal{A}'$.  
In this sense, $\varphi_\mathcal{A}$ is determined by $\varphi$ and one could think of coming up with a Definition~\ref{def:isomorphism} such that it only talks about $\varphi$, while $\varphi_\mathcal{A}$ is derived from it afterwards. 
Even though this is possible it would make the definition much less readable and intuitive. Therefore we do not go further into this.\\	

The mapping of atomic systems can naturally be extended to arbitrary systems $S\in\mathcal{S}$ by defining 
\begin{align}
\varphi_\mathcal{S} (S) := \bigvee_{A\in\mathrm{Atom}(S)} \varphi_\mathcal{A} (A)\,.
\end{align}
A thermodynamic isomorphism maps all thermodynamic processes and systems to possibly other thermodynamic processes  and systems, while preserving the structure of these sets.
Consequently, the two mappings $\varphi$ and $\varphi_\mathcal{S}$ (or $\varphi_\mathrm{A}$, to use the primitive) induce a mapping of states $\varphi_\Sigma : \bigcup_{S\in\mathcal{S}} \Sigma_S \rightarrow \bigcup_{S\in\mathcal{S}} \Sigma_S$ by means of 
\begin{align}
\varphi_\Sigma (\lfloor p \rfloor_S) := \lfloor \varphi(p) \rfloor_{\varphi_\mathcal{S}(S)}\,,
\end{align}
which can be shown to be bijective, too (Lemma~\ref{lemma:isomstatebijapp}). 

As it turns out, an isomorphism $\varphi$ maps work processes on a system $S$ to work processes on the system $\varphi_\mathcal{S}(S)$, the properties of being reversible and being an identity process are preserved under $\varphi$, and internal energy changes fulfil $U_S(\sigma') - U_{S}(\sigma) = U_{\varphi_\mathcal{S}(S)}(\varphi_\Sigma(\sigma')) - U_{\varphi_\mathcal{S}(S)}(\varphi_\Sigma(\sigma))$. 
In short, it can be proved that any thermodynamic structure introduced so far is preserved by  isomorphisms, showing that it is indeed a sensible way of defining the concept. 

\subsection{From isomorphisms to equivalent systems}

It is now possible to consider special isomorphisms to define when two atomic systems $A_1,A_2\in\mathcal{A}$ are copies of each other. 
Intuitively, this is the case when it is possible to swap $A_1$ with $A_2$ by means of an isomorphism such that nothing else changes in the theory.
The notion of isomorphisms allows us to to make precise what is meant by ``nothing else changes''.

\begin{definition}[Equivalence of atomic systems]
\label{def:equiatom}
Two atomic systems $A_1,A_2\in\mathcal{A}$ are called \emph{equivalent}, and we write $A_1 \hat = A_2$, if there exists a thermodynamic isomorphism $\varphi, \varphi_\mathcal{A}$ which additionally fulfils 

\begin{itemize}
	\item [(iv)] $\varphi_\mathcal{A}(A_1)=A_2, \varphi_\mathcal{A}(A_2)=A_1$ and 
	$\varphi_\mathcal{A}(A)=A$ for all $A\in\mathcal{A}\smallsetminus\{A_1,A_2\}$, and

	\item [(v)] for $A\in\mathcal{A} \smallsetminus\{A_1,A_2\}$ it holds
	$\lfloor \varphi(p)\rfloor_A=\lfloor p \rfloor_A$ and 
	$\lceil \varphi(p)\rceil_A=\lceil p \rceil_A$.
\end{itemize}
\end{definition}
%
%
%

The points (iv) and (v) in this definition formally capture the additional intuitive requirements on an isomorphism that only swaps $A_1$ with $A_2$. 
The part of the isomorphism acting on the set of atomic systems shall only swap the two, and all processes shall induce the same state change as before on any atomic system apart from $A_1$ and $A_2$.\\

Using the results already known from isomorphisms in general, it can then be shown that $\hat=$ is indeed an \emph{equivalence relation} on $\mathcal{A}$. 
Having introduced a notion of equivalence for atomic systems, we use it to extend the concept of equivalence to arbitrary thermodynamic systems. 

\begin{definition}[Equivalence of systems]
\label{def:equisys}
Let $S_1,S_2\in\mathcal{S}$ be two arbitrary systems. They are \emph{equivalent}, and we write $S_1\hat=S_2$, if 
there exists a bijection between $\mathrm{Atom}(S_1)$ and $\mathrm{Atom}(S_2)$ which respects the equivalence classes of $\hat=$ for atomic systems. \\

Definition~\ref{def:equisys} can be rephrased as: The two systems are equivalent if 
$|\mathrm{Atom}(S_1)| = |\mathrm{Atom}(S_2)| =: n$ and there exists a labelling 
$\{A_k^{(i)}\}_{i=1,..,n} = \mathrm{Atom}(S_k)$ for $k=1,2$ such that
\begin{align}
A_1^{(i)} \hat= A_2^{(i)} \quad \text{for } i=1,\dots,n \,.
\end{align}
\end{definition}

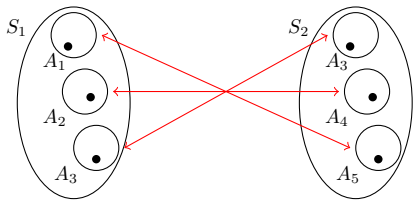
\begin{figure}
\hskip-1cm
\begin{center}
	\begin{tikzpicture}[scale=0.75, every node/.style={transform shape}]	

	\draw[] (-1,1) ellipse (1cm and 1.7cm);
	\node at (-1.7,2.6) [below left]{$S_{1}$};
		
	\draw[] (-1,2.2) ellipse (.4cm and .4cm);
	\node at (-1.1,2) [circle,fill,inner sep=1.5pt]{};
	\node at (-1,2) [below left]{$A_1$};

	\draw[] (-.8,1.2) ellipse (.4cm and .4cm);
	\node at (-.7,1.1) [circle,fill,inner sep=1.5pt]{};
	\node at (-1,1) [below left]{$A_2$};
	
	\draw[] (-.6,.2) ellipse (.4cm and .4cm);
	\node at (-.6,0) [circle,fill,inner sep=1.5pt]{};
	\node at (-.8,0) [below left]{$A_3$};
	
	\begin{scope}[xshift = 5cm]
		\draw[] (-1,1) ellipse (1cm and 1.7cm);
		\node at (-1.7,2.6) [below left]{$S_{2}$};
		
		\draw[] (-1,2.2) ellipse (.4cm and .4cm);
		\node at (-1.1,2) [circle,fill,inner sep=1.5pt]{};
		\node at (-1,2) [below left]{$A_3$};

		\draw[] (-.8,1.2) ellipse (.4cm and .4cm);
		\node at (-.7,1.1) [circle,fill,inner sep=1.5pt]{};
		\node at (-1,1) [below left]{$A_4$};
	
		\draw[] (-.6,.2) ellipse (.4cm and .4cm);
		\node at (-.6,0) [circle,fill,inner sep=1.5pt]{};
		\node at (-.8,0) [below left]{$A_5$};
	\end{scope}
	
	\draw[<->, red] (-.5,2.2) -- (-1.1+5,.2);
	\draw[<->, red] (-.3,1.2) -- (-1.3+5,1.2);
	\draw[<->, red] (-.1,0.2) -- (-1.5+5,2.2);
		
	\end{tikzpicture}
\end{center}
\caption{An example for two arbitrary thermodynamic systems $S_1\hat=S_2$ with $n=3$ (Definition~\ref{def:equisys}).
The equivalent atomic systems are connected with red arrows. They are
$A_1\hat=A_5$, $A_2\hat=A_4$ and the trivial $A_3\hat=A_3$.
}
\label{fig:equisystems} 
\end{figure}

Also for equivalent thermodynamic systems an isomorphism can be defined which fulfils analogous properties to the ones in Definition~\ref{def:isomorphism} and Definition~\ref{def:equiatom}. 
This isomorphism is essentially the concatenation of the individual isomorphisms for the atomic equivalences $A_1^{(i)} \hat= A_2^{(i)}$.
Just like before, one has to check that this definition preserves all kinds of thermodynamic properties of systems in order to justify calling the related systems \emph{equivalent}. 
In particular, now one also has to take composition of systems into account, which can be done.
These technical results are not tricky to derive but cumbersome at times. \\

Having done so in Appendix~\ref{app:equivalentsys} we can move on to the main postulate, for which the machinery of equivalent systems has been introduced. As it turns out, the power of the laws of thermodynamics partly relies on the assumption that in every situation it is possible to extend any particular setting of systems by ``duplicating'' certain subsystems, i.e.\ by adding some other copies of systems to the setting. For instance, this is important in the proof of Carnot's Theorem. The proof of Carnot's Theorem is constructive and uses the fact that systems of the type of those already considered can be added to the setting.  

%

\begin{postulate}[Arbitrarily many copies of (atomic) systems]
\label{post:copies}
Given an atomic system $A\in\mathcal{A}$ and $n\in\mathbbm{N}$ one may always assume that there exist $n$ different equivalent atomic system $\{A_i\}_{i=1}^n\subset\mathcal{A}$, $A_1\hat=A_2\hat=\cdots\hat=A_n\hat=A$.
\end{postulate}


This implies that there are in principle infinitely many copies available of any type of atomic system.\footnote{
However, it does not mean that infinitely many such systems must be present. Rather, it should be interpreted as saying that it is thinkable to have as many copies of an atomic system as one wants.
In this sense, there is no upper limit to the number of thermodynamic systems that are thinkable.} 
As a consequence, the same must hold for arbitrary systems because they consist of atomic systems:
For any $S\in\mathcal{S}$ there exist arbitrarily many different systems $S'\in\mathcal{S}$ 
with $S\wedge S' = \emptyset$ such that $S\hat=S'$. 

\newpage
\section{Heat and heat reservoirs}
\label{sec:heat}

\begin{center}
\fcolorbox{OliveGreen}{white}{\begin{minipage}{0.98\textwidth}
\centering
\begin{minipage}{0.95\textwidth}
\ \\
\textcolor{black}{\textbf{Postulates:}}
-\\
\ \\
\textcolor{black}{\textbf{New notions:}}
heat, heat reservoirs\\
\ \\
\textcolor{black}{\textbf{Technical results}} 
for this section can be found in Appendix~\ref{app:heat}.

\paragraph{\textcolor{black}{Summary:}}
The notion of {heat} is defined based on the previously introduced notions of work and internal energy. Heat is shown to be {additive} in the same way as work is, i.e.\ under concatenation and under composition.
As a preparation for the second law, {heat reservoirs} are defined as thermodynamic systems with special properties.
\vspace{.1cm}

\end{minipage}
\end{minipage}}
\end{center}

\ \\

In the previous sections we have introduced work and derived the internal energy function from it. Having these quantities it is possible to define heat.

\subsection{Heat}

As the first law (Postulate~\ref{post:first}) requires, the total work cost of a work processes on a system $S$ may only depend on initial and final state of the process. A general thermodynamic process acting on $S$ may affect more systems than just $S$ and the work cost on $S$ does \emph{not only} depend on initial and final states. 
It depends on \emph{how exactly} the state change is induced. 
In a typical formulation of thermodynamics this statement amounts to saying that the differential $\delta W_{S}$ is not exact. 
On the other hand, we have seen that the internal energy $U_{S}$ of a system $S$ is a state function, which would be the same as saying that the differential $\mathrm{d}U_{S}$ is exact. Hence, the difference of the two, internal energy minus work, gives rise to another quantity that generally depends on the executed process and not just on the input and output states. This is what we call heat.
It represents the change in internal energy that is \emph{not} a consequence of work done on or drawn from the system, i.e.\ the change in internal energy that is due to less controlled energy flows.\\



\begin{definition}[Heat]
\label{def:heat}
For an arbitrary thermodynamic process $p\in\mathcal{P}$ the \emph{heat flowing to a system $S\in\mathcal{S}$ under $p$} is defined as
\begin{align}
\label{eq:defheat}
Q_{S}(p) := \Delta U_{S}(p) - W_{S}(p) \,.
\end{align}
Here, $\Delta U_S(p)$ is defined as
\begin{align}
\label{eq:deltaUp}
\Delta U_S(p) := \sum_{\mathrm{Atom}(S)} \Delta U_A(p)\,
\end{align}
which is a natural extension to the usual definition of the additive function $\Delta U_S$ for processes in which not necessarily all atomic subsystems of $S$ are involved.\footnote{
Recall that if $A$ is not involved in $p$ we denote $\Delta U_A(p)=0$.}
\end{definition}

It follows automatically that if no atomic subsystem of $S$ is involved in $p$, then $Q_S(p)=0$. 
Also, $Q_S$ inherits additivity under composition (for disjoint systems) from $\Delta U_S$ and $W_S$. The same holds for additivity under concatenation, which is shown by considering atomic systems under a concatenated process $p'\circ p\in\mathcal{P}$ and distinguishing the three cases where (i) $A$ is neither involved in $p$ nor $p'$, (ii) $A$ is involved in one of them, and (iii) $A$ is involved in both of them. 
Furthermore, heat flows in reverse processes simply change their signs, just like work and internal energy do. This is a direct consequence of the definition of heat. 
And finally, heat flows of equivalent systems in equivalent processes are identical, as is shown in Lemma~\ref{lemma:equiheatapp} in Appendix~\ref{app:heat}, together with technical proofs of the other non-trivial statements of this paragraph.\\

If we suppress $p$ in this notation and rearrange Eq.~(\ref{eq:defheat}) we immediately obtain a standard form of the first law \cite{Clausius50} stating that the internal energy of a system $S$ may change in a thermodynamic process due to work and heat only by means of
\begin{align}
\label{eq:usualfirstlaw}
\Delta U_{S} = W_{S} + Q_{S}\,.
\end{align}
Hence, the way the first law is introduced in this work through Postulate~\ref{post:first} implies the common first law as in Eq.~(\ref{eq:usualfirstlaw}). 
In addition, following the arguments in this framework provides significant advantages over the standard statements. 
It does not suffer from undefined terms, which would be the case when stating Eq.~(\ref{eq:usualfirstlaw}) directly. 
Furthermore, the different roles of work and heat as energy contributions to the change in internal energy is not just stated but imprinted in the previous postulates, definitions and derived results. Work must be the energy the user of the theory controls, as it must be known how to compute it and how to reuse it \emph{a priori}. 
Heat, on the other hand, is the energy the user is aware of, but he does not control it directly -- hence it is defined as ``everything else''. \\

Consider now a work process $p\in\mathcal{P}_{S_1\vee S_2}$ on a disjointly composite system.
Instead of saying that the heat $Q_{S_1}(p)$ flows into system $S_1$ during the process $p$ one can also say that $Q_{S_1}(p)$ flows out of $S_2$ or, in other words, that $-Q_{S_1}(p)$ flows into $S_2$. For this statement to be compatible with Definition~\ref{def:heat} it is necessary that $Q_{S_2}(p) = -Q_{S_1}(p)$, otherwise the definition applied to the system $S_2$ would lead to a different notion of heat than the intuition explained above suggests. This is in fact the case, since for a work process $p\in\mathcal{P}_{S_1\vee S_2}$ it holds that 
\begin{align}
W_{S_1}(p) + W_{S_2}(p) = W_{S_1\vee S_2}(p) = \Delta U_{S_1\vee S_2} = \Delta U_{S_1} + \Delta U_{S_2} \,.
\end{align}
Using Definition~\ref{def:heat} for system $S_2$ now implies
\begin{align}
Q_{S_2}(p) = \Delta U_{S_2} - W_{S_2}(p) = -\Delta U_{S_1} + W_{S_1}(p) = -Q_{S_1}(p) \,.
\end{align}

It is worth emphasizing that Definition~\ref{def:heat} tells us what amount of heat flows into $S$, but not necessarily from which other system.
Only in a bipartite setting such as the one discussed in the previous paragraph a statement about where the heat is coming from is possible. 
As a consequence, in a more complex composite system it does not make sense to ask what amount of heat flows from one specific (atomic) subsystem to another, unless these are the only involved ones in the process.
Nevertheless, the intuition of heat flows between specific subsystems can and will be used, first and foremost when illustrating processes such as the Carnot process, which acts on at least three subsystems -- two reservoirs and a cyclic machine.
Whenever this intuition is used it will be made clear in what sense. The first example of this shows up in Figure~\ref{fig:carnot}, where internal arrows are used to mark heat flows in a composite system with up to three subsystems. The final paragraphs of this section discuss how these arrows must be interpreted.\\

In order to clarify the distinction between work and heat we here continue and extend Example~\ref{ex:A1A2}. 

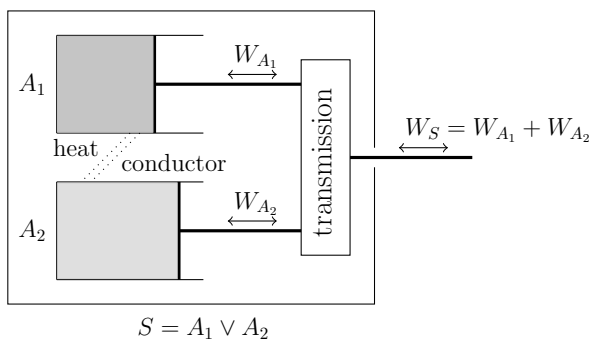
\begin{figure}
\hskip-1cm
\begin{center}
	\begin{tikzpicture}[scale=0.65, every node/.style={transform shape}]	


	\draw[-] (0,0) -- (3,0) ;
	\draw[-] (0,0) -- (0,2) ;
	\draw[-] (0,2) -- (3,2) ;
	\draw[] (-.5,1) node[] {\Large{$A_2$}};
%
	\draw[-, very thick] (2.5,0) -- (2.5,2);
	\draw[-, very thick] (2.5,1) -- (5,1);
	\draw[<->] (3.5,1.2) node[above right] {\Large{$W_{A_2}$}} -- (4.5,1.2);

	\draw[-,yshift=3cm] (0,0) -- (3,0) ;
	\draw[-,yshift=3cm] (0,0) -- (0,2) ;
	\draw[-,yshift=3cm] (0,2) -- (3,2) ;
	\draw[yshift=3cm] (-.5,1) node[] {\Large{$A_1$}};

	\draw[-, very thick,yshift=3cm] (2,0) -- (2,2);
	\draw[-, very thick,yshift=3cm] (2,1) -- (5,1);
	\draw[<->,yshift=3cm] (3.5,1.2) node[above right] {\Large{$W_{A_1}$}} -- (4.5,1.2);

	\fill[draw=black,color=lightgray,opacity=0.5] (0.01,0.02) -- (2.47,0.02) -- (2.47,1.98) -- (0.01,1.98) ;
	\fill[yshift=3cm,draw=black,color=lightgray,opacity=0.9] (0.01,0.02) -- (1.97,0.02) -- (1.97,1.98) -- (0.01,1.98) ;

	\draw[-,xshift=-1cm] (6,2.4) node[rotate=90, below, yshift = -.1cm] {\LARGE{transmission}} -- (6,.5) -- (7,.5) -- (7,4.5) -- (6,4.5) -- (6,2.4);
	\draw[-, very thick] (6,2.5) -- (8.5,2.5);
	\draw[<->] (7,2.7) node[above right] {\Large{$W_S = W_{A_1} + W_{A_2}$}} -- (8,2.7);

	\draw[-] (6.5,2.7) -- (6.5,5.5) -- (-1,5.5) -- (-1,-.5) -- (6.5,-.5) -- (6.5,2.3);
	\draw[] (3,-1) node[] {\Large{$S = A_1\vee A_2$}};
	
\begin{scope}[xshift=-.3cm]
	\draw[-, dotted] (1.8,3) -- (.8,2);
	\draw[-, dotted] (2,3) -- (1,2);
	\draw[] (0.1,2.7) node[right] {\Large{heat}};
	\draw[] (1.5,2.4) node[right] {\Large{conductor}};
\end{scope}

	\end{tikzpicture}
\end{center}
\caption{Similar to Figure~\ref{fig:exA1A2} the composite system $S$ consists of two subsystems $A_1$ and $A_2$. 
The total work done on $S$ is distributed to $A_1$ and $A_2$ through a transmission ratio specified by the thermodynamic process that is executed. 
The process also determines whether and when a thermal contact is established between the subsystems.
}
\label{fig:exWQ} 
\end{figure}

\begin{example}[Work and heat flows in a bipartite system]
Consider Figure~\ref{fig:exWQ}, where a composite system $S=A_1\vee A_2$ is shown. The subsystems are depicted as cylinders filled with gases and the total work $W_S$ is distributed over $A_1$ and $A_2$ via the transmission mechanism. Furthermore, a heat conductor can be put in place between the gases such that they can directly exchange energy.
A thermodynamic process involving $S$ must specify what part of the total work $W_S$ goes into which subsystem, e.g. in terms of determining the transmission ratio. 
It also specifies whether, when, and how a thermal contact is established between $A_1$ and $A_2$.

Suppose work $W>0$ is used to heat up $A_1$, e.g.\ by moving the piston back and forth very fast, and part of this energy is ``passed on'' to $A_2$. Does the energy flow to $A_2$ count as work or heat?

The answer depends on what is meant by ``passed on'' and can be illustrated by considering two different ways of implementing such a state change.
Consider the process $p'\in\mathcal{P}_S$ in which the work $W_S(p') = W_{A_1 }(p') = W$ is done on $A_1$ followed by the process $p\in\mathcal{P}_S$ transferring part of this work via the transmission to $A_2$, i.e.\ $W_S(p) = 0$ and $W_{A_2}(p) = -W_{A_1}(p)>0$.
In this case, the energy exchanged between $A_1$ and $A_2$ is called work and for the total process we find $W_{A_1}(p\circ p') = W - W_{A_2}(p)$ and 
$W_{A_2}(p\circ p') = 0 + W_{A_2}(p)$.

On the other hand, if after $p'$ the process $q\in\mathcal{P}_S$ is applied, in which the energy flow from $A_1$ to $A_2$ happens through the heat conductor (in particular it does not use the controlling mechanism of the transmission), then $W_{A_1}(q) = W_{A_2}(q) = 0$ and we find a non-zero heat flow $Q_{A_2}(q) = \Delta U_{A_2}(q) - W_{A_2}(q) = \Delta U_{A_2}(q) - 0 >0$ .

Notice that, depending on the initial states of the gases, it may be possible to attain exactly the same state changes through either of the two variants presented above. This can of course only happen, if $W_{A_2}(p) = Q_{A_2}(q)$.
We conclude that it depends on the actual process whether energy exchanged between thermodynamic systems is termed heat or work.
This is in accordance with the first law (Postulate~\ref{post:first}) which only requires the \emph{total} work $W_S$ to be independent of anything except the state changes. And indeed, in the case when both $p\circ p'$ and $q\circ p'$ induce the same state change we find
\begin{align}
\begin{split}
W_S(p\circ p') &= W_{A_1}(p\circ p') + W_{A_2}(p\circ p') \\
&= W_{A_1}(p') + W_{A_1}(p) + W_{A_2}(p') + W_{A_2}(p) \\
&= W - W_{A_2}(p) + 0 + W_{A_2}(p)  \\
&= W \,,\ \text{and} \\
W_S(q\circ p') &= W_{A_1}(p') + W_{A_1}(q) + W_{A_2}(p') + W_{A_2}(q) \\
&= W + 0 + 0 + 0 \\
&= W\,.
\end{split}
\end{align}
\end{example}

\subsection{Heat reservoirs}

Definition~\ref{def:heat} can be applied to an arbitrary system. However, in thermodynamics one often makes use of special systems providing or taking up heat, namely \emph{heat reservoirs} (also \emph{heat baths}). These systems play a central role, not least in the second law and Carnot's Theorem. 

A heat reservoir is thought of as a large but simple system. 
Large here stands for the fact that its behaviour does not change significantly when moderate amounts of energy are drawn from or given to it. 
Alternatively, this means that it essentially does not matter whether a finite amount of energy is supplied by one or more copies of the same reservoir. 
In this sense heat reservoirs are regarded as infinite systems. 

On the other hand, a heat reservoir is simple to the extent that there is only one macroscopic parameter that defines its state.\footnote{
Since heat reservoirs are large systems they are of course not simple in a microscopic sense. On the contrary, we know that the larger the system, the more complex it gets when one tries to describe the interplay between its microscopic degrees of freedom. However, when saying simple, we here mean that the used \emph{thermodynamic description} is simple.}
A heat reservoir's only purpose is to provide or take up heat. In particular, one should not be able to extract work from such a system in any process. \\

Formally, the set of heat reservoirs of a thermodynamic theory is characterized as follows.

\begin{definition}[Heat reservoirs]
\label{def:heatreservoir}
An atomic system $R\in\mathcal{A}$ is called a \emph{heat reservoir} if:
\begin{itemize}
	\item [(i)]
	For all $R\in\mathcal{R}$ the internal energy function $U_R$ is injective. 
	\item [(ii)] 
	For all $R\in\mathcal{R}$ and all $p\in\mathcal{P}$ it holds $W_R(p)\geq0$, i.e.\ there is no 
	thermodynamic process that extracts work from a reservoir.
	\item [(iii)] 
	For any thermodynamic process $p\in\mathcal{P}$ that acts on a reservoir $R\in\mathcal{R}$,
	and for any energy difference $\Delta U \in\mathbbm{R}$,
	there exists a corresponding process $p'\in\mathcal{P}$ acting on the states of $R$ shifted by $\Delta U$.
	That is,  
	$W_A(p')=W_A(p)$ for all $A\in\mathcal{A}$, 
	$\lfloor p'\rfloor_A = \lfloor p\rfloor_A$ and $\lceil p'\rceil_A = \lceil p \rceil_A$ 
	for all $A\in\mathcal{A}\smallsetminus\{R\}$, 
	and $U_R(\lfloor p' \rfloor_R) = U_R(\lfloor p \rfloor_R) + \Delta U$
	as well as $U_R(\lceil p' \rceil_R) = U_R(\lceil p \rceil_R) + \Delta U$.

\end{itemize}
The set $\mathcal{R} := \{R\in\mathcal{A} \,|\, R \text{ is heat reservoir}  \} \subset\mathcal{A}$ is called \emph{set of heat reservoirs}.
\end{definition}

It follows that if $R$ is a heat reservoir and $R\hat=R'$, then $R'$ is a heat reservoir, too.
All requirements (i)-(iii) hold either for both $R$ and $R'$ or neither of them since they are statements about the existence or inexistence of some process and processes of equivalent systems are in 1-1 correspondence.
%
\\

The first and the second points state that the description of thermodynamic states of heat reservoirs are simple.
The states must be in one-to-one correspondence with the internal energy,
i.e.\ no other macroscopic quantity is needed to describe it, and one cannot extract work from a reservoir in any thermodynamic process.
In some introductions to thermodynamics this is captured by saying that heat reservoirs have no ``work coordinates'' \cite{LY99,Thess11} (or that these stay constant, for that matter).
This does not yet exclude that one could use a cyclic machine to extract work indirectly from a reservoir by using a heat flow coming from the reservoir. Only the second law guarantees that even a more sophisticated setting does not allow one to extract work from a single reservoir. 

The fact that a heat reservoir cannot produce positive work is an important difference in comparison with work reservoirs. Work reservoirs are sometimes used in traditional approaches when a system is needed to explicitly model work and work flows.
In the case of heat reservoirs it is excluded that energy done on it as work can later again be used as such.

Definition~\ref{def:heatreservoir} (ii) also guarantees that for any two states 
$\sigma,\sigma'\in\Sigma_R$ with $U_R(\sigma')\geq U_R(\sigma)$ there is a work process 
$p\in\mathcal{P}_R$ with $\lfloor p \rfloor_R = \sigma$ and $\lceil p \rceil_R = \sigma'$. 
This can be seen by Postulate~\ref{post:first} (i), stating that either $\sigma \rightarrow \sigma'$ or $\sigma'\rightarrow\sigma$ or both. If $\sigma'\rightarrow\sigma$, the work process on $R$ inducing this state change would extract work, which is forbidden.
Hence the work process with positive work must exist.

We note that for any reversible process $p\in\mathcal{P}$ it must hold $W_R(p)=0$ for all $R\in\mathcal{R}$.
This follows due to the fact that in the reverse process the work done is the negative of the one from the forward process. Hence a non-zero work cost for reservoirs in a reversible process would have to violate (ii) either in the forward or the reverse process. 

Point (iii) formally captures the statement that a heat reservoir is invariant under translations of its internal energy.\footnote{
	In our previous paper \cite{Kammerlander18}, which focussed on the zeroth law, this point was phrased in very different terms. In the remainder of this work it will become clear that the previous requirement can be derived from this less complicated and more intuitive one.}
Keeping in mind that internal energy is injective for heat reservoirs, (iii) can be read as ``What is possible with some initial state is possible with any.''

An obvious one is that reservoirs are infinitely large systems in the sense that their spectrum is $(-\infty,\infty)$. 
Even though this seems unphysical at first sight, this is how we think of heat reservoirs. They are seen as infinitely big systems that do not change their behaviour under finite changes of energy.
Of course, for all practical purposes, a system with approximate characteristics (i)-(iii) which is much larger than all other involved systems can serve as a heat reservoir. 
But in the theoretical modelling the rigorous treatment asks for an ``infinite'' system. 
%

A further important observation is that requirement (iii) for heat reservoirs asks for the existence of processes independent of the initial states, i.e.\ for all initial states. This means that the actual state of a reservoir, and hence its actual internal energy, does not have an influence on what can be done with it. The reservoir's characteristics are in this sense independent of its current state. 
This is the reason why, from now on when using reservoirs, we will not discuss their states in any more depth.\\

Arguably, the formulated assumptions on heat reservoirs are neither new nor surprising. They are mostly standard assumptions, see e.g.\ \cite{LY99}, that may not even be spelled out in certain texts on thermodynamics. 
However, here they are central for the precise arguments given in the coming sections.

In particular, stating the requirements on heat reservoirs explicitly is also necessary in order to formulate the second law according to Kelvin \cite{Kelvin51}, as is done in the next section.\\

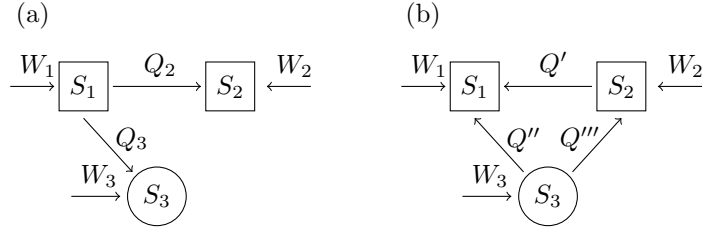
\begin{figure}
\begin{center} 
	\begin{tikzpicture}[scale=.65]
	\draw[] (-3,4.2) node[left] {(a)};	
	 
	\draw[] (-2.5,2.25) node[above, yshift = .05cm] {$S_1$} -- (-2,2.25) -- (-2,3.25) -- (-3,3.25)
	-- (-3,2.25) -- (-2.5,2.25);
	\draw[] (.5,2.25) node[above, yshift = .05cm] {$S_2$} -- (1,2.25) -- (1,3.25) -- (0,3.25) 
	-- (0,2.25) -- (.5,2.25);
	\draw[] (-1,.5) node[] {$S_3$} circle (.6cm);
	
	\draw[->] (-4,2.75) node[above right] {$W_{1}$} -- (-3.1,2.75);
	\draw[->] (2.15,2.75) -- (1.25,2.75) node[above right] {$W_2$};
	\draw[->] (-2.75,.5) node[above right] {$W_{3}$} -- (-1.75,.5);

	\draw[->] (-1.9,2.75) -- (-.1,2.75) node[above left, xshift = -.2cm] {$Q_2$};
	\draw[<-] (-1.5,1) -- (-2.5,2.1) node [below right, xshift = .3cm] {$Q_{3}$};
	
	\begin{scope}[xshift = 8cm]
	\draw[] (-3,4.2) node[left] {(b)};	
	 
	\draw[] (-2.5,2.25) node[above, yshift = .05cm] {$S_1$} -- (-2,2.25) -- (-2,3.25) -- (-3,3.25)
	-- (-3,2.25) -- (-2.5,2.25);
	\draw[] (.5,2.25) node[above, yshift = .05cm] {$S_2$} -- (1,2.25) -- (1,3.25) -- (0,3.25) 
	-- (0,2.25) -- (.5,2.25);
	\draw[] (-1,.5) node[] {$S_3$} circle (.6cm);
	
	\draw[->] (-4,2.75) node[above right] {$W_{1}$} -- (-3.1,2.75);
	\draw[->] (2.15,2.75) -- (1.25,2.75) node[above right] {$W_2$};
	\draw[->] (-2.75,.5) node[above right] {$W_{3}$} -- (-1.75,.5);

	\draw[<-] (-1.9,2.75) -- (-.1,2.75) node[above left, xshift = -.2cm] {$Q'$};
	\draw[->] (-1.5,1) -- (-2.5,2.1) node [below right, xshift = .3cm] {$Q''$};
	\draw[->] (-.5,1) node [above, xshift = .1cm, yshift = .15cm] {$Q'''$} -- (.5,2.1);
	\end{scope}
	
	\end{tikzpicture}
\end{center}
\caption{The three systems $S_1,S_2,S_3\in\mathcal{S}$ undergo a process $p\in\mathcal{P}_{S_1\vee S_2\vee S_3}$ with work flows $W_{S_i}(p) =: W_i$ and heat flows $Q_{S_i}(p) =: Q_i$.
Since the internal energy is a state variable, the sum of work and heat flows into a cyclic systems must be zero, i.e.\ $W_3+Q_3 = 0$.
While external work arrows are not arbitrary since they are determined by the work functions $W_{S_i}$, internal heat arrows have a freedom.
This is illustrated by the difference between (a) and (b). 
(a) suggests that the heat flows into $S_2$ and $S_3$ are coming from $S_1$ directly, while (b) says that the heat $Q'''$ flows between $S_3$ and $S_2$.
Both representations are valid as long as the sums of the internally exchanged heat flows satisfy
$Q'+Q'' = Q_1 \equiv -Q_2-Q_3$, $Q'''-Q' = Q_2$, $-Q''-Q''' = Q_3$.
Internal arrows could only be argued to be unique in this setting if $p$ was seen as a concatenated process $p=p_2\circ p_1$ with e.g.\ $p_i\in\mathcal{P}_{S_i\vee S_3}$. In this case, the heat flows could be split up into the ones flowing during the individual processes $p_i$. 
}
\label{fig:pictorialheat}
\end{figure}

We close this section with a comment on the pictorial representation of heat flows.
When illustrating a process on a composite system we use thick lines to border reservoirs and thin lines for other systems (see e.g.\ Figure~\ref{fig:carnot}). 
Circles border cyclic systems while squares leave open whether the system involved undergoes cyclic evolution or not. 
Directed arrows mark positive work (external) and heat (internal) flows. 

Importantly, arrows showing heat flows of single processes in more complex structures have no direct mathematical meaning since heat flows are not uniquely defined except when exchanged between exactly two subsystems. Only the sum of all internal arrows associated to a subsystem does have a mathematical meaning. The internal arrows nevertheless help to map the abstract processes to well-known situations such as Carnot engines. 
For an example, see Figure \ref{fig:pictorialheat}. Both illustrations show the same process 
$p\in\mathcal{P}_{S_1\vee S_2\vee S_3}$ on the composite system $S_1\vee S_2\vee S_3$. 
While (a) suggests that only the heat $Q_i$ flows from $S_1$ to $S_i$, (b) says that the heat $Q'''$ from $S_3$ to $S_2$.
The mathematical formalism leaves open which of the possibilities actually happen -- they are considered thermodynamically equivalent as long as for each system $\Delta U_i = W_i + Q_i$ is satisfied.
 
This ambiguity is not a problem. On the contrary, it is an asset of our framework that it is possible to make the usual thermodynamic statements without ever having to refer to a technical notion of a ``heat flow from $S_1$ to $S_2$'' in a composite system involving other systems ($S_3$) as well. 

When we use such a wording nevertheless, it is either a non-technical statement appealing to the reader's intuition, or we talk about concatenated processes.
In the latter case, heat flows between subsystems may have a mathematical meaning even though other subsystems are present as well. This is the case if the process can be split up into parts, each of which involves only two subsystems.
Thus the framework is not restricted by the fact that we cannot in general give a mathematical meaning to the heat flow arrows. If one wants to say that heat flows from or to a specific system, one can (at least sometimes) do it by splitting the process into appropriate parts. 
We will explicitly comment on this when the situation shows up.



\newpage
\section{The second law}
\label{sec:secondlaw}

\begin{center}
\fcolorbox{OliveGreen}{white}{\begin{minipage}{0.98\textwidth}
\centering
\begin{minipage}{0.95\textwidth}
\ \\
\textcolor{black}{\textbf{Postulates:}}
the second law\\
\ \\
\textcolor{black}{\textbf{New notions:}}
-

\paragraph{\textcolor{black}{Summary:}}
The {second law} is postulated in the form of the Kelvin-Planck statement and its immediate consequences are discussed. 
\vspace{.1cm}

\end{minipage}
\end{minipage}}
\end{center}

\ \\

Considered to be the core postulate of phenomenological thermodynamics by many, the second law certainly takes a central role in a theoretical introduction to the theory.
The most prominent formulations are due to Carnot \cite{Carnot24}, 
Clausius \cite{Clausius54}, 
Kelvin \cite{Kelvin51} and Planck \cite{Planck97}.
The similar statements by Kelvin and Planck, sometimes called the Kelvin-Planck statement, can be summarized by
``It is impossible to devise a cyclically operating device, the sole effect of which is to absorb energy in the form of heat from a single thermal reservoir and to deliver an equivalent amount of work.'' 
It shares the problem with the other versions by Clausius and Carnot that some terms used (e.g.\ ``sole effect'', ``heat'' or ``reservoir'') are ambiguous if no further specifications are made.
With the notions and definitions we have made up until here we are now able to formally state the Kelvin-Planck version of the second law.

\begin{postulate}[The second law]
\label{post:sec}
Consider a heat reservoir $R\in\mathcal{R}$ together with an arbitrary system $S\in\mathcal{S}$ and a work process $p\in\mathcal{P}_{R\vee S}$ on the composite system.
If $p$ is cyclic on $S$, then $W_S(p)\geq0$, i.e.\ no work can be drawn from $S$ in such a process.
\end{postulate}

The setting of the second law is shown in Figure \ref{fig:2lawcarnot}. 
In the formulation of Postulate~\ref{post:sec} the problems of the above mentioned versions of the second law do not occur.
The ``sole effect'' is captured in the fact that the process $p\in\mathcal{P}_{S\vee R}$ we talk about is a work process on $S\vee R$.
In addition, the terms ``heat'' and ``reservoir'' have been properly defined beforehand and can therefore be used now without danger of confusion. \\

A total amount of work $W_{R\vee S}(p) = W_R(p) + W_S(p)$ is done on the composite system 
$R\vee S$, where $W_R(p)\geq0$ has to be non-negative due to Definition~\ref{def:heatreservoir} (ii). 
Since $S$ is assumed to be cyclic under $p$, i.e.\ $\lceil p \rceil_S = \lfloor p \rfloor_S$, the internal energy of $S$ does not change, $\Delta U_S (p)=0$.
Hence the work done on $S$ is equal to the heat flowing from $S$ to $R$, which is denoted by
$Q_R(p)$.
Therefore, the second law can be rephrased as ``In a process on $R\vee S$ that is cyclic on $S$ heat can only flow from $S$ to $R$ and not in the other direction'', which is precisely the Kelvin-Planck statement.


 

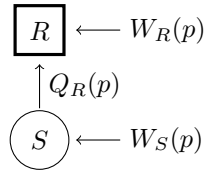
\begin{figure}[h]
\begin{center} 
	\begin{tikzpicture}[scale=.65]


	\draw[very thick] (.5,4.5) node[above, yshift=.05cm] {$R$}-- (1,4.5) -- (1,5.5) -- (0,5.5) -- (0,4.5) -- (.5,4.5); 
	\draw[] (.5,2.75) node[] {$S$} circle (.6cm);

	\draw[->] (2.15,5) node[right] {$W_R(p)$} -- (1.25,5);
	\draw[<-] (.5,4.3) -- (.5,3.4) node [above right] {$Q_R(p)$};
	\draw[->] (2.15,2.75) node[right] {$W_S(p)$} -- (1.25,2.75);
		
	\end{tikzpicture}
\end{center}
\caption{The setting for the Kelvin-Planck statement of the second law, Postulate~\ref{post:sec}.}
\label{fig:2lawcarnot}
\end{figure}

We consider a special case of the setting by assuming for the moment that the process $p$ is reversible in addition. Let $p^\mathrm{rev}\in\mathcal{P}_{R\vee S}$ be a reverse process of $p$. 
Both $p$ and $p^\mathrm{rev}$ fulfil the requirements stated in Postulate~\ref{post:sec}, that is, both are cyclic on $S$. Hence both must be such that no positive heat flows from $R$ to $S$, which is equivalent to $W_S(p)\geq0$ and $W_S(p^\mathrm{rev})\geq0$.
However, this implies that for reversible such processes
\begin{align}
W_{R\vee S}(p^\mathrm{rev})
= \underbrace{W_R(p^\mathrm{rev})}_{\geq0} + \underbrace{W_S(p^\mathrm{rev})}_{\geq0} 
= -W_{R\vee S}(p) 
=-\underbrace{W_R(p)}_{\geq0} - \underbrace{W_S(p)}_{\geq0} 
\end{align}
since the total work cost of a reverse process is the negative of the forward process.
This equality can only be fulfilled if all terms are zero, which implies that also 
$Q_R(p)=-Q_S(p)=0$ and $Q_R(p^\mathrm{rev})=-Q_S(p^\mathrm{rev})=0$.

We conclude that reversible processes on a composite system $R\vee S$ that are in addition cyclic on $S$ must have a trivial heat flow between $R$ and $S$, and that if the processes ought to be reversible, no work can be done on a reservoir. \\

The observation that doing work on a reservoirs is irreversible holds for arbitrary processes, not just the ones that fit the setting of the second law.
This meets our intuition, which says that reservoirs provide or take up heat, but not more than that. 
In fact, after having proved Carnot's Theorem in the next section, we will be able to make the even stronger statement that doing work in a heat reservoir is not only non-reversible but inefficient. 

\newpage
\section{Carnot's Theorem}
\label{sec:carnot}

\begin{center}
\fcolorbox{OliveGreen}{white}{\begin{minipage}{0.98\textwidth}
\centering
\begin{minipage}{0.95\textwidth}
\ \\
\textcolor{black}{\textbf{Postulates:}}
existence of reversible Carnot engines\\
\ \\
\textcolor{black}{\textbf{New notions:}}
Carnot's Theorem\\
\ \\
\textcolor{black}{\textbf{Technical results}} for this section can be found in Appendix~\ref{app:carnot}.

\paragraph{\textcolor{black}{Summary:}}
{Carnot's Theorem} states that any {machine operating between two heat reservoirs has a maximal efficiency} which only depends on the reservoirs, and that reversible processes are optimal.
This theorem will be explained and proved in this section.
\vspace{.1cm}

\end{minipage}
\end{minipage}}
\end{center}

\subsection{The setting}

This section is concerned with settings as 
depicted in Figure~\ref{fig:carnot}. Two reservoirs interact with a system under a work process 
$p\in\mathcal{P}_{R_1\vee S \vee R_2}$ such that the system undergoes cyclic evolution. 
The goal is to understand what values the work and heat flows can attain within the boundaries set by the laws of thermodynamics. Therefore we assume that $W_{R_i}(p)=0$ because we already know that heat can only be don on but never be drawn from a reservoir. 
Special interest will be given to the most efficient setting, in which heat is taken from one reservoir and given to the other, while doing as little work as necessary -- or put differently, while drawing as much work as possible.
The system $S$ is often called (Carnot) engine or machine.
Clearly, here it is assumed that all systems are pairwise disjoint, 
$R_1 \wedge S = \emptyset = R_2\wedge S = \emptyset = R_1\wedge R_2$. 
If they were not, the figure as well as the following discussion would make no sense.

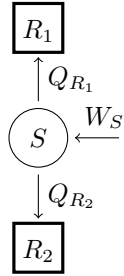
\begin{figure}[h]
\begin{center} 
	\begin{tikzpicture}[scale=.65]

		\draw[very thick] (.5,4.5) node[above] {$R_1$}-- (1,4.5) -- (1,5.5) -- (0,5.5) -- (0,4.5) -- (.5,4.5); 
		\draw[very thick] (.5,0) node[above] {$R_2$} -- (1,0) -- (1,1) -- (0,1) -- (0,0) -- (.5,0); 
		\draw[] (.5,2.75) node[] {$S$} circle (.6cm);

		\draw[<-] (.5,4.4) -- (.5,3.5) node [above right] {$Q_{R_1}$};
		\draw[<-] (.5,1.1) -- (.5,2) node[below right] {$Q_{R_2}$};
		\draw[->] (2.15,2.75) -- (1.25,2.75) node[above right] {$W_S$};
	
	\end{tikzpicture}
\end{center}
\caption{
A typical Carnot engine.
Two reservoirs $R_1$ and $R_2$ (not necessarily copies of each other) interact with another system $S$ through a (not necessarily reversible) work process $p\in\mathcal{P}_{R_1\vee S \vee R_2}$ that is cyclic on $S$.
The process dependence of the work and heat flows is omitted by writing 
$W_S := W_{S}(p)$, and $Q_{R_i} := Q_{R_i}(p)$.
For reversible processes $W_{R_i}(p)=0$ must hold.
As is shown in Lemma~\ref{lemma:carnotapp} in a reversible setting the heat flows always fulfil $Q_{R_2}(p)>0>Q_{R_1}(p)$ or $Q_{R_1}(p)>0>Q_{R_2}(p)$ (or $Q_{R_1}(p) = Q_{R_2}(p)=0$, but this is the trivial case).
}
\label{fig:carnot}
\end{figure}

The heat flows
$Q_{R_1}$ and $Q_{R_2}$ are defined such that they are positive when positive heat flows into the corresponding reservoir. Hence a negative heat flow means that positive heat is flowing into the cyclic system $S$. 
Consequently, the internal energy changes in terms of the quantities defined in Figure~\ref{fig:carnot} read
$\Delta U_{R_i} = Q_{R_i}$ and $0=\Delta U_S = W_S -Q_{R_1}-Q_{R_2}$.\\
%

Since Carnot engines, and in particular reversible ones, are central for the development of the notions of absolute temperature and entropy we postulate that between any two reservoirs there exists an engine and a non-trivial reversible process on the three systems.

\begin{postulate}[Existence of reversible Carnot engines]
\label{post:Carnotexist}
	Let $R_1,R_2\in\mathcal{R}$ be two reservoirs and $Q\in\mathbbm{R}$.
	Then there exists a system $S\in\mathcal{S}$ 
	and a reversible work process $p\in\mathcal{P}_{R_1\vee S\vee R_2}$, cyclic on $S$,
	with $Q_{R_1}(p)=Q$.
\end{postulate}

The reading of this statement is similar to the one of Postulate~\ref{post:copies} on the existence of arbitrarily many copies of any thermodynamic system.
It might be very difficult or only approximately possible to build a perfect reversible Carnot engine.
However, in principle the existence of such a machine is thinkable and this is what the theoretic considerations to come are based on. 

In traditional texts this postulate is usually not spelled out but implicitly taken for granted in the construction of the proof of Carnot's theorem. \\

Postulate~\ref{post:Carnotexist} will have important consequences when discussing absolute temperature as it will be necessary in order to define the temperature of \emph{all} heat reservoirs.
%
It turns out that without this, multiple incomparable definitions of absolute temperature can exist in parallel.
In the postulate it is not only asked for the existence of a reversible machine but also for a heat flow that one can choose to match $Q\in\mathbbm{R}$. 
Due to this requirement it is possible that the reversible heat flow $Q_{R_1}(p)$ (or $Q_{R_2}(p)$ due to the symmetry of the labels $1\leftrightarrow2$, but not both at the same time) can be chosen at will, which will become important for technical reasons.

Even though at this point it might appear that Postulate~\ref{post:Carnotexist} is very strong the discussion of ideal gases in Section~\ref{sec:exideal}, and in particular Figure~\ref{fig:pV}, show that if the theory of phenomenological thermodynamic is powerful enough to describe ideal gases with quasistatic processes (Section~\ref{sec:quasistatic}) then the postulate is easily satisfied.\\

Lemma~\ref{lemma:carnotapp} prepares the stage for Carnot's Theorem by showing that in any setting fulfilling the conditions from Figure~\ref{fig:carnot} at least one of the heat flows $Q_{R_i}$ is positive, or the process is \emph{trivial}, by which $Q_{R_1}=Q_{R_2}=0$ is meant.
This result holds independent of whether the process is reversible or irreversible.
A reversible process fulfils the stronger property that one of the heat flows is strictly positive while the other one is strictly negative. 
The proof makes direct use of the second law (Postulate~\ref{post:sec}) and the definition of heat reservoirs (Definition~\ref{def:heatreservoir}).\\


Even though this result basically only talks about positive and negative heat flows, it already says a lot about Carnot engines, in particular about reversible ones. This will become clear not least in the proof of Carnot's Theorem. 
In some sense, it strengthens the second law by saying that it is impossible to have two reservoirs pumping heat into a cyclic machine that generates work out of it. This is even true if the reservoirs are not copies of each other. 
On the other hand it is a precursor to Carnot's Theorem, which is based on Lemma~\ref{lemma:carnotapp} and goes beyond it by making a quantitative statement. 
Carnot's Theorem can be read as a statement about the efficiency of such machines.

\subsection{Carnot's Theorem}

\begin{thm}[Carnot's Theorem]
\label{thm:carnot}
Consider a machine $S\in\mathcal{S}$ and two reservoirs $R_1,R_2\in\mathcal{R}$ undergoing a reversible work process $p\in\mathcal{P}_{R_1\vee S \vee R_2}$ which is cyclic on $S$.
Let $S'\in\mathcal{S}$ be an additional machine operating between the same reservoirs under the work process $p'\in\mathcal{P}_{R_1\vee S \vee R_2}$, which is cyclic on $S'$, fulfils $W_{R_1}(p)=W_{R_2}(p)=0$, but is not necessarily reversible.
W.l.o.g.\ $Q_{R_2}(p)>0$ and $Q_{R_2}(p')>0$.\footnote{I.e.\ the processes are non-trivial and if the signs of $Q_{R_2}$ are not positive, swap the labels $1,2$ s.t.\ $Q_{R_2}(p')>0$ and then chose the reversible process $p$ to work in the direction in which $Q_{R_2}(p)>0$, too.}
Then
\begin{itemize}
	\item [(i)] the ratio of heat flows satisfies the inequality
	\begin{align}
	- \frac{Q_{R_1}(p')}{Q_{R_2}(p')} \leq - \frac{Q_{R_1}(p)}{Q_{R_2}(p)}\,, 
	\end{align}
	\item [(ii)] and the positively valued ratio $-\tfrac{Q_{R_1}(p)}{Q_{R_2}(p)}$ for reversible processes only depends on the reservoirs $R_1$ and $R_2$ and not on the machine $S$ nor the details of the process. It is \emph{universal} in this sense.
\end{itemize}

\end{thm}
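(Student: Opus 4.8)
The plan is to reduce both parts to a single application of the Kelvin--Planck second law (Postulate~\ref{post:sec}) to a composite process that leaves only one of the two reservoirs non-cyclic. The central gadget is a \emph{heat-flow-matched combination}. Given any admissible comparison process $p'$ (cyclic on its machine $S'$, with $W_{R_i}(p')=0$ and $Q_{R_2}(p')>0$), I would invoke Postulate~\ref{post:Carnotexist} --- using the $1\leftrightarrow2$ label symmetry noted after it --- to obtain a reversible process $\hat p$, cyclic on a fresh machine $\hat S$ taken disjoint from everything by Postulate~\ref{post:copies}, with $Q_{R_2}(\hat p)=-Q_{R_2}(p')$. After shifting the reservoir energies so the concatenation is defined (reservoir property (iii) of Definition~\ref{def:heatreservoir}), the process $\hat p\circ p'$ acts on $R_1\vee R_2\vee S'\vee\hat S$, is cyclic on $S'$ and on $\hat S$, and --- since the net heat into $R_2$ vanishes and $U_{R_2}$ is injective --- is cyclic on $R_2$ as well.

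Reading $R_2\vee S'\vee\hat S$ as the ``system'' and $R_1$ as the single reservoir in Postulate~\ref{post:sec}, and using that all work on reservoirs vanishes (by $W_{R_i}(p')=0$ and the fact that reservoir work is zero in reversible processes), the second law gives $W_{S'}(p')+W_{\hat S}(\hat p)\ge0$. Writing $\Delta U=0$ on each cyclic machine as $W_{S'}(p')=Q_{R_1}(p')+Q_{R_2}(p')$ and $W_{\hat S}(\hat p)=Q_{R_1}(\hat p)+Q_{R_2}(\hat p)$, and cancelling the matched $R_2$-terms, this collapses to $Q_{R_1}(p')+Q_{R_1}(\hat p)\ge0$. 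Dividing by $Q_{R_2}(p')>0$ and substituting $Q_{R_2}(p')=-Q_{R_2}(\hat p)$ yields
\begin{align}
-\frac{Q_{R_1}(p')}{Q_{R_2}(p')}\le-\frac{Q_{R_1}(\hat p)}{Q_{R_2}(\hat p)}\,,
\end{align}
i.e.\ the efficiency of $p'$ is bounded by that of an \emph{arbitrary} reversible matcher.

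To finish, I must establish (ii): the right-hand ratio is the same for every reversible engine and depends only on $R_1,R_2$. First I would show that two reversible engines with equal $Q_{R_2}$ have equal $Q_{R_1}$: concatenating one with the reverse of the other cancels $R_2$, producing a reversible process touching only $R_1$, and applying Postulate~\ref{post:sec} both to it and to its reverse forces the total work, hence the net $Q_{R_1}$, to vanish. Thus $Q_{R_1}=f(Q_{R_2})$ is a well-defined function for reversible engines between $R_1,R_2$, defined on all of $\mathbbm{R}$ by Postulate~\ref{post:Carnotexist}. Running independent copies (Postulate~\ref{post:copies}, with equal heat flows by Lemma~\ref{lemma:equiheatapp}) and concatenating on disjoint machines shows
\begin{align}
f(a+b)=f(a)+f(b)\,,
\end{align}
so $f$ is additive; the opposite-sign property of reversible heat flows from Lemma~\ref{lemma:carnotapp} makes $f$ of one sign on the positive axis, hence monotone, forcing $f(x)=-r\,x$ for a constant $r=r(R_1,R_2)>0$. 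This is exactly (ii); substituting it into the displayed inequality (whose matcher now shares the ratio of the given reversible $p$) gives (i).

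The main obstacle is precisely the \emph{exact} cancellation of the $R_2$ heat flow: the Kelvin--Planck statement only bites when a single reservoir is left non-cyclic, and naively rescaling the given processes would require their heat flows to be commensurable. Postulate~\ref{post:Carnotexist}, furnishing a reversible engine with a freely prescribed heat flow, is what removes any rationality or continuity gap. The remaining subtlety is the regularity of the additive $f$: without it an irregular Cauchy solution could in principle produce engine-dependent ratios, and it is the opposite-sign property from Lemma~\ref{lemma:carnotapp} that upgrades additivity to genuine linearity.
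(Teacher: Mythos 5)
Your proof is correct, but it takes a genuinely different route from the paper's. The paper proves (i) by contradiction via the classical integer-rescaling trick: assuming the strict reverse inequality, it picks integers $k,l$ with $\tfrac{Q_{R_2}(p)}{Q_{R_2}(p')} > \tfrac{k}{l} > \tfrac{Q_{R_1}(p)}{Q_{R_1}(p')}$, runs the reverse of $p$ exactly $l$ times followed by $k$ runs of $p'$ (Definition~\ref{def:heatreservoir}~(iii) supplying the shifted repetitions), and finds \emph{both} total heat flows strictly negative, contradicting Lemma~\ref{lemma:carnotapp}; part (ii) is then a two-line corollary of (i), obtained by exchanging the roles of $p$ and $p'$ when both are reversible. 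You instead argue directly: the tunable heat flow of Postulate~\ref{post:Carnotexist} furnishes an exact matcher cancelling the $R_2$ flow, so a single application of Postulate~\ref{post:sec} gives (i) relative to the matcher --- but then (ii) must be established independently (well-definedness of $Q_{R_1}$ as a function of $Q_{R_2}$, Cauchy additivity from concatenating disjoint engines, and the sign constraint from Lemma~\ref{lemma:carnotapp} upgrading additivity to linearity) before (i) as stated follows, so your logical order is reversed relative to the paper's. The trade-offs: your argument avoids contradiction and rational approximation, is insensitive to the sign of $Q_{R_1}(p')$, and yields the stronger structural fact that $Q_{R_1}=-r\,Q_{R_2}$ holds linearly across \emph{all} reversible engines; the paper's proof is leaner in its hypotheses --- it never invokes Postulate~\ref{post:Carnotexist}, which the paper needs only afterwards to make the temperature ratio total --- and it gets universality essentially for free, whereas in your route the functional-equation step carries all of that weight. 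One side remark: your closing worry about a ``rationality gap'' in the rescaling route is unfounded, since the paper needs no commensurability of the heat flows; it only needs some rational strictly between the two ratios, which density of $\mathbbm{Q}$ always provides.
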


The proof of Carnot's Theorem follows standard proofs from textbooks of undergraduate courses, see e.g.\ \cite{Feynman63}. 
The goal of the proof is to reduce the comparison of a reversible machine with a generic one to the situation discussed in Lemma~\ref{lemma:carnotapp} and thus to the setting of the second law. 
%

\begin{figure}[h]
\begin{center} 
	\begin{tikzpicture}[scale=.65]

	\draw[xshift = -4cm] (-1.5,5.5) node[left] {(a)};	
	
\begin{scope}[xshift=-4cm]
	\draw[very thick, xshift=-1cm] (.5,4.5) node[above] {$R_1$}-- (1,4.5) -- (1,5.5) -- (0,5.5) -- (0,4.5) -- (.5,4.5); 
	\draw[very thick, xshift=-1cm] (.5,0) node[above] {$R_2$} -- (1,0) -- (1,1) -- (0,1) -- (0,0) -- (.5,0); 

	\draw[] (-.5,2.75) node[] {$S$} circle (.6cm);
	\draw[<-, xshift=-2cm] (1.5,4.35) -- (1.5,3.45) node [above right] {$Q_{R_1}(p)$};
	\draw[<-, xshift=-2cm] (1.5,1.15) -- (1.5,2.05) node[below right] {$Q_{R_2}(p)$};
	\draw[->, xshift=-2cm] (3.15,2.75) node[right] {$W_{S}(p)$} -- (2.25,2.75);
\end{scope}		
	
	\draw[xshift = 2cm] (-1.5,5.5) node[left] {(b)};	

	\draw[very thick, xshift=1cm] (.5,4.5) node[above] {$R_1$}-- (1,4.5) -- (1,5.5) -- (0,5.5) -- (0,4.5) -- (.5,4.5); 
	\draw[very thick, xshift=1cm] (.5,0) node[above] {$R_2$} -- (1,0) -- (1,1) -- (0,1) -- (0,0) -- (.5,0); 
 
	\draw[] (1.5,2.75) node[] {$S'$} circle (.6cm);
	\draw[<-] (1.5,4.35) -- (1.5,3.45) node [above right] {$Q_{R_1}(p')$};
	\draw[<-] (1.5,1.15) -- (1.5,2.05) node[below right] {$Q_{R_2}(p')$};
	\draw[->] (3.15,2.75) node[right] {$W_{S'}(p')$} -- (2.25,2.75);
	
	\draw[xshift = 8cm] (-1.5,5.5) node[left] {(c)};	
	
\begin{scope}[xshift=8cm]
	\draw[very thick, xshift=-1cm] (.5,4.5) node[above] {$R_1$}-- (1,4.5) -- (1,5.5) -- (0,5.5) -- (0,4.5) -- (.5,4.5); 
	\draw[very thick, xshift=-1cm] (.5,0) node[above] {$R_2$} -- (1,0) -- (1,1) -- (0,1) -- (0,0) -- (.5,0); 

	\draw[] (-.5,2.75) node[] {$S\vee S'$} ellipse (1.2 cm and .6cm);
	\draw[<-, xshift=-2cm] (1.5,4.35) -- (1.5,3.45) node [above right] {$Q_{R_1}^\mathrm{tot}$};
	\draw[<-, xshift=-2cm] (1.5,1.15) -- (1.5,2.05) node[below right] {$Q_{R_2}^\mathrm{tot}$};
	\draw[->, xshift=-1.4cm] (3.15,2.75) node[right] {$W_{S}^\mathrm{tot}$} -- (2.25,2.75);
\end{scope}
	
	\end{tikzpicture}
\end{center}
\caption{We compare a reversible machine $S$ in (a) with an arbitrary one $S'$ in (b) operating between the same reservoirs. 
By letting the cycles run many times, we construct a process depicted in (c), for which we analyse the total work and heat flows.
Depending on the ratios of the heat flows in $p$ and $p'$ the constructed process in (c) may violate Lemma~\ref{lemma:carnotapp} and thus the second law (Postulate~\ref{post:sec}).
}
\label{fig:carnotproof}
\end{figure}
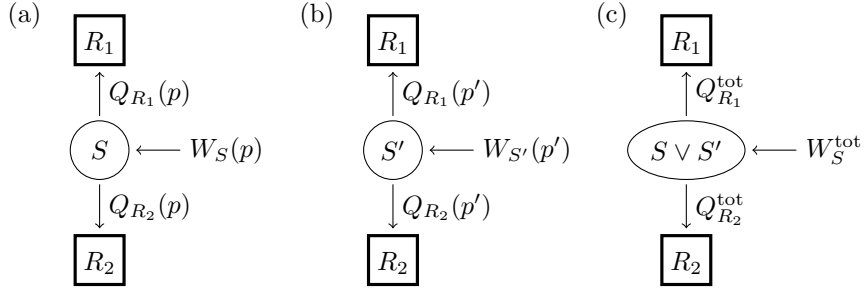

\begin{proof}
We prove (i) by contradiction. Figure~\ref{fig:carnotproof} shows the discussed situations.
Later on we argue that (i) implies (ii).
Suppose the ratios fulfilled 
\begin{align}
\label{eq:carnot1}
	- \frac{Q_{R_1}(p')}{Q_{R_2}(p')} > - \frac{Q_{R_1}(p)}{Q_{R_2}(p)}\,.
\end{align}
Since $Q_{R_1}(p)$ and $Q_{R_2}(p)$ must have opposite signs (Lemma~\ref{lemma:carnotapp}) the right hand side of Eq.~(\ref{eq:carnot1}) is strictly positive. Hence this case can only occur for $Q_{R_1}(p')<0$ and Eq.~(\ref{eq:carnot1}) is equivalent to $\tfrac{Q_{R_2}(p)}{Q_{R_2}(p')} > \tfrac{Q_{R_1}(p)}{Q_{R_1}(p')}$, which again compares positive ratios.
Choose positive integers $k,l\in\mathbbm{N}$ such that 
\begin{align}
\frac{Q_{R_2}(p)}{Q_{R_2}(p')} > \frac{k}{l} > \frac{Q_{R_1}(p)}{Q_{R_1}(p')}\,.
\end{align}
The existence of $p$ and $p'$ implies together with Definition~\ref{def:heatreservoir} (iii) that one can apply the respective process as many times as one wants in a row, in the sense that there exist corresponding processes that do the same and can be concatenated with $p$ and $p'$ respectively. 
Thus, apply the reverse process of $p$ on $R_1\vee S\vee R_2$ now $l$ times followed by $k$ applications of $p'$ on $R_1\vee S'\vee R_2$.
The so constructed process is cyclic on $S\vee S'$ and has total heat flows to $R_1$ and $R_2$ of
\begin{align}
Q_{R_1}^\mathrm{tot} = -l\, Q_{R_1}(p) + k\, Q_{R_1}(p') 
= \underbrace{\left(\frac{k}{l}-\frac{Q_{R_1}(p)}{Q_{R_1}(p')}\right)}_{> 0} 
  \cdot \underbrace{l\, Q_{R_1}(p')}_{< 0} < 0\,, \\
Q_{R_2}^\mathrm{tot} = -l\, Q_{R_2}(p) + k\, Q_{R_2}(p') 
= \underbrace{\left(\frac{k}{l}-\frac{Q_{R_2}(p)}{Q_{R_2}(p')}\right)}_{< 0} 
  \cdot \underbrace{l\, Q_{R_2}(p')}_{> 0} < 0\,.
\end{align}
This contradicts Lemma~\ref{lemma:carnotapp}, hence the ratios must fulfil (i).

If $p'$ is reversible too, the argument also works with exchanged roles of $p$ and $p'$ and we obtain the inequality in the other direction. Therefore, if both $p$ and $p'$ are reversible, the ratios must be equal. 
Obviously the ratios for reversible machines must be positive as the signs of the heat flows are always different.
This proves (ii).
\end{proof}

Carnot's Theorem implies that reversible engines are the most efficient ones.
Suppose the engine is such that work is extracted, which means that the total work done should be negative. 
The internal energy of the cyclic engine undergoes no net change, which is why the extracted work from $S$ reads
\begin{align}
\label{eq:carnotwork}
-W_S = -Q_{R_1}  -Q_{R_2} 
= Q_{R_2} \left( -\frac{Q_{R_1}}{Q_{R_2}}-1\right) \,.
\end{align}
Remember that $Q_{R_2}$ is positive, which is why the expression for the \emph{extracted} work $-W_S$ is \emph{maximised} when the term in brackets is maximal. This, however, is the case for reversible machines, as Carnot's Theorem states. 

The argument is also valid if positive work is used such that the machine pumps heat from one reservoir to the other. With the same calculation it follows that the work \emph{done} is \emph{minimal} when the ratio of interest is maximal.

Eq.~(\ref{eq:carnotwork}) also shows how the theorem limits the maximal efficiency of reversible machines. 
Given two reservoirs, the work extractable in relation to the heat given to a reservoir is always upper bounded by $-\tfrac{Q_{R_1}}{Q_{R_2}}-1$. Whatever machine one can come up with, this finite bound cannot be surpassed for given reservoirs. 
We conclude that Carnot's Theorem sets limits to what can be achieved with cyclic machines operating between two reservoirs. \\

A further peculiarity of the theorem as stated in this work is the fact that it could be derived without referring to anything similar to the zeroth law of thermodynamics nor to an a priori notion of thermal equilibrium. 
This is the topic of an earlier paper \cite{Kammerlander18}.
In short, the zeroth law requires that the relation ``being in thermal equilibrium with'' is transitive. In many introductions to thermodynamics it is taken to be a vital ingredient to the foundations of thermodynamics as it paves the way for the notion of an empirical temperature, which relies on the relation ``being in thermal equilibrium with'' to be an equivalence relation. 
In this work neither of these notions ever had to be used up to this point. Consequently, we could not even think of formulating (and making use of) the zeroth law. 
Therefore, up to here (as well as also for the remainder of this work, as will become clear) the zeroth law is redundant. \\

The most relevant implication of the theorem, however, is the fact that for reversible processes, the ratio is universal. This is the crucial statement which is used next when defining absolute temperature for heat reservoirs, and later in the definition of thermodynamic entropy. 
It is the ingredient necessary to make statements for the behaviour of classes of systems rather than individual ones, as it says that the ratio is ``system-independent''.

\newpage
\section{Absolute temperature}
\label{sec:abstemp}

\begin{center}
\fcolorbox{OliveGreen}{white}{\begin{minipage}{0.98\textwidth}
\centering
\begin{minipage}{0.95\textwidth}
\ \\
\textcolor{black}{\textbf{Postulates:}}
-\\
\ \\
\textcolor{black}{\textbf{New notions:}}
temperature ratio, absolute temperature, equivalence relation $\sim$ on $\mathcal{R}$\\
\ \\
\textcolor{black}{\textbf{Technical results}} for this section can be found in Appendix~\ref{app:abstemp}.

\paragraph{\textcolor{black}{Summary:}}
Based on Carnot's Theorem the {absolute temperature of a reservoir} is defined.
From this follows an equivalence relation on the set of heat reservoirs which shows that the {zeroth law}, which was not postulated in this framework, can be derived and is hence {redundant as a postulate}.
\vspace{.1cm}

\end{minipage}
\end{minipage}}
\end{center}

\subsection{Preparatory remarks}

As one would expect, the ratio $-\tfrac{Q_{R_1}}{Q_{R_2}}$ is not only independent of the actual reversible machine but also of the representative reservoirs of the equivalence class of $\hat=$. This is argued at the beginning of Appendix~\ref{app:abstemp}, where one can also find the proofs of the other non-trivial technical statements of this section. 

In addition, a reversible Carnot engine operating between reservoirs $R_1\hat= R_2$ fulfils $-\tfrac{Q_{R_1}}{Q_{R_2}}=1$, as is proved in Lemma~\ref{lemma:ratioequivapp}. 
The following definition makes use of the invariance of the ratio of reversible heat flows under equivalences. 

\begin{definition}[Temperature ratio]
\label{def:tau}
The \emph{temperature ratio $\tau$ of two equivalence classes} $[R_1],[R_2]\in\sfrac{\mathcal{R}}{\hat=}$ is
\begin{align}
\label{eq:taudef}
\begin{split}
\tau: \ \sfrac{\mathcal{R}}{\hat=}\times\sfrac{\mathcal{R}}{\hat=}  &\longrightarrow \mathbbm{R}_{>0}\\
([R_1],[R_2]) &\longmapsto -\tfrac{Q_{R_1}}{Q_{R_2}}\,
\end{split}
\end{align}
where $R_1$ and $R_2$ are two different\footnote{
A Carnot engine operates between two \emph{different} heat reservoirs. If $R_1=R_2$, then $R_1\vee S \vee R_2 = R_1\vee S$ and there would be no two heat flows to compare.}
heat reservoirs and $Q_{R_1}$ and $Q_{R_2}$ are the heat flows of a (non-trivial) reversible Carnot engine operating between them such that $Q_{R_2}>0$. 
Extending this definition to the set of pairs of heat reservoirs, we use the same symbol $\tau$ and write for the \emph{temperature ratio of two heat reservoirs}
\begin{align}
\label{eq:taudef2}
\begin{split}
\tau(R_1,R_2):=\tau([R_1],[R_2])\,.
\end{split}
\end{align}
\end{definition}

Definition~\ref{def:tau} is well-defined since the ratio of heat flows only depends on the equivalence classes of $\hat=$ and not on the specific representative.
Furthermore, if the equivalence classes $[R_1]=[R_2]$ are equal, there always exist different representatives. This is a consequence of the postulate on the existence of arbitrarily many copies of any system, Postulate~\ref{post:copies}.

Carnot's Theorem~\ref{thm:carnot} implies that it is irrelevant which reversible engine is used to determine the value of $\tau$ and that $\tau$ is a strictly positive function.
Furthermore, due to Postulate~\ref{post:Carnotexist} on the existence of reversible Carnot engines this definition can be applied to an arbitrary pair of reservoirs.\\

In principle, it is thinkable to formulate a theory of thermodynamics without the assumption of always having a reversible machine operating non-trivially between any two reservoirs. 
Without it, it could happen that there exist two reservoirs that are incomparable. 
Consequently, Definition~\ref{def:tau} would have to be phrased independently for the two (or more) classes of reservoirs that are comparable with each other. 
This is not a fundamental problem for thermodynamics, but it is not the common approach taught in introductions to the field. \\




The name temperature ratio of $\tau$ is suggestive. $\tau$ will be used to define the absolute temperature for heat reservoirs. Before introducing this notion we discuss some properties of $\tau$.
First, it holds $\tau([R],[R])=1$, which means that if one takes reservoirs of the same type, it is only possible to pump heat from one to the other, without investing or drawing work (Lemma~\ref{lemma:ratioequivapp}).

Second, we note that $\tau(R_2,R_1)=\tau(R_1,R_2)^{-1}$ by definition. Reversing the order in the argument of $\tau$ amounts to reversing the reversible Carnot process used in Definition~\ref{def:tau} to determine the value of $\tau$. 
Since reverse heat flows simply change their signs in the reverse process, the heat flows $Q_{R_1}<0$ and $Q_{R_2}>0$ in the forward process become $-Q_{R_1}>0$ and $-Q_{R_2}<0$ in the reverse process. Since the denominator must be the negative heat flow according to the definition, we obtain 
\begin{align}
\tau(R_1,R_2)^{-1} = \left(-\frac{Q_{R_1}}{Q_{R_2}}\right)^{-1} = -\frac{-Q_{R_2}}{-Q_{R_1}} = \tau(R_2,R_1)\,.
\end{align}
Finally, in Lemma~\ref{lemma:tautransapp}
we show that for three arbitrary heat reservoirs $R_1,R_2,R_3\in\mathcal{R}$ it always holds $\tau(R_1,R_2)\cdot\tau(R_2,R_3) = \tau(R_1,R_3)$. 

\subsection{Absolute temperature for heat reservoirs}

The properties of $\tau$ allow us to define the absolute temperature of an arbitrary heat reservoir. For this, choose an arbitrary but fixed reference heat reservoir $R_\mathrm{ref}\in\mathcal{R}$ and a reference temperature $T_\mathrm{ref}\in\mathbbm{R}_{>0}$.

\begin{definition}[Absolute temperature]
\label{def:abstempres}
The \emph{absolute temperature of a heat reservoir} $R\in\mathcal{R}$ is defined as
\begin{align}
T := \tau(R,R_\mathrm{ref}) \cdot T_\mathrm{ref}\,.
\end{align}
\end{definition}

The temperature of a reservoir is absolute up to the choice of the reference reservoir and the reference temperature. However, once this choice is made, any other reservoir with its temperature could serve as a reference, too. 
This is a consequence of the previous Lemma~\ref{lemma:tautransapp}, as for two reservoirs $R_1,R_2\in\mathcal{R}$ it holds that 
\begin{align}
T_2 = \tau(R_2,R_\mathrm{ref})\cdot T_\mathrm{ref} 
= \tau(R_2,R_1)\cdot\tau(R_1,R_\mathrm{ref}) \cdot T_\mathrm{ref}
= \tau(R_2,R_1) \cdot T_1\,.
\end{align}
That is what makes $\tau$ a temperature ratio, as it is called in Definition~\ref{def:tau}.\\

Typically physicists work with the absolute temperature scale such that a reservoir consisting of a big water tank has temperature $273.16\,\mathrm{K}$ at the tripe point of water. 
From a practical perspective it makes sense to fix a temperature scale once and for all so that when comparing temperatures no confusion can arise. 
Nevertheless, for developing the theory making specific choices according to some standard is not necessary. 
Therefore, we will not discuss this issue further and continue, knowing that both $R_\mathrm{ref}$ and $T_\mathrm{ref}$ have been fixed before defining absolute temperature for heat reservoirs. \\

Having defined absolute temperature for reservoirs, we can investigate the relation ``$R_1$ and $R_2$ have equal temperature'' on the set of heat reservoirs $\mathcal{R}$, denoted by 
$R_1\sim R_2$ (Definition~\ref{def:simapp}).
Intuitively, one can also call this relation ``being in thermal equilibrium with''.
By Definition~\ref{def:abstempres} $R_1$ and $R_2$ are at the same temperature if and only if $\tau(R_1,R_2)=1$. Several conclusions can be drawn from this observation.

First, the temperature of a reservoir is independent of its state. It is rather a property of the system. When thinking of general thermodynamic systems, this seems odd. But for reservoirs this is what one would expect, as the property of not changing its behaviour under finite changes of energy intuitively is a defining property of heat reservoirs.

Second, from the definition of absolute temperature and Lemma~\ref{lemma:ratioequivapp}
it follows that equivalent reservoirs must have the same absolute temperature. i.e.\ $R_1\hat=R_2 \ \Rightarrow \ R_1\sim R_2$.
Hence, the relation $\hat=$ restricted to the set of heat reservoirs $\mathcal{R}$ is a \emph{sub-relation} of $\sim$.

Third, 
it holds
\begin{itemize}
\item [(i)] $R_1\sim R_1$ (reflexive),
\item [(ii)] $R_1\sim R_2 \Rightarrow R_2 \sim R_1$ (symmetric), and
\item [(iii)] $R_1\sim R_2$ as well as $R_2\sim R_3 \Rightarrow R_1\sim R_3$ (transitive).
\end{itemize} 
Here, (i) and (ii) follow directly from Definition~\ref{def:abstempres}, while
(iii) is a consequence of Lemma~\ref{lemma:tautransapp}.
Points (i)-(iii) say that $\sim$ is an equivalence relation (Lemma~\ref{lemma:simequirelapp}).

Finally, heat reservoirs fulfilling $\tau(R_1,R_2)=1$ always allow for the exchange of an arbitrary amount of heat $Q$ between them during a reversible work process $p\in\mathcal{P}_{R_1\vee R_2}$ at zero work cost. 
This follows from the fact that any reversible machine operating between them has a heat flow ratio of $-\tfrac{Q_{R_1}}{Q_{R_2}}=1$ together with Postulate~\ref{post:freedom} on the freedom of description and Postulate~\ref{post:Carnotexist}.

\subsection{The zeroth law of thermodynamics is redundant} 
\label{sec:zerothlaw}

The fact that $\sim$ (``having equal temperature'') is an equivalence relation makes postulating the zeroth law redundant.
The zeroth law typically states that the relation ``being in thermal equilibrium with'' is transitive \cite{Maxwell71,Fowler39,Planck14,Buchdahl66}.
Notice that for a postulate stating this, a notion of ``thermal equilibrium'' must be introduced beforehand -- something we did not have to do either.
Together with (the usually implicitly assumed) reflexivity and symmetry of $\sim$, the zeroth law makes it an equivalence relation. 
Typically, this is then used to say that two systems are at equal temperature if and only if they are in thermal equilibrium relative to each other. 
Here we have derived such an equivalence relation on the set of reservoirs from the postulates without any reference to the zeroth law.\\

One may wonder why we have not yet discussed a definition of absolute temperature for arbitrary systems, not just heat reservoirs. 
As it turns out, we should not expect that temperature is a quantity that makes sense for an arbitrary system without further assumptions. Composite systems consisting of more than two different atomic subsystems are simple counter examples.
Temperature for arbitrary systems is thus not a fundamental concept of thermodynamics. 

Instead of a notion of temperature of arbitrary systems, we introduce the notion of the \emph{temperature of a heat flow} in the next section (Section~\ref{sec:theatflow}). This is the fundamental concept necessary to use the notion of temperature beyond the purpose of the efficiency of engines and the temperature of heat reservoirs.
In particular, the temperature of heat flows is the one which is used to define thermodynamic entropy based on Clausius' Theorem. 
\\

One might have hopes to be able to define absolute temperature for all atomic systems, since they are indivisible.
However, one should not take the statement about indivisibility of atomic systems as a statement about physical indivisibility. In particular, one should not confuse the term \emph{atomic} with the frequently used term ``simple system'' (see e.g.\ \cite{LY99}). Simple systems are usually considered to be those systems to which a meaningful notion of temperature can be assigned. This implies that they cannot be composed of two independent systems (otherwise at least two temperatures would be necessary in general). 
For us on the other hand, the term ``indivisible'' must be understood relative to the structure of thermodynamics systems with composition $\vee$, which can be defined in very abstract terms, not referring to any thermodynamic ideas. \\

The discussion on how to define a notion of absolute temperature for systems beyond heat reservoirs is continued in Section~\ref{sec:tsys}, when all other basic concepts have been introduced and a technical investigation based on them is possible.

\newpage
\section{The temperature of heat flows}
\label{sec:theatflow}

\begin{center}
\fcolorbox{OliveGreen}{white}{\begin{minipage}{0.98\textwidth}
\centering
\begin{minipage}{0.95\textwidth}
\ \\
\textcolor{black}{\textbf{Postulates:}}
-\\
\ \\
\textcolor{black}{\textbf{New notions:}}
temperature of heat flows\\
\ \\
\textcolor{black}{\textbf{Technical results}} for this section can be found in Appendix~\ref{app:theatflow}.

\paragraph{\textcolor{black}{Summary:}}
The absolute temperature for heat reservoirs gives rise to a definition of the {temperature of heat flows}. 
The {uniqueness of this temperature} is discussed, in particular with respect to reversible heat flows.
\vspace{.1cm}

\end{minipage}
\end{minipage}}
\end{center}

\ \\

Being able to talk about the temperature of heat reservoirs is important but not enough.
In particular when it comes to defining thermodynamic entropy the temperature of heat flows will be  essential. 

As will be explained and investigated in this section, not every heat flow occurs at some temperature. Some do, however, and the reversible ones will be of particular interest. 

\begin{definition}[Heat at temperature $T$]
\label{def:heatatt}
Let $S=S_1\vee S_2\in\mathcal{S}$ be composed of two disjoint subsystems and undergo an arbitrary work process $p\in\mathcal{P}_{S_1\vee S_2}$ with $Q:= Q_{S_2}(p)\neq0$.
We say that the \emph{heat $Q$ flows at temperature $T$} if
there exist two different reservoirs $R_1\sim R_2$ at temperature $T$ with processes $p_1\in\mathcal{P}_{S_1\vee R_1}$ and $p_2\in\mathcal{P}_{S_2\vee R_2}$ s.t.\ 
$W_{A}(p_i) = W_{A}(p)$ for all atomic systems 
$A\in\mathcal{A}\smallsetminus\mathrm{Atom}(S_{i+1})$
and the state changes on $S_i$ under $p_i$ are the same as under $p$, i.e.\ 
$\lfloor p_i \rfloor_{S_i} = \lfloor p \rfloor_{S_i}$ and 
$\lceil p_i \rceil_{S_i} = \lceil p \rceil_{S_i}$.
\end{definition}

Figure~\ref{fig:heatatt} (a), (b) and (c) illustrate the processes $p$, $p_1$ and $p_2$ schematically. By definition of $p_i$ as work processes on $S_i\vee R_i$ it is clear that $W_{S_i}(p_{i+1})=0$.
Furthermore, it holds $W_{R_i}(p_i)=0$ by assumption in the definition. 

The idea behind Definition~\ref{def:heatatt} is to refer to the already defined concept of the temperature of a reservoir to define the temperature of a heat flow. 
A heat flow occurs at temperature $T$ if it could also be exchanged with a reservoir at temperature $T$ and no thermodynamic properties change. 
Even though this may make sense intuitively it is \emph{a priori} not clear that this definition eventually leads to the correct notion of temperature that is needed for the definition of thermodynamic entropy. However, as we will show in the following, it does so.\\

It is possible to `reconstruct' the process $p$ from the $p_i$ in case $p$ satisfies Definition~\ref{def:heatatt} for some $T$.
Together with Definition~\ref{def:heatreservoir} (i) for heat reservoirs and Postulate~\ref{post:Carnotexist} on the existence of reversible Carnot engines it follows that if both $p_1$ and $p_2$ exist, then the initial states of the two reservoirs $R_1$ and $R_2$ can be restored by a reversible Carnot engine transferring the heat $Q$ from $R_1$ to $R_2$. 
The effect of the proper concatenations of these processes is depicted in Figure~\ref{fig:heatatt} (d).
The work cost of reversibly transferring the heat $Q$ from $R_1$ to $R_2$ by means of a cyclic engine $C$ is zero. This follows from Carnot's Theorem and Lemma~\ref{lemma:ratioequivapp}.\footnote{
Using the tuning of $Q_{R_1}$ in reversible Carnot processes as required by Postulate~\ref{post:Carnotexist} it becomes obvious that for two equivalent reservoirs $R_1\hat= R_2$ and $Q\in\mathbbm{R}$ there is always a reversible process $p\in\mathcal{P}_{R_1\vee S\vee R_2}$, involving an adequate machine $S$, that transfers the heat $Q_{R_1}(p) = Q = -Q_{R_2}(p)$ from $R_2$ to $R_1$.
In such a situation we have $W_S(p)=0$ because of the exactly opposite heat flows to the reservoirs. Since $S$ is cyclic in any case, in this special situation $S$ is even \emph{catalytic} and can be left out of the description (Postulate~\ref{post:freedom}). 
That is, there exists a process $\tilde p\in\mathcal{P}_{R_1\vee R_2}$ with the same heat flows and state changes. 

We conclude that for two equivalent reservoirs there is always a reversible process transferring an arbitrary amount of heat from to the other at no work cost. 
This observation will be relevant later, when we discuss the temperature of heat flows.}
%
We have thereby constructed a process from $p_1$ and $p_2$ which thermodynamically does the same as $p$ did on all involved systems. \\

\begin{figure}
\begin{center} 
	\begin{tikzpicture}[scale=.65]
	
	\draw[](-1.8,0) node[] {(a)};
	 
	\draw[xshift=.1cm] (.5,-.5) node[above] {$S_1$} -- (1,-.5) -- (1,.5) -- (0,.5) -- (0,-.5) -- (.5,-.5);
	\draw[] (3,-.5) node[above] {$S_2$} -- (3.5,-.5) -- (3.5,.5) -- (2.5,.5) -- (2.5,-.5) -- (3,-.5);
	\draw[->] (1.3,0) node[above right] {$Q$} -- (2.3,0);
	\draw[->] (-1,0) node[above right] {$W_1$} -- (-.1,0);
	\draw[->] (4.6,0) node[above left] {$W_2$} -- (3.7,0);

	\begin{scope} [xshift = 9cm]
	
		\draw[](-1.8,0) node[] {(b)};
		
		\draw[xshift=.1cm] (.5,-.5) node[above] {$S_1$} -- (1,-.5) -- (1,.5) -- (0,.5) -- (0,-.5) -- (.5,-.5);
		\draw[very thick] (3,-.5) node[above] {$R_1$} -- (3.5,-.5) -- (3.5,.5) -- (2.5,.5) -- (2.5,-.5) -- (3,-.5);
		\draw[->] (1.3,0) node[above right] {$Q$} -- (2.3,0);
		\draw[->] (-1,0) node[above right] {$W_1$} -- (-.1,0);
		
		\begin{scope}[xshift = 6cm]

		\draw[](-.6,0) node[] {(c)};

		\draw[xshift=.1cm, very thick] (.5,-.5) node[above] {$R_2$} -- (1,-.5) -- (1,.5) -- (0,.5) -- (0,-.5) -- (.5,-.5);
		\draw[] (3,-.5) node[above] {$S_2$} -- (3.5,-.5) -- (3.5,.5) -- (2.5,.5) -- (2.5,-.5) -- (3,-.5);
		\draw[->] (1.3,0) node[above right] {$Q$} -- (2.3,0);
		\draw[->] (4.6,0) node[above left] {$W_2$} -- (3.7,0);
		\end{scope}
		
		\begin{scope}[xshift = -5cm, yshift = -3cm]

		\draw[](-1.8,0) node[] {(d)};
		
		\draw[xshift=.1cm] (.5,-.5) node[above] {$S_1$} -- (1,-.5) -- (1,.5) -- (0,.5) -- (0,-.5) -- (.5,-.5);
		\draw[very thick] (3,0) node[] {$R_1$} circle (.6cm);
		\draw[] (5.4,0) node[] {$C$} circle (.6cm);
		\draw[->] (1.3,0) node[above right] {$Q$} -- (2.3,0);
		\draw[->] (-1,0) node[above right] {$W_1$} -- (-.1,0);
		\draw[->] (3.8,0) node[above right]{$Q$} -- (4.7,0);
		\draw[->] (6.1,0) node[above right]{$Q$} -- (7,0);		
		
		\begin{scope}[xshift = 7.2cm]
		\draw[very thick] (.5,0) node[] {$R_2$} circle (.6cm);
		\draw[] (3,-.5) node[above] {$S_2$} -- (3.5,-.5) -- (3.5,.5) -- (2.5,.5) -- (2.5,-.5) -- (3,-.5);
		\draw[->] (1.3,0) node[above right] {$Q$} -- (2.3,0);
		\draw[->] (4.6,0) node[above left] {$W_2$} -- (3.7,0);
		\end{scope}
		
		\end{scope}
	\end{scope}
	\end{tikzpicture}
\end{center}
\caption{(a) Under the process $p$ work $W_i := W_{S_i}(p)$ is done on $S_i$ and the heat 
$Q := Q_{S_2}(p)$ flows from $S_1$ to $S_2$.
(b) \& (c) In the divided processes $p_1$ and $p_2$ the same state changes are induced on $S_1$ and $S_2$, respectively, but the heat flow is exchanged with reservoirs. The work flows 
$W_i = W_{S_i}(p_i)$ also stay invariant.
(d) With the use of a reversible Carnot process between $R_1$ and $R_2$ it is possible to transfer the heat $Q$ from $R_1$ to $R_2$ at zero work cost since $R_1\sim R_2$. The initial states of $R_1$ and $R_2$ will then be restored as they only depend on the internal energies and the engine itself is cyclic by design. Together with the observation that $R_1\vee C\vee R_2$ is catalytic this essentially reconstructs $p$ due to Postulate~\ref{post:freedom}. 
}
\label{fig:heatatt}
\end{figure}
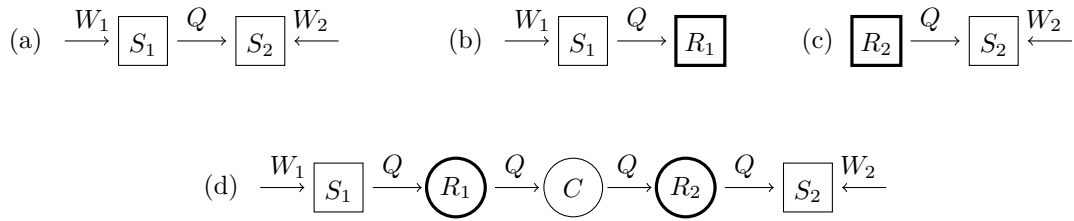

We describe some examples to give an intuition for this definition.

\begin{example}[Heat flow exchanged with a reservoir]
\label{ex:heatflowres}
The most obvious example is the one in which an arbitrary system $S$ exchanges heat $Q\neq 0$ with a reservoir $R$.
So let $p\in\mathcal{P}_{S\vee R}$ be such that $Q=Q_R(p)\neq0$ and think of $S_1=S$ and $S_2=R$. 
Then, the heat $Q$ flows at temperature $T$, where $T$ is the temperature of the reservoir $R$. 
To see this, observe that what is referred to $p_1$ in Definition~\ref{def:heatatt} can now be chosen as $p$ itself (or an equivalent process carried out on $S$ and a copy of $R$). 
On the other hand, Postulate~\ref{post:Carnotexist} says that a process that transfers heat $Q$ between $R$ and a copy of it always exists. By choosing such a process as $p_2$ we have found both $p_1$ and $p_2$ with reservoirs that are copies of $R$ and thus have the same temperature. Hence the heat $Q$ flows at temperature $T$ according to the definition.
\end{example}


Importantly, Definition~\ref{def:heatatt} on the temperature of heat flows principally allows that a general heat flow can be assigned more than one temperature, see the next Example~\ref{ex:heatdifft}. 
Even when heat is exchanged with a reservoir directly, the temperature of the reservoir does not have to be the unique temperature which can be assigned to the heat flow. 

\begin{example}[More than one temperature for the same heat flow]
\label{ex:heatdifft}
Let $S_1$ and $S_2$ be two ideal gases with the same amount of substance in states $(p_i,V_i)$, respectively, such that $p_1V_1 > p_2V_2$. This essentially means that their gas temperatures $T_i$ are different, where the gas temperature can in this case be defined according to the equation of state 
$pV=nRT$ with $R$ the universal gas constant and $n$ the amount of substance.
When connecting the two gases thermally, e.g.\ by putting them in contact with a metal rod, one can observe a positive heat flow from $S_1$ to $S_2$ (provided that the reference temperature $T_\mathrm{ref}$ for the definition of absolute temperature has be chosen accordingly).
It will now be observable that reservoirs with temperatures between $T_1>T_2$ will fulfil the above definition. In particular, more than one temperature can be assigned to the same heat flow. 
A more precise discussion of the concepts mentioned here (equation of state, temperature of the gas) is given in Section~\ref{sec:exideal}.
\end{example}

It is also possible that no temperature can be assigned to a heat flow at all, as the next example shows. 

\begin{example}[Heat flows without a temperature]
\label{ex:heatrevdifft}
Consider again an ideal gas denoted by $S_1$ in state $(p,V)$. This time, let $S_2=R_1 \vee R_2$ be a system composed of two reservoirs at temperatures $T_1<T_2$. 
Assume in addition, that $pV=nRT_1$, i.e.\ that the initial state of the ideal gas is such that its gas temperature matches the absolute temperature of reservoir $R_1$. 
This means that a reversible \emph{isothermal} compression, say from $V$ to $\tfrac{V}{2}$, can be achieved by thermally connecting the ideal gas to $R_1$ and slowly compressing the gas. 
Denote 
the heat flow from $S_1$ to $R_1$ by $Q_1$.
Checking Definition~\ref{def:heatatt} it follows that the heat $Q_1$ flows at temperature $T_1$.

Now, the same state change on the ideal gas $S_1$ could be achieved reversibly in a different way, by exchanging a different amount of heat with the other reservoir $R_2$. 
We first reversibly compress the ideal gas \emph{adiabatically}, thereby increasing the product $pV$ to the value $nRT_2$. This is followed by a (reversible) isothermal compression (or an expansion, if $T_2$ is so much hotter than $T_1$) in which $S_1$ isothermally exchanges the heat $Q_2$ with $R_2$. The isothermal compression (expansion) is done such that in the final state, there exists the reversible adiabatic process that brings the gas temperature back to $T_1$ while the volume has been brought to $\tfrac{V}{2}$ such that the net state change on $S_1$ is the same as under process described in the previous paragraph. 

By computing the amount of heat exchanged using the standard equations for the ideal gas, it is easy to see that for $T_1<T_2$ it follows $Q_1<Q_2$.
From this observation we learn several things. First, it is possible to have the same state change on a system (here $S_1$) having exchanged strictly different reversible heat flows. Second, the temperature of the involved heat flows do not have to match either. Third, since the two processes are reversible, we can concatenate the one with a reverse of the other, which will generally not lead to a net heat flow of zero. Neither will this non-zero heat flow allow for an assigned temperature. On the contrary, since it is the result of two heat flows at different temperatures it is obvious that it should be impossible to assign a temperature. Hence there exist reversible heat flows without a proper temperature. 
\end{example}

As is shown 
in Lemma~\ref{lemma:heatratioQTapp}, the different reversible heat flows $Q_i$ at temperature $T_i$ from the previous example, which induce the same state change on $S_1$, fulfil $\tfrac{Q_1}{T_1} = \tfrac{Q_2}{T_2}$.
This observation will be crucial for the definition of thermodynamic entropy.\\

Having discussed these cases of Definition~\ref{def:heatatt} we come to the more important case of a general reversible heat flow which can be assigned a temperature. In this case it holds that the temperature is unique, as is shown in Appendix~\ref{app:theatflow}. 
This is done by first showing that for a reversible process $p\in\mathcal{P}_{S_1\vee S_2}$ inducing a heat flow $Q\neq0$ at temperature $T$ according to Definition~\ref{def:heatatt} the two processes $p_i\in\mathcal{P}_{S_i\vee R_i}$ are reversible, too (Lemma~\ref{lemma:revpiapp}). 
Based on this, it is then possible to show that the assigned temperature $T$ must be unique (Lemma~\ref{lemma:revheatuniquetapp}).
Hence, non-zero reversible heat flows in a bipartite thermodynamic system can be assigned either a unique temperature or no temperature at all.

\newpage
\section{Clausius' Theorem and thermodynamic entropy}
\label{sec:clausius}

\begin{center}
\fcolorbox{OliveGreen}{white}{\begin{minipage}{0.98\textwidth}
\centering
\begin{minipage}{0.95\textwidth}
\ \\
\textcolor{black}{\textbf{Postulates:}}
existence of reversible processes with heat flows at well-defined temperatures\\
\ \\
\textcolor{black}{\textbf{New notions:}}
Clausius' Theorem, entropy, Entropy Theorem\\
\ \\
\textcolor{black}{\textbf{Technical results}} for this section can be found in Appendix~\ref{app:clausius}.

\paragraph{\textcolor{black}{Summary:}}
The derivation of {Clausius' Theorem} allows one to define {thermodynamic entropy} and show that it is a well-defined state function, which is additive under composition. 
A postulate guarantees that the entropy difference of any two states of a system can be computed.
The {Entropy Theorem} establishes that entropy is a monotone for the preorder $\rightarrow$ (as defined after the first law).
\vspace{.1cm}

\end{minipage}
\end{minipage}}
\end{center}

\subsection{Clausius' Theorem}

So far we have used the second law to prove Carnot's Theorem which was the starting point to define the temperature of reservoirs and heat flows. 
Next, we use these insights to prove Clausius' Theorem, which is the basis for the definition of thermodynamic entropy as a state variable.


\begin{thm}[Clausius' Theorem]
\label{thm:clausius}
Let $S\in\mathcal{S}$ be an arbitrary system and $\{R_i\}_{i=1}^N$ a set of reservoirs such that the temperature of $R_i$ is $T_i$.
For each $i=1,\dots,N$ let $p_i\in\mathcal{P}_{S\vee R_i}$ be a work process on $S$ and the reservoir $R_i$ with $W_{R_i}(p)=0$ such that the concatenated process 
$p:= p_N\circ\cdots\circ p_1$ is defined and in total cyclic on $S$.\\
Then:
\begin{itemize}
\item [(i)] $\sum_i \tfrac{Q_S(p_i)}{T_i} \leq 0$,
\item [(ii)] and if $p$ is reversible, then $\sum_i \tfrac{Q_S(p_i)}{T_i} = 0$.
\end{itemize}
\end{thm}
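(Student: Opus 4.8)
The plan is to reduce the statement to the Kelvin--Planck form of the second law (Postulate~\ref{post:sec}) by the classical device of a single auxiliary ``master'' reservoir together with reversible Carnot engines. Concretely, I would introduce one further reservoir $R_0$ at some temperature $T_0>0$ (obtained as a fresh copy of one of the $R_i$ via Postulate~\ref{post:copies}, disjoint from all systems in play), and for each $i$ attach a reversible Carnot engine $C_i$ operating between $R_0$ and $R_i$ whose task is to return to $R_i$ exactly the heat that $S$ drew from it during $p_i$. Running $p$ together with all the $C_i$ then yields a process that is cyclic on everything \emph{except} $R_0$, so the second law forces the net heat delivered to $R_0$ to have a definite sign, and this sign is precisely the Clausius inequality.

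First I would record the local bookkeeping. Since $p_i$ is a work process on the disjoint composite $S\vee R_i$ with $W_{R_i}(p_i)=0$, the bipartite heat relation gives $Q_{R_i}(p_i) = -Q_S(p_i)$ and $\Delta U_{R_i}(p_i)=Q_{R_i}(p_i)$. For each $i$ I would use Postulate~\ref{post:Carnotexist} to build a reversible Carnot engine $C_i$, with a fresh disjoint machine, between $R_i$ and $R_0$, tuning its heat flow so that $Q_{R_i}(C_i)=Q_S(p_i)=-Q_{R_i}(p_i)$; by Carnot's Theorem~\ref{thm:carnot} and Definition~\ref{def:abstempres} the companion flow satisfies $\tfrac{Q_{R_i}(C_i)}{T_i}+\tfrac{Q_{R_0}(C_i)}{T_0}=0$, i.e.\ $Q_{R_0}(C_i)=-T_0\,\tfrac{Q_S(p_i)}{T_i}$. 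The invariance of reservoirs under energy translations, Definition~\ref{def:heatreservoir}~(iii), lets me place each $C_i$ so that its input state on $R_i$ matches whatever state $R_i$ is in after $p_i$, making the full concatenation of $p$ with all the $C_i$ well-defined.

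In the resulting total process, $S$ is cyclic by hypothesis, every Carnot machine is cyclic by construction, and each $R_i$ receives net heat $Q_{R_i}(p_i)+Q_{R_i}(C_i)=0$ with zero net work; since $U_{R_i}$ is injective (Definition~\ref{def:heatreservoir}~(i)), $\Delta U_{R_i}=0$ means $R_i$ returns to its initial state, so it too is cyclic. Thus the whole process is a work process on $R_0\vee S^{\mathrm{rest}}$ that is cyclic on $S^{\mathrm{rest}}$, where $S^{\mathrm{rest}}$ denotes the composition of $S$, all Carnot machines, and all $R_i$. The second law gives $W_{S^{\mathrm{rest}}}\geq 0$, and since $S^{\mathrm{rest}}$ is cyclic the first law together with the bipartite relation turn this into $Q_{R_0}^{\mathrm{tot}}=\sum_i Q_{R_0}(C_i)\geq 0$, that is $-T_0\sum_i \tfrac{Q_S(p_i)}{T_i}\geq 0$. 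As $T_0>0$ this is exactly (i). For (ii), if $p$ is reversible then each $p_i$ is reversible (Proposition~\ref{prop:p1p2revapp}), and the reverse process $p_1^{\mathrm{rev}}\circ\cdots\circ p_N^{\mathrm{rev}}$ is again cyclic on $S$ with $W_{R_i}=0$ and heat flows $-Q_S(p_i)$; applying (i) to it yields $\sum_i \tfrac{Q_S(p_i)}{T_i}\geq 0$, and together with (i) this forces equality.

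The main obstacle I anticipate is not the inequality chain itself but the careful construction and concatenation of the auxiliary engines: I must ensure the $C_i$ and their machines can be chosen pairwise disjoint and disjoint from $S$ and the $R_i$, that $R_0$ is genuinely distinct from each $R_i$ so that Postulate~\ref{post:Carnotexist} applies (it concerns two \emph{different} reservoirs), and---most delicately---that the intermediate reservoir states can be matched so that every concatenation in the total process is defined. All of these hinge on Postulates~\ref{post:copies} and~\ref{post:Carnotexist} together with the translation invariance~(iii) of reservoirs; keeping the signs of the tuned heat flows consistent throughout is the part most prone to error.
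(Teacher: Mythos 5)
Your proposal is correct and follows essentially the same route as the paper's proof: an auxiliary reservoir $R_0$ with reversible Carnot engines $C_i$ that restore each $R_i$, the second law applied to the resulting composite cyclic machine interacting with $R_0$ alone, and Carnot's Theorem to convert $\sum_i Q_{R_0}(C_i)\geq 0$ into the Clausius sum. Your part (ii) --- invoking Proposition~\ref{prop:p1p2revapp} and applying (i) to the reversed sequence --- is the same reversal argument the paper makes (there routed through Lemma~\ref{lemma:revheatuniquetapp}), just stated more explicitly.
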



\begin{proof}\ \\
\vspace{-4mm}
\begin{itemize}
\item [(i)] Let $R_0$ be another reservoir at temperature $T_0$ and let 
$\{C_i\}_{i=1}^N\subset\mathcal{S}$ be machines operating cyclically between $R_0$ and $R_i$ under a reversible process $q_i\in\mathcal{P}_{C_i\vee R_i\vee R_0}$.
Let the machines $C_i$ and processes $q_i$ be such that the heat flows $Q_{R_i}(q_i) = Q_S(p_i)$ are provided to $R_i$ per cycle, and define $Q_{R_0}(q_i) =: Q_0^i$.
By Definition \ref{def:heatatt} we know that the heat flows $Q_S(p_i)$ between $R_i$ and $S$ are at temperature $T_i$. From this, we construct a cyclic machine $S_0$ as depicted in Figure~\ref{fig:clausius}. 

\begin{figure}

\begin{center} 
	\begin{tikzpicture}[scale=.65]
	\draw[very thick] (7,6.5) node[above] {$R_0$}-- (7.5,6.5) -- (7.5,7.5) -- (6.5,7.5) -- (6.5,6.5) -- (7,6.5); 
	\draw[<-] (7,6.4) -- (7,5.5) node[above right] {$Q_0$};
	\draw[] (7,5.4) -- (.5,4.4);
	\draw[] (7,5.4) -- (4,4.4);
	\draw[] (7,5.4) -- (11,4.4);
	\draw[<-, dashed] (7,5.4) -- (7,3.5) node[below] {$\cdots$};
	\draw[<-, dashed] (7,5.4) -- (9,4.4) -- (9,3.5) node[below] {$\cdots$};
	
	\draw[] (7,.5) node[below] {$\cdots$};
	\draw[] (9,.5) node[below] {$\cdots$};	
	\draw[] (7,2) node[below] {$\cdots$};
	\draw[] (9,2) node[below] {$\cdots$};
	
	\draw[very thick] (.5,.5) node[] {$R_1$} circle (.6cm);
	\draw[<-] (.5,1.2) -- (.5,2) node[below right] {$Q_S(p_1)$};
	\draw[->] (.5,-.2) -- (.5,-1) node[above right] {$Q_S(p_1)$};
	\draw[->] (2.15,2.75) -- (1.25,2.75) node[above right] {$W_{C_1}$};
	\draw[dashed] (2.15,2.75) -- (3,2.55);
	\draw[] (.5,2.75) node[] {$C_1$} circle (.6cm);
	\draw[<-] (.5,4.4) -- (.5,3.5) node [above right] {$Q_0^1$};

	\begin{scope}[xshift = 3.5cm, yshift = 0cm]
	\draw[very thick] (.5,.5) node[] {$R_2$} circle (.6cm);
	\draw[<-] (.5,1.2) -- (.5,2) node[below right] {$Q_S(p_2)$};
	\draw[->] (.5,-.2) -- (.5,-1) node[above right] {$Q_S(p_2)$};
	\draw[->] (2.15,2.75) -- (1.25,2.75) node[above right] {$W_{C_2}$};
	\draw[dashed] (2.15,2.75) -- (3,2.55);
	\draw[] (.5,2.75) node[] {$C_2$} circle (.6cm);
	\draw[<-] (.5,4.4) -- (.5,3.5) node [above right] {$Q_0^2$};
	\end{scope}

	\begin{scope}[xshift = 10.5cm, yshift = 0cm]
	\draw[very thick] (.5,.5) node[] {$R_N$} circle (.6cm);
	\draw[<-] (.5,1.2) -- (.5,2) node[below right] {$Q_S(p_N)$};
	\draw[->] (.5,-.2) -- (.5,-1) node[above right] {$Q_S(p_N)$};
	\draw[->] (2.45,2.75) -- (1.25,2.75) node[above right] {$W_{C_N}$};
	\draw[] (.5,2.75) node[] {$C_N$} circle (.6cm);
	\draw[<-] (.5,4.4) -- (.5,3.5) node [above right] {$Q_0^N$};
	\end{scope}	
	
	\draw[] (7,-3) node[] {$S$} circle (.6cm);
	\draw[->] (8.95,-3) -- (7.75,-3) node[above right] {$W_S$};
	
	\draw[] (.5,-1) -- (7,-2.3);
	\draw[] (4,-1) -- (7,-2.3);
	\draw[] (11,-1) -- (7,-2.3);
	\draw[->, dashed] (7,-.1) -- (7,-1);
	\draw[dashed] (7,-1) -- (7,-2.3);
	\draw[->, dashed] (9,-.1) -- (9,-1);
	\draw[dashed] (9,-1) -- (7,-2.3);
	
	\draw[->] (16,-1) node[above] {$W_{S_0}$} -- (15,-1);
	\draw[] (14.9,-1) -- (12.95,2.75);
	\draw[] (14.9,-1) -- (8.95,-3);
	\draw[dashed] (14.9,-1) -- (13.5,-.4);
	\draw[dashed] (14.9,-1) -- (13.5,-.7);
	
	\draw[dotted, rounded corners] (-1,5.45) -- (-1,-4) -- (14.95,-4) node[above right] {$S_0$} -- (14.95,5.45) -- (-1,5.5);
	
	\end{tikzpicture}
\end{center}
\caption{Extending the system $S\vee R_1\vee \cdots \vee R_N$ with the cyclic machines $\{C_i\}_{i=1}^N$ and the additional reservoir $R_0$ one can construct a cyclic machine $S_0$ interacting with a single reservoir $R_0$. The second law (Postulate~\ref{post:sec}) can be applied to this situation. }
\label{fig:clausius}
\end{figure}
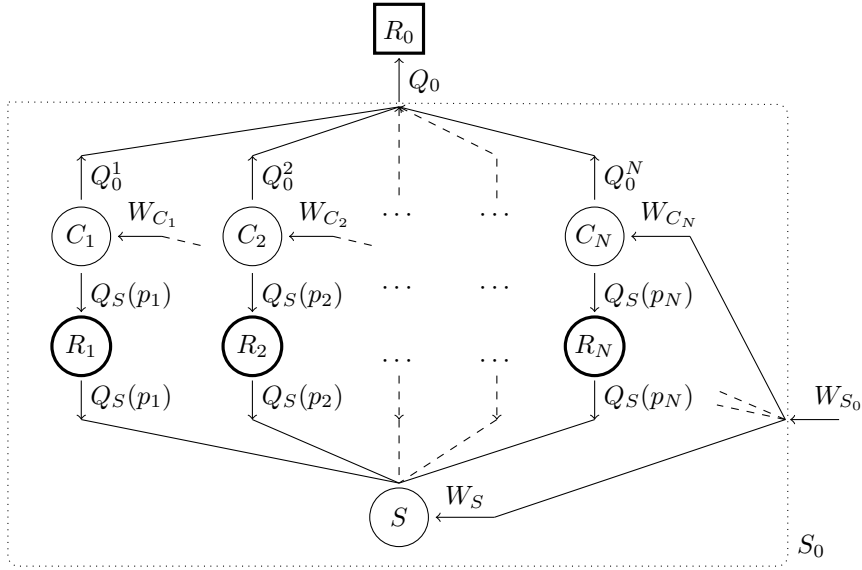

The machines $C_i$ together with the reservoir $R_0$ are used to provide the heat flows $Q_S(p_i)$ to the reservoirs $R_i$ such that, under this extension, all reservoirs except for $R_0$ become cyclic. 
For this, the heat flows $Q_0^i$, exchanged between the machines $C_i$ and $R_0$, as well as the work flows $W_{C_i}$ are needed. The heat flows $Q_0^i$ occur at temperature $T_0$.
In total, the system $S_0$ summarized in the dotted box is then cyclic. 

The second law (Postulate~\ref{post:sec}) can now be applied to this situation, which yields $Q_0 := \sum_i Q_0^i \geq 0$.
Together with Carnot's Theorem (ii) and the definition of absolute temperature, which says that for reversible machines
\begin{align}
-\frac{Q_0^i}{T_0} = \frac{Q_S(p_i)}{T_i}
\end{align}
holds, it follows with $T_0>0$ immediately that
\begin{align}
\label{eq:clausius}
0 \stackrel{2^\mathrm{nd}}{\geq} -\frac{Q_0}{T_0}
\stackrel{\mathrm{Def.} Q_0}{=} -\frac{Q_0^1+\cdots+Q_0^N}{T_0} 
\stackrel{\mathrm{Carnot}}{=} \frac{Q_S(p_1)}{T_1} + \cdots + \frac{Q_S(p_N)}{T_N} 
= \sum_{i=1}^N \frac{Q_S(p_i)}{T_i}\,.
\end{align}

\item [(ii)] If all processes $p_i$ are reversible Lemma~\ref{lemma:revheatuniquetapp} says that the temperatures of the associated heat flows are unique and equal to the temperatures in the reverse processes.\footnote{If the heat flow was $Q=0$ this statement does not make any sense. However, in this case the heat does not contribute either in Eq.~(\ref{eq:clausius}).}
In the reverse processes the heat flows go in the opposite direction.
Therefore, with the same argument as in (i), we obtain the opposite inequality, which together with (i) implies
\begin{align}
\sum_{i=1}^N\frac{Q_S(p_i)}{T_i}=0 \,.
\end{align}

\end{itemize}
\end{proof}

Even though Clausius' Theorem is formulated only for heat flows between a system and a set of reservoirs, it also makes a statement about other heat flows. Due to Definition~\ref{def:heatatt} any heat flow exchanged between two arbitrary systems to which a temperature can be assigned, can be reduced to the situation in which this heat flows between one of the systems and a reservoir. Therefore, if a system undergoes a sequential process in which all heat flows to the system can be assigned a temperature, the theorem holds too.\\

The fact that it is possible that more than one temperature can be assigned to a specific heat flow does not lead to problems. The inequality of Clausius' Theorem (i) holds for all temperatures that fulfil Definition~\ref{def:heatatt}. 
In the case of reversible heat flows, where the inequality works in both directions and thus equality holds, we are guaranteed that the assigned temperatures are unique. 


\subsection{Entropy and the Entropy Theorem}

The quantity $\sum_{i=1}^N\frac{Q_S(p_i)}{T_i}=0$ discussed in Clausius' Theorem (Theorem~\ref{thm:clausius}) applied to a sequence of processes which does not have to be cyclic will serve as the definition of thermodynamic entropy. 
More precisely, the entropy difference between two states will be computed by means of such a sequential process that starts from one and ends at the other state.
In order to be able to compute the entropy difference for any pair of states of any system, it is thus necessary that such a sequence always exists. 
This is made sure by the coming postulate. 

\begin{postulate}[Reversible processes with heat flows at well-defined temperatures]
\label{post:entropy}
Given any System $S\in\mathcal{S}$ and any two states $\sigma_1,\sigma_2\in\Sigma_S$ there always exists a finite sequence of reversible processes 
$\{p_i\}_{i=1}^N$, $p_i\in\mathcal{P}_{S\vee R_i}$,
with $R_i\in\mathcal{R}$
such that $p:=p_N\circ\cdots\circ p_1$ is well-defined and 
transforms $\lfloor p \rfloor_S = \sigma_1$ into $\lceil p \rceil_S = \sigma_2$.
\end{postulate}

The existence of a connecting sequence of processes as asked for by the postulate for arbitrary choices of $\sigma_1$ and $\sigma_1$ does not follow from the first law (Postulate~\ref{post:first}), which only asks for a work process connecting the two states.
The processes in the sequence, however, are not work processes on $S$. Instead, they need to be reversible and the heat flows to $S$ must have a well-defined temperature for all members of the sequence.\\

We are now in the position to define thermodynamic entropy, which is a state variable due to Clausius' theorem.

\begin{definition}[Entropy]
\label{def:entropy}
Let $S\in\mathcal{S}$ be a system and $\sigma_0\in\Sigma_S$ an arbitrary but fixed state.\footnote{One can choose this reference state to be the same as for the internal energy but this is not mandatory.}
Let $S_S^0\in\mathbbm{R}$ be an arbitrary real constant.
For a state $\sigma\in\Sigma_S$ we define its entropy as
\begin{align}
\label{entropydef}
S_S(\sigma) := \sum _i \frac{Q_S(p_i)}{T_i} + S_S^0\,,
\end{align}
where the sum goes over a sequence of reversible concatenable processes $\{p_i\}$, 
each of which fulfils $p_i\in\mathcal{P}_{S\vee R_i}$ with $W_{R_i}(p)=0$,
with total initial state $\sigma_0$ and final state $\sigma$.
\end{definition}


Lemma~\ref{lemma:entropyaddapp}
proves that entropy is additive under composition for disjoint systems.
It is thinkable that one works with a theory that does not fulfil Postulate~\ref{post:entropy}, i.e.\ there would exist pairs of states for which there is no such sequence of processes connecting them. 

\begin{example}[A system that does not satisfy Postulate~\ref{post:entropy}]
\label{ex:entropypost}
As an example, consider a mixture of 1 mole of oxygen and 2 moles of hydrogen undergoing the oxyhydrogen reaction, which might be known, whereas the opposite process, electrolysis, is unknown and thus not a process in this theory. In this case, the state of the initial mixture is connected by the work process ``oxyhydrogen reaction'' with the final state, 1 mole of water, and thus satisfying the first law. On the other hand, this process is irreversible and there is no other known process that could bring the water back to its initial state, the mixture. Hence, there is no sequence of processes that could serve for the definition of entropy and consequently, one has to define two measures of entropy on the different ``connected subsets'' of the state space which are incomparable.
The two reference entropies could be chosen arbitrarily.
\end{example}

Such a case is thinkable and (so far) allowed by the presented framework. We exclude if from now on by postulating that for any two states a sequence of processes as described in Postulate~\ref{post:entropy} exists. Hence, for a given system one needs to define only \emph{one} entropy measure for all of its state space. \footnote{This discussion is reminiscent of the one after Postulate~\ref{post:Carnotexist} asking for the comparability of any two reservoirs, which lead to a unique absolute temperature, instead of different incomparable ones. 
As was the case for absolute temperature, also here we aim for accordance of our framework with standard phenomenological thermodynamics and solve this issue with the postulate on the existence of reversible processes.
}\\

Entropy is a very helpful quantity. This is owed to the so called \emph{Entropy Theorem} which says that entropy is a monotone for the preorder $\rightarrow$. 
That is, if two states of a system are preordered (in the sense established by Definition~\ref{def:preorder}), i.e.\ there exists a work process that transforms one into the other, then the entropy of the final state cannot be smaller than the entropy of the initial state.

\begin{thm}[Entropy Theorem]
\label{thm:entropythm}
Let $S\in\mathcal{S}$ be a thermodynamic system and $p\in\mathcal{P}_S$ a work process on $S$. 
Then
\begin{align}
S_S(\lfloor p \rfloor_S) \leq S_S(\lceil p \rceil_S)\,.
\end{align}
with equality if $p$ is reversible.
In particular, since entropy is a state variable, any two states that are preordered in this sense (Definition~\ref{def:preorder}) fulfil this inequality, which is to say that entropy is a monotone for $\rightarrow$.
\end{thm}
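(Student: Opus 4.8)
The plan is to reduce the statement to Clausius' Theorem (Theorem~\ref{thm:clausius}) by closing the work process $p$ into a cycle with a reversible, reservoir-coupled sequence, and to exploit the fact that a work process on $S$ transports no heat. Write $\sigma_1 := \lfloor p \rfloor_S$ and $\sigma_2 := \lceil p \rceil_S$. The first thing I would record is that $Q_S(p)=0$: since $p\in\mathcal{P}_S$ is a work process on $S$, the internal energy function satisfies $W_S(p) = U_S(\sigma_2) - U_S(\sigma_1) = \Delta U_S(p)$, so Definition~\ref{def:heat} gives $Q_S(p) = \Delta U_S(p) - W_S(p) = 0$. This is the crucial feature that lets $p$ sit inside a Clausius-type sequence without contributing to the Clausius sum.

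Next I would invoke Postulate~\ref{post:entropy} to obtain a finite reversible sequence $\{q_i\}_{i=1}^N$, with $q_i\in\mathcal{P}_{S\vee R_i}$ and $R_i\in\mathcal{R}$ at temperature $T_i$, that transforms $\sigma_2$ back into $\sigma_1$. Each $q_i$ is reversible, so it automatically obeys $W_{R_i}(q_i)=0$ (reversible processes do no work on heat reservoirs, as noted in the discussion of heat reservoirs). The concatenation $P := q_N\circ\cdots\circ q_1\circ p$ is then defined and cyclic on $S$, returning it to $\sigma_1$. The heat flows to $S$ are $Q_S(q_i)$ at temperature $T_i$ together with the contribution $Q_S(p)=0$ from the $p$-step. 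Applying Clausius' Theorem~(i) to $P$ yields
\begin{align}
\sum_{i=1}^N \frac{Q_S(q_i)}{T_i} \leq 0 \,.
\end{align}
I would then identify the left-hand side with an entropy difference: because entropy is a well-defined state function (Clausius' Theorem), the Clausius sum of any reversible reservoir-coupled sequence from $\sigma_2$ to $\sigma_1$ equals $S_S(\sigma_1) - S_S(\sigma_2)$. Hence $S_S(\sigma_1) - S_S(\sigma_2)\leq 0$, which is the claimed inequality. For the equality clause, if $p$ is reversible then its reverse $p^\mathrm{rev}\in\mathcal{P}_S$ is again a work process on $S$, now carrying $\sigma_2$ to $\sigma_1$; rerunning the argument on $p^\mathrm{rev}$ gives $S_S(\sigma_2)\leq S_S(\sigma_1)$, and the two inequalities force equality. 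The final monotonicity claim is then immediate: if $\sigma\rightarrow\sigma'$, some work process realises the preorder, so $S_S(\sigma)\leq S_S(\sigma')$, and the bound is independent of the chosen process because $S_S$ is a state function.

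The step I expect to be the main obstacle is the legitimacy of feeding the pure work process $p$ into Clausius' Theorem, whose hypothesis literally asks for processes in $\mathcal{P}_{S\vee R_i}$ coupled to reservoirs, whereas $p\in\mathcal{P}_S$ involves no reservoir at all. The clean resolution is precisely the observation $Q_S(p)=0$: inserting a zero-heat step alters neither the cyclicity of $P$ nor the Clausius sum, and this is exactly the extension flagged in the remark following Theorem~\ref{thm:clausius}, namely that the theorem holds for any sequential process all of whose heat flows to $S$ carry a temperature (the $p$-step carries none). If one wants this at the level of the proof rather than the statement, it suffices to attach no auxiliary Carnot engine to the $p$-step when building the compound machine $S_0$. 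A secondary point to state carefully is why the reverse sequence's Clausius sum equals $S_S(\sigma_1)-S_S(\sigma_2)$; this rests on the path-independence of entropy together with Lemma~\ref{lemma:revheatuniquetapp}, which guarantees that in the reversed processes the heat flows occur at the same temperatures with opposite signs.
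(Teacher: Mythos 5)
Your proposal is correct and takes essentially the same route as the paper: both close the work process $p$ into a cycle on $S$ using the reversible reservoir-coupled sequence guaranteed by Postulate~\ref{post:entropy}, apply Clausius' Theorem~\ref{thm:clausius}, and identify the resulting Clausius sum with $S_S(\lfloor p\rfloor_S)-S_S(\lceil p\rceil_S)$. The only divergences are minor: where you patch the hypothesis mismatch (your ``main obstacle'') by appealing to the remark after the theorem or by rebuilding the compound machine $S_0$, the paper simply replaces $p$ by $p\vee\mathrm{id}_R\in\mathcal{P}_{S\vee R}$ for a fresh reservoir $R\in\mathcal{R}\smallsetminus\{R_1,\dots,R_N\}$, so that Clausius' Theorem applies verbatim with a zero heat term; and for the equality clause it invokes Clausius~(ii) on the reversible full cycle rather than running your (equally valid) two-sided argument with $p^\mathrm{rev}$.
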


\begin{proof}
Let $\{q_i\}_{i=1}^N\subset\mathcal{P}$ be a valid sequence to determine the entropy difference between the output and input state of $p$. That is, for each $i$ the process $q_i\in\mathcal{P}_{S_\vee R_i}$ is reversible and concatenable with its predecessor and successor, such that $\lfloor q_1\rfloor_S = \lceil p \rceil_S$ and $\lceil q_N \rceil_S = \lfloor p \rfloor_S$.
In each process $q_i$ the heat $Q_S(q_i)$ is exchanged at temperature $T_i$, or $Q_S(q_i)=0$.

Extend $p$ with an arbitrary identity process to $p\vee\mathrm{id}_R$ for an arbitrary $R\in\mathcal{R}\smallsetminus\{ R_1,\dots,R_N \}$.
Then it is possible to concatenate $q_N\circ \cdots \circ q_1\circ (p\vee\mathrm{id}_R)$ and
$Q_S(p\vee\mathrm{id}_R)=0$. 
The total process $q_N\circ \cdots \circ q_1\circ (p\vee\mathrm{id}_R)$ is cyclic on $S$ and fulfils the conditions in order to apply Clausius' Theorem, which then states
\begin{align}
0 \geq \sum_{i=1}^N \frac{Q_S(q_i)}{T_i} + 0 
= S_S(\lceil q_N\rceil_S) - S_S(\lfloor q_1\rfloor_S) 
= S_S(\lfloor p \rfloor_S) - S_S (\lceil p \rceil_S)\,.
\end{align}
This proves the inequality. 
In the case of a reversible $p$, Clausius' Theorem requires equality, which then implies
$S_S(\lfloor p \rfloor_S) = S_S (\lceil p \rceil_S)$.
\end{proof}

The Entropy Theorem establishes the parallels between what we call \emph{work processes} and what is usually called \emph{adiabatic processes}.
In standard phenomenological thermodynamics the Entropy Theorem is usually formulated for adiabatic processes, which are defined to be processes in which no heat is exchanged with other systems. 
The fact that our work processes serve as the traditional adiabatic processes was already suggested by the similarity of Postulate~\ref{post:first} (i) to the ``adiabatic accessibility'' statement of Lieb and Yngvason \cite{LY99}.
However, they use very different axioms and techniques to derive the corresponding monotone, the entropy.
In particular, since the term adiabatic is usually taken to mean ``no heat flows'', in our way of introducing thermodynamics we could not even use this concept as heat is only introduced after the first law is stated and internal energy is derived.
Furthermore, as is explained in an example at the end of Appendix~\ref{app:clausius}, in our framework reversible processes with ``no heat flow'' can still change the entropy of a system. Hence the traditional definition of an adiabatic process would fail to fulfil the Entropy Theorem.\\

The Entropy Theorem does not strictly hold in the other direction, i.e.\ one cannot always deduce the existence of a work process transforming $\sigma_2$ into $\sigma_1$ from 
$S_S(\sigma_2)\geq S_S(\sigma_1)$. 
However, it almost holds. Starting from the \emph{strict} inequality $S_S(\sigma_2) > S_S(\sigma_1)$ it follows due to the Entropy Theorem that there cannot be a work process on $S$ transforming $\sigma_1$ into $\sigma_2$.
Together with the first law it then follows that a work process in the other direction, i.e.\ taking $\sigma_2$ as input and giving $\sigma_1$ as output, must exist.
Hence, in this sense, the Entropy Theorem is almost invertible.

\newpage
\section{Quasistatic processes}
\label{sec:quasistatic}

\begin{center}
\fcolorbox{OliveGreen}{white}{\begin{minipage}{0.98\textwidth}
\centering
\begin{minipage}{0.95\textwidth}
\ \\
\textcolor{black}{\textbf{Postulates:}}
quasistatic first law (quasistatic Postulate~\ref{post:first}), quasistatic reversible processes with heat flows at well-defined temperatures (quasistatic Postulate~\ref{post:entropy})\\
\ \\
\textcolor{black}{\textbf{New notions:}}
quasistatic thermodynamic processes, differential work and heat

\paragraph{\textcolor{black}{Summary:}}
Based on the existing notions of states and processes we define quasistatic processes which leads to the notions of differential work and heat. In order to formally use the differential quantities instead of the discrete ones discussed so far, the Postulates~\ref{post:first} and~\ref{post:entropy} must be adjusted. The implications of these adjustments are then discussed.
\vspace{.1cm}

\end{minipage}
\end{minipage}}
\end{center}

\ \\

In this section we extend our framework, which talked about discrete quantities up to this point, to allow for continuous processes and state spaces. 
This is coupled to additional assumptions on the technical structure of state spaces and the set of thermodynamic processes. 
Arguably, many of the additional assumptions are not on an equal footing with the physical assumptions discussed in the main text. They are technical in nature and not necessary to be fulfilled if one wants to work with the minimal assumptions from the main text to obtain the standard results in their discrete form. 
However, they are necessary in order to have notions like the ones of a \emph{quasistatic process} or \emph{differential work and heat}, which are about continuous and differentiable quantities. \\

\subsection{General Definition}

In order to define a quasistatic process on $S$ the minimal requirement on $\Sigma_S$ is that it is equipped with a topology. 

\begin{definition}[Quasistatic process]
\label{def:quasistaticapp}
A \emph{quasistatic work process on a system $S\in\mathcal{S}$} is a two-parameter family of work processes
$\{p(\lambda,\lambda')\}_{0\leq\lambda\leq\lambda'\leq1}\subset\mathcal{P}_S$ 
such that:
\begin{itemize}
\item [(i)] For $\lambda\leq\lambda'\leq\lambda''\in[0,1]$ it holds 
$p(\lambda',\lambda'') \circ p(\lambda,\lambda') = p(\lambda,\lambda'')$.

\item [(ii)] The curve $\gamma: [0,1] \rightarrow \Sigma_S$, $\gamma(\lambda) := \lfloor p(\lambda,1)\rfloor_S$ is continuous in the topology of $\Sigma_S$.

\item [(iii)] For all $A\in\mathrm{Atom}(S)$ the work function $W_A(p(\lambda,\lambda'))$ is continuous in both $\lambda$ and $\lambda'$ on the respective domain.
\end{itemize}
\end{definition}

A \emph{quasistatic process on $S$} is then just a quasistatic work process on a larger system $S'\in\mathcal{S}$ such that $S\in\mathrm{Sub}(S')$ is a subsystem.
Even if we talk about a quasistatic work process on $S$ we sometimes call it a quasistatic process on $S$ when it helps to improve the readability of the text and the context makes clear what is meant. 

\begin{lemma}[Properties of quasistatic processes]
\label{lemma:quasistaticapp}
Let $\{p(\lambda,\lambda')\}_{\lambda,\lambda'}$ be a quasistatic process on $S$. 
For all $\lambda, \lambda',\lambda''\in[0,1]$ with $\lambda\leq \lambda'\leq\lambda''$ it holds:
\begin{itemize}
\item[(i)] $\lceil p(\lambda,\lambda')\rceil_A = \lfloor p(\lambda',\lambda'')\rfloor_A$ for all 
$A\in\mathrm{Atom}(S)$.

\item [(ii)] $\lfloor p(\lambda,\lambda')\rfloor_S$ is independent of $\lambda'$ and
$\lceil p(\lambda,\lambda')\rceil_S$ is independent of $\lambda$. 
In particular, the curve $\gamma$ from Definition~\ref{def:quasistaticapp} (ii) satisfies 
$\gamma(\lambda) = \lfloor p(\lambda,\lambda')\rfloor_S	= \lceil p(\lambda'',\lambda)\rceil_S$ for all $\lambda'\geq\lambda$ and $\lambda''\leq\lambda$. 

\item [(iii)] $p(\lambda,\lambda)$ is an identity process on $S$.

\item [(iv)] For all $\lambda,\lambda'\in[0,1]$ and all partitions $\lambda=\lambda_0<\lambda_1<\cdots<\lambda_m=\lambda'$ it holds
$W_A(p(\lambda,\lambda')) = \sum_{i=1}^m W_A(p(\lambda_{i-1},\lambda_i))$.

\end{itemize}
\end{lemma}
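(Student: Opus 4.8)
The plan is to derive all four parts purely algebraically from the single concatenation identity in Definition~\ref{def:quasistaticapp}~(i), namely $p(\lambda',\lambda'')\circ p(\lambda,\lambda')=p(\lambda,\lambda'')$, combined with the bookkeeping rules of Postulate~\ref{post:conc} and the additivity of work of Postulate~\ref{post:addwork}; the continuity conditions (ii)--(iii) of the definition will not be needed. First I would record the structural fact that every $p(\lambda,\lambda')$ is a work process on $S$, so $\mathcal{A}_{p(\lambda,\lambda')}=\mathrm{Atom}(S)$ for all admissible arguments. Hence any two members involve exactly the atomic systems $\mathrm{Atom}(S)$, and by Eq.~(\ref{eq:statecomp}) a statement about the $S$-state is equivalent to the conjunction of the corresponding statements for all $A\in\mathrm{Atom}(S)$. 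For part~(i), I would simply note that the left-hand side $p(\lambda',\lambda'')\circ p(\lambda,\lambda')$ of the defining identity is by hypothesis a defined concatenation; Postulate~\ref{post:conc} makes a concatenation defined exactly when $\lceil p(\lambda,\lambda')\rceil_A=\lfloor p(\lambda',\lambda'')\rfloor_A$ for every $A$ in the (here total) overlap $\mathrm{Atom}(S)$, which is the claim.

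Next I would establish part~(iii). Applying part~(i) with $\lambda=\lambda'=\lambda''$ gives $\lceil p(\lambda,\lambda)\rceil_A=\lfloor p(\lambda,\lambda)\rfloor_A$ for all $A$, so $p(\lambda,\lambda)$ is cyclic on $S$. For the work cost I would feed the identity $p(\lambda,\lambda)\circ p(\lambda,\lambda)=p(\lambda,\lambda)$ (the defining identity at coinciding arguments) into Postulate~\ref{post:addwork} to get $W_A(p(\lambda,\lambda))=2\,W_A(p(\lambda,\lambda))$, whence $W_A(p(\lambda,\lambda))=0$ for every $A\in\mathrm{Atom}(S)$; a cyclic work process on $S$ with vanishing atomic work costs is an identity process by Definition~\ref{def:id}. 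The bulk of part~(ii) I would read off from Eq.~(\ref{eq:inputconc}): for $\lambda\le\lambda'\le\lambda''$ the input of $p(\lambda,\lambda'')=p(\lambda',\lambda'')\circ p(\lambda,\lambda')$ on each $A$ is the input of the earlier factor $p(\lambda,\lambda')$, giving $\lfloor p(\lambda,\lambda'')\rfloor_S=\lfloor p(\lambda,\lambda')\rfloor_S$, i.e.\ independence of the second argument; symmetrically, writing $p(\lambda,\lambda')=p(\lambda'',\lambda')\circ p(\lambda,\lambda'')$ for $\lambda\le\lambda''\le\lambda'$ shows the output equals that of the later factor, giving independence of the first argument.

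For the ``in particular'' identity of part~(ii) I would chain the pieces already proved: $\gamma(\lambda)=\lfloor p(\lambda,1)\rfloor_S=\lfloor p(\lambda,\lambda')\rfloor_S$ for all $\lambda'\ge\lambda$ by input-independence, and $\lceil p(\lambda'',\lambda)\rceil_S=\lceil p(\lambda,\lambda)\rceil_S=\lfloor p(\lambda,\lambda)\rfloor_S=\gamma(\lambda)$ for all $\lambda''\le\lambda$ by output-independence, then part~(iii), then input-independence again. Finally, part~(iv) I would prove by induction on the number $m$ of subintervals: the base case $m=1$ is trivial, and for the step I would split off the last node using $p(\lambda_0,\lambda_m)=p(\lambda_{m-1},\lambda_m)\circ p(\lambda_0,\lambda_{m-1})$ and apply Postulate~\ref{post:addwork} together with the induction hypothesis on the partition $\lambda_0<\cdots<\lambda_{m-1}$.

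I do not expect a genuine obstacle here; the lemma is bookkeeping driven entirely by Postulates~\ref{post:conc} and~\ref{post:addwork}. The only points requiring care are to keep the orderings $\lambda\le\lambda'\le\lambda''$ correctly oriented when deciding which factor is ``earlier'' and which endpoint to hold fixed, and to make sure the involvement and matching hypotheses of Postulate~\ref{post:conc} are invoked for \emph{all} of $\mathrm{Atom}(S)$, so that the per-atom conclusions legitimately upgrade, via Eq.~(\ref{eq:statecomp}), to the stated facts about the state of $S$.
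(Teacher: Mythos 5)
Your proposal is correct and follows essentially the same route as the paper's own (much terser) proof: part~(i) from the fact that the concatenation in Definition~\ref{def:quasistaticapp}~(i) is defined at all, part~(iii) from the self-concatenation identity $p(\lambda,\lambda)\circ p(\lambda,\lambda)=p(\lambda,\lambda)$ together with Postulate~\ref{post:addwork}, and part~(iv) by iterating additivity over the partition. The only cosmetic difference is that the paper deduces~(ii) directly from~(i) (e.g.\ taking the triple $(\lambda,\lambda,\lambda')$ in~(i) shows $\lfloor p(\lambda,\lambda')\rfloor_A=\lceil p(\lambda,\lambda)\rceil_A$, which is manifestly independent of $\lambda'$), whereas you read it off Eq.~(\ref{eq:inputconc}); both rest on the same content of Postulate~\ref{post:conc}.
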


\begin{proof}
Lemma~\ref{lemma:quasistaticapp} (i) follows directly from Definition~\ref{def:quasistaticapp} (i).
Claim (ii) is then a consequence of (i).
For $p(\lambda,\lambda)$ we find according to Definition~\ref{def:quasistaticapp} (i) that it can always be concatenated with itself, and $p(\lambda,\lambda)\circ p(\lambda,\lambda) = p(\lambda,\lambda)$. 
Due to the additivity of work under concatenation, it follows that $W_A(p(\lambda,\lambda))=0$ for all $A\in\mathrm{Atom}(S)$. Hence $p(\lambda,\lambda)$ is an identity process on $S$.
Finally, claim (iv) is again a simple consequence of Definition~\ref{def:quasistaticapp} and the additivity of work under concatenation.
\end{proof}

Quasistatic processes are those which allow for a continuous partition into ``smaller'' processes, which can be concatenated to from the whole process again. 
The process $p(\lambda,\lambda')$ connects the state $\gamma(\lambda)$ with $\gamma(\lambda')$, as Lemma~\ref{lemma:quasistaticapp} shows.
The family of processes thereby defines a continuous curve in the state space and the work functions evaluated on the members of the family must be continuous as well in the curve parameter. This has to hold for each atomic subsystem of $S$ individually. The naive approach asking only for continuity of $W_S$ would allow for an arbitrary chaotic behaviour of the $W_A$. 
This is something that should not be the case, as one expects that $p(\lambda,\lambda')$ for very close $\lambda$ and $\lambda'$ corresponds to a small ``step''.\\

The family of processes defining a quasistatic process on $S$ can also be called \emph{quasistatic on a subsystem} $S'\in\mathrm{Sub}(S)$ of $S$.
Consequently, a family of processes which are not necessarily work processes on a system $S$ are called \emph{quasistatic} if they form a quasistatic process on the larger system on which they are work processes. \\

It is easy to see that two quasistatic processes $\{p(\lambda,\lambda')\}_{\lambda,\lambda'}$ and $\{q(\lambda,\lambda')\}_{\lambda,\lambda'}$ can be concatenated to a new quasistatic process $\{(q\circ p)(\lambda,\lambda')\}_{\lambda,\lambda'}$ whenever $\lceil p(0,1) \rceil_S = \lfloor q(0,1) \rfloor_S$. 
The concatenated quasistatic process is then defined as 
\begin{align}
(q\circ p) (\lambda,\lambda') = \begin{cases}
p(2\lambda,2\lambda')\,, \text{ for } \lambda\leq \lambda' < \frac{1}{2}\,, \\
q(0,2\lambda')\circ p(2\lambda,1)\,, \text{ for } \lambda < \frac{1}{2}\leq \lambda'\,, \\
q(2\lambda,2\lambda')\,, \text{ for } \frac{1}{2}\leq \lambda\leq \lambda'\,.
\end{cases}
\end{align}

In practice one often calls $p(0,1)$ the ``quasistatic process''. However, in our framework this process alone does not say anything about the continuous curve in the state space. Hence the mathematical structure of a quasistatic process must be richer than that. 
Nevertheless, we will sometimes call a process $p\in\mathcal{P}_S$ quasistatic on $S$ without explicitly mentioning that formally the family $\{p(\lambda,\lambda')\}_{\lambda,\lambda'}$ with $p=p(0,1)$ is meant.

\begin{example}[Identity processes]
An identity process $\mathrm{id}_S^\sigma\in\mathcal{P}_S$ on a system $S$, which, concatenated with itself yields itself again, can always be seen as quasistatic processes (with any topology on $\Sigma_S$).
To see this, consider the defining family $p(\lambda,\lambda') = \mathrm{id}_S^\sigma$ 
for all $\lambda\leq\lambda'\in[0,1]$.
This family of processes obviously fulfils Definition~\ref{def:quasistaticapp} (i).
The curve $\gamma(\lambda)=\sigma$ for all $\lambda\in[0,1]$ is constant and thus continuous in any topology, which shows (ii). 
In addition, (iii) is fulfilled since identity processes have zero work cost on any atomic subsystem and thus $W_A(p(\lambda,\lambda'))=0$ constant. 

This example also works for discrete spaces $\Sigma_S$ equipped with the discrete topology. 
In the discrete topology, such quasistatic processes are the only possible ones owed to the fact that every continuous curve in a discrete space is constant. 
\end{example}

We next comment on the continuity of the internal energy of a system, on which quasistatic processes exist. 
For a quasistatic process $\{p(\lambda,\lambda)\}_{\lambda,\lambda'}$ on $S$ the internal energy
of $S$ is continuous on the curve in the curve parameter.
That is, $U_S(\lambda) := U_S(\gamma(\lambda))$ is continuous in $\lambda$ and can be seen as follows.

Due to Definition~\ref{def:quasistaticapp} (iii) together with the additivity of work under composition and the fact that the sum of continuous functions is continuous, we find
\begin{align}
U_S(\lambda') - U_S(\lambda) 
= U_S(\gamma(\lambda')) - U_S(\gamma(\lambda)) 
= W_S(p(\lambda,\lambda') )
\stackrel{\lambda\rightarrow\lambda'}{\longrightarrow} 0 \,,
\end{align}
for $\lambda\leq\lambda'\in[0,1]$.
We used that the change in internal energy $U_S$ can always be expressed as the work cost of an appropriate work process on $S$.
The proof does not directly generalize to the internal energies of subsystems of $S$ because the second equality only holds when the change in internal energy can be expressed by a step in a quasistatic process. The existence of a quasistatic process on $S$ does not imply this.

%

If for any continuous curve in $\Sigma_S$ there exist quasistatic processes on $S$ that, when putting the quasistatic curves together, pass through this curve (not necessarily in one direction, as the different quasistatic curves may have different directions in which they are gone through) it follows that $U_S$ is continuous on $\Sigma_S$, now in the topology of $\Sigma_S$.
This holds because a function that is continuous on any continuous path is continuous on the topological space.
Again, the same only follows for subsystems $S'\in\mathrm{Sub}(S)$ and $U_{S'}$ if also in $\Sigma_{S'}$ there exist quasistatic processes for any continuous curve. 

Systems with this property fulfil a stronger first law asking not only for the existence of work processes between any two states, but that these are quasistatic in addition. 
We now know that for such systems the internal energy function is continuous. 
Furthermore, the state space of such a system is necessarily path-connected.
We do not formalize this strengthened first law here, as it would only be an intermediate step towards the ``quasistatic first law'' below, which relies on even richer structure on the state space.

\begin{example}[Continuous internal energy of a gas]
\label{ex:contUapp}
A gas described by $\sigma = (p,V)\in\Sigma_S = \mathbbm{R}_{>0}^2$ is again a good example for a system which allows for quasistatic processes between any two states. 
The state space $\Sigma_S=\mathbbm{R}_{>0}^2$ shall be equipped with the usual topology on $\mathbbm{R}^2$.
For gases adiabatic compression and expansion and isochoric warming (i.e.\ heating) can be considered quasistatic processes if they are carried out slow enough.
By considering those types of processes any continuous path in $\Sigma_S$ can be seen as a union of quasistatic curves, where the directions of the curves play no role for the purposes of determining the internal energy. 
We conclude from this, that the internal energy of a gas is continuous in $\sigma$.
\end{example}

\subsection{Differential work and heat}

We now investigate the differential work and heat, which can be defined if further structure is present on the state spaces of the involved systems. Specifically, $\Sigma_S$ must be at least a $C^1$ manifold. The state spaces $\Sigma_A$ of atomic subsystems $A\in\mathrm{Atom}(S)$ are then submanifolds of $\Sigma_S$ and thus also $C^1$. The same holds for arbitrary subsystems of $S$.

\begin{definition}[$C^1$-quasistatic processes and differential work]
\label{def:diffWapp}
A \emph{(piecewise) $C^1$-quasistatic process on $S\in\mathcal{S}$} is a quasistatic process $\{p(\lambda,\lambda')\}_{\lambda,\lambda'}\subset\mathcal{P}_S$ on $S$ with a (piecewise) $C^1$-curve $\gamma$ in $\Sigma_S$
together with continuous $1$-forms $\delta^{(p)} W_A$ defined on the curve for all $A\in\mathrm{Atom}(S)$ such that
for all $\lambda\leq\lambda'\in [0,1]$ it holds
\begin{align}
\int_{\gamma|_{[\lambda,\lambda']}} \delta^{(p)} W_A = W_A(p(\lambda,\lambda'))\,.
\end{align}
\end{definition}

Here, the left hand side is a part of the path integral over the quasistatic path.
The continuous $1$-form $\delta^{(p)} W_A$ is called \emph{differential work of $A$} and may depend on the quasistatic process, which is indicated by the superscript $(p)$ standing for the whole family $\{p(\lambda,\lambda')\}_{\lambda,\lambda'}$.
Through additivity of work under composition we obtain corresponding continuous $1$-forms for all subsystems of $S$ and of course $S$ itself.

We are not only interested in $C^1$-curves but also piecewise $C^1$-curves as this is what one generically obtains when concatenating two $C^1$-quasistatic processes. 
In order to be able to integrate a piecewise $C^1$-curve is sufficient.\\

At this point it makes sense to strengthen the fist law (Postulate~\ref{post:first}) so it also talks about piecewise $C^1$-quasistatic processes. 
The new version is thought for thermodynamic theories in which the systems' state spaces are $C^1$ manifolds.
In addition to the previous first law the ``quasistatic version'' will also make sure that the internal energy $U_S$ of any system is a differentiable function and hence the differential d$U_S$ exists. 
Section~\ref{sec:exideal} discusses the example of the ideal gas in this context in more depth.

\begin{postulateqsfirst}[The quasistatic first law]
\label{post:qsfirst}
For any system $S\in\mathcal{S}$ the following three statements hold:
\begin{itemize}
	\item [(o)] For all states $\sigma\in\Sigma_S$ there exist $m:=\mathrm{dim}\Sigma_S$ 
	$C^1$-quasistatic processes with curves $\gamma_1,\dots,\gamma_m$ passing through $\sigma$
	such that the derivatives of these curves at $\sigma$ are linearly independent.
	That is, there exist $\lambda_i\in(0,1)$ with $\gamma_i(\lambda_i)=\sigma$ for $i=1,\dots,m$,
	such that $\gamma_1'(\lambda_1),\dots,\gamma_m'(\lambda_m)$ are linearly independent. 
	
	\item [(i)]
	For any pair of states $\sigma_1,\sigma_2\in\Sigma_S$ there exists a piecewise $C^1$-quasistatic 
	process 
	$\{p(\lambda,\lambda')\}_{\lambda,\lambda'}\subset\mathcal{P}_S$ on $S$ 
	with $\lfloor p(0,1) \rfloor_S = \sigma_1$ and 	$\lceil p(0,1) \rceil_S = \sigma_2$ 
	or in the other direction, or both.
	
	\item [(ii)] The total work cost of a work process $p\in\mathcal{P}_S$ on $S$, $W_S(p)$, 
	only depends on $\lfloor p \rfloor_S$ and $\lceil p \rceil_S$ 
	and not on any other details of the process. 
\end{itemize}
\end{postulateqsfirst}

Postulate ~\ref{post:qsfirst} (i) obviously implies point (i) of the previous first law, Postulate~\ref{post:first}, while (ii) has not changed at all. Therefore, all the machinery relying on the first law can as well be based on the quasistatic first law.
The additional point (o) may seem unnecessary at first, since it requires more quasistatic processes passing through a given state than just the one that is asked for in (i). 
However, this point will be crucial in order to show that the internal energy of systems fulfilling the quasistatic first law is differentiable.

One may wonder whether (o) already implies (i) in the quasistatic formulation. In fact it almost does so, but not quite. 
One could think of constructing the quasistatic processes asked for in (i) by means of the ones asked for in (o) step by step.
Even if this construction lead to a piecewise $C^1$-curve from one state to the other (which may depend on further assumptions on the manifold $\Sigma_S$), it would have the problem that the direction of the quasistatic steps might change throughout the curve. 
If this is the case, the quasistatic processes cannot be composed to a piecewise $C^1$-quasistatic process connecting the states.\footnote{
Thinking of the states of system as the vertices and the work processes connecting them as the edges of a graph, (i) says that this graph must be complete, i.e.\ there is an edge between two arbitrary vertices. In fact, it possible to formulate thermodynamics with a weaker first law in which (i) only asks for this graph to be connected. In this formulation, the existence of an (undirected) $C^1$-curve connecting any two states as in the above construction is enough to derive (i), which may follow from (o). 
However, we do not go deeper into this weaker first law here, as it makes proofs and technical derivations much more cumbersome while there is no gain over the strong formulation except for a weaker postulate.}\\

Example~\ref{ex:contUapp} may also serve as an example of a system which fulfils the quasistatic first law. As is argued before the example, one can show that the internal energy of the gas is continuous in this case.
We now prove that for any atomic system $A\in\mathcal{A}$ satisfying the quasistatic first law (Postulate~\ref{post:qsfirst}) the internal energy is actually differentiable. 
The extension of this result to arbitrary systems $S\in\mathcal{S}$ follows by the additivity of the work functions. 

According to the Definition~\ref{def:Ustrong} of internal energy for any state $\sigma\in\Sigma_A$ of an atomic system $A\in\mathcal{A}$ we can write 
\begin{align}
\label{eq:qsUapp}
U_A(\sigma) = \int_{\sigma_0}^\sigma \delta W_A + U_A^0
\end{align}
where the integral goes over an appropriate piecewise $C^1$-curve arising from a $C^1$-quasistatic process on $A$ transforming $\sigma_0$ to $\sigma$ or the other way around.
Notice that Equation~(\ref{eq:qsUapp}) holds in both cases, i.e.\ it also holds when the piecewise $C^1$-quasistatic process on $A$ transforms $\sigma$ to $\sigma_0$, and is independent of the choice of quasistatic process used to compute the integral, which here in particular means that it is independent of the path over which the integral is evaluated.
In this case, one integrates in the opposite direction of the given curve, thereby introducing an additional minus sign, and thus the equality still holds.

The existence of such an appropriate curve is guaranteed by Postulate~\ref{post:qsfirst} (i) while the fact that the continuous $1$-form $\delta W_A$ is independent of the actual process is the statement of (ii). This is why the superscript $(p)$ is neglected in the notation. 
With (o) it is now possible to argue that $U_A$ is differentiable in $\sigma$.
Since the $m$ $C^1$-quasistatic processes passing through $\sigma$ must exist and their derivatives at $\sigma$ are linearly independent, these derivatives build a basis of the tangent space $T_\sigma\Sigma_A$. These quasistatic processes together with their $C^1$-curve can be used to take the directional derivative of $U_A$ in the $m$ ``different directions'' of the tangent space.
Since $U_A$ is defined as the path integral of a continuous $1$-form over a piecewise $C^1$-curve, these derivatives all exist and are continuous.
This is sufficient to deduce that $U_A$ is differentiable in $\sigma$ since a continuous function on a manifold for which all partial derivatives at a point exist and are continuous is differentiable in this point (see e.g.\ Theorem 9.21 in \cite{Rudin76}).
Since this can be deduced for an arbitrary $\sigma\in\Sigma_A$, it also shows differentiability of $U_A$ on $\Sigma_A$.\\

For any differentiable function $f$ the corresponding differential d$f$ can be defined.
This also holds for the now differentiable internal energy $U_A$. 
Relying on this, the \emph{differential heat of $A$} in a quasistatic process on a possibly larger system $S$ with $A\in\mathrm{Atom}(S)$ is defined as 
\begin{align}
\delta^{(p)} Q_A := \mathrm{d}U_A - \delta^{(p)} W_A\,.
\end{align}
For arbitrary systems the differential heat is defined via additivity under composition.\\

Again, the superscript $(p)$ is in general necessary to make explicit that the $1$-form may depend on the quasistatic process.
In particular, the $1$-forms for differential work and heat are not exact, while d$U$ as the differential of a differentiable function is. 
This is manifest in the notation of the inexact differentials with $\delta$ and the the exact differential with $d$. 
Generally, path integrals over exact differentials are independent of the path, while integrals over inexact differentials depend on the path.
In our case the integrals over differential work and heat are in general not only path-dependent but \emph{process-dependent}, i.e.\ they depend on the family of processes constituting the quasistatic process on which they are defined. 

\begin{example}[Differential work and heat are generally process-dependent]
\label{ex:diffWQdependentapp}
As an example we consider the isothermal expansion of a gas.
If the constant temperature of the gas is achieved through thermal contact with a reservoir, as is normally the case when one talks about isothermal expansion, the differential work is $-p\,\mathrm{d}V$. 
On the other hand, one could achieve a constant temperature throughout the process by continuously applying friction, i.e.\ bringing up the proper amount of work. The energy that keeps the temperature constant must then be counted as work instead of heat in the previous case. 
In this case, the differential work will be different and the total amount of work done as well.
However, the paths corresponding to the quasistatic processes in the state space will be exactly the same. 
\end{example}

From now on, whenever we talk about quasistatic processes and differential work and heat, we implicitly assume that the basic structure on the state spaces is given and that the quasistatic first law, Postulate~\ref{post:qsfirst}, is fulfilled.
In particular, by a quasistatic process from now on we mean a (piecewise) $C^1$-quasistatic one. 
Also, as is usually done in the literature, we will only write $\delta W_S$ and $\delta Q_S$, without manifesting the process-dependence of these forms in the notation explicitly. \\

\subsection{Entropy through quasistatic processes}

When using quasistatic processes for computing entropy differences we have to consider reversible ones. 
A quasistatic process $\{p(\lambda,\lambda')\}_{\lambda,\lambda'}$ on a system is called \emph{reversible} if all members $p(\lambda,\lambda')$ of the family are reversible work processes on this system. 

Suppose now a system $S\in\mathcal{S}$ undergoes a quasistatic process $\{p(\lambda,\lambda')\}_{\lambda,\lambda'}\subset\mathcal{P}_{S\vee R_1\vee\cdots\vee R_N}$, 
where $\{R_i\}_i\subset\mathcal{R}$ are reservoirs, 
such that there is a partition $0=\lambda_0 < \lambda_1 < \cdots < \lambda_N=1$ 
and for $\lambda_{i-1}<\lambda\leq\lambda'<\lambda_i$ the process $p(\lambda,\lambda')$ only acts on $S\vee R_i$, i.e.\ it can be written as a work process on $S\vee R_i$ composed with an appropriate identity on the other reservoirs.
Denote the initial state of this process on $S$ by $\sigma_1$ and the final state by $\sigma_2$.
We find
\begin{align}
\label{eq:qsentrpyapp}
S_S(\sigma_2) - S_S(\sigma_1) = \sum_{i=1}^N \frac{Q_{S,i}}{T_i}
= \sum_{i=1}^N \int_{\gamma|_{[\lambda_{i-1}, \lambda_i]}}  \frac{\delta Q_S}{T_i}
= \int_{\gamma}  \frac{\delta Q_S}{T} \,.
\end{align}
The first equality is just Definition~\ref{def:entropy}, where $Q_{S,i}$ is the heat that $S$ exchanges at temperature $T_i$ with reservoir $R_i$ between parameters $\lambda_{i-1}$ and $\lambda_i$. 
Therefore, $Q_{S,i} = \int_{\gamma|_{[\lambda_{i-1}, \lambda_i]}} \delta Q_S$ and the second equality follows.
For the third equality $T = T(\lambda)$ is introduced as a (piecewise constant) function on the curve such that $T(\lambda) = T_i$ for $\lambda_{i-1}<\lambda<\lambda_i$.\\

Starting from this, one can interpret the integral expression from Equation~(\ref{eq:qsentrpyapp}) with a continuous function $T$ on the path as the limit of reversible quasistatic process
with finer and finer partitions such that in the limit the piecewise constant temperature functions converge pointwise to the continuous one and the piecewise $C^1$-curves converge to $\gamma$. 
The \emph{interpretation} that the heat 
$\delta Q_S(\gamma(\lambda))\tfrac{\mathrm{d}\gamma(\lambda)}{\mathrm{d}\lambda}$
flows at temperature $T(\lambda)$ is valid in this limit. 
However, for the operational meaning of a continuous temperature function it is important to remember that it arises from a limit of piecewise constant functions. Only then the temperature can be put into relation with Definition~\ref{def:heatatt} for the temperature of heat flows.\\

Just like we strengthened the first law in the presence of quasistatic processes, we can also strengthen the postulate on the existence of reversible processes with heat flows at well-defined temperatures, Postulate~\ref{post:entropy}, which is relevant for defining entropy on all of $\Sigma_S$.

\begin{postulateqsentropy}[Quasistatic reversible processes with heat flows at well-defined temperatures]
\label{post:qsentropy}
For any system $S\in\mathcal{S}$ it holds that
\begin{itemize}
	\item [(o)] For all states $\sigma\in\Sigma_S$ there exist $m:=\mathrm{dim}\Sigma_S$ 
	reversible
	$C^1$-quasistatic processes on $S\vee R_i$, $R_i\in\mathcal{R}$ for $i=1,\dots,m$,
	with curves $\gamma_1,\dots,\gamma_m$ passing through $\sigma$
	such that the derivatives of these curves at $\sigma$ are linearly independent.
	That is, there exist $\lambda_i\in(0,1)$ with $\gamma_i(\lambda_i)=\sigma$ for $i=1,\dots,m$,
	such that $\gamma_1'(\lambda_1),\dots,\gamma_m'(\lambda_m)$ are linearly independent. 
	
	\item [(i)] For any two states $\sigma_1,\sigma_2\in\Sigma_S$ 
	there exists a piecewise $C^1$-quasistatic process 
	$\{p(\lambda,\lambda')\}_{\lambda,\lambda'}\subset\mathcal{P}_{S\vee R_1\vee\cdots\vee R_N}$
	with $\{R_i\}_{i=1}^N\subset\mathcal{R}$ such that 
	there is a partition $\{\lambda_i\}_{i=0}^N$ of $[0,1]$ 
	and in each segment $(\lambda_{i-1},\lambda_i)$ $S$ interacts with a single reservoir, 
	and in total it transforms $\lfloor p(0,1)\rfloor_S=\sigma_1$ into 
	$\lceil p(0,1)\rceil_S=\sigma_2$.
	
\end{itemize}
\end{postulateqsentropy}
%

In this more restrictive version, (i) corresponds to the previous statement with the additional requirement that the process connecting the two states is quasistatic. This again makes sure that the entropy can be computed for any state in $\Sigma_S$, for any system $S$ now with the integral expression from Eq.~(\ref{eq:qsentrpyapp}). Of course, also here the result from Clausius' Theorem (Theorem~\ref{thm:clausius}) is important in order to make sure that this way of computing the entropy difference does not depend on the used process.

The purpose of the additional point (o) is the same as the corresponding point (o) in Postulate~\ref{post:qsfirst}. 
And again, one might wonder whether (o) already implies (i).
In fact, it looks even more plausible here than before since the argument of the direction of the processes can no longer pose a problem for reversible processes. 
Nevertheless, the existence of finitely many curves in a neighbourhood of a point $\sigma$ with linearly independent derivatives do not immediately imply the existence of a curve (coming from a quasistatic process) connecting any two points in the manifold.
We do not discuss this technical topic any further here as it does not play a major role from a conceptual viewpoint. It is at this point good enough to have sufficient criteria in order to deduce differentiability of the entropy.
This is what we do next.\\

In practice it is often the case that the differential work for \emph{reversible} quasistatic processes is independent of the actual process. In this case the notation $\delta W_S$ without the superscript $(p)$ is not just convenient but also formally correct, and the $1$-form is defined on all of the state space $\Sigma_S$.
For gases for instance, one can find $\delta W_S = -p{\rm d}V$ during reversible quasistatic processes. 
This is not to say that the $1$-form is exact, as one can see from the gas example.

As a derived $1$-form from other $1$-forms, which are independent of the process, the differential heat $\delta Q_S$ then also shares this independence.

If now Postulate~\ref{post:qsentropy} holds,
it follows that entropy is differentiable in the same way as differentiability of $U_S$ was derived.
In the argument for the differentiability of the entropy Clausius' Theorem~\ref{thm:clausius} plays an important role, as it is necessary to be able to express entropy in terms of a path-independent integral.
Again it is important that the $1$-form over which is integrated is continuous. 
This may not be the case in all quasistatic processes. Namely, at the points where the piecewise constant function $T$ jumps, the $1$-form $\tfrac{\delta Q_S}{T}$ is not continuous. 
However, with (o) we know that there are quasistatic paths through any state during which $T$ is constant, in particular continuous. This is guaranteed by the condition that the $m$ reversible quasistatic processes are acting on $S$ and a \emph{single} reservoir. Hence, when showing differentiability of the entropy, one can just use these paths.

As a consequence, the differential ${\rm d}S_S$ exists and is exact. 
In the context of a reversible quasistatic process in which heat is exchanged at temperature $T$, the differential reads ${\rm d}S_S = \tfrac{\delta Q_S}{T}$.  
Also the proof showing that entropy is differentiable is worked out in more detail in Section~\ref{sec:exideal} using the example of the ideal gas. \\

With this we have shown that the presented framework, which was formulated in a discrete manner, can be extended to a continuous (even differentiable) formulation such that the usual differentiability properties of the internal energy and the entropy hold.
This makes life a lot easier when following the standard arguments from phenomenological thermodynamics, e.g.\ when deriving different thermodynamic potentials or computing thermodynamic quantities such as the compressibility or the heat capacity.

\newpage
\section{The temperature of arbitrary systems}
\label{sec:tsys}

\begin{center}
\fcolorbox{OliveGreen}{white}{\begin{minipage}{0.98\textwidth}
\centering
\begin{minipage}{0.95\textwidth}
\ \\
\textcolor{black}{\textbf{Postulates:}}
the temperature of a system\\
\ \\
\textcolor{black}{\textbf{New notions:}}
quasistatic thermodynamic processes, differential work and heat

\paragraph{\textcolor{black}{Summary:}}
We discuss how to extend the notion of absolute temperature for reservoirs (Section~\ref{sec:abstemp}) and heat flows (Section~\ref{sec:theatflow}) to the temperature of a system. In particular, we give insights into when this is possible and when one cannot expect that such a notion can be meaningfully defined.
\vspace{.1cm}

\end{minipage}
\end{minipage}}
\end{center}

\ \\

The concepts of temperature introduced so far include the \emph{temperature of a reservoir} and the \emph{temperature of a heat flow}. 
However, we have not made a general definition of the \emph{temperature of a system} $S\in\mathcal{S}$ nor the \emph{temperature of a state} $\sigma\in\Sigma_S$.
The reason for this is that it is not always possible.
As an example, it already suffices to consider a composite system $S=S_1\vee S_2$ in a generic state $\sigma_1\vee\sigma_2$.
Even if both $S_1$ in state $\sigma_1$ an $S_2$ in state $\sigma_2$ can be assigned individual temperatures $T_1$ and $T_2$, these will in general not be equal. 
Therefore, there is no unique ``combined'' temperature of the composed system $S$ -- one would intuitively need two ``temperatures'' in such a setting.
Also for atomic systems it is not clear that a single temperature can always be assigned to each state. Even though atomic systems are considered ``indivisible'', this is only introduced in abstract terms of the composition $\vee$, thereby not excluding the possibility that the atomic system could still consist of two compartments with independent properties, just not in the sense composition speaks about it.\\

Nevertheless, we know from the standard theory of thermodynamics that there are many systems for which a temperature (of the system) can be defined in any state. 
We here investigate the case of a system $S$ with a state space $\Sigma_S\subset\mathbbm{R}^2$ that is an open subset of $\mathbbm{R}^2$, hence the states are determined by two real parameters $x,Y$.
We further assume that this system fulfils the quasistatic first law as well as the quasistatic version of the postulate for defining entropy and that in reversible quasistatic processes the differential work can be written as 
$\delta W = x{\rm d}Y$. 
These properties have been introduced and discussed in the previous Section~\ref{sec:quasistatic}.
They are formulated in a generic way, but we do not claim that there exists no setting with weaker assumptions in which nevertheless a temperature can be assigned to the system.

We will now define the temperature of such a system as the temperature of a reversible heat flow that is exchanged with the system. This is the basic compatibility requirement we have in mind here.
Obviously, in the most general case there might be different possible reversible heat flows with different temperatures coming from a system in a given state. In this case the temperature could not be uniquely defined. As argued already in the beginning of this section, this construction of temperature cannot always be applied.
With the assumptions made, however, it is possible, and eventually allows us to define the temperature of a system via the temperature of heat flows, which in turn relies on the definition of the absolute temperature of reservoirs.\\

In general, the differentials of work and heat sum up to ${\rm d}U = \delta W + \delta Q$.
Using this for reversible quasistatic processes in which heat is exchanged at a well-defined temperature it holds $\delta Q = T{\rm d}S$ and with the above assumptions it follows
\begin{align}
{\rm d}U = x{\rm d}Y + T{\rm d}S\,.
\end{align}
Considering now processes during which $Y$ is kept constant, we find 
\begin{align}
T = \left.\frac{\partial U}{\partial S}\right|_{Y}\,.
\end{align}
This equation needs an explanation. 
First, the internal energy is \emph{a priori} a function of $x$ and $Y$. However, the entropy $S$ is another state variable and one can consider $U$ as a function of $S$ and $Y$ instead.
Second, due to the differentiability of $U$ and $S$, the partial derivative exists. 
We conclude that the \emph{temperature of the (infinitesimal) heat flow} in a process, in which $Y$ is kept constant, can be written as the partial derivative of internal energy $U$ w.r.t.\ entropy $S$.
But the right hand side only depends on state variables, hence $T$ is also a state variable in this particular case.
Thus we call $T$ the \emph{temperature of the system in state $(S,Y)$}. \\

Thinking of a gas and identifying $x$ with the pressure $p$ and $Y$ with the volume $V$ (up to a minus sign) it follows that one can generally define the absolute temperature of a gas in the way sketched here. 
This example is discussed in Section~\ref{sec:exideal} in more depth.

This notion of temperature, just like the temperature of a heat flow, eventually relies on the absolute temperature of reservoirs and is defined as the temperature of the reversible heat flow exchanged with the system in the state of interest. \\

We close this section with the consideration of two gases with state variables $p_1,V_1$ and $p_2,V_2$, internal energies $U_1,U_2$ and entropies $S_1,S_2$. 
The internal energy of the composed system of the two gases, $U_{12} := U_1 + U_2$ then fulfils
\begin{align}
{\rm d}U_{12} = -p_1{\rm d}V_1 -p_2{\rm d}V_2 +T_1{\rm d}S_1 +T_2{\rm d}S_2\,.
\end{align}
In the same spirit as done above for a single system one can now define the temperature of gas $1$ as 
\begin{align}
T_1 = \left.\frac{\partial U_{12}}{\partial S_1}\right|_{V_1,V_2,S_2}\,.
\end{align}
As before, it will indeed be a state variable, but it would not be a good idea to call $T_1$ the ``temperature of the (total) system'', as it is not the only temperature one can assign to the system. One could have done the same with exchanged indices $1\leftrightarrow2$ and obtain $T_2$, which will in general not yield the same result. \\

We conclude that calling $T$ the ``temperature of a system'' is always a matter of interpretation. 
The good thing about this is that one does not have to call anything a temperature of a system in order to apply the thermodynamic theory. It is sufficient to think of $T$ as the temperature of a heat flow, which is a well-defined concept without space for ambiguities.

\newpage
\section{Example: The ideal gas}
\label{sec:exideal}

\begin{center}
\fcolorbox{OliveGreen}{white}{\begin{minipage}{0.98\textwidth}
\centering
\begin{minipage}{0.95\textwidth}
\ \\
\textcolor{black}{\textbf{Postulates:}}
-\\
\ \\
\textcolor{black}{\textbf{New notions:}}
-

\paragraph{\textcolor{black}{Summary:}}
This section treats the example of the ideal gas in full detail. Starting with the postulates and empirical observations the internal energy as well as the entropy function of the ideal gas is formally derived. 
From there one can define the temperature of an ideal gas, a concept that is eventually used to illustrate what happens in an irreversible heat flow between two ideal gases of different temperatures. 
\vspace{.1cm}

\end{minipage}
\end{minipage}}
\end{center}

\ \\

The example of a gas, in particular of the ideal gas, has been mentioned in several instances in this work. 
In all cases so far, the details were only introduced to the level at which they needed to be known. In this section we want to consider the example of the ideal gas from the beginning to the end, i.e.\ discuss the postulates and how they are reflected in this example in detail and full length. 
This will culminate in the derivation of the ideal gas law, which connects the state variable pressure and volume, which are initially assumed to describe the states, with the absolute temperature $T$.

\subsection{From thermodynamic processes to internal energy and entropy}

In order to do so we first need to make clear what can be experimentally observed when dealing with an ideal gas. This tells us which processes exist and how to compute their work cost.
As emphasized in the first sections of this paper, these are the necessary inputs to the theory of phenomenological thermodynamics. 
Once these inputs are formulated, one can check whether the postulates are satisfied and, if yes, the framework can be used to derive the other concepts of interest like thermodynamic entropy.  \\

Denote by $A$ the system of the ideal gas. 
We choose the states to be described by pressure $p$ and volume $V$. This means that an arbitrary state $\sigma$ can always be written as 
\begin{align}
\sigma = ( p,V)\in\left( \mathbbm{R}_{>0} \right)^2 =: \Sigma_A  \,.
\end{align}
Notice that for this treatment the amount of substance, usually denoted by $n$ (the number of moles) or $N$ (the number of particles), is here not a variable but a characteristic constant for the system, i.e.\ $n=n_A$. 
One could also consider $n$ to be an additional variable in the tuple describing states and carry out the steps that follow below.
However, in order to see a proper application of the theory to the example this is not necessary and would make it unnecessarily complicated. For the sake of an accessible example we do not do this here. \\

Now, when working with this system, one can find that any state of the form $(p_1,V)$ can be transformed into $(p_2,V)$ with $p_2\geq p_1$ by a $C^1$-quasistatic work process on $A$ with the curve $\gamma(\lambda) = ((1-\lambda)p_1 + \lambda p_2, V)$. 
This can for instance be done by slowly applying friction on the container from all sides, thereby heating up the gas (in an intuitive sense). 
Such processes are here said to be of \emph{type $1$}.
During this quasistatic process, in which the volume is kept constant, the differential work on $A$ is found to be
\begin{align}
\label{eq:irrevdiffW}
\delta W_A = \frac{3}{2} V \mathrm{d}p\,.
\end{align}

Next, one can observe that whenever two states $(p_1,V_1)$ and $(p_2,V_2)$ fulfil 
$p_1V_1^\frac{5}{3} = p_2V_2^\frac{5}{3}$ then there exists a reversible $C^1$-quasistatic work process on $A$ transforming one into the other along the curve with $pV^\frac{5}{3}=\mathrm{const.}$\footnote{We tacitly assume here that the ideal gas is monoatomic. This implies that the number of degrees of freedom is $f=3$, which yields the prefactor $\tfrac{f}{2}=\tfrac{3}{2}$ for the differential work in constant volume processes. Likewise, this sets the isentropic exponent to $\kappa = \tfrac{5}{3}$. In a more general consideration, one could work with $f$ and $\kappa$ as variables that satisfy $\kappa=\tfrac{f+2}{f}$. }
This is done by compressing or expanding the gas slowly while keeping it isolated from its surroundings.
These processes are said to be of \emph{type $2$}.
During reversible $C^1$-quasistatic processes on $A$, i.e.\ processes of type $2$, the differential work can be computed as 
\begin{align}
\label{eq:revdiffW}
\delta W_A = -p\,\mathrm{d}V\,.
\end{align}

The typical set of work processes $\mathcal{P}_A$ associated with the ideal gas contains all finite alternating sequences of processes of type 1 and type 2. We here work with exactly these work processes, i.e.\ we say that $\mathcal{P}_A$ contains all these sequences and no other processes. As a consequence, any work process can be written as a finite alternating sequence of processes of type 1 and type 2. \\

Third, it turns out that for all states $(p_1,V_1)$ and $(p_2,V_2)$ with $p_1V_1=p_2V_2$ there exists a heat reservoir $R\in\mathcal{R}$ such that they can be reversibly transformed into each other through a reversible $C^1$-quasistatic work process on $A\vee R$ along the curve $pV=\mathrm{const.}$ in $\Sigma_A$.
Even more, the temperature $T$ of the involved reservoir is found to depend linearly on $pV$, i.e.\ there exists a (system-specific) constant $n_A$ such that $pV=n_ART$, where $R$ is the ideal gas constant.\footnote{We all know that $n_A$ will turn out to be the amount of substance (in moles) of the ideal gas $A$. However, for the purposes presented here this is not important.}
Such processes are \emph{type $3$}. 
Also for this type of processes, Equation~(\ref{eq:revdiffW}) gives the differential work. In fact, it turns out that the differential work of any reversible quasistatic process on $A$ has this form. \\

We summarize the necessary assumptions outside of the framework that must be given as an input to the theory in this example: 
\begin{itemize}
	\item the description of states as $\sigma = (p,V)\in\left(\mathbbm{R}_{>0}\right)^2 = \Sigma_A$
	
	\item quasistatic work process on $A$ of type 1 along constant $V$, pressure increasing, 
	with $\delta W_A = \tfrac{3}{2}V \mathrm{d}p$
	
	\item quasistatic reversible work processes on $A$ of type 2 along constant $pV^\frac{5}{3}$
	
	\item work processes on $A$ are exactly those which can be written as a finite alternating
	sequence of processes of type $1$ and type $2$
	
	\item  quasistatic reversible work processes on $A\vee R$ for $R\in\mathcal{R}$ of type 3 
	along constant $pV$ with $pV = n_ART$ for the temperature $T$ of the reservoir $R$
	
	\item for all reversible quasistatic process involving $A$ (in particular type 2 and 3)
	 the differential work is given by $\delta W_A = -p\, \mathrm{d}V$
	
\end{itemize}

The choice of these three type of processes as our ``basic processes'' on which the thermodynamic description of the system is based is somewhat arbitrary. One could come up with different types of processes and work processes that allow us to describe state changes on $A$ and their thermodynamic properties. However, the choice is here made such that the calculations of the internal energy and the entropy can be carried out easily. \\

\begin{figure}[]
\begin{center}
	\begin{tikzpicture}[scale=2]
    \draw[<->] (2.5,.5) node(xline)[right] {$V$} -|
                    (0,2.5) node(yline)[above] {$p$};
    \draw[dashed,samples=100,domain=0.6:2.3] 
          plot(\x,{1/(\x)^(1.66)+0.5});   
    \draw[dashed,samples=100,domain=0.45:2.3]
          plot(\x,{.6/(\x)^(1.66)+0.5});   
    \draw[dashed,samples=100,domain=0.3:2.3] 
          plot(\x,{.3/(\x)^(1.66)+0.5});   
    \draw[dashed,samples=100,domain=0.155:2.3] 
          plot(\x,{.1/(\x)^(1.66)+0.5});   

	\draw[very thick,samples=100,domain=0.6:1.6] 
		plot(\x,{.3/(\x)^(1.66)+0.5});     
	\draw[very thick] (.6,1.2) -- (.6,1.9); 

	\draw[] (1.6,.45) node[below] {$V_1$} -- (1.6,.55);
	\draw[dotted] (1.6,.55) -- (1.6,.64);
	\draw[] (.6,.45) node[below] {$V_2$} -- (.6,.55);
	\draw[dotted] (.6,.55) -- (.6,1.2);

	\draw[] (-.05,.64) node[left] {$p_1$} -- (.05,.64);
	\draw[dotted] (.05,.64) -- (1.6,.64);
	\draw[] (-.05,1.9) node[left] {$p_2$} -- (.05,1.9);
	\draw[dotted] (.05,1.9)-- (.6,1.9);	
	
	\draw[] (.5,1.1) node[] {$\sigma'$};

\begin{scope}[xshift=4cm]
    
    \draw[<->] (2.5,.5) node(xline)[right] {$V$} -|
                    (0,2.5) node(yline)[above] {$p$};
	\draw[smooth,samples=100,domain=0.07:2.3] 
          plot(\x,{0.15/\x+0.5}) ;	
	\draw[smooth,samples=100,domain=0.18:2.3] 
          plot(\x,{0.4/\x+0.5}) ;   
	\draw[smooth,samples=100,domain=0.35:2.3] 
          plot(\x,{0.8/\x+0.5}) ;

	\draw[dashed,samples=100,domain=0.6:2.3] 
		plot(\x,{1/(\x)^(1.66)+0.5});   
    \draw[dashed,samples=100,domain=0.45:2.3] 
          plot(\x,{.6/(\x)^(1.66)+0.5});   
    \draw[dashed,samples=100,domain=0.3:2.3] 
          plot(\x,{.3/(\x)^(1.66)+0.5});   

	\draw[very thick,samples=100,domain=0.65:1.3]
		plot(\x,{.6/(\x)^(1.66)+0.5});     
	\draw[very thick,samples=100,domain=0.65:1.4] 
          plot(\x,{0.8/\x+0.5}) ;
	\draw[very thick,samples=100,domain=0.85:1.4]
		plot(\x,{1/(\x)^(1.66)+0.5});   
		
	\draw[] (1.3,.45) node[below] {$V_1$} -- (1.3,.55);
	\draw[dotted] (1.3,.55) -- (1.3,.89);
	\draw[] (.85,.45) node[below] {$V_2$} -- (.85,.55);
	\draw[dotted] (.85,.55) -- (.85,1.81);

	\draw[] (-.05,.89) node[left] {$p_1$} -- (.05,.89);
	\draw[dotted] (.05,.89) -- (1.3,.89);
	\draw[] (-.05,1.81) node[left] {$p_2$} -- (.05,1.81);
	\draw[dotted] (.05,1.81)-- (.85,1.81);
	
	\draw[] (.55,1.7) node[] {$\sigma'$};
	\draw[] (1.5,1.2) node[] {$\sigma''$};

\end{scope}

	\end{tikzpicture}
\caption{The $(p,V)$-diagram of the processes discussed in the proof showing that Postulates~\ref{post:qsfirst} and~\ref{post:qsentropy} are fulfilled. 
The quasistatic curves $\gamma$ are marked with thick lines. 
(a) A piecewise $C^1$-quasistatic work process on $A$ consisting of the concatenation of a process of type $2$ followed by one of type $1$ transforms $\sigma_1 = (p_1,V_1)$ into $\sigma_2 = (p_2,V_2)$.
The dashed lines are graphs of constant $pV^\frac{5}{3}$.
(b) The reversible piecewise $C^1$-quasistatic process on $A\vee R'$ is a concatenation of a process of type $2$ transforming $\sigma_1 = (p_1,V_1)$ into $\sigma' = (p',V')$ followed by a process of type $3$ turning the state into $\sigma'' = (p'',V'')$, followed by another process of type $2$ transforming this into the final state $\sigma_2 = (p_2,V_2)$.
The dashed lines are as in (a), the continuous lines are graphs of constant $pV$. 
}
\label{fig:pV}
\end{center}
\end{figure}
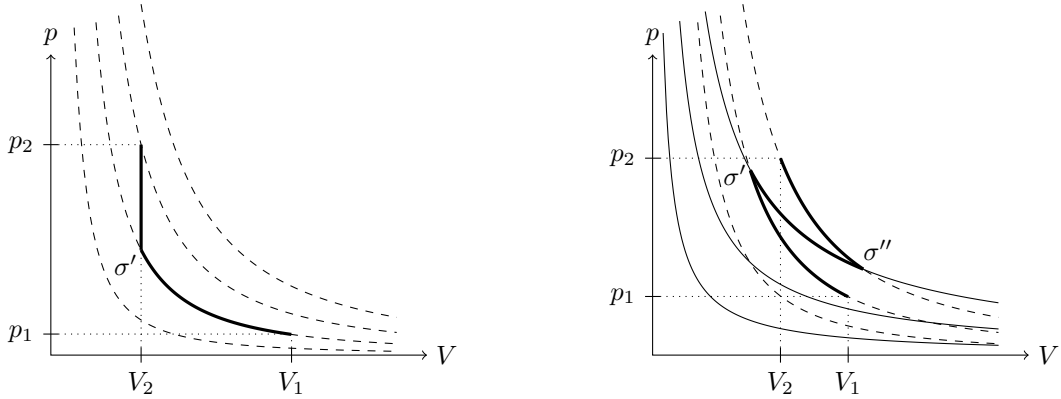

In addition, the existence and the form of the processes of type $1$, $2$, and $3$ allow us to conclude without much effort that the ideal gas $A$ satisfies Postulate~\ref{post:qsfirst} and Postulate~\ref{post:qsentropy}.
Regarding Postulate~\ref{post:qsfirst} consider Figure~\ref{fig:pV} (a):
\begin{itemize}
	\item [(o)] Obviously, $\dim \Sigma_A=2$. The work processes of types $1$ and $2$ can pass through
	any state. Since the $C^1$-curves $V=\mathrm{const.}$ and $pV^\frac{5}{3}=\mathrm{const.}$ 
	pass through any state s.t.\ the derivatives of the curves are linearly independent, 
	it follows that (o) is satisfied.
	\item [(i)] For two given states, choose the numbering $(p_1,V_1)$ and $(p_2,V_2)$ s.t.\ 
	$p_1V_1^\frac{5}{3} \leq p_2V_2^\frac{5}{3}$.
	Then the state change $(p_1,V_1) \rightarrow (p_2,V_2)$ can always be obtained by process
	of type $1$, transforming $(p_1,V_1) \rightarrow (p',V_2)$, followed by a process of type $1$
	raising $p'$ to $p_2$. In total this is a piecewise $C^1$-quasistatic work process on $A$.
	\item [(ii)] Since any work process on $A$ can be written as a sequence of
	 alternating processes of types $1$ and $2$, it can easily be
	 seen that the total work done on $A$ only depends on the initial and final state.
\end{itemize}

\noindent In a similar spirit, it follows for Postulate~\ref{post:qsentropy}, see Figure~\ref{fig:pV} (b):
\begin{itemize}
	\item [(o)] Reversible $C^1$-quasistatic processes of type $2$ and $3$ can pass through 
	any state and their curves $pV^\frac{5}{3}=\mathrm{const.}$ and $pV=\mathrm{const.}$
	have linearly independent first derivatives in any point. 
	\item [(i)] Any two states $(p_1,V_1)$ and $(p_2,V_2)$ can be reversibly transformed into
	each other by the concatenation of processes of type $2$, followed by type $3$, 
	and again followed by type $2$.
	In fact, the reservoir $R'$ needed for executing the process of type $3$ can be chosen 
	arbitrarily, in particular, it can have any temperature $T'$.\footnote{
	This is a consequence of the fact that any line of constant non-zero $pV$ has exactly one intersection with any other line of constant non-zero $pV^\frac{5}{3}$.}
	In total, this is a reversible piecewise $C^1$-quasistatic process on $A\vee R'$.
\end{itemize}

The other postulates do not only depend on the system $A$ but on the whole structure of systems and processes. We do not check them explicitly here but with a reasonable choice of $\Omega$ (the thermodynamic world, see Postulate~\ref{post:systems}) and thus $\mathcal{S}$ (the set of thermodynamic systems) as well as $\mathcal{P}$ they are fulfilled as one intuitively expects. \\

We can now compute the internal energy difference (and hence the internal energy function up to the constant $U_A^0$) of two arbitrary states $\sigma_1 = (p_1,V_1)$ and $\sigma_2 = (p_2,V_2)$ according to Definition~\ref{def:Ustrong}. This can be done via the process suggested in the discussion of Postulate~\ref{post:qsfirst} (i) above and shown in Figure~\ref{fig:pV} (a).  The intermediate state $\sigma'=(p',V_2)$ fulfils $p'V_2^\frac{5}{3} = p_1V_1^\frac{5}{3}$ as it is obtained from $\sigma_1$ through a process of type $2$.
We find:
\begin{align}
\begin{split}
\Delta U_A &= U_A(\sigma_2) - U_A(\sigma_1) \\
&= \int_{\sigma_1}^{\sigma'} -p\,\mathrm{d}V + \int_{\sigma'}^{\sigma_2} \frac{3}{2}V\mathrm{d}p\\
&= -p_1V_1^\frac{5}{3} \left[ -\frac{3}{2}V^{-\frac{2}{3}} \right]_{V_1}^{V_2} + \frac{3}{2} V_2 \big[ p \big]_{p'}^{p_2}\\
&= \frac{3}{2} p_1V_1^\frac{5}{3}\left( V_2^{-\frac{2}{3}} - V_1^{-\frac{2}{3}} \right) +
\frac{3}{2} V_2\left( p_2-p_1\left(\frac{V_1}{V_2}\right)^\frac{5}{3}\right)\\
&= \frac{3}{2} (p_2V_2 - p_1V_1)\,.
\end{split}
\end{align}
With the reference energy $U_A^0$ we can conclude that for any state $\sigma = (p,V)\in\Sigma_A$ the internal energy is
\begin{align}
\label{eq:pVint}
U_A(\sigma) = \frac{3}{2}pV + U_A^0\,.
\end{align}
The internal energy function is obviously $C^\infty$ on $\Sigma_A$, in particular $C^1$. As proved in Section~\ref{sec:quasistatic}, the latter must be the case.\\

Using the process suggested in the discussion of Postulate~\ref{post:qsentropy} above we can compute the entropy difference of the two states $\sigma_1$ and $\sigma_2$ according to Definition~\ref{def:entropy}.
The intermediate states $\sigma'=(p',V')$ and $\sigma''=(p'',V'')$ fulfil 
$p'V' = n_ART' = p''V''$ as well as $p'(V')^\frac{5}{3} = p_1V_1^\frac{5}{3}$ and 
$p''(V'')^\frac{5}{3} = p_2V_2^\frac{5}{3}$, as can easily be seen from Figure~\ref{fig:pV} (b).
Also, the only non-trivial heat flow appears in the type 3 process transforming $(p',V')$ into $(p'',V'')$ at constant temperature $T'$. Hence:
\begin{align}
\begin{split}
\Delta S_A &= S_A(\sigma_2) - S_A(\sigma_1) = \int_{\sigma_1}^{\sigma_2} \mathrm{d}S_A
= \int_{\sigma_1}^{\sigma_2} \frac{\delta Q_A}{T}  
= \int_{\sigma'}^{\sigma''} \frac{\delta Q_A}{T'} \\
&= \int_{\sigma'}^{\sigma''} \frac{\mathrm{d}U_A +p\,\mathrm{d}V}{T'} \\
&= \frac{3}{2T'}(p''V'' - p'V') + \frac{1}{T'} \int_{V'}^{V''} p\,\mathrm{d}V \\
&= 0 + \frac{p'V'}{T'} \int_{V'}^{V''} \frac{1}{V}\,\mathrm{d}V \\
&= n_AR \ln\left( \frac{V''}{V'} \right) \\
&= n_AR \left(\frac{3}{2} \ln\left(\frac{p_2}{p_1}\right) + \frac{5}{2} \ln\left(\frac{V_2}{V_1}\right)\right)\,.
\end{split}
\end{align}
As before, with the reference entropy $S_A^0$ it follows that the entropy of any state $\sigma = (p,V)$ can be written as
\begin{align}
\label{eq:pVentropy}
S_A(\sigma) = n_AR \left(\frac{3}{2} \ln\left(\frac{p}{p_0}\right) + \frac{5}{2} \ln\left(\frac{V}{V_0}\right)\right) + S_A^0 \,,
\end{align}
where $(p_0,V_0) = \sigma_0$ is the reference state for computing the entropy.
As was the case before for the internal energy, it immediately follows that the entropy is $C^\infty(\Sigma_A)$, in particular $C^1(\Sigma_A)$.
Notice that Equations~(\ref{eq:pVint}) and~(\ref{eq:pVentropy}) are the standard expressions for the internal energy and the entropy of an ideal gas in the variables pressure $p$ and volume $V$. \\

We now want to compute the internal energy $U(S,V)$ as a function of the entropy and the volume. This will allow us to determine an expression for the \emph{temperature of the ideal gas}, as is discussed in Section~\ref{sec:tsys}.
With Equations~(\ref{eq:pVint}) and~(\ref{eq:pVentropy}) it follows that 
\begin{align}
\label{eq:USV}
U(S,V) = \frac{3}{2}p_0V_0 \left(\frac{V}{V_0}\right)^{-\frac{2}{3}} \cdot
\exp\left( \frac{2}{3} \frac{S-S_A^0}{n_AR} \right) +U_A^0 \,.
\end{align}
Taking the derivative of $U(S,V)$ at constant $V$ with respect to $S$ we obtain 
the temperature of the ideal gas:
\begin{align}
\label{eq:pVT}
T = \left.\frac{\partial U}{\partial S}\right|_V 
=\frac{3}{2}p_0V_0 \left(\frac{V}{V_0}\right)^{-\frac{2}{3}} \cdot \frac{2}{3}\frac{1}{n_AR}
\exp\left( \frac{2}{3} \frac{S-S_A^0}{n_AR} \right)
= \frac{2}{3}\frac{1}{n_AR} (U_A-U_A^0)\,.
\end{align}
There are important conclusions one can draw from this.
First, the internal energy of an ideal gas can be written as $U_A(\sigma)-U_A^0 = \tfrac{3}{2}n_ART$. Hence, the temperature of the ideal gas in state $\sigma$, $T$, in a state variable and $U_A$ can be written as a function of $T$ only.
The fact the $T$ defined according to Equation~(\ref{eq:pVT}) is a state variable in this setting has already been argued in Section~\ref{sec:tsys} on more general assumptions and can be observed in the explicit example here.\\

Second, combining this result with Equation~(\ref{eq:pVint}), we see that the temperature of a state $\sigma = (p,V)$ is always $T=\tfrac{pV}{n_AR}$, which is the ideal gas law.
This is remarkable, as this expression for the temperature of the gas is exactly the same as we had for the \emph{temperature of the heat reservoir $R$} with the help of which the state $\sigma$ can be transformed reversibly (through a quasistatic process on $A\vee R$) to any other state with the same temperature. 
Hence, what we call the \emph{temperature of the system} (or the state of the system, for that matter) is equal to the \emph{temperature of a heat reservoir} with which the system can reversibly interact while exchanging heat in this state. 
Again, the fact that both notions of temperature are the same has been shown more generally in Section~\ref{sec:tsys}.\\


\subsection{A heat flow between two ideal gases of different temperatures}

\begin{figure}
\begin{center} 
	\begin{tikzpicture}[scale=.65]
	
	\begin{scope}[xshift=1cm]
	\draw[](-.8,0) node[] {(a)};
	 
	\draw[xshift=.1cm] (.5,-.5) node[above] {$A_1$} -- (1,-.5) -- (1,.5) -- (0,.5) -- (0,-.5) -- (.5,-.5);
	\draw[] (3,-.5) node[above] {$A_2$} -- (3.5,-.5) -- (3.5,.5) -- (2.5,.5) -- (2.5,-.5) -- (3,-.5);
	\draw[->] (1.3,0) node[above right] {$Q$} -- (2.3,0);

	\end{scope}
	
	\begin{scope} [xshift = 7cm]
	
		\draw[](-.8,0) node[] {(b)};
		
		\draw[xshift=.1cm] (.5,-.5) node[above] {$A_1$} -- (1,-.5) -- (1,.5) -- (0,.5) -- (0,-.5) -- (.5,-.5);
		\draw[very thick] (3,-.5) node[above] {$R_1$} -- (3.5,-.5) -- (3.5,.5) -- (2.5,.5) -- (2.5,-.5) -- (3,-.5);
		\draw[->] (1.3,0) node[above right] {$Q$} -- (2.3,0);
		
		\begin{scope}[xshift = 6cm]

		\draw[](-.6,0) node[] {(c)};

		\draw[xshift=.1cm, very thick] (.5,-.5) node[above] {$R_2$} -- (1,-.5) -- (1,.5) -- (0,.5) -- (0,-.5) -- (.5,-.5);
		\draw[] (3,-.5) node[above] {$A_2$} -- (3.5,-.5) -- (3.5,.5) -- (2.5,.5) -- (2.5,-.5) -- (3,-.5);
		\draw[->] (1.3,0) node[above right] {$Q$} -- (2.3,0);
		\end{scope}
		
		\begin{scope}[xshift = -3.6cm, yshift = -3cm]

		\draw[](-.8,0) node[] {(d)};
		
		\draw[xshift=.1cm] (.5,-.5) node[above] {$A_1$} -- (1,-.5) -- (1,.5) -- (0,.5) -- (0,-.5) -- (.5,-.5);
		\draw[very thick] (3,0) node[] {$R'$} circle (.6cm);
		\draw[very thick] (5.4,0) node[] {$R''$} circle (.6cm);
		\draw[->] (1.3,0) node[above right] {$Q$} -- (2.3,0);
		\draw[->] (3.8,0) node[above right]{$Q$} -- (4.7,0);
		\draw[->] (6.1,0) node[above right]{$Q$} -- (7,0);		
		
		\begin{scope}[xshift = 7.2cm]
		\draw[very thick] (.5,0) node[] {$R'''$} circle (.6cm);
		\draw[] (3,-.5) node[above] {$A_2$} -- (3.5,-.5) -- (3.5,.5) -- (2.5,.5) -- (2.5,-.5) -- (3,-.5);
		\draw[->] (1.3,0) node[above right] {$Q$} -- (2.3,0);
		\end{scope}
		
		\end{scope}
	\end{scope}
	\end{tikzpicture}
\end{center}
\caption{(a) When thermally connecting two ideal gases at different temperatures a non-zero heat flow  $Q$ can be observed to flow from the warmer to the colder gas. Call this process $p\in\mathcal{P}_{A_1\vee A_2}$.
(b) \& (c) According to Definition~\ref{def:heatatt} the heat $Q$ flows at temperature $T$ if there exist two reservoirs $R_1\sim R_2$, both at temperature $T$, and the process can be split up into two processes $p_i\in\mathcal{P}_{A_i \vee R_i}$ such that for $A_i$ the thermodynamic properties of $p$ and $p_i$ are the same. 
(d) Thinking of a rod between the two gases has the advantage that it becomes manifest that the rod could be intercepted at any point with a heat reservoir at a certain temperature. Since the decrease of temperature over the rod from $A_1$ to $A_2$ will be continuous, any temperature in the interval $[T_2,T_1]$ is attained at one point.
In the figure, three different reservoirs at temperatures $T'\geq T''\geq T''$ are depicted.
}
\label{fig:idgasheatapp}
\end{figure}
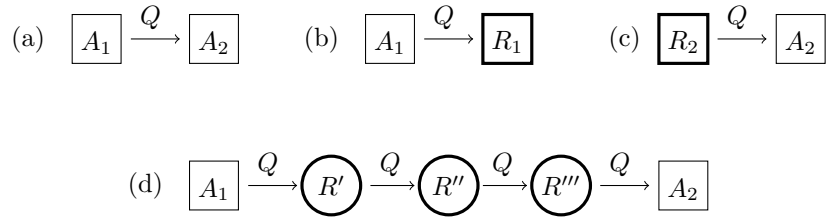

We apply the gained knowledge about the temperature of an ideal gas to Definition~\ref{def:heatatt}, the temperature of a heat flow. 
We know from Lemma~\ref{lemma:revheatuniquetapp} that the temperature of a reversible heat flow $Q\neq0$ is unique if it exists. 
In Example~\ref{ex:heatdifft} we have also seen that if the heat flow is not reversible it is very well possible that different temperatures can be assigned to it. 
With the concept of the temperature of an ideal gas it is now possible to refine this argument. 

Consider two ideal gases $A_1,A_2\in\mathcal{A}$ in states $\sigma_1\in\Sigma_{A_1}$ and 
$\sigma_2\in\Sigma_{A_2}$, respectively. Suppose that the temperatures according to Equation~(\ref{eq:pVT}) are $T_1>T_2$.
When thermally connecting the two gases, e.g.\ by bringing them into physical contact with the same thin metallic rod, we know from experience that a non-zero heat flow $Q$ from the hotter gas to the colder one can be observed. What temperature(s) can be assigned to this heat flow?

Using Definition~\ref{def:heatatt} we see that this question can be rephrased as: what heat reservoirs can be put in between the two systems without changing the	 thermodynamic properties of the process on the systems $A_1$ and $A_2$?
Figure~\ref{fig:idgasheatapp} depicts this interpretation of the question.
Thinking of a long rod used to establish the contact between the systems, one can in principle cut the rod in half at any point and find a reservoir with the right temperature to put in between. This will not change anything with respect to the thermodynamic properties of $A_1$ and $A_2$ under the process. 
Since the temperature decrease over the rod is continuous\footnote{This is again a statement that is supported by experience, i.e.\ experiment, rather than the theoretical framework.}
one can find places to put reservoirs in between at all temperatures in the interval $[T_2,T_1]$.
Hence, any temperature $T$ with $T_2\leq T \leq T_1$ can be assigned to the heat flow $Q$ and our framework provides a precise language to talk about these intuitive but usually only vaguely described concepts. 


\newpage

\section{Scaling}
\label{sec:scaling}

\begin{center}
\fcolorbox{OliveGreen}{white}{\begin{minipage}{0.98\textwidth}
\centering
\begin{minipage}{0.95\textwidth}
\ \\
\textcolor{black}{\textbf{Postulates:}}
-\\
\ \\
\textcolor{black}{\textbf{New notions:}}
scaled system, intensive and extensive state variables, principle of maximum entropy

\paragraph{\textcolor{black}{Summary:}}
The scaling of thermodynamic systems is introduced and used to prove the principle of maximum entropy.
\vspace{.1cm}

\end{minipage}
\end{minipage}}
\end{center}

\ \\

Often one wants to be able to talk about the same type of systems in different sizes, i.e.\ about scaled systems. 
Intuitively, the properties of a system $2S$ that is ``twice as big'' as $S$ are clear: 
the state spaces have the same structure with the difference that ``extensive'' (i.e.\ homogeneous) variables of corresponding states are scaled by a factor of $2$; likewise the thermodynamic processes that can be carried out on $2S$ are the ``scaled'' processes $p$ that can be carried out on $S$.\\

We immediately notice that these intuitive ideas lack a formal clarification of the notions ``extensive variable'' and ``scaled process''.
In the example of an ideal gas as discussed in Section~\ref{sec:exideal} this is rather easy. 
Let $2A$ be the ideal gas from the previous sections scaled by a factor of $2$.
We know that the variable that should be scaled is the volume, while the pressure is ``intensive'', i.e.\ is not scaled with the system size. 
Also, the system specific constant $n_A$ (corresponding to the amount of substance) scales as $n_{2A}=2n_A$ like the volume. 
On the other hand the pressure does not change under the scaling.
It is important to realize that the knowledge about the different behaviours of the state variables under scaling cannot come out of the framework but must be seen as an input to it. 

With this, states $2\sigma = (p, 2V) \in\Sigma_{2A} \leftrightarrow \sigma = (p,V)\in\Sigma_A$ are in a 1-1 correspondence. Likewise, the (quasistatic) processes connecting two states $2\sigma_1$ and $2\sigma_2$ fulfil $\delta W_{2A} = 2\delta W_A$, as can easily be seen from Eq.~(\ref{eq:irrevdiffW}) and (\ref{eq:revdiffW}), and are in a 1-1 correspondence with the (quasistatic) processes connecting $\sigma_1$ and $\sigma_2$.
It follows that, when choosing\footnote{Even if one does not make this choice, the physically relevant internal energy \emph{differences} still scale linearly with the factor $2$. The analogous statement holds for the entropy.} $U_{2A}^0 = 2U_A^0$, the internal energy must satisfy $U_{2A} = 2U_A$, which is indeed the case as can easily be checked in Eq.~(\ref{eq:pVint}). 
In order to obtain the analogous result for entropy it suffices to choose $S_{2A}^0 = 2S_A^0$. Again, we see that Eq.~(\ref{eq:pVentropy} satisfies $S_{2A} = 2S_A$.\footnote{Here it is important not to forget that the reference state $\sigma_0 = (p_0,V_0)\in\Sigma_A$ from which the entropy is computed also scales to $2\sigma_0 = (p_0,2V_0)\in\Sigma_{2A}$. Otherwise one could come to the wrong conclusion that the entropy does not scale linearly, i.e.\ is not extensive.}
The derived concept of the temperature of an ideal gas in state $\sigma\in\Sigma_A$, Eq.~(\ref{eq:pVT}), does not scale with the system size, as the temperature of $2\sigma$ is the same as the temperature of the state $\sigma$.

\subsection{General definition of scaled systems}

More generally, we consider the case of a scaling factor $\lambda = n \in\mathbbm{N}_{>0}$ and explain the proper way of talking about the scaled system $nS$. 
Let $S^{(2)} \hat= \cdots \hat= S^{(n)} \hat= S$ be copies of $S$ and let $\varphi^{(i)}$ be the thermodynamic isomorphism describing the equivalences. 
For processes $p,p^{(2)},\dots,p^{(n)}\in\mathcal{P}$ and states $\sigma\in\Sigma_S, \sigma^{(i)}\in\Sigma_{S^{(i)}}$ with 
$\varphi^{(i)}(p)=p^{(i)}$ and $\varphi^{(i)}_\Sigma(\sigma)=\sigma^{(i)}$
we write $p^{(i)}\hat=p$ and $\sigma^{(i)}\hat=\sigma$.

\begin{definition}[Scaled system by a factor of $\lambda=n\in\mathbbm{N}_{>0}$]
\label{def:scaling}
The system $nS\in\mathcal{S}$ is the \emph{scaled system $S\in\mathcal{S}$ by a factor of $n$} if there exists an \emph{embedding} $\psi:\mathcal{P} \rightarrow \mathcal{P}$ of $nS$ into 
$S\vee S^{(2)}\vee\cdots\vee S^{(n)}$, where $S\hat=S^{(i)}$ are copies of each other. 
The embedding $\psi$ is an injective mapping with the properties 
\begin{itemize}
	\item [(i)] for $\tilde p\in\mathcal{P}$ acting on $nS$ it holds 
	$\psi(\tilde p)= p\vee p^{(2)}\vee \cdots \vee p^{(n)}$ 
	for some $p\in\mathcal{P}$ acting on $S$ and $p^{(i)}\hat=p$,\footnote{Clearly, if $p$ acts on $S$, then its copy $p^{(i)}$ acts on the copy $S^{(i)}$ of $S$.}
	
	\item [(ii)] $W_{nS}(\tilde p) = W_{S\vee S^{(2)} \vee \cdots \vee S^{(n)}}(\psi(\tilde p))$.
	
\end{itemize}
\end{definition}
The embedding can be understood as the $nS = S\vee S^{(2)}\vee\cdots\vee S^{(n)}$ \emph{with restrictions}, the restriction being that the thermodynamic processes that can be applied to $nS$ are exactly those of the form $ p\vee p^{(2)}\vee \cdots \vee p^{(n)} $ with $p^{(i)}\hat=p$.
As a consequence, the states $n\sigma\in\Sigma_{nS}$ correspond to states of the form $\sigma\vee \sigma^{(2)}\vee \cdots\vee \sigma^{(n)}$ with $\sigma^{(i)}\hat=\sigma$ for $\sigma\in\Sigma_S$.\\

If a process $\tilde p\in\mathcal{P}$ acts on $nS$ we write $\tilde{p} =:np$, where $p\in\mathcal{P}$ is the process acting on $S$ such that $\psi(2p)=p\vee p^{(2)}\vee \cdots \vee p^{(n)}$. 
This notation is well-defined since $\psi$ is injective and it can be applied to any $\tilde p \in\mathcal{P}$ acting on $nS$ due to (i). \\


The work function $W_{nS}$ scales with the system's size. This follows from Definition~\ref{def:scaling} (ii),
\begin{align}
W_{nS}(np) \stackrel{\mathrm{(ii)}}{=} 
W_{S\vee S^{(2)} \vee \cdots \vee S^{(n)}}(\psi(\tilde np)) 
= W_S(p) + \sum_{i=2}^n W_{S^{(i)}}(p^{(i)}
= nW_S(p)\,.
\end{align}
In the second equality we used the additivity of work, while the third equality holds because $S^{(i)}\hat=S$ are copies of each other with $p^{(i)}\hat=p$. 
Consequently, the internal energy and the entropy also scale with the system size. 
They inherit the scaling property of the work function since any entropic and energetic quantity homogeneously depends on the work functions. \\


If $\lambda = \tfrac{1}{\nu}$ for some $\nu\in\mathbbm{N}_{>0}$ the scaled system $\lambda S $ is defined as the system satisfying 
\begin{align}
\underbrace{\lambda S \vee \cdots \vee \lambda S}_{\nu \text{ times}} \hat= S\,,
\end{align}
which allows us to extend the concept of scaling to at least\footnote{Being able to scale with rational factors, since the rational numbers are dense in $\mathbbm{R}$, we can ``approximate'' $\lambda S$ with $\lambda\in\mathbbm{R}$ arbitrarily well. However, making the notion ``approximate'' precise requires at least a topology on $\mathcal{S}$. We do not go further into this technical discussion and are content with rational factors for now.} the positive rational numbers $\lambda = \tfrac{\mu}{\nu}\in\mathbbm{Q}_{>0}$ by defining 
\begin{align}
\lambda S = \mu\left(\tfrac{1}{\nu} S \right)\,.
\end{align}

Generally, a state variable $X_S$ on the system $S$ is called \emph{extensive} if it satisfies 
\begin{align}
X_{\lambda S} (\lambda\sigma) = \lambda X_S(\sigma)\,,
\end{align}
where $X_{\lambda S}$ is the corresponding state variable on $\lambda S$. Similarly, it is called \emph{intensive} if 
\begin{align}
X_{\lambda S} (\lambda\sigma) = X_S(\sigma)\,.
\end{align}
With volume, internal energy and entropy (and the amount of substance, for that matter, even though we here treat it as a constant state variable) we have already encountered several extensive state variables of the ideal gas. 
The intensive state variables were pressure and temperature. \\

Clearly, in practice the scaling factor $\lambda$ cannot become arbitrarily small or large without surpassing the region of validity of the theory. 

For instance, if $1\,$l of an ideal gas $A$ at atmospheric pressure and temperature and is scaled with a factor $\lambda = 10^{-25}$ we know that the scaled system $\lambda A$ will contain less than a particle.
There is an obvious problem with the interpretation of $\lambda A $ as a physical system. 
Even if there were still a few particles in $\lambda A$ quantum effects may become dominant in the description of the system that were averaged out in the description of $A$. Hence, $\lambda A$ and $A$ will not be in correspondence with each other as was required above. 
More specifically, the differential work for reversible quasistatic process will most likely no longer read $\delta W_{\lambda A} = -\lambda p \, \mathrm{d}V$. 


On the other hand, when scaling an ideal gas $A$ with a factor $\lambda = 10^{30}$, the mass of the system $\lambda A$ will be of the order of a solar mass and effects like self-gravitation will become dominant.

Notice that the invalidity of thermodynamics at the very large or very small scale is similar to the one encountered in classical mechanics. In both cases it is not a shortcoming of the framework itself but rather the fact that the underlying theory (here: the one to define systems, states, processes, and to compute work) is only an approximation. 
Even though we know about the different effects at the very small or the very large scale, we keep considering general rational scaling factors $\lambda\in\mathbbm{Q}\smallsetminus\{0\}$ since it is impossible to decide on the region of validity on an abstract level. The general mathematical way of formalizing the scaling of systems cannot take these effects into account.

\subsection{Principle of maximum entropy}

In this section we apply the scaling of the ideal gas to prove the principle of maximum entropy for ideal gases. This principle is a consequence of the Entropy Theorem~\ref{thm:entropythm}. As will become evident during the developments in this section, the considerations presented here can be extended to other systems that share basic properties with the ideal gas. These properties are discussed at the end of this section.\\

From Eq.~(\ref{eq:pVint}) we immediately see that instead of describing states with the tuple $(p,V)$ we could also choose $(U,V)$ as the basic variables. In this section we will take this description,
\begin{align}
\sigma = (U,V)
\end{align}
due to the nice property that under scaling with the factor $\lambda$ one can write  
$\lambda \sigma = (\lambda U,\lambda V)$. 
Consider now the composition of two scaled ideal gases 
$A'=\lambda A$ and $A''=(1-\lambda)A$ for $\lambda\in(0,1)$ and denote 
$S=A'\vee A''$. 
States of $S$ are described as $\sigma = \sigma'\vee\sigma''$ 
for $\sigma'\in\Sigma_{A'}$ and $\sigma''\in\Sigma_{A''}$ and we may write $\sigma = (U',V',U'',V'')$. 
As shown in Lemma~\ref{lemma:entropyaddapp} the entropy of $S$ can be written as the sum of the entropies of its subsystems,
\begin{align}
S_S(\sigma) = S_{A'} (\sigma') + S_{A''}(\sigma'')\,.
\end{align}
It was emphasized in Section~\ref{sec:systemsprocessesstates} that the composition operation does not imply that the composed subsystems are in physical contact. They are simply described as one system. 
The state $\sigma'\vee \sigma''$ is thus sometimes called \emph{constrained} in the sense that the two subsystems (ideal gases here) cannot physically interact. 

Typically there exists a work process $q\in\mathcal{P}_S$ on $S$ that lets the subsystem exchange energy and volume for a sufficient amount of time\footnote{
In practice this means that the subsystems may exchange energy and volume (e.g.\ by means of a freely moving piston between them) for a time $t$ that is longer than the typical equilibration time of the system. 
}
such that $q$ transforms $\lfloor q \rfloor_S  = (U',V',U'',V'') =\sigma'\vee\sigma''$ into the final state
\begin{align}
\lceil q \rceil_S 
= (\lambda ( U'+U''), \lambda (V'+V''), (1- \lambda) ( U'+U''), (1- \lambda) (V'+V'')) \,,
\end{align}
i.e.\ the internal energies and the volumes have changed such that each subsystem has a share proportional to its size (in terms of $\lambda$). 
This process has work cost $W_{A'}(q)=W_{A''}(q)=0$ on both subsystems.
We call the final state \emph{unconstrained} and write $\lceil q \rceil_S =: \sigma'+\sigma'' \equiv (U'+U'', V'+V'')$. This notation is motivated by our choice of extensive variables and has to be explained.

First, we profit from the choice of the extensive variables $U$ and $V$ which give manifest meaning to the addition ``$+$'' of two states. Even though this is convenient, it is not a necessary choice in order to define this addition operation. One could also define it in terms of the variables pressure $p$ and volume $V$ such that when computing the internal energies they satisfy $\lambda(U'+U'')$ for $A'$ and $(1-\lambda)(U'+U'')$ for $A''$, for instance. 

Second, technically the $2$-tuple $(U'+U'', V'+V'')$ is not a proper description of a state on $S$, since the states of $S$ are described by $4$-tuples. However, by the definition of $\sigma'+\sigma''$ as the final state of $q$ on $S$, the states of the subsystems $A'=\lambda A$ and $A''=(1-\lambda)A$ can always be obtained from the notation $(U'+U'',V'+V'')$ as the part of it proportional to $\lambda$ and $(1-\lambda)$, respectively.
We will use this notation with its implicit meaning from now on.

Third, the ``decomposition'' of an unconstrained state $\sigma'+\sigma''$ into summands is not unique. That is, there exists arbitrarily many other states $\tilde \sigma'\in\Sigma_{A'}$ and 
$\tilde \sigma''\in\Sigma_{A''}$ with $ \tilde \sigma'+\tilde \sigma'' = \sigma' +\sigma''$. 

Finally, not every state $\sigma\in\Sigma_S$ is an unconstrained state. This is already clear from the construction of the unconstrained states as the final states of special processes.\\

Since $q\in\mathcal{P}_S$ is a work process on $S$ we can use the Entropy Theorem~\ref{thm:entropythm} to bound the entropy of the unconstrained state $\sigma'+\sigma''$ from below.
It states that 
\begin{align}
\label{eq:unconstrained}
S_{A'}(\sigma') + S_{A''}(\sigma'') \equiv S_S(\sigma'\vee\sigma'') \leq S_S(\sigma'+\sigma'')
\end{align}
with equality if $q$ is reversible. Typically, $q$ is reversible if and only if it is an identity process, i.e.\ if and only if the constrained state was already unconstrained, which means that $\lambda U'' = (1-\lambda)U'$ and $\lambda V'' = (1-\lambda)V'$. That is, the states $\tilde \sigma' = (\lambda(U'+U''), \lambda(V'+V'')) \in\Sigma_{A'}$ and $\tilde \sigma'' = ((1-\lambda)(U'+U''),(1- \lambda)(V'+V''))\in\Sigma_{A''}$ satisfy $\tilde \sigma' \vee \tilde \sigma'' = \tilde \sigma' + \tilde\sigma''$. 
For these states the bound in Eq.~(\ref{eq:unconstrained}) is saturated. 

With this the derivation of the principle of maximum entropy for ideal gases is done.

\begin{thm}[Principle of maximum entropy]
\label{thm:maxentropy}
For an ideal gas $A\in\mathcal{S}$ and a scaling parameter $\lambda\in(0,1)$ let $S=A'\vee A''$ be the composed system of $A' = \lambda A$ and $A'' = (1-\lambda)A$. Furthermore, let $(U,V)\in\Sigma_S$ be an unconstrained state. Then
\begin{align}
\label{eq:maxentropy}
S_S(U,V) 
= \max_{\substack{U'+U'' = U \\ V'+V''=V}} S_{A'} (U',V') + S_{A''}(U'',V'')\,.
\end{align}
\end{thm}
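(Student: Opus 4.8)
The plan is to prove the identity by two complementary inequalities: show that $S_S(U,V)$ dominates the sum $S_{A'}(U',V')+S_{A''}(U'',V'')$ under every admissible decomposition, and then exhibit one decomposition that saturates this bound. Both halves are repackagings of the Entropy Theorem~\ref{thm:entropythm} together with additivity of entropy (Lemma~\ref{lemma:entropyaddapp}), which have already been combined into Eq.~(\ref{eq:unconstrained}). First I would fix the unconstrained state $(U,V)\in\Sigma_S$ and recall that, by construction, it is realised as the proportional split $\sigma'+\sigma''$, namely the output $\lceil q\rceil_S$ of the redistributing work process $q\in\mathcal{P}_S$ that lets the two subsystems exchange energy and volume at zero work cost on each.

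For the upper bound, take any decomposition $(U',V')\in\Sigma_{A'}$, $(U'',V'')\in\Sigma_{A''}$ with $U'+U''=U$ and $V'+V''=V$. The process $q$ transforms the constrained state $\sigma'\vee\sigma''=(U',V',U'',V'')$ into the unconstrained state denoted $(U,V)\in\Sigma_S$. Since $q$ is a work process on $S$, the Entropy Theorem yields $S_S(\sigma'\vee\sigma'')\le S_S(U,V)$, and additivity rewrites the left-hand side as $S_{A'}(U',V')+S_{A''}(U'',V'')$; this is precisely Eq.~(\ref{eq:unconstrained}). Hence $S_S(U,V)$ is an upper bound for the sum over every admissible decomposition, so $S_S(U,V)\ge \max(\dots)$ as soon as the maximum is known to exist.

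For achievability I would invoke the equality clause. Choosing the proportional split $\tilde\sigma'=(\lambda U,\lambda V)$ and $\tilde\sigma''=((1-\lambda)U,(1-\lambda)V)$ gives a decomposition that is \emph{already} unconstrained, so the corresponding $q$ is an identity process and therefore reversible. The equality case of the Entropy Theorem then turns Eq.~(\ref{eq:unconstrained}) into $S_{A'}(\tilde\sigma')+S_{A''}(\tilde\sigma'')=S_S(U,V)$. Because this decomposition is itself admissible, the maximum is attained there and equals $S_S(U,V)$, which closes the argument together with the upper bound.

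The steps are routine, so the only points needing care are bookkeeping rather than genuine obstacles. I must check that the maximising decomposition lands in the subsystem state spaces, i.e.\ $(\lambda U,\lambda V)\in\Sigma_{A'}$ and $((1-\lambda)U,(1-\lambda)V)\in\Sigma_{A''}$, which follows from the scaling convention $\lambda\sigma=(\lambda U,\lambda V)$ together with $(U,V)$ being a genuine state of $S$; and I must confirm that the supremum is actually attained rather than merely approached, which is exactly what the explicit proportional split guarantees. I would also note that uniqueness of the maximiser is not required for the statement, so no analysis of strict concavity of the entropy is needed.
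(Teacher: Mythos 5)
Your proposal is correct and follows essentially the same route as the paper: the upper bound comes from applying the Entropy Theorem (via Eq.~(\ref{eq:unconstrained})) to the zero-work redistributing process $q$ for an arbitrary admissible decomposition, and saturation comes from the proportional split $(\lambda U,\lambda V)$, $((1-\lambda)U,(1-\lambda)V)$, for which $q$ is an identity (hence reversible) process. The bookkeeping points you flag (states landing in $\Sigma_{A'}$, $\Sigma_{A''}$, and attainment of the maximum) are handled implicitly in the paper in exactly the way you describe.
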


Informally one can interpret the statement of this theorem as:
in a closed system the entropy is maximal for unconstrained states. Instead of ``unconstrained states'' the notion of ``equilibrium states'' is used more often.
However, removing the constraint is what happens operationally, which is why we stick to the former notion. \\

Now we can provide a good reason for choosing $U$ and $V$ as the variable to describe the states of the ideal gas. The principle of maximum entropy states that the entropy function for $S$, $S_S$, restricted to unconstrained states $(U,V)$, is \emph{concave}. 
This can be seen as follows. 
The $S=A'\vee A'' \equiv \lambda A \vee (1-\lambda)A$ with the restricted state space containing only unconstrained states (and thus only processes respecting this restriction can be applied) is equivalent to the system $A$, since $\lambda + (1-\lambda)=1$.\footnote{For this argument it is again crucial that the states in the statement of the principle of maximum entropy are unconstrained ones only.}
This means in particular, that the principle of maximum entropy can be written as
\begin{align}
S_A(U,V) 
&= \max_{\substack{U'+U'' = U \\ V'+V''=V}} S_{\lambda A} (U',V') + S_{(1-\lambda)A}(U'',V'') \,,
\end{align} 
for any state $\sigma = (U,V)$ on $A$. Notice that now on the left hand side te entropy does no longer have to be evaluated for a composed system.

For arbitrary internal energies $U_1,U_2$, volumes $V_1,V_2$ and any parameter $\lambda\in [0,1]$ it then holds
\begin{align}
\begin{split}
S_A(\lambda U_1 +(1-\lambda)U_2,\lambda V_1 +(1-\lambda)V_2) 
&\geq S_{\lambda A} (\lambda U_1, \lambda V_1) + S_{(1-\lambda)A}((1-\lambda)U_2,(1-\lambda)V_2) \\
&= \lambda S_A(U_1,V_1) + (1-\lambda) S_A(U_2,V_2)\,.
\end{split}
\end{align}
This is the definition of a concave function. 
For the second equality it is crucial that all three quantities entropy, internal energy and volume, are extensive, which explains the choice of variables. \\

Notice that Eq.~(\ref{eq:pVentropy}) giving an explicit formula for the entropy of an ideal gas in terms of internal energy $U$ and volume $V$ must fulfil the principle of maximum entropy according to our derivations. Hence it must be concave.
Indeed $S(U,V)$ fulfils both statements.
One could now ask why we did not just state Theorem~\ref{thm:maxentropy} as a mathematical statement and proved it by direct calculation using the explicit formula for $S(U,V)$ from Eq.~(\ref{eq:pVentropy}).
Indeed, we could have done so, if the ideal gas was the only case in which the principle of maximum entropy held. However, with the detailed derivation in terms of concepts of the framework we can now discuss extensions of this theorem to other systems than just ideal gases. \\

In short, any type of system for which the relevant concepts in the derivation can be defined meaningfully, fulfil a maximum entropy principle. In particular, the fact the the gas was ideal has not been used anywhere in this derivation. 
More generally, the relevant concepts we needed were:
\begin{itemize}
	\item The total system $S$ can be written as the disjoint composition of two other systems 
	$S'$ and $S''$, $S=S'\vee S''$.
	
	\item Arbitrary states $\sigma'$ and $\sigma''$ of $S'$ and $S''$, respectively, 
	can be ``added'' such that $\sigma'+\sigma''\in\Sigma_S$. This is the state
	the system attains after the constraints are removed, 
	i.e.\ the two subsystems can interact and exchange e.g.\ energy, volume, or even particles.
	Notice that ``$+$'' does not have to be component wise addition of the entries of a 
	tuple denoting the state. Also, as is always the case with states of a composite system, 
	the individual states of the subsystems can be retrieved from the joint state.
	
	\item There are work processes on $S$ with zero work cost on any subsystem that remove
	the constraints, i.e.\ that transform $\sigma'\vee \sigma''$ to $\sigma' + \sigma''$
	for any choice of $\sigma'\in\Sigma_{S'}$ and $\sigma''\in\Sigma_{S''}$.
	\end{itemize}

The principle of maximum entropy in the more general case then reads says that for any unconstrained state $\sigma\in\Sigma_S$ (i.e.\ any state which is the final state of a process that removes the constraint) it holds
\begin{align}
\label{eq:maxentropygen}
S_S(\sigma) = \max_{\sigma'+\sigma''=\sigma} S_{S'} (\sigma') + S_{S''}(\sigma'')\,,
\end{align}
where the maximum goes over all $\sigma'\in\Sigma_{S'}$ and $\sigma''\in\Sigma_{S''}$ that satisfy $\sigma' + \sigma''=\sigma$.

Turning to concavity, an additional property of the system becomes relevant:
\begin{itemize}
	\item The system $S$ restricted to unconstrained states is equivalent to both 
	$\tfrac{1}{\lambda}\tilde{S}$ and $\tfrac{1}{1-\lambda}S''$ for some $\lambda\in(0,1)$ 
	and the description of the states of all system happens in corresponding extensive variables
	(which are the ones that are relaxed when removing the constraint discussed to define ``$+$'').
	 
\end{itemize}
If this is the case, then concavity is implied also in the more general case.
For concavity, the states need to be described in terms of extensive variable such that the addition ``$+$'' is just the component-wise addition of the entries of the tuple describing thermodynamic states. \\

Even though concavity of a function is a \emph{formal} property a given function can or can not have, the argument for a concave entropy function in the general case goes over an \emph{operational} line of reasoning involving many of the concepts introduced throughout this paper. 
In this context the statement that entropy is concave in two extensive variables (e.g.\ U and V) is a statement about what happens when one relaxes a constraint between two subsystems by coupling them (e.g.\ by letting them exchange energy).\\

The general principle of maximum entropy is the basis to investigate further thermodynamic potentials such as the free energy or the Gibbs free energy. These will fulfil similar principles of extremal values with different constraints. 
It will also be relevant when deriving stability conditions which for instance imply that the specific heat capacities 
\begin{align}
C_V := \left. \frac{\partial U}{\partial T}\right|_V \geq 0 
\quad \text{and} \quad 
C_p := \left. \frac{\partial U}{\partial T}\right|_p \geq 0
\end{align}
are non-negative and additionally fulfil $C_p-C_V\geq0$.

\newpage
\part*{Discussion and conclusion}
\addcontentsline{toc}{part}{Discussion and conclusion}

\subsection*{Summary}
\addcontentsline{toc}{subsection}{Summary}

In this paper we have introduced the basic Postulates~\ref{post:systems}-\ref{post:entropy} in order to derive the foundations of phenomenological thermodynamics.
From the notion of thermodynamic systems, processes, states and work (Postulates~\ref{post:systems}-\ref{post:freedom}), we formulated the first law (Postulate~\ref{post:firstapp}) and derived the internal energy function.  
After introducing the notion of equivalent systems, culminating in Postulate~\ref{post:copies} requiring that there is an arbitrary number of copies of any system, we stated the definition of heat reservoirs and formulated the second law (Postulate~\ref{post:sec}). From there, we were able to rigorously follow the standard lines to prove Carnot's Theorem and introduce absolute temperature. 
Using the notion of absolute temperature of heat reservoirs it was possible to define the concept of the temperature of heat flows, which was then used to prove Clausius' Theorem~\ref{thm:clausius} and the Entropy Theorem~\ref{thm:entropythm}.
Along the way we defined entropy, for which it was relevant that there exist reversible processes over which the entropy difference between any two states can be computed (Postulate~\ref{post:entropy}).\\

We then used the introduced concepts to extend the framework to quasistatic processes. This is a ``continuous'' version of the \emph{a priori} discrete concept of a thermodynamic process. 
With quasistatic processes it became possible to show that whenever the generalized Postulates~\ref{post:qsfirst} and~\ref{post:qsentropy} hold, then the internal energy function as well as the entropy are continuously differentiable state variables. 
With this it was possible to discuss the absolute temperature of arbitrary systems. 

The worked out example of the ideal gas gives insights into how the theory can be applied to traditional settings. The assumptions that have to be made prior to applying the framework for phenomenological thermodynamics in this example are spelled out and discussed. 
Also, the validity of the postulates is checked and specific expressions for the internal energy as well as the entropy function are derived.
This allowed us to revisit the example of an (irreversible) heat flow that can be assigned a spectrum of temperatures more formally. 

A further exploration into the formalism for describing the scaling of systems led to the derivation of the maximum entropy principle. 
Again, the ideal gas was a very helpful example. However, the result holds also for the entropy functions of more general systems. The specific assumption for the maximum entropy principle to hold have been discussed.\\


This concluded the comprehensive discussion of ``the laws of thermodynamics'' in the traditional sense contrasted with the basic postulates of the framework of phenomenological thermodynamics presented in this paper.

\subsection*{Discussion of main contributions}
\addcontentsline{toc}{subsection}{Main contributions}

It may be surprising how much technical work was necessary to make these intuitively straightforward basic foundations precise. This tells us two things: it is possible to equip the intuitively clear physical theory of phenomenological thermodynamics with a mathematically rigorous and precise background; but it seems that this can only be done with considerable effort on the technical side.

After working on this project for a long time, we have good reason to believe that the complicated derivations are intrinsically necessary. 
Having a closer look at the sections which required more technical proofs than others, we see that most of these talk about very intuitive concepts while the less technical sections are conceptually more relevant. For example, coming up with a suitable definition for when two systems are considered to be copies of each other took several pages of definitions and lemmas with proofs while, on the other end of the spectrum, the proof of Carnot's Theorem could be distilled to half a page. \\

It is worth mentioning here that the intuition of all concepts can be made precise enough to introduce thermodynamics according to this framework in a undergraduate course without too many technicalities. The authors of this paper have reworked the script of the course \emph{Theory of Heat} held by Gian Michele Graf at ETH in 2005 for the same course in 2019 taught by Renato Renner \cite{script19}. In the script from 2019, the main ideas of this paper are presented on a more intuitive level. \\

Nevertheless, we believe it is one of the main achievements of this paper that we handle terms such as equivalent systems, heat reservoirs or the temperature of a heat flow with great care, thereby crystallising the definitions to the essence.
Especially the definition of heat reservoirs is something the authors have not seen elsewhere in any introduction to phenomenological thermodynamics. 
Even though this term is omnipresent in thermodynamics it seems rather difficult to formalise it. Heat reservoirs should be infinite systems, without using the term ``infinite`'' in the definition. They should be very simple in terms of what one can do with them. Basically they should just provide or take up heat. On the other hand, they should be translation invariant in their internal energy so the current energy does not dramatically affect what one can do with them.
The three points in our Definition~\ref{def:heatreservoir} for heat reservoirs carefully capture these characteristics in mathematical statements that are sufficient to formally work with the defined notion. 
Each point could be justified on an intuitive basis and they are simple and clear enough in order to see that they are necessary conditions.
As the derivations using the term heat reservoirs and the second law show, they are also sufficient for introducing all basic concepts of phenomenological thermodynamics, in particular thermodynamic entropy. \\

Due to the systematic introduction of the eleven postulates, from which all results are derived, it is now possible to investigate which of the usually made assumptions are unnecessary and what may be missing in traditional texts.

An example for a postulate that is usually missing is, as we would argue, the assumption that any two states of any system can be transformed into each other by means of reversible processes with heat flows at well-defined temperatures, Postulate~\ref{post:entropy}. 
Example~\ref{ex:entropypost} emphasizes that, without this postulate, it is thinkable to have systems in the set of thermodynamic systems in a general formulation of phenomenological thermodynamics that do not fulfil it. This shows that the postulate is necessary. 

Thermodynamic equilibrium, on the other hand, is a notion that we did not have to use formally. We did not have to define in the beginning what we meant by this. Neither did we require that the states we worked with are thermodynamic equilibrium states. 
Arguably, the result of Lemma~\ref{lemma:idexistapp} (for any state of any system there exists an identity process) captures already a great deal of what is usually considered an equilibrium state.
But this lemma came as a consequence of the first law (Postulate~\ref{post:first}) rather than any requirement on what is called a state. 
Hence we show that phenomenological thermodynamics can be phrased as a physical theory without this term.\\

We consider the zeroth law as the most prominent victim of the postulates that are usually unnecessarily made. This was the main topic of our previous paper \cite{Kammerlander18}, but nevertheless deserves mentioning again. 
The zeroth law can be phrased in many different ways \cite{Maxwell71, Fowler39, Planck14, Buchdahl66}. 
Almost all variations share the core statement that ``being in two thermal equilibrium with'' is a transitive relation on the set of systems.\footnote{Other notions instead of systems used in \cite{Buchdahl66} are bodies \cite{Planck14} or assemblies \cite{Fowler39}.}
That is, if $A$ and $B$ are in thermal equilibrium with each other and likewise $B$ and $C$, then so are $A$ and $C$. 
Together with the usually implicitly assumed reflexivity and symmetry, the zeroth law then implies that this relation is an equivalence relation. 

This equivalence relation is then relevant for defining a sensible notion of empirical temperature, long before the absolute temperature is introduced with the help of Carnot's Theorem. 
More precisely, one can say that two systems have the same empirical temperature if they are in equilibrium with each other. 

The zeroth law is usually postulated on the same level as the first law and second law. 
Our work now shows that this is in fact not necessary. While the first and second law are core postulates also in our considerations, the zeroth law was never assumed. 
Even more, we have shown that the notion of absolute temperature can nevertheless be defined and the corresponding relation ``being in thermal equilibrium with'' (Definition~\ref{def:simapp}) on the set of heat reservoirs is an equivalence relation, thereby deriving something one could call the zeroth law. 
Hence the zeroth law as a postulate is redundant. 

One could now argue that the zeroth law is still a necessary postulate when introducing the relation ``being in thermal equilibrium with'' before the second law, in particular when using an empirical temperature scale such as the (ideal) gas temperature scale, as discussed in \cite{Zemansky68}. 
We counter that even if this is true, eventually the absolute temperature scale will be the one that is used. In other words: two notions of temperature are one too many.
After Carnot's Theorem the absolute temperature can be introduced even if the zeroth law has not been postulated and will no longer (and hence not at all) be needed. \\

We briefly discuss previous work on thermodynamics without the zeroth law. 
Based on Cara\-th\'eo\-dory's version of the second law \cite{Caratheodory09} a series of papers \cite{Turner61, Turner63, Ehrlich81, Buchdahl86} argued that is not a necessary postulate in thermodynamics. 
A few years later these results were challenged and controversially discussed \cite{Miller52, Walter89, Buchdahl89, Turner05, Helsdon82, Clayton82}. 
The important difference of the approach via Carath\'eodory's second law to ours is the order in which the central concepts entropy and temperature are introduced.
Cara\-th\'eo\-dory-like approaches derive temperature from the entropy measure, which is defined as a monotone of an order relation. In our framework entropy is defined very differently and follows after establishing absolute temperature. 
Figure~\ref{fig:LYcomparison}, where our approach is contrasted with the modern prototype of Cara\-th\'eo\-dory \cite{LY99}, discusses similarities and differences between the two views in more detail. 

Another paper discussing the redundancy of the zeroth law that deserves special emphasis is \cite{Home77}. In a rather informal manner an argument is presented that resembles ours. 
In contrast to the framework presented here, the assumptions in this paper are not spelled out explicitly and remain therefore somewhat unclear. 
Also, the notion of thermal equilibrium is used before defining absolute temperature.
Nevertheless, the paper sketches conclusions that are based on similar ideas. \\

Our framework benefits from the systematic introduction of the postulates in another way. 
It puts us in the position to ask what happens if Postulate \textcolor{orange}{X} is replaced by another Postulate \textcolor{orange}{Y}, or what if it is left away completely.
For instance, one might be interested in a thermodynamic theory in which work is not additive under concatenation. What could then be said about an internal energy function? What about heat? What about entropy? What weaker postulate could replace the additivity postulate (Postulate~\ref{post:addwork}) so one can still talk about these concepts?
Such investigations are the topic of future work. \\

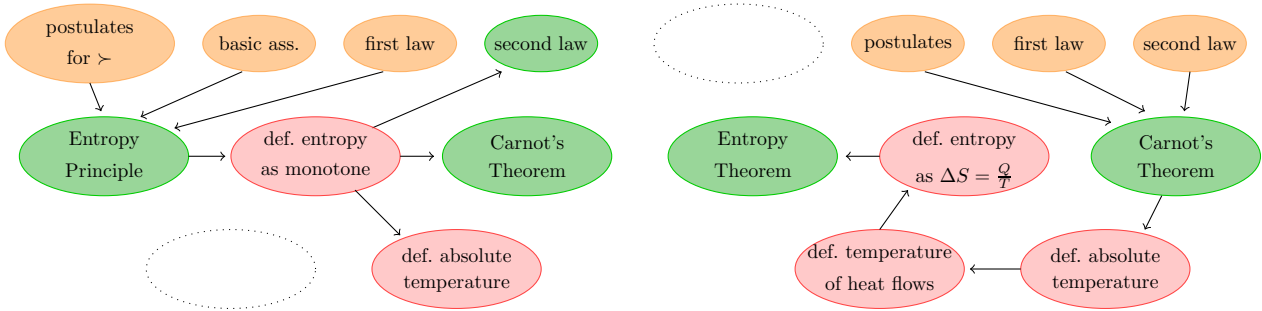
\begin{figure}[]
\hskip-1cm
	\begin{tikzpicture}[scale=0.75, every node/.style={transform shape}]	
	
	\draw[orange, fill=orange!80!white, opacity=0.5] 
	(-5.5,3) node[above, black, opacity=1] {postulates}
	node[below, black, opacity=1] {for $\succ$} ellipse (1.5cm and .7cm);
	\draw[orange, fill=orange!80!white, opacity=0.5] 
	(-2.5,3) node[black, opacity=1] {basic ass.} ellipse (1cm and .5cm);
	\draw[orange, fill=orange!80!white, opacity=0.5] 
	(0,3) node[black, opacity=1] {first law} ellipse (1cm and .5cm);
	\draw[green!80!black,  fill=green!60!black, fill opacity=0.4] 
	(2.5,3) node[black, opacity=1] {second law} ellipse (1cm and .5cm);
	
	\draw[->] (-2.8,2.5) -- (-4.6,1.7);
	\draw[->] (-0.3,2.5) -- (-4,1.5);
	\draw[->] (-5.5,2.3) -- (-5.3,1.8);
	\draw[->] (-.5,1.5) -- (1.8,2.5);

	\draw[green!80!black,  fill=green!60!black, fill opacity=0.4] 
	(-5.25,1) node[above, black, opacity=1] {Entropy} 
	node[below, black, opacity=1] {Principle} ellipse (1.5cm and .7cm);
		
	\draw[red, fill=red!30!white, opacity=0.7] 
	(-1.5,1) node[above, black, opacity=1] {def.\ entropy} 
	node[below, black, opacity=1] {as monotone} ellipse (1.5cm and .7cm);
	
	\draw[green!80!black,  fill=green!60!black, fill opacity=0.4] 
	(2.25,1) node[above, black, opacity=1] {Carnot's} 
	node[below, black, opacity=1] {Theorem} ellipse (1.5cm and .7cm);
	
	\draw[<-] (-3.1,1) -- (-3.75,1);
	
	\draw[->] (0,1) -- (.6,1);
		
	\draw[->] (-.8,.4) -- (0,-.4);
	
	\draw[dotted] 
	(-3,-1) node[above, black, opacity=1] {} 
	node[below, black, opacity=1] {} ellipse (1.5cm and .7cm);
	\draw[red, fill=red!30!white, opacity=0.7] 
	(1,-1) node[above, black, opacity=1] {def.\ absolute} 
	node[below, black, opacity=1] {temperature} ellipse (1.5cm and .7cm);
	
	\begin{scope}[xshift = 11.5cm]
		\draw[dotted] 
		(-5.5,3) node[above, black, opacity=1] {}
		node[below, black, opacity=1] {} ellipse (1.5cm and .7cm);
		\draw[orange, fill=orange!80!white, opacity=0.5] 
		(-2.5,3) node[black, opacity=1] {postulates} ellipse (1cm and .5cm);
		\draw[orange, fill=orange!80!white, opacity=0.5] 
		(0,3) node[black, opacity=1] {first law} ellipse (1cm and .5cm);
		\draw[orange, fill=orange!80!white, opacity=0.5] 
		(2.5,3) node[black, opacity=1] {second law} ellipse (1cm and .5cm);
	
		\draw[->] (-2.2,2.5) -- (1.1,1.6);
		\draw[->] (0.3,2.5) -- (1.7,1.8);
		\draw[->] (2.5,2.5) -- (2.4,1.8);

		\draw[green!80!black,  fill=green!60!black, fill opacity=0.4] 
		(-5.25,1) node[above, black, opacity=1] {Entropy} 
		node[below, black, opacity=1] {Theorem} ellipse (1.5cm and .7cm);
		
		\draw[red, fill=red!30!white, opacity=0.7] 
		(-1.5,1) node[above, black, opacity=1] {def.\ entropy} 
		node[below, black, opacity=1] {as $\Delta S = \tfrac{Q}{T}$} ellipse (1.5cm and .7cm);
	
		\draw[green!80!black, fill=green!60!black, fill opacity=0.4] 
		(2.25,1) node[above, black, opacity=1] {Carnot's} 
		node[below, black, opacity=1] {Theorem} ellipse (1.5cm and .7cm);
	
		\draw[->] (-3,1) -- (-3.6,1);
	
		\draw[->] (2,.3) -- (1.7,-.3);
	
		\draw[red, fill=red!30!white, opacity=0.7] 
		(-3,-1) node[above, black, opacity=1] {def.\ temperature} 
		node[below, black, opacity=1] {of heat flows} ellipse (1.5cm and .7cm);
		\draw[red, fill=red!30!white, opacity=0.7] 
		(1,-1) node[above, black, opacity=1] {def.\ absolute} 
		node[below, black, opacity=1] {temperature} ellipse (1.5cm and .7cm);
	
		\draw[->] (-.5,-1) -- (-1.4,-1);
	
		\draw[->] (-3,-.3) -- (-2.5,.4);
	\end{scope}

	\end{tikzpicture}
\caption{We contrast the main line of argument presented in the modern prototype of Cara\-th\'eo\-dory-based introductions to thermodynamics \cite{LY99} (left) with the paper at hand (right). The color code is \textcolor{red}{definitions} in red, \textcolor{orange}{assumptions and postulates} in orange, and \textcolor{OliveGreen}{derived implications} in green. Left: Lieb and Yngvason define an order relation $\succ$ and postulate several properties of it. With this, they are able to prove a theorem called the Entropy Principle, from which entropy is obtained as the monotone of $\succ$.
Based on this definition of entropy they show that the second law holds as well as Carnot's theorem. 
Other approaches in the spirit of Carath\'eodoy's \cite{Caratheodory09} proceed in a similar way. 
Right: In this paper we capture all basic assumptions in postulates, from which the first and second law are two (they are depicted separately picture the comparison more clearly). After introducing absolute temperature via Carnot's Theorem the temperature of heat flows between arbitrary systems is defined. The usual definition of thermodynamic entropy can then be used and results such as Clausius' Theorem or the Entropy Theorem can be proved. 
}
\label{fig:LYcomparison}
\end{figure}

We conclude with a comparison of the work in this paper with the view proposed by Lieb and Yngvason in \cite{LY99}.
Certainly the definition of entropy is at the heart of any thermodynamic theory. 
It is illuminating to compare the way this is done here with the work by Lieb and Yngvason, see Figure~\ref{fig:LYcomparison}. This will reveal one of the fundamental differences between the two approaches.

In \cite{LY99} entropy is defined as the unique monotone of the order relation $\succ$. This order relation and its properties are introduced axiomatically in precise mathematical terms. Entropy is then the result of the Entropy Principle, a theorem they prove within the axiomatic basis that guarantees the existence, uniqueness and monotonicity and additivity of the entropy function. 
From there all other concepts follow. For instance, absolute temperature is derived from the entropy function.\footnote{As Lieb and Yngvason put it: ``Temperature [...] is a corollary of entropy; it is epilogue rather than prologue.'' \cite{LY99}.} Likewise, different versions of the second law as well as Carnot's Theorem are results from these considerations.

In our framework the definition of thermodynamic entropy relies on the traditional ratio of reversible heat divided by temperature. This way of introducing entropy requires that the notion of absolute temperature and thus also the proof of Carnot's theorem are there beforehand. 
On the other hand, our Entropy Theorem, which is similar albeit not equivalent to Lieb and Yngvason's Entropy Principle, is derived from the thermodynamic definition of entropy and thus eventually relies on the concept of absolute temperature and the second law.
We conclude that which concepts are considered ``more fundamental'' than others heavily depends on the basis one formulates. 
And the method proposed by Lieb and Yngvason, as beautiful as it is, is not the only way to formalize the main concepts of phenomenological thermodynamics. 


\section*{Acknowledgements}
\addcontentsline{toc}{section}{Acknowledgements}

A particular source of inspiration for this work was the script to the course \emph{Theory of heat} taught by Gian Michele Graf at the ETH in 2005 \cite{Graf05}. Many of the ideas presented here arose from discussions about this script.
We thank 
Jakob Yngvason and 
David Jennings for fruitful discussions on our previous paper \cite{Kammerlander18} and 
Stefan Wolf and Tam\'{a}s Kriv\'{a}chy for comments on an earlier version of this framework. 
Both authors acknowledge support from the
``COST Action MP1209'' as well as the
Swiss National Science Foundation through SNSF project No.\ 200020\_165843 and through the the National Centre of Competence in Research \emph{Quantum Science and Technology} (QSIT).

\newpage

\appendix
\addcontentsline{toc}{part}{Appendices}

\part*{Appendices}

\section{Systems}
\label{app:systems}

For completeness and an easier reading we repeat the postulate introducing thermodynamic systems. 

\begin{postulateapp}[Thermodynamic systems]The \emph{thermodynamic world} is a 
set $\Omega$ 
and the \emph{the set of thermodynamic systems} is consists of finite non-empty subsets of the thermodynamic world, $\mathcal{S}:= \{ S\subset\Omega \,|\, 0<|S|<\infty \}$.
\end{postulateapp}

In other words, the structure of thermodynamic systems is given by finite set theory. Hence the following intuitive definition.

\begin{definition}[Composition and intersection]
\label{def:compapp}
For two systems $S_1,S_2\in\mathcal{S}$ we define their \emph{composition} as the union,
\begin{align}
S_1\vee S_2 := S_1\cup S_2 \,,
\end{align}
and their \emph{intersection} as 
\begin{align}
S_1\wedge S_2 := S_1\cap S_2\,,
\end{align}
where in the latter definition the case of \emph{disjoint systems} must be seen as notation only, since $\emptyset$ is not a system.
\end{definition}

Composing and intersecting $n>2$ systems is done by applying $\vee$ and $\wedge$ $n$ times. 
The set of thermodynamic systems is obviously closed under composition of finitely many systems.
It is also closed under intersection except if the systems are disjoint.
When two disjoint systems are composed we sometimes use the term \emph{disjoint composition}.

\begin{definition}[Atomic systems]
\label{def:atomicsysapp}
An \emph{atomic system }$A\in\mathcal{S}$ is a thermodynamic system which is represented by a singleton, $|A|=1$. 
The set of atomic systems is denoted by $\mathcal{A}$.
\end{definition}

Atomic systems are indivisible, i.e.\ they cannot be written as a composition of two different thermodynamic systems. 
Non-atomic systems can always be seen as compositions of finitely many atomic systems. 

\begin{definition}[Subsystems and atomic subsystems] 
\label{def:subsysapp}
Given a system $S\in\mathcal{S}$ the \emph{set of subsystems of $S$} is defined as the set of non-empty subsets of $S$,
\begin{align}
\mathrm{Sub}(S) := \{ S'\in\mathcal{S} \ | \ S' \subset S \}\,.
\end{align} 
The \emph{set of atomic subsystems} is denoted by 
\begin{align}
\mathrm{Atom}(S) := \{ A\in\mathcal{A} \, | \, A \in\mathrm{Sub}(S) \}\,.
\end{align} 
\end{definition}

As such, the set of subsystems does not only contain proper subsystems but also $S$ itself. 
Both sets, $\mathrm{Sub}(S)$ and $\mathrm{Atom}(S)$, are always non-empty. 

By the definitions of composition, subsystems and atomic subsystems, both
\begin{align}
S = \bigvee \mathrm{Sub}(S) \quad \text{and} \quad S = \bigvee \mathrm{Atom}(S)
\end{align}
hold for any thermodynamic system $S\in\mathcal{S}$. 
However, only in the second equality we have disjoint composition. 
Every system can be uniquely composed into its (different, i.e.\ disjoint) atomic subsystems.

\begin{definition}[Disjoint complement]
\label{def:discomplementapp}
Given a system $S\in\mathcal{S}$ and a proper subsystem $S'\in\mathrm{Sub}(S)$, $S'\neq S$, the \emph{disjoint complement of $S'$ w.r.t.\ $S$} is the unique system $S''\in\mathcal{S}$ which fulfils
$S'\wedge S'' = \emptyset$ and $S'\vee S'' = S$.
It is denoted by $S'' := S\smallsetminus S'$. 
\end{definition}

The fact the such a unique disjoint complement always exists is a simple consequence from set theory.

\section{Processes and states}
\label{app:processesstates}

Again, we repeat the postulates in which the main concepts, here states and processes, are introduced.

\begin{postulateapp}[Thermodynamic processes and states]
The non-empty \emph{set of thermodynamic processes} (also simply \emph{processes}) that the theory allows for is denoted by $\mathcal{P}$.
A thermodynamic process $p\in\mathcal{P}$ specifies the \emph{initial and final states} of a finite and non-zero number of
\emph{involved} atomic systems $A\in\mathcal{A}$ by means of 
the functions 
$\lfloor\cdot\rfloor_A: \mathcal{P}\rightarrow\Sigma_A$ 
and $\lceil\cdot\rceil_A: \mathcal{P}\rightarrow\Sigma_A$.  
If $A$ is \emph{not involved} the two functions are undefined.
\end{postulateapp}

The notion of \emph{being involved} is sometimes also phrased the other way around. That is, instead of saying that $A\in\mathcal{A}$ is involved in $p\in\mathcal{P}$, we may say that $p$ \emph{acts on} $A$.\\

The co-domain $\Sigma_A$ of the two function $\lfloor \cdot \rfloor_A$ and $\lceil\cdot\rceil_A$ is called the \emph{state space} or \emph{set of states} of $A$.
For two unequal atomic systems $A_1\neq A_2$ we assume w.l.o.g. that the corresponding state spaces are disjoint, $\Sigma_{A_1}\cap\Sigma_{A_2}=\emptyset$.
This assumption basically says that a state contains a label that indicates to which system is belongs.
It will allow a convenient notation for the states of general thermodynamic systems and simplifies the analysis of state spaces of equivalent systems in Section~\ref{sec:equivalentsys}. \\

\begin{definition}[Set of involved atomic systems]
\label{def:involvedatomicapp}
For a thermodynamic process $p\in\mathcal{P}$ the \emph{set of involved atomic systems} is defined as
\begin{align}
\mathcal{A}_p := \bigvee \{ A\in\mathcal{A} \,|\, A \text{ involved in }p \}.
\end{align}
\end{definition}

As stated in the postulate, in any process $p\in\mathcal{P}$ at least one and at most finitely many atomic systems are involved.
In terms of $\mathcal{A}_p$ this reads
$0<|\mathcal{A}_p|<\infty$ for all $p\in\mathcal{P}$.\\
%
%
%

\begin{definition}[State space]
\label{def:statespaceapp}
For an arbitrary thermodynamic system $S=A_1\vee\cdots\vee A_n$, where $\{A_i\}_{i=1}^n$ are pairwise disjoint atomic systems, the state change of $S$ under a thermodynamic process $p\in\mathcal{P}$ is given by the state changes on its atomic subsystems in terms of the functions
\begin{align}
\begin{split}
\lfloor \cdot \rfloor_S :\,\, &\mathcal{P} \rightarrow \Sigma_S\\
&p \, \mapsto \lfloor p \rfloor_S := \{ \lfloor p \rfloor_{A_1},\dots,\lfloor p \rfloor_{A_n} \}
\end{split}
\end{align}
and likewise for $\lceil \cdot \rceil_S$.
States of a non-trivially composite system are called \emph{joint states} and $\Sigma_S$ is called the \emph{set of states of $S$}.
\end{definition}

The input (or output) state of an arbitrary process on a general thermodynamic system $S$ is defined if and only if the function $\lfloor \cdot \rfloor_A$ are defined for all $A\in\mathrm{Atom}(S)$. Otherwise it is undefined. \\


For a simple notation we write 
$\lfloor p \rfloor_S \equiv \bigvee_{A\in\mathrm{Atom}(S)} \lfloor p \rfloor_A \equiv 
\lfloor p \rfloor_{A_1}\vee\cdots\vee\lfloor p \rfloor_{A_n}$ 
for $S=A_1\vee\cdots\vee A_n$, 
thereby exchanging the symbol $\cup$ for set union with the composition symbol $\vee$. 
This will not lead to confusion in the use of the notation as the arguments used with the symbol $\vee$ will make clear what is meant exactly. 

For two disjoint systems $S_1\wedge S_2=\emptyset$ with $\sigma_1\in\Sigma_{S_1}$ and $\sigma_2\in\Sigma_{S_2}$ this notation reads 
$\sigma_1\vee\sigma_2 = \sigma_2\vee\sigma_1\in\Sigma_{S_1\vee S_2}$.
The fact that $\vee$ is commutative when composing states is no issue as state spaces of different systems are disjoint, i.e.\ it is always clear which symbol ($\sigma_1$ or $\sigma_2$) describes the state of which subsystem ($S_1$ or $S_2$).

The subsystems' states 
$\sigma_1\in\Sigma_{S_1}$ and $\sigma_2\in\Sigma_{S_2}$ can be extracted as the corresponding entries of the ``tuple'' $\sigma_1\vee \sigma_2 \in\Sigma_{S_1\vee S_2}$.
In particular, this implies that if $\lfloor p \rfloor_S$ is defined for a (composite) system $S$, then $\lfloor p \rfloor_{S'}$ is automatically well-defined for all subsystems $S'\in\mathrm{Sub}(S)$ as well. 
Likewise, we already know that if all subsystems of $S$ have a well-defined state change in some thermodynamic process, then so does $S$. 

On the other hand, the structure of state spaces also makes clear that if a proper subsystem $S'\in\mathrm{Sub}(S)$ has well-defined state changes in a thermodynamic process $p\in\mathcal{P}$ this does not imply that the same holds for $S$. 
There might exist other subsystems of $S$ with undefined state change.

This is the reason why the terms \emph{involved} and \emph{not involved} cannot be directly extended to arbitrary systems. These consist of atomic systems, but it could happen that an atomic subsystem is involved in a process, while another one is not. An arbitrary system containing the two would then intuitively be involved. However, its state change is undefined, as there is at least one atomic subsystem whose state change is undefined.\\

Turning to the postulate introducing the partially defined concatenation operation for thermodynamic processes, $\circ: \mathcal{P}\times\mathcal{P} \rightarrow\mathcal{P}$, we note that it implies that $\mathcal{P}$ is closed under $\circ$.

\begin{postulateapp}[Concatenation of processes]
Let $p,p'\in\mathcal{P}$ such that for all $A\in\mathcal{A}_p\cap\mathcal{A}_{p'}$ it holds
$\lceil p \rceil_A = \lfloor p' \rfloor_A$.
Then $p$ and $p'$ can be \emph{concatenated} to form a new process denoted by $p'\circ p\in\mathcal{P}$,
which represents the consecutive execution of $p$ followed by $p'$. 
If $\mathcal{A}_p\cap\mathcal{A}_{p'} = \emptyset$,
then in addition $p'\circ p = p \circ p'$, i.e.\ concatenation commutes.
An atomic system $A\in\mathcal{A}$ is involved in the concatenated process $p'\circ p$ if and only if $A\in\mathcal{A}_p\cup\mathcal{A}_{p'}$.
For the involved atomic systems the initial and final states are
\begin{align}
\label{eq:inputconcapp}
\lfloor p'\circ p \rfloor_A = 
\begin{cases}
\lfloor p \rfloor_A\,, \quad &\text{if } A \in\mathcal{A}_p \\
\lfloor p' \rfloor_A\,, \quad &\text{otherwise} \\
\end{cases}
\end{align}
and the final state follows the same rules with swapped roles for $p$ and $p'$.
\end{postulateapp}


This postulate introduces a technical basis to work with consecutively applied thermodynamic processes and captures the minimal intuitive standards such a composition operation should have. 
The technical use of this postulate will become evident in the coming section, in particular when work is discussed in detail, as is done in Appendix~\ref{app:work}. \\

To make the technical requirements more accessible we provide a generic example of two processes which can be concatenated.

\begin{example}[Concatenation.]
\label{ex:concapp}
Let $S_1=A_1\vee A_2$ and $S_2=A_2\vee A_3$ be two systems, where $A_1,A_2,A_3\in\mathcal{A}$ are different atomic systems. 
Consider two processes $p_1,p_2\in\mathcal{P}$ with $\mathcal{A}_{p_i} = \mathrm{Atom}(S_i)$ for $i=1,2$ and $\lceil p_1 \rceil_{A_2} = \lfloor p_2 \rfloor_{A_2}$. 
Then we know that $p_2\circ p_1$ is defined and the state changes on the involved subsystems are

\begin{align}
\label{eq:exconcapp1}
\lfloor p_2\circ p_1 \rfloor_{A_i} = 
\begin{cases}
\lfloor p_1 \rfloor_{A_1}\,, \quad &i=1\,, \\
\lfloor p_1 \rfloor_{A_2}\,, \quad &i=2\,, \\
\lfloor p_2 \rfloor_{A_3}\,, \quad &i=3\,, 
\end{cases}
\end{align}
and 
\begin{align}
\label{eq:exconcapp2}
\lceil p_2\circ p_1 \rceil_{A_i} = 
\begin{cases}
\lceil p_1 \rceil_{A_1}\,, \quad &i=1\,, \\
\lceil p_2 \rceil_{A_2}\,, \quad &i=2\,, \\
\lceil p_2 \rceil_{A_3}\,, \quad &i=3\,. 
\end{cases}
\end{align}
The atomic system $A_2$ is involved in both processes, i.e.\ it changes its state under both $p_1$ and $p_2$. Thus the total state change on $A_2$ under $p_2\circ p_1$ is from the initial state of $p_1$ to the final state of $p_2$.
The other two atomic systems are only involved in one of the two processes. Thus the initial state on $A_3$ is taken to be $\lfloor p_2 \rfloor_{A_3}$ and the final state on $A_1$ to be $\lceil p_1 \rceil_{A_1}$.
\end{example}

\section{Work and work processes}
\label{app:work}

The postulate introducing \emph{work} states the following.

\begin{postulateapp}[Work]
For any atomic system $A\in\mathcal{A}$ exists a function 
$W_A:\mathcal{P}\rightarrow\mathbbm{R}$ that maps a thermodynamic process $p$ to $W_A(p)$,
the \emph{work done on system $A$} by performing $p$. 
The value $W_A(p)$ is positive 
whenever positive work is done on $A$ while executing $p$.
If system $A$ is not involved in $p$, $A\notin\mathcal{A}_p$, then $W_A(p)=0$ necessarily.
\end{postulateapp}

Technically, the work function is simply a function assigning a real number for any atomic system to any processes such that this number is zero whenever the atomic system is not involved in the process.

This function does not \emph{follow} from a thermodynamic theory but is an \emph{input} the user has to decide on before formulating the theory.\\

\begin{postulateapp}[Additivity of work under concatenation]
If for two processes $p,p'\in\mathcal{P}$ the concatenation $p'\circ p$ is well-defined, then the work cost of the concatenated process equals the sum of the work costs of the individual processes.
That is, for all atomic systems $A\in\mathcal{A}$ it holds that 
$W_A(p'\circ p) = W_A(p) + W_A(p')$ is \emph{additive}.
\end{postulateapp}

For processes which can be concatenated, the total work cost of the sequential execution of the processes must be equal to the sum of the individual work costs. 

Based on the work cost function for atomic systems it is possible to define the work cost function for arbitrary systems in $\mathcal{S}$.

\begin{definition}[Work function for arbitrary systems]
\label{def:workapp}
Let $S\in\mathcal{S}$ be an arbitrary thermodynamic system. We define its \emph{work cost function} (also simply \emph{work function}) $W_S:\mathcal{P}\rightarrow\mathbbm{R}$ by
\begin{align}
W_S = \sum_{A\in\mathrm{Atom}(S)} W_A\,.
\end{align}
\end{definition}

The following two lemmas state that the work functions of any system fulfil two different types of additivity, namely additivity under composition and additivity under concatenation.

\begin{lemma}[Additivity under composition.]
\label{lemma:addworkcompapp}
Let $S_1,S_2\in\mathcal{S}$ be two disjoint systems, $S_1\wedge S_2 = \emptyset$. 
Then the work function $W_{S_1\vee S_2}:\mathcal{P}\rightarrow\mathbbm{R}$ of the composite system $S_1\vee S_2$ is \emph{additive}: for all $p\in\mathcal{P}$ it holds 
$W_{S_1\vee S_2}(p) = W_{S_1}(p) + W_{S_2}(p)$. 
\end{lemma}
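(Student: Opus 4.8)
The plan is to reduce the claim directly to the defining formula $W_S = \sum_{A\in\mathrm{Atom}(S)} W_A$ (Definition~\ref{def:workapp}) together with the fact that composition of systems is nothing but set union (Definition~\ref{def:compapp}). The entire content of the lemma is the bookkeeping observation that the atomic subsystems of $S_1\vee S_2$ split cleanly into those of $S_1$ and those of $S_2$, with no overlap, so that the defining sum separates.

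First I would record that $S_1\vee S_2 = S_1\cup S_2$ and that an atomic subsystem of a system is precisely a singleton contained in it. Hence a singleton $A$ lies in $\mathrm{Atom}(S_1\vee S_2)$ if and only if $A\subset S_1\cup S_2$, which (since $A$ is a singleton) holds exactly when $A\subset S_1$ or $A\subset S_2$. This gives the set identity $\mathrm{Atom}(S_1\vee S_2) = \mathrm{Atom}(S_1)\cup\mathrm{Atom}(S_2)$. Next I would invoke the hypothesis $S_1\wedge S_2=\emptyset$, i.e.\ $S_1\cap S_2=\emptyset$: a singleton contained in both $S_1$ and $S_2$ would exhibit a common element, contradicting disjointness, so $\mathrm{Atom}(S_1)\cap\mathrm{Atom}(S_2)=\emptyset$ and the union above is in fact disjoint.

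Finally, the defining sum splits along this disjoint union, so that for every $p\in\mathcal{P}$,
\[
W_{S_1\vee S_2} = \sum_{A\in\mathrm{Atom}(S_1\vee S_2)} W_A
= \sum_{A\in\mathrm{Atom}(S_1)} W_A + \sum_{A\in\mathrm{Atom}(S_2)} W_A
= W_{S_1} + W_{S_2},
\]
which is the asserted additivity. The only step requiring any care is the disjointness of the two atom-sets: without it the singletons shared between $S_1$ and $S_2$ would be counted twice on the right-hand side and the identity would fail. This is precisely where the assumption $S_1\wedge S_2=\emptyset$ enters, and it is the sole non-routine ingredient of the argument.
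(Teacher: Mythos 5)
Your proposal is correct and takes essentially the same route as the paper's own proof: both establish $\mathrm{Atom}(S_1\vee S_2)=\mathrm{Atom}(S_1)\cup\mathrm{Atom}(S_2)$ with $\mathrm{Atom}(S_1)\cap\mathrm{Atom}(S_2)=\emptyset$ from the disjointness hypothesis, and then split the defining sum $W_S=\sum_{A\in\mathrm{Atom}(S)}W_A$ accordingly. The only difference is that you spell out the elementary set-theoretic facts which the paper simply asserts; the substance of the argument is identical.
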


\begin{proof}
Disjoint systems $S_1\wedge S_2 = \emptyset$ fulfil
$\mathrm{Atom}(S_1) \cap \mathrm{Atom}(S_2) = \emptyset$. 
On the other hand, by Definition~\ref{def:compapp} we have that
$\mathrm{Atom}(S_1 \vee S_2) = \mathrm{Atom}(S_1) \cup \mathrm{Atom}(S_2)$.
For any $p\in\mathcal{P}$ this implies
\begin{align}
\begin{split}
W_{S_1\vee S_2}(p) 
&= \sum_{A\in\mathrm{Atom}(S_1\vee S_2)} W_A(p) 
= \sum_{A\in\mathrm{Atom}(S_1)} W_A(p) + \sum_{A\in\mathrm{Atom}(S_2)} W_A(p)  \\
&= W_{S_1}(p) + W_{S_2}(p)\,.
\end{split}
\end{align}
\end{proof}

With this, it follows that additivity under concatenation naturally extends from atomic systems, for which it is postulated, to arbitrary systems.


\begin{lemma}[Additivity under concatenation for arbitrary systems]
\label{lemma:addworkconcapp}
If for two processes $p,p'\in\mathcal{P}$ the concatenation $p'\circ p$ is defined, then for all $S\in\mathcal{S}$ additivity under concatenation holds:
\begin{align}
W_S(p'\circ p) = W_S(p)+W_S(p')\,.
\end{align}
\end{lemma}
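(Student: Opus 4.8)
The plan is to reduce the statement for an arbitrary system $S\in\mathcal{S}$ to the atomic-level additivity that is already postulated, using the defining formula $W_S=\sum_{A\in\mathrm{Atom}(S)}W_A$ from Definition~\ref{def:workapp}. Since $S$ is a finite set, $\mathrm{Atom}(S)$ is a finite non-empty index set, so every sum below is finite and may be split and rearranged freely.

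First I would apply Definition~\ref{def:workapp} to the concatenated process $p'\circ p$, which is well-defined by hypothesis, to write $W_S(p'\circ p)=\sum_{A\in\mathrm{Atom}(S)}W_A(p'\circ p)$. Next, because the concatenation $p'\circ p$ is defined, Postulate~\ref{post:addwork} applies and gives $W_A(p'\circ p)=W_A(p)+W_A(p')$ for \emph{every} atomic system $A\in\mathcal{A}$, in particular for each $A\in\mathrm{Atom}(S)$. I would stress that no case distinction on involvement is needed here: for atomic systems involved in neither $p$ nor $p'$ the identity reads $0=0+0$, so the termwise substitution is valid uniformly across the index set.

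Substituting and then splitting the finite sum yields
\begin{align}
W_S(p'\circ p)
&=\sum_{A\in\mathrm{Atom}(S)}\bigl(W_A(p)+W_A(p')\bigr)\nonumber\\
&=\sum_{A\in\mathrm{Atom}(S)}W_A(p)+\sum_{A\in\mathrm{Atom}(S)}W_A(p')
=W_S(p)+W_S(p'),
\end{align}
where the last equality is again Definition~\ref{def:workapp} applied to $p$ and to $p'$ separately.

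There is no real obstacle: the argument is a short computation once the definition and the atomic postulate are in place. The only point demanding care is that the hypothesis ``$p'\circ p$ is defined'' is exactly what licenses the use of Postulate~\ref{post:addwork}; without it the atomic additivity is not guaranteed, and the chain of equalities breaks at its first nontrivial step. It is worth recording that the same pattern of argument---expand over $\mathrm{Atom}(S)$, apply the atomic statement termwise, recombine---also underlies additivity under composition (Lemma~\ref{lemma:addworkcompapp}), which is why these two additivity properties travel together.
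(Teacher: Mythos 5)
Your proof is correct and takes essentially the same route as the paper: expand $W_S$ over $\mathrm{Atom}(S)$ by Definition~\ref{def:workapp}, apply Postulate~\ref{post:addwork} termwise to each atomic subsystem, and split the finite sum to recombine into $W_S(p)+W_S(p')$. The extra remarks about the uninvolved-system case and the necessity of the hypothesis are sound but not needed beyond what the paper's computation already contains.
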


\begin{proof}
We compute
\begin{align}
\begin{split}
W_S(p'\circ p) 
&= \sum_{A\in\mathrm{Atom}(S)} W_A(p'\circ p)
= \sum_{A\in\mathrm{Atom}(S)} W_A(p) + W_A(p')\\
&= \sum_{A\in\mathrm{Atom}(S)} W_A(p) + \sum_{A\in\mathrm{Atom}(S)} W_A(p') \\
&= W_S(p) + W_S(p')\,.
\end{split}
\end{align}
\end{proof}

The sets of thermodynamic processes which act exactly on the system $S$ deserve special attention. They are addressed in the first law.

\begin{definition}[Work process]
\label{def:wpapp}
For $S\in\mathcal{S}$ a process $p\in\mathcal{P}$ is a \emph{work process on $S$} 
if all its atomic subsystems are involved in $p$ and no other atomic systems are.
That is, $p$ is a work process on $S$ if $S=\bigvee \mathcal{A}_p$.
The \emph{set of work processes on $S$} is denoted by $\mathcal{P}_S$.
\end{definition}

Any process is a work process on some system. More precisely, let $p\in\mathcal{P}$. Then $p$ is a work process on the system $S=\bigvee \mathcal{A}_p\in\mathcal{S}$.
In this sense the set of thermodynamic processes $\mathcal{P}$ is in fact the set of work processes \emph{on some system}, while the set $\mathcal{P}_S$ is the set of work processes \emph{on the system $S$}. 
The latter set is closed under $\circ$ in the sense specified by the next lemma.

\begin{lemma}[$\mathcal{P}_S$ closed under $\circ$]
\label{lemma:closedapp}
For any system $S\in\mathcal{S}$ the set of work processes $\mathcal{P}_S$ is closed under concatenation. That is, if $p'\circ p$ is defined for $p,p'\in\mathcal{P}_S$, then $p'\circ p\in\mathcal{P}_S$.
\end{lemma}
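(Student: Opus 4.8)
The plan is to unwind the definition of a work process on $S$ together with the rule that Postulate~\ref{post:conc} supplies for which atomic systems are involved in a concatenation. By Definition~\ref{def:wp}, membership $p\in\mathcal{P}_S$ is equivalent to the statement that the system $\mathcal{A}_p$ of atomic systems involved in $p$ equals $S$, i.e.\ $\mathcal{A}_p = \bigvee\mathcal{A}_p = S$. Thus the whole claim reduces to computing the involved atomic systems of $p'\circ p$ and checking that their composition is again $S$.

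First I would record the hypotheses: since $p,p'\in\mathcal{P}_S$ we have $\mathcal{A}_p = S = \mathcal{A}_{p'}$, and since $p'\circ p$ is assumed defined it is a legitimate element of $\mathcal{P}$ (recall that $\mathcal{P}$ is closed under the partially defined operation $\circ$). Next I would invoke the part of Postulate~\ref{post:conc} stating that an atomic system $A$ is involved in $p'\circ p$ if and only if $A\in\mathcal{A}_p\cup\mathcal{A}_{p'}$. Composing all such involved atomic systems then gives
\[
\mathcal{A}_{p'\circ p} = \mathcal{A}_p \vee \mathcal{A}_{p'} = S\vee S = S,
\]
where the last equality uses idempotency of composition $S\vee S = S$ (composition is set union). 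By Definition~\ref{def:wp} this is precisely the statement $p'\circ p\in\mathcal{P}_S$, which completes the argument.

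There is essentially no technical obstacle here; the only point requiring care is the bookkeeping between the two readings of $\mathcal{A}_p$ — as the \emph{set} of involved atomic systems (on which the union in Postulate~\ref{post:conc} is implicitly taken) and as the \emph{system} $\bigvee\{A\in\mathcal{A}\mid A \text{ involved in }p\}$ of Definition~\ref{def:involvedatomicapp}. Since taking the union of the involved-atomic-system sets corresponds under $\bigvee$ to composing the associated systems, and composition is just set union, the two viewpoints agree and the identity $\mathcal{A}_{p'\circ p} = \mathcal{A}_p\vee\mathcal{A}_{p'}$ is unambiguous. Notably, no appeal to the work functions nor to the matching condition $\lceil p\rceil_A = \lfloor p'\rfloor_A$ is needed beyond the standing assumption that $p'\circ p$ exists.
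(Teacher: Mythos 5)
Your proof is correct and takes essentially the same route as the paper: both arguments are pure bookkeeping that combine Definition~\ref{def:wp} with the clause of Postulate~\ref{post:conc} specifying which atomic systems are involved in $p'\circ p$, yielding $\bigvee\mathcal{A}_{p'\circ p} = S\vee S = S$. The paper merely phrases the same fact in terms of definedness of the state functions $\lfloor\cdot\rfloor_{S'}$, $\lceil\cdot\rceil_{S'}$ on subsystems $S'\in\mathrm{Sub}(S)$, which is equivalent to your involvement-based formulation since a system is involved exactly when these functions are defined.
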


\begin{proof}
The fact that $p,p'\in\mathcal{P}_S$ means that 
$\lfloor p \rfloor_{S'}$, $\lceil p \rceil_{S'}$ and 
$\lfloor p' \rfloor_{S'}$, $\lceil p' \rceil_{S'}$
are defined if and only if $S'\in\mathrm{Sub}(S)$. 
Since by assumption $p'\circ p$ is defined, 
$\lfloor p'\circ p \rfloor_{S'} = \lfloor p \rfloor_{S'}$ and $\lceil p'\circ p \rceil_{S'} = \lceil p' \rceil_{S'}$
for all $S'\in\mathrm{Sub}(S)$ by the postulate on concatenation. For systems $S'\in\mathcal{S}\smallsetminus \mathrm{Sub}(S)$ the input and output states $\lfloor p'\circ p \rfloor_{S'}$ and $\lceil p'\circ p \rceil_{S'}$ are undefined. \\
Hence, $\lfloor p'\circ p \rfloor_{S'}$ and $\lceil p'\circ p \rceil_{S'}$
are defined if and only if $S'\in\mathrm{Sub}(S)$ and thus $p'\circ p\in\mathcal{P}_S$ is a work process on $S$, too.
\end{proof}

\begin{definition}[Joint work processes.]
\label{def:jointwpapp}
For two disjoint systems $S_1 \wedge S_2=\emptyset$ and two work processes $p_i\in\mathcal{P}_{S_i}$ for $i=1,2$
we call their concatenation $p_1\circ p_2$ \emph{the joint work process of $p_1$ and $p_2$} and denote it by
\begin{align}
p_1\vee p_2 := p_1\circ p_2 \in\mathcal{P}_{S_1\vee S_2}\,.
\end{align}
\end{definition}

This definition is well-defined since for two work processes on disjoint system their concatenation is always defined. Furthermore, according to the postulate on concatenation in this case $p_1 \circ p_2 = p_2 \circ p_1$ since $\mathcal{A}_{p_1}\cap\mathcal{A}_{p_2}=\emptyset$. 
Also, the joint work process of $p_1$ and $p_2$ as in the above definition is again a work process on $S_1\vee S_2$ since the involved atomic systems are exactly $\mathcal{A}_{p_1}\cup \mathcal{A}_{p_2}$.\\

As a consequence, the input and output states of a joint work process 
$p_1\vee p_2$ are given by 
$\lfloor p_1\vee p_2 \rfloor_{S_1\vee S_2}  
= \lfloor p_1\rfloor_{S_1} \vee \lfloor p_2 \rfloor_{S_2}
\in\Sigma_{S_1\vee S_2} $
and likewise for $\lceil\cdot\rceil$.

Such joint work processes imply an embedding (an injective mapping) from $\mathcal{P}_{S_1}\times\mathcal{P}_{S_2}$ to $\mathcal{P}_{S_1\vee S_2}$ and images of this mapping $(p_1,p_2)\mapsto p_1\vee p_2$ stand for the parallel execution of the work process $p_1$ on subsystem $S_1$ and of $p_2$ on $S_2$.
Hence, what was achievable by means of work processes on the individual systems $S_1$ and $S_2$ can still be realized as work processes on the composite system $S_1\vee S_2$.
In this sense, composition respects work processes.\\

While the state space of a composite system consists of joint states only, this is not the case with work processes.
In general, the set $\mathcal{P}_{S_1\vee S_2}$ contains more work processes than just the joint work processes of its subsystems. An example of a more general work process on a composite system is thermally connecting two subsystems and letting them exchange energy.

\begin{lemma}[Work cost of a joint work process]
\label{lemma:wjointapp}
Let $S=S_1\vee S_2$ be a composite system with disjoint subsystems $S_1$ and $S_2$ and let $p_1\in\mathcal{P}_{S_1}$ and $p_2\in\mathcal{P}_{S_2}$ be work processes on the subsystems. 
Then the total work cost of the joint work process $p_1\vee p_2\in\mathcal{P}_S$ on $S$ is given by the sum of the local work costs,
\begin{align}
W_{S}(p_1\vee p_2) = W_{S_1}(p_1) + W_{S_2}(p_2)\,.
\end{align}
\end{lemma}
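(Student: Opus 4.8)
The plan is to reduce the claim to the two additivity properties already established—additivity under composition (Lemma~\ref{lemma:addworkcompapp}) and additivity under concatenation (Lemma~\ref{lemma:addworkconcapp})—together with the fact from Postulate~\ref{post:work} that the work done on a system vanishes on processes that do not act on any of its atomic subsystems. No fundamentally new idea is needed; the argument is a bookkeeping exercise that chains the two lemmas and then discards the cross terms.

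First I would recall that by Definition~\ref{def:jointwpapp} the joint work process is nothing but a concatenation, $p_1\vee p_2 = p_1\circ p_2$, which is well-defined precisely because $\mathcal{A}_{p_1}\cap\mathcal{A}_{p_2}=\emptyset$ for work processes on disjoint systems. Applying additivity under composition (Lemma~\ref{lemma:addworkcompapp}) to the disjoint pair $S_1,S_2$ and then additivity under concatenation (Lemma~\ref{lemma:addworkconcapp}) to each resulting term gives
\begin{align}
W_S(p_1\vee p_2)
&= W_{S_1}(p_1\circ p_2) + W_{S_2}(p_1\circ p_2) \\
&= \big( W_{S_1}(p_2) + W_{S_1}(p_1) \big) + \big( W_{S_2}(p_2) + W_{S_2}(p_1) \big)\,.
\end{align}

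The only substantive point—and the one where the disjointness hypothesis is actually used—is that the two cross terms $W_{S_1}(p_2)$ and $W_{S_2}(p_1)$ vanish. Since $p_2\in\mathcal{P}_{S_2}$ is a work process on $S_2$, its involved atomic systems are exactly $\mathrm{Atom}(S_2)$; the disjointness $S_1\wedge S_2=\emptyset$ forces $\mathrm{Atom}(S_1)\cap\mathrm{Atom}(S_2)=\emptyset$, so every $A\in\mathrm{Atom}(S_1)$ is not involved in $p_2$ and hence $W_A(p_2)=0$ by Postulate~\ref{post:work}. Summing over $\mathrm{Atom}(S_1)$ via Definition~\ref{def:workapp} yields $W_{S_1}(p_2)=\sum_{A\in\mathrm{Atom}(S_1)}W_A(p_2)=0$, and the symmetric argument gives $W_{S_2}(p_1)=0$. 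Substituting these back collapses the four terms to $W_{S_1}(p_1)+W_{S_2}(p_2)$, which is the asserted equality.

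I expect no real obstacle in carrying this out; the whole derivation is essentially a one-line chaining of the two additivity lemmas, and the only subtlety worth stating explicitly is the appeal to the vanishing of work on non-involved systems, which is what legitimately eliminates the cross terms rather than leaving them as unknown quantities.
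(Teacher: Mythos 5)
Your proof is correct and follows essentially the same route as the paper: both reduce $p_1\vee p_2$ to the concatenation $p_1\circ p_2$, expand via the two additivity lemmas into the four terms $W_{S_i}(p_j)$, and kill the cross terms using the fact that work on non-involved systems vanishes. The only difference is that you apply composition before concatenation while the paper does the reverse, which is immaterial; your more explicit justification of the cross-term cancellation (descending to atomic subsystems via Postulate~\ref{post:work}) is a valid spelling-out of the paper's one-line remark.
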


\begin{proof}
Additivity of the work cost functions under concatenation and composition imply
\begin{align}
W_{S}(p_1\vee p_2) &\stackrel{\mathrm{def.}}{=} W_{S}(p_1\circ p_2)\\
&\stackrel{\mathrm{conc.}}{=} W_S(p_1) + W_S(p_2) \\
&\stackrel{\mathrm{comp.}}{=} W_{S_1}(p_1) + W_{S_2}(p_1) + W_{S_1}(p_2) + W_{S_2}(p_2) \\
&= W_{S_1}(p_1) + W_{S_2}(p_2) \,,
\end{align}
for any two work processes $p_1\in\mathcal{P}_{S_1}$ and $p_2\in\mathcal{P}_{S_2}$.
In the last equality it was used that if a system $S_i$ is not involved in a thermodynamic process $p_{i+1}$, then $W_{S_i}(p_{i+1})=0$.
\end{proof}

Another special kind of work processes on a system are \emph{identity processes}. 

\begin{definition}[Identity process]
\label{def:idapp}
An \emph{identity process on $S$}, where $S\in\mathcal{S}$ is an arbitrary thermodynamic system, is a work process $\mathrm{id}_S\in\mathcal{P}_S$ on $S$ with $\lfloor \mathrm{id}_S \rfloor_S = \lceil \mathrm{id}_S \rceil_S$ and zero work cost on all atomic subsystems of $S$, $W_A(\mathrm{id}_S)=0$ for all
$A\in\mathrm{Atom}(S)$. 
For an identity process on $S\in\mathcal{S}$ acting on the state $\sigma\in\Sigma_S$ the notation 
$\mathrm{id}_S^\sigma$ is used.
\end{definition}

A thermodynamic process can act trivially on a system without being a work process on that system, too. This is captured by the notions introduced next. 

\begin{definition}[Cyclic and catalytic process]
\label{def:cyccatapp}
Given a system $C\in\mathcal{S}$ an arbitrary thermodynamic process $p\in\mathcal{P}$ is called \emph{cyclic on $C$} if 
$\lceil p \rceil_C = \lfloor p \rfloor_C$.\\
The process is called \emph{catalytic on $C$} if 
it is cyclic on $C$ and in addition $W_C(p)=0$.\footnote{Notice that the work costs of $p$ for subsystems of $C$ do not have to be zero, only the total work done on $C$ does.}
\end{definition}

The definition of an cyclic process on $S$ which is in addition a work process on $S$ differs from a identity process on $S$ by the missing requirement on the work costs on atomic subsystems.\\

Incorporating a catalytic system explicitly in the thermodynamic description of a process is possible. However, our theory should be such that it is not mandatory, i.e.\ that the explicit mentioning of the catalytic system could also be left out. 
Technically this is captured by the final postulate in this section.

\begin{postulateapp}[Freedom of description]
\label{post:freedomapp}
For $S,C\in\mathcal{S}$ disjoint, let $p\in\mathcal{P}_{S\vee C}$ be such that $p$ is catalytic on $C$, i.e.\ $p$ is cyclic on $C$ and fulfils $W_C(p)=0$.
Then there exists a work process $\tilde p \in\mathcal{P}_S$ on $S$ alone such that 
$\lfloor \tilde p \rfloor_{S} = \lfloor p \rfloor_{S}$ and
$\lceil \tilde p \rceil_{S} = \lceil p \rceil_{S}$ as well as
$W_{A}(\tilde p) = W_{A}(p)$ for all $A\in\mathrm{Atom}(S)$.
\end{postulateapp}


Since $\tilde p \in\mathcal{P}_S$ and $S\wedge C = \emptyset$ it holds automatically that $W_{C'}(\tilde p) = 0$ for all subsystems $C'\in\mathrm{Sub}(C)$ of $C$ in the new process.
Likewise, $W_{S'}\tilde p) = W_{S'}(p)$ for all subsystems $S'\in\mathrm{Sub}(S)$ since their work costs on general subsystems are computed through the work costs on atomic subsystems.

Postulate~\ref{post:freedomapp} is about where to draw the line between objects that thermodynamics explicitly describes and such that are not part of the theory but may nevertheless be used when executing a process. \\

In the final part of this section we discuss the definition of and results on reversible work processes.

\begin{definition}[Reversible processes]
\label{def:revapp}
A work process $p\in\mathcal{P}_S$ on a system $S\in\mathcal{S}$ is called \emph{reversible} if there exists another work process $p^\mathrm{rev}\in\mathcal{P}_S$ on $S$, the \emph{reverse} work process, such that $p^\mathrm{rev}\circ p$ is an identity process. 
\end{definition}

Considering the work costs of reverse processes we find the intuitive result that they simply change their signs relative to the forward process.

\begin{lemma}[Work cost of reverse work processes]
\label{lemma:reverseworkapp}
Let $p\in\mathcal{P}_S$ be a reversible work process on $S\in\mathcal{S}$ with reverse process 
$p^\mathrm{rev}\in\mathcal{P}_S$. Then
\begin{align}
W_A(p^\mathrm{rev}) = -W_A(p)
\end{align}
for all $A\in\mathrm{Atom}(S)$. 
\end{lemma}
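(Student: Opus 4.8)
The plan is to read off the result directly from the definition of reversibility together with the postulated additivity of work under concatenation; no heavy machinery is required. By Definition~\ref{def:revapp}, saying that $p\in\mathcal{P}_S$ is reversible with reverse process $p^\mathrm{rev}\in\mathcal{P}_S$ means precisely that the concatenation $p^\mathrm{rev}\circ p$ is an identity process on $S$. First I would recall from Definition~\ref{def:idapp} that an identity process has zero work cost on \emph{every} atomic subsystem, i.e.\ $W_A(\mathrm{id}_S)=0$ for all $A\in\mathrm{Atom}(S)$. Applying this to the specific identity process $p^\mathrm{rev}\circ p$ gives $W_A(p^\mathrm{rev}\circ p)=0$ for each $A\in\mathrm{Atom}(S)$.

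Next I would invoke additivity of work under concatenation. Since $p^\mathrm{rev}\circ p$ is an identity process it is in particular a well-defined work process, so Postulate~\ref{post:addwork} applies and yields, for every atomic system $A\in\mathcal{A}$,
\begin{align}
W_A(p^\mathrm{rev}\circ p) = W_A(p) + W_A(p^\mathrm{rev})\,.
\end{align}
Combining this with $W_A(p^\mathrm{rev}\circ p)=0$ gives $0 = W_A(p) + W_A(p^\mathrm{rev})$, and rearranging produces the claimed identity $W_A(p^\mathrm{rev}) = -W_A(p)$ for all $A\in\mathrm{Atom}(S)$.

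There is essentially no obstacle in this argument, as it is a two-line consequence of the two cited results. The only point requiring a little care is the quantifier: the vanishing of work on an identity process must be used in its per-atomic-subsystem form $W_A(\mathrm{id}_S)=0$ for all $A\in\mathrm{Atom}(S)$, not merely in the weaker form $W_S(\mathrm{id}_S)=0$ for the total work. This is exactly what Definition~\ref{def:idapp} supplies, and since the conclusion of the lemma is likewise stated per atomic subsystem, the argument runs verbatim for each $A\in\mathrm{Atom}(S)$ with no composition step needed.
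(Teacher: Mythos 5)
Your proof is correct and is essentially identical to the paper's own argument: both read off $W_A(p^\mathrm{rev}\circ p)=0$ from the definition of an identity process (per atomic subsystem) and apply additivity of work under concatenation to obtain $0 = W_A(p)+W_A(p^\mathrm{rev})$. Your remark about needing the per-atomic-subsystem form of the identity-process condition, rather than only the total-work form, is exactly the point the paper's definitions are set up to deliver.
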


\begin{proof}
By definition, $p^\mathrm{rev}\in\mathcal{P}_S$ is a reverse process for $p\in\mathcal{P}_S$ if and only if $p^\mathrm{rev}\circ p$ is an identity process. 
Therefore, by additivity of the work functions under concatenation and the Definition~\ref{def:idapp} of identity processes 
\begin{align}
0 = W_A(p^\mathrm{rev}\circ p) = W_A(p)+W_A(p^\mathrm{rev})
\end{align}
for all atomic subsystems $A\in\mathrm{Atom}(S)$, and the claim follows.
\end{proof}

We next investigate the reversibility of a concatenated process $q\circ p$ in relation to the reversibility of $p$ and $q$. 
Clearly, if both $p$ and $q$ are reversible, then so is $q\circ p$ whenever the concatenation is defined. 
This follows from the fact that the reverse processes $p^\mathrm{rev}$ and $q^\mathrm{rev}$ can be concatenated to a reverse process $p^\mathrm{rev}\circ q^\mathrm{rev}$. 
But the implication in the opposite direction, i.e.\ that if $q\circ p$ is reversible, then so are $p$ and $q$, needs a bit more work to obtain.

\begin{lemma}[Reversibility of $p,q\in\mathcal{P}_S$ relative to $q\circ p$]
\label{lemma:pqrevapp}
Let $p,q\in\mathcal{P}_S$ for some $S\in\mathcal{S}$ such that $r := q\circ p$ is defined. Then, if $r$ is reversible, both $p$ and $q$ must also be reversible.
\end{lemma}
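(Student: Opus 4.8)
The plan is to build explicit reverse processes for $p$ and $q$ out of the reverse process $r^{\mathrm{rev}}$ of $r=q\circ p$, and then to verify \emph{directly} that the relevant concatenations are identity processes by checking the two defining conditions of Definition~\ref{def:idapp}: that the process is cyclic on $S$, and that its work cost vanishes on every atomic subsystem. The candidate reverses will be $\hat p := r^{\mathrm{rev}}\circ q$ and $\hat q := p\circ r^{\mathrm{rev}}$, both of which lie in $\mathcal{P}_S$ since $\mathcal{P}_S$ is closed under concatenation (Lemma~\ref{lemma:closedapp}).

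First I would fix notation for the states: write $\sigma_1:=\lfloor p\rfloor_S$, $\sigma_2:=\lceil p\rceil_S=\lfloor q\rfloor_S$ and $\sigma_3:=\lceil q\rceil_S$, so that $\lfloor r\rfloor_S=\sigma_1$ and $\lceil r\rceil_S=\sigma_3$. Reversibility of $r$ gives $r^{\mathrm{rev}}\in\mathcal{P}_S$ with $r^{\mathrm{rev}}\circ r$ an identity process, whence $\lfloor r^{\mathrm{rev}}\rfloor_S=\sigma_3$ and $\lceil r^{\mathrm{rev}}\rceil_S=\sigma_1$. A short check against the state-matching condition of the concatenation postulate then confirms that $\hat p$, $\hat q$, $\hat p\circ p$ and $\hat q\circ q$ are all defined: the junction states $\sigma_3=\lceil q\rceil_S=\lfloor r^{\mathrm{rev}}\rfloor_S$, $\sigma_1=\lceil r^{\mathrm{rev}}\rceil_S=\lfloor p\rfloor_S$, $\sigma_2=\lceil p\rceil_S=\lfloor\hat p\rfloor_S$ and $\sigma_3=\lceil q\rceil_S=\lfloor\hat q\rfloor_S$ agree on $S$, hence on every atomic subsystem.

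Next I would verify that $\hat p\circ p$ is an identity process. Its input and output states on $S$ are both $\sigma_1$ (the input is $\lfloor p\rfloor_S=\sigma_1$, the output is $\lceil r^{\mathrm{rev}}\rceil_S=\sigma_1$), so it is cyclic on $S$. For the work cost I would combine additivity under concatenation with Lemma~\ref{lemma:reverseworkapp}: for every $A\in\mathrm{Atom}(S)$,
\begin{align}
W_A(\hat p\circ p)=W_A(p)+W_A(q)+W_A(r^{\mathrm{rev}})=W_A(p)+W_A(q)-W_A(r)=0,
\end{align}
using $W_A(r^{\mathrm{rev}})=-W_A(r)=-\big(W_A(p)+W_A(q)\big)$. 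Being cyclic on $S$ with vanishing atomic work cost, $\hat p\circ p$ is an identity process, so $p$ is reversible with reverse $\hat p$. The argument for $q$ is entirely symmetric: $\hat q\circ q$ is cyclic on $S$ at $\sigma_2$, and the same cancellation gives $W_A(\hat q\circ q)=0$, so $q$ is reversible with reverse $\hat q$.

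The point that deserves care — rather than a genuine obstacle — is that one should avoid invoking associativity of $\circ$, which the partially defined concatenation operation does not automatically supply. The computation above sidesteps this by checking the identity-process conditions for $\hat p\circ p$ and $\hat q\circ q$ straight from the state rules of the concatenation postulate and from work-additivity, instead of rewriting these processes as $r^{\mathrm{rev}}\circ r$. Thus the only real content is the bookkeeping that the concatenations are defined and that the work costs cancel, both of which follow immediately from the concatenation postulate, additivity of work, and Lemma~\ref{lemma:reverseworkapp}.
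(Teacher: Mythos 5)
Your proof is correct and takes essentially the same route as the paper's: both construct $p^{\mathrm{rev}}:=r^{\mathrm{rev}}\circ q$ and $q^{\mathrm{rev}}:=p\circ r^{\mathrm{rev}}$, verify the concatenations are defined via state matching, and confirm the identity-process conditions by combining additivity of work under concatenation with $W_A(r^{\mathrm{rev}})=-W_A(r)$ (Lemma~\ref{lemma:reverseworkapp}). Your explicit parenthesization of $\hat p\circ p=(r^{\mathrm{rev}}\circ q)\circ p$ to avoid appealing to associativity of $\circ$ is a slightly more careful treatment of a step the paper writes loosely as $r^{\mathrm{rev}}\circ q\circ p$, but the substance of the argument is identical.
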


\begin{proof}
Let $r^\mathrm{rev}\in\mathcal{P}_S$ be a reverse process for $r$ and consider 
$r^\mathrm{rev}\circ q$. This process is well-defined, as $r^\mathrm{rev}\circ r$ is defined and the output state of $r$ is equal to the output state of $q$.
Also, it can be concatenated with $p$ from the left (since $q$ can be concatenated with $p$) and from the right (since $r^\mathrm{rev}$ is a reverse process for $q\circ p$).
Furthermore, it holds that $r^\mathrm{rev}\circ q \circ p$ is a cyclic work process on $S$ and thus so is $p\circ r^\mathrm{rev}\circ q$. 
Finally, we compute 
\begin{align}
W_A(r^\mathrm{rev}\circ q\circ p)
= W_A(r^\mathrm{rev}) + W_A(q) + W_A(p)
= W_A(r^\mathrm{rev}) + W_A(r)
= 0\,,
\end{align}
for all $A\in\mathrm{Atom}(S)$.

Together this implies that $r^\mathrm{rev}\circ q \circ p$ is an identity process and thus $p^\mathrm{rev} := r^\mathrm{rev}\circ q\in\mathcal{P}_S$ is indeed a reverse process for $p$, which means that $p$ is reversible. 
To show reversibility of $q$, we proceed analogously with $q^\mathrm{rev} := p\circ r^\mathrm{rev}$. 
\end{proof}

\begin{lemma}[Reversibility of $p_1\vee \mathrm{id_2}$]
\label{lemma:pidrevapp}
Let $S_1,S_2\in\mathcal{S}$ be two disjoint systems. Let further $p_1\in\mathcal{P}_{S_1}$ be an arbitrary work process on $S_1$ and $\mathrm{id}_2\in\mathcal{P}_{S_2}$ an arbitrary identity process on $S_2$.
Then, if $p:=p_1\vee \mathrm{id}_2$ is reversible, so is $p_1$.
\end{lemma}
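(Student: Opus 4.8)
The plan is to reduce the reversibility of $p_1$ to the \emph{Freedom of description} postulate (Postulate~\ref{post:freedomapp}). First I would set up notation by tracking states and work costs through $p = p_1\vee\mathrm{id}_2 = p_1\circ\mathrm{id}_2$. Writing $\sigma_1 := \lfloor p_1\rfloor_{S_1}$, $\sigma_1' := \lceil p_1\rceil_{S_1}$ and $\sigma_2 := \lfloor \mathrm{id}_2\rfloor_{S_2} = \lceil\mathrm{id}_2\rceil_{S_2}$, the joint work process has $\lfloor p\rfloor_{S_1\vee S_2} = \sigma_1\vee\sigma_2$ and $\lceil p\rceil_{S_1\vee S_2} = \sigma_1'\vee\sigma_2$. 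Let $p^\mathrm{rev}\in\mathcal{P}_{S_1\vee S_2}$ be a reverse process, so $p^\mathrm{rev}\circ p$ is an identity process on $S_1\vee S_2$. Concatenability forces $\lfloor p^\mathrm{rev}\rfloor_{S_1\vee S_2} = \sigma_1'\vee\sigma_2$, and the identity requirement on the output forces $\lceil p^\mathrm{rev}\rceil_{S_1\vee S_2} = \sigma_1\vee\sigma_2$. In particular $p^\mathrm{rev}$ is cyclic on $S_2$, returning $\sigma_2$ to itself.

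Next I would establish that $p^\mathrm{rev}$ is in fact \emph{catalytic} on $S_2$. By Lemma~\ref{lemma:reverseworkapp} we have $W_A(p^\mathrm{rev}) = -W_A(p)$ for every atomic $A\in\mathrm{Atom}(S_1\vee S_2)$. For $A\in\mathrm{Atom}(S_2)$, additivity under concatenation gives $W_A(p) = W_A(p_1\circ\mathrm{id}_2) = W_A(p_1) + W_A(\mathrm{id}_2) = 0$, since $A\notin\mathcal{A}_{p_1}$ and identity processes have zero work cost on their atomic subsystems (Definition~\ref{def:idapp}). Hence $W_A(p^\mathrm{rev}) = 0$ for all $A\in\mathrm{Atom}(S_2)$, so in particular $W_{S_2}(p^\mathrm{rev}) = 0$. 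Together with cyclicity on $S_2$, this makes $p^\mathrm{rev}$ catalytic on $S_2$, so Postulate~\ref{post:freedomapp} applies and yields a work process $\tilde p\in\mathcal{P}_{S_1}$ on $S_1$ alone with $\lfloor\tilde p\rfloor_{S_1} = \sigma_1'$, $\lceil\tilde p\rceil_{S_1} = \sigma_1$, and $W_A(\tilde p) = W_A(p^\mathrm{rev})$ for all $A\in\mathrm{Atom}(S_1)$.

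Finally I would verify that $\tilde p$ is a reverse process for $p_1$. The concatenation $\tilde p\circ p_1$ is defined because $\lceil p_1\rceil_{S_1} = \sigma_1' = \lfloor\tilde p\rfloor_{S_1}$ matches on every atomic subsystem of $S_1$, and it lies in $\mathcal{P}_{S_1}$ by closure under concatenation (Lemma~\ref{lemma:closedapp}). It is cyclic on $S_1$ since its input is $\sigma_1$ and its output is $\lceil\tilde p\rceil_{S_1} = \sigma_1$. For the work cost, for each $A\in\mathrm{Atom}(S_1)$ I would compute $W_A(\tilde p\circ p_1) = W_A(p_1) + W_A(\tilde p) = W_A(p_1) + W_A(p^\mathrm{rev})$, and since $W_A(p^\mathrm{rev}) = -W_A(p) = -W_A(p_1)$ for such $A$ (again using additivity and $W_A(\mathrm{id}_2)=0$), this vanishes. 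Thus $\tilde p\circ p_1$ is a cyclic work process on $S_1$ with zero work cost on every atomic subsystem, i.e.\ an identity process on $S_1$, so $p_1$ is reversible with reverse $p_1^\mathrm{rev} := \tilde p$.

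The conceptual heart — and the only place where any real content enters — is recognizing that the restriction of $p^\mathrm{rev}$ leaves $S_2$ untouched in the catalytic sense, so that Postulate~\ref{post:freedomapp} can peel $S_2$ off and produce a genuine work process on $S_1$; the remaining steps are routine bookkeeping with the additivity of work. The one point that requires care is checking that $W_A(p^\mathrm{rev})$ restricted to $\mathrm{Atom}(S_2)$ really vanishes rather than merely summing to zero over $S_2$, since Postulate~\ref{post:freedomapp} only requires $W_{S_2}(p^\mathrm{rev})=0$; here the per-atom vanishing follows for free from $W_A(p)=0$ on each such $A$, but it is worth stating explicitly.
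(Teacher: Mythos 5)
Your proof is correct and follows essentially the same route as the paper's own proof: establish that $p^\mathrm{rev}$ is cyclic on $S_2$ with vanishing work cost there, conclude it is catalytic on $S_2$, and invoke Postulate~\ref{post:freedomapp} to peel off $S_2$ and obtain the reverse work process on $S_1$. Your version is in fact slightly more careful, since you explicitly verify that $\tilde p\circ p_1$ is an identity process on $S_1$ (a step the paper dismisses as obvious) and you correctly flag the per-atom versus total work-cost distinction.
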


\begin{proof}
Let $p^\mathrm{rev}\in\mathcal{P}_{S_1\vee S_2}$ be a reverse process.
Then $W_{S_2}(p^\mathrm{rev}) = - W_{S_2}(p) = -W_{S_2}(\mathrm{id}_2) = 0$ due to Lemma~\ref{lemma:reverseworkapp}.
On the other hand, $p^\mathrm{rev}$ is cyclic on $S_2$, since the state of $S_2$ did not change under $p$ and hence neither under its reverse process.

Therefore, $p^\mathrm{rev}$ is catalytic on $S_2$ which, together with Postulate~\ref{post:freedom} on the freedom of thermodynamic description, implies that there exists a work process $p_1^\mathrm{rev}\in\mathcal{P}_{S_1}$ on $S_1$ alone such that the state change as well as the work flows under $p_1^\mathrm{rev}$ on $S_1$ are the same as the ones change under $p^\mathrm{rev}$. Obviously, $p_1^\mathrm{rev}$ is a reverse process for $p_1$, which proves the claim.
\end{proof}

\begin{prop}[Reversibility of general $p_1\circ p_2$]
\label{prop:p1p2revapp}
Let $S_1,S_2\in\mathcal{S}$ be two arbitrary systems (not necessarily disjoint) and $p_i\in\mathcal{P}_{S_i}$ such that $p := p_2\circ p_1$ is defined. 
Then, if $p$ is reversible, so are $p_1$ and $p_2$.
\end{prop}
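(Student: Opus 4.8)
The plan is to reduce everything to a single ambient system and then peel off the irrelevant parts using the freedom of description. First I would set $S:=S_1\vee S_2$ and observe that, since $\mathcal{A}_{p_i}=\mathrm{Atom}(S_i)$, the concatenation postulate gives $\mathcal{A}_p=\mathrm{Atom}(S_1)\cup\mathrm{Atom}(S_2)=\mathrm{Atom}(S)$, so that $p$ is a work process on $S$; let $p^{\mathrm{rev}}\in\mathcal{P}_S$ be a reverse process for it, so $\lfloor p^{\mathrm{rev}}\rfloor_A=\lceil p\rceil_A$ and $\lceil p^{\mathrm{rev}}\rceil_A=\lfloor p\rfloor_A$ for all $A\in\mathrm{Atom}(S)$.

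To reverse $p_1$, I would propose the candidate $m:=p^{\mathrm{rev}}\circ p_2\in\mathcal{P}_S$ (defined because $\lceil p_2\rceil_A=\lceil p\rceil_A=\lfloor p^{\mathrm{rev}}\rfloor_A$ on $\mathrm{Atom}(S_2)$) and read off its input/output states and work costs atom by atom directly from the concatenation postulate and Lemma~\ref{lemma:reverseworkapp}, distinguishing the atomic systems lying only in $S_1$, in both $S_1$ and $S_2$, and only in $S_2$. The computation should show that on every $A\in\mathrm{Atom}(S_1)$ the process $m$ sends $\lceil p_1\rceil_A$ to $\lfloor p_1\rfloor_A$ with $W_A(m)=-W_A(p_1)$ (the matching condition $\lceil p_1\rceil_A=\lfloor p_2\rfloor_A$ on the shared atoms is what makes this come out right), while on every atom of the disjoint complement $T_1:=S\smallsetminus S_1$ it is cyclic with zero work, i.e.\ catalytic on $T_1$. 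Postulate~\ref{post:freedomapp} then yields $\tilde m\in\mathcal{P}_{S_1}$ with the same state changes and work costs on $\mathrm{Atom}(S_1)$, and a direct check (cyclic on $S_1$ with $W_A(\tilde m\circ p_1)=W_A(p_1)-W_A(p_1)=0$) shows $\tilde m\circ p_1$ is an identity process, so $\tilde m$ is a reverse for $p_1$. The argument for $p_2$ is the mirror image, using $m':=p_1\circ p^{\mathrm{rev}}$, which reverses $p_2$ on $\mathrm{Atom}(S_2)$ and is catalytic on $T_2:=S\smallsetminus S_2$; here one first notes that $p\circ p^{\mathrm{rev}}$ is likewise an identity on $S$.

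The step I expect to be the main obstacle is precisely this reduction to a common system. One cannot simply invoke Lemma~\ref{lemma:pqrevapp}, because $p_1$ and $p_2$ are work processes on the possibly different, possibly overlapping systems $S_1$ and $S_2$ rather than on a single $S$; and one cannot repair this by naively replacing $p_1$ with $p_1\vee\mathrm{id}$ and declaring the resulting concatenation equal to $p$, since reversibility of a work process is \emph{not} determined by its input and output states alone — two work processes with identical endpoints may split the total work differently across atomic subsystems, and Lemma~\ref{lemma:reverseworkapp} shows it is the per-atom work that governs whether a reverse exists. The real work is therefore to carry out the atom-by-atom bookkeeping for $m$ and $m'$ honestly, so that the catalytic structure on $T_1$ and $T_2$ is exact and Postulate~\ref{post:freedomapp} applies cleanly. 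Finally I would dispatch the degenerate cases where $T_1$ or $T_2$ is empty (i.e.\ $S_2\subseteq S_1$ or $S_1\subseteq S_2$): there the complement is not a system and the freedom of description is not invoked, because $m$ (resp.\ $m'$) already lies in $\mathcal{P}_{S_1}$ (resp.\ $\mathcal{P}_{S_2}$) and serves directly as the reverse.
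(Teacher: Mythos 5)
Your proof is correct, and it is a genuinely more direct execution than the paper's, although both rest on the same two ingredients. The paper first lifts $p_1$ and $p_2$ to work processes on $S_1\vee S_2$ by forming $p_1':=p_1\vee\mathrm{id}_{S_2\smallsetminus S_1}$ and $p_2':=\mathrm{id}_{S_1\smallsetminus S_2}\vee p_2$, argues that $p':=p_2'\circ p_1'$ has the same state changes \emph{and the same per-atom work costs} as $p$ and therefore shares its reverse processes, and only then invokes Lemma~\ref{lemma:pqrevapp} (whose reverse candidates are built exactly as your $m$ and $m'$) followed by Lemma~\ref{lemma:pidrevapp} (which strips the identity-extended part via Postulate~\ref{post:freedomapp}). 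You skip the lifting entirely: you concatenate the unlifted $p_2$ (resp.\ $p_1$) with $p^{\mathrm{rev}}$, carry out the three-case atom-by-atom bookkeeping, observe catalyticity on the disjoint complement, and apply the freedom of description once. What this buys you is a self-contained argument that never needs the ``same thermodynamic data implies same reverses'' step; what it costs is redoing inline the content that the paper's two lemmas already package, so the paper's version is shorter given its preparatory results.

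One correction to your middle paragraph: your claim that the lifting route ``cannot be repaired'' is too strong, and in fact it is precisely the route the paper takes, validly. You are right that reversibility is not determined by endpoints alone; but endpoints together with per-atom work costs do suffice. If $q$ and $q'$ agree on both and $q^{\mathrm{rev}}$ reverses $q$, then $q^{\mathrm{rev}}\circ q'$ is defined, cyclic, and satisfies
\begin{align}
W_A(q^{\mathrm{rev}}\circ q') = W_A(q') + W_A(q^{\mathrm{rev}}) = W_A(q) - W_A(q) = 0
\end{align}
for every atom $A$, hence is an identity process, so $q'$ is reversible with the same reverse. The paper explicitly notes that its lifted process matches $p$ in the work costs on every involved atomic system, which is exactly this repair; your route simply renders the issue moot.
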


\begin{proof}
We first note that according to the postulate introducing concatenation, we know that $p\in\mathcal{P}_{S_1\vee S_2}$ is a work process on the composite systems $S_1\vee S_2$.
Also, we know that the disjoint complements $S_1\smallsetminus S_2$ and $S_1\smallsetminus S_2$ exist, and are unique systems whenever the one system is not completely contained in the other. Let us for the moment assume that this is the case.

Then, extend the processes $p_i$ with 
identities\footnote{The fact that such identity processes always exist has not yet been established. It is a consequence of Postulate~\ref{post:first} stated in Lemma~\ref{lemma:idexistapp} in Appendix~\ref{app:firstlaw}. We use this fact here already.}
so that they induce defined state changes on all of $S_1\vee S_2$, i.e.\ let $\mathrm{id}_{S_2\smallsetminus S_1}$ and $\mathrm{id}_{S_1\smallsetminus S_2}$ be identities such that 
$p':= p_2'\circ p_1'$ is defined, where $p_2' := \mathrm{id}_{S_1\smallsetminus S_2}\vee p_2$ and $p_1' := p_1\vee \mathrm{id}_{S_2\smallsetminus S_1}$.
For $i=1,2$ both $p_i'\in\mathcal{P}_{S_1\vee S_2}$ are work processes on $S_1\vee S_2$. 
Furthermore, $p'$ induces the same state change with the same work costs as $p$ on any involved atomic system. Thus $p'$ is also reversible and has the same reverse processes as $p$.

By Lemma~\ref{lemma:pqrevapp} we know that both $p_1'\in\mathcal{P}_{S_1\vee S_2}$ and $p_2'\in\mathcal{P}_{S_1\vee S_2}$ must be reversible.
But then, according to the previous Lemma~\ref{lemma:pidrevapp}, both $p_1\in\mathcal{P}_{S_1}$ and $p_2\in\mathcal{P}_{S_2}$ are reversible, which concludes the proof for the case when neither $S_1\in\mathrm{Sub}(S_2)$ nor vice versa.

The case $S_1\in\mathrm{Sub}(S_2)$ follows in the very same way, where we just define $p_2' := p_2$
and likewise, if $S_2\in\mathrm{Sub}(S_1)$, define $p_1' := p_1$.
\end{proof}

\section{The first law}
\label{app:firstlaw}

The first law of thermodynamics states the following.

\begin{postulateapp}[The first law]
\label{post:firstapp}
For any system $S\in\mathcal{S}$ the following two statements hold:
\begin{itemize}
	\item [(i)]
	For any pair of states $\sigma_1,\sigma_2\in\Sigma_S$ there is a work process 
	$p\in\mathcal{P}_S$ on $S$ with $\lfloor p \rfloor_S = \sigma_1$ and 
	$\lceil p \rceil_S = \sigma_2$ or there is a work process $p'\in\mathcal{P}_S$ on $S$ 
	with $\lfloor p' \rfloor_S = \sigma_2$ and $\lfloor p' \rfloor_S = \sigma_1$.
	\item [(ii)] The total work cost of a work process $p\in\mathcal{P}_S$ on $S$, $W_S(p)$, 
	only depends on $\lfloor p \rfloor_S$ and $\lceil p \rceil_S$ 
	and not on any other details of the process. 
\end{itemize}
\end{postulateapp}

In particular, (ii) implies that if $p'\in\mathcal{P}_S$ is another work process on $S$ with 
$\lfloor p \rfloor_S = \lfloor p' \rfloor_S$ and 
$\lceil p \rceil_S = \lceil p' \rceil_S$, then
$W_S(p)=W_S(p')$.\\

The first law implies a relation on the set of states of any system which will turn out to be a preorder.
A \emph{preordered set} is a set $\mathcal{M}$ together with a relation $\rightarrow$ such that 	
the relation is (i) reflexive, i.e.\ $\forall m\in\mathcal{M}:\, m\rightarrow m$, 
and (ii) transitive, i.e.\ if both $m\rightarrow m'$ and $m'\rightarrow m''$, then $m\rightarrow m''$.


\begin{definition}[Preordered states]
\label{def:preorderapp}
For any system $S\in\mathcal{S}$ the \emph{preorder} $\rightarrow$ on $\Sigma_S$ is established by the reachability via a work process, i.e.\ for $\sigma,\sigma'\in\Sigma_S$ define 
\begin{align}
\label{eq:preorderapp}
\sigma\rightarrow\sigma'\, :\Leftrightarrow\, \exists p\in\mathcal{P}_S \ \mathrm{s.t.}\ 
\lfloor p \rfloor_S = \sigma,\ \lceil p \rceil_S = \sigma'\,.
\end{align}
\end{definition}

Processes $p\in\mathcal{P}_S$ can be seen as labels of the preordered pairs. In this sense, if one wants to precisely state which work process is responsible for the preordering of two states, one can write 
$\sigma\stackrel{p}{\rightarrow}\sigma'$ if the work process $p$ on $S$ is such that $\lfloor p \rfloor_S = \sigma$ and $\lceil p \rceil_S = \sigma'$.
As mentioned before, there may be more than one label for a preordered pair.

\begin{lemma}[Preordered states]
\label{lemma:perorderapp}
The relation $\rightarrow$ in Definition~\ref{def:preorderapp} is a preorder.
\end{lemma}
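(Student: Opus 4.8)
The plan is to verify the two defining properties of a preorder—reflexivity and transitivity—directly from Definition~\ref{def:preorderapp}, using the first law (Postulate~\ref{post:firstapp}) for the former and the concatenation structure for the latter. Both arguments are short, so the task is essentially bookkeeping with the work-process machinery already established.

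First I would establish reflexivity. For an arbitrary system $S\in\mathcal{S}$ and state $\sigma\in\Sigma_S$, I need a work process $p\in\mathcal{P}_S$ with $\lfloor p\rfloor_S=\sigma$ and $\lceil p\rceil_S=\sigma$. The cleanest route is to invoke the existence of identity processes: by the discussion following Postulate~\ref{post:firstapp} (formalised as Lemma~\ref{lemma:idexistapp}), an identity process $\mathrm{id}_S^\sigma\in\mathcal{P}_S$ exists for every state $\sigma$, and it satisfies $\lfloor \mathrm{id}_S^\sigma\rfloor_S=\lceil \mathrm{id}_S^\sigma\rceil_S=\sigma$ by Definition~\ref{def:idapp}. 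Hence $\sigma\xrightarrow{\mathrm{id}_S^\sigma}\sigma$, giving $\sigma\rightarrow\sigma$. Alternatively one can cite Postulate~\ref{post:firstapp}~(i) applied to the pair $(\sigma,\sigma)$ directly, which guarantees a work process from $\sigma$ to $\sigma$.

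Next I would handle transitivity. Suppose $\sigma\rightarrow\sigma'$ and $\sigma'\rightarrow\sigma''$ for states in $\Sigma_S$. By Definition~\ref{def:preorderapp} there exist work processes $p,p'\in\mathcal{P}_S$ with $\lfloor p\rfloor_S=\sigma$, $\lceil p\rceil_S=\sigma'$ and $\lfloor p'\rfloor_S=\sigma'$, $\lceil p'\rceil_S=\sigma''$. Since $\lceil p\rceil_S=\sigma'=\lfloor p'\rfloor_S$, the matching condition on all atomic subsystems $A\in\mathrm{Atom}(S)$ holds (because joint states are determined by their atomic components, and $\mathcal{A}_p=\mathcal{A}_{p'}=\mathrm{Atom}(S)$ as both are work processes on $S$). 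Thus the concatenation $p'\circ p$ is defined, and by Lemma~\ref{lemma:closedapp} it lies in $\mathcal{P}_S$. The concatenation rules in Postulate~\ref{post:conc} give $\lfloor p'\circ p\rfloor_S=\lfloor p\rfloor_S=\sigma$ and $\lceil p'\circ p\rceil_S=\lceil p'\rceil_S=\sigma''$, so $\sigma\xrightarrow{p'\circ p}\sigma''$, i.e.\ $\sigma\rightarrow\sigma''$.

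Since both properties hold for every system $S$, the relation $\rightarrow$ is a preorder on $\Sigma_S$, completing the proof. I do not anticipate a genuine obstacle here—the only point requiring a moment of care is confirming that the concatenation $p'\circ p$ is well-defined, which reduces to checking that the final state of $p$ equals the initial state of $p'$ on every atomic subsystem; this is immediate from the equality of joint states $\lceil p\rceil_S=\lfloor p'\rfloor_S$ together with the fact that a joint state determines its atomic reductions.
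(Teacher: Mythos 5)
Your transitivity argument is exactly the paper's: since $\lceil p\rceil_S=\lfloor p'\rfloor_S$, the concatenation $p'\circ p$ is defined and exhibits $\sigma\rightarrow\sigma''$; your extra care in checking the matching condition on atomic subsystems is harmless and correct. The reflexivity argument, however, contains a logical hazard. Your preferred route invokes Lemma~\ref{lemma:idexistapp} (existence of identity processes for all states), but in the paper's development that lemma is a \emph{consequence} of the present one: the proof of Lemma~\ref{lemma:idexistatomicapp} opens with ``Since $\rightarrow$ is a preorder, in particular reflexive, \dots'' and Lemma~\ref{lemma:idexistapp} builds on it. Citing identity processes to establish reflexivity is therefore circular within this framework, and would be a genuine gap if it were your only argument.

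You are saved by your stated alternative, which is precisely the paper's proof: apply Postulate~\ref{post:firstapp}~(i) to the pair $(\sigma,\sigma)$, for which the two alternatives of the disjunction coincide, so a work process $p\in\mathcal{P}_S$ with $\lfloor p\rfloor_S=\lceil p\rceil_S=\sigma$ exists. That argument should be the primary one, not the fallback. Note also what this ordering buys: reflexivity requires no claim about work cost. The stronger fact that the process realizing $\sigma\rightarrow\sigma$ can be taken to have zero work cost on every atomic subsystem (i.e., is an identity process) is derived \emph{afterwards}, by concatenating such a process with itself and invoking part~(ii) of the first law — which is exactly why the identity-process lemmas come after, not before, this one.
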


\begin{proof}
The relation introduced in Eq.~\ref{eq:preorder} is reflexive since for all systems $S$ and all states $\sigma\in\Sigma_S$ there exists work process $p\in\mathcal{P}_S$ such that 
$\sigma\stackrel{p}{\rightarrow}\sigma$.
This is a consequence of Postulate~\ref{post:firstapp} (i).
Furthermore, if $\sigma\stackrel{p}{\rightarrow}\sigma'$ and 
$\sigma'\stackrel{p'}{\rightarrow}\sigma''$ then 
$\sigma\stackrel{p'\circ p}{\longrightarrow}\sigma''$ is preordered too by means of the concatenated process $p'\circ p$.
Hence the relation is also transitive, which makes it a preorder.
\end{proof}

Due to the reflexivity of the preorder induced by the first law it follows that for any state on an atomic system there is an identity process.

\begin{lemma}[Existence of identity process for all atomic states]
\label{lemma:idexistatomicapp}
For any atomic system $A\in\mathcal{A}$ and any state $\sigma\in\Sigma_A$ there exists an identity process $\mathrm{id}_A^\sigma\in\mathcal{P}_A$ with  
$\lfloor \mathrm{id}_A^\sigma \rfloor_A = \sigma 
= \lceil \mathrm{id}_A^\sigma \rceil_A$.
\end{lemma}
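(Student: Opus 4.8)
The plan is to extract the desired identity process directly from the reflexivity of the preorder and then upgrade it to zero work cost using part (ii) of the first law together with additivity of work under concatenation. The whole argument is the self-concatenation trick already sketched in the main text after Postulate~\ref{post:first}, so the work lies mostly in checking that each hypothesis is legitimately available.

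First I would invoke Postulate~\ref{post:firstapp} (i) with $\sigma_1 = \sigma_2 = \sigma$ (equivalently, the reflexivity established in Lemma~\ref{lemma:perorderapp}) to obtain a work process $q\in\mathcal{P}_A$ on $A$ with $\lfloor q \rfloor_A = \sigma = \lceil q \rceil_A$. Since $A$ is atomic, $\mathcal{A}_q = \{A\}$ and $\mathrm{Atom}(A) = \{A\}$, so $q$ acts on $A$ alone and its only atomic subsystem is $A$ itself.

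Next I would concatenate $q$ with itself. Because $\mathcal{A}_q\cap\mathcal{A}_q = \{A\}$ and $\lceil q \rceil_A = \lfloor q \rfloor_A = \sigma$, the concatenation $q\circ q$ is well-defined by Postulate~\ref{post:conc}, and by the state-assignment rules of that postulate it satisfies $\lfloor q\circ q \rfloor_A = \sigma = \lceil q\circ q \rceil_A$; by Lemma~\ref{lemma:closedapp} it is again a work process on $A$. Since $q$ and $q\circ q$ are both work processes on $A$ sharing the same input and output states, part (ii) of the first law gives $W_A(q) = W_A(q\circ q)$, while additivity of work under concatenation gives $W_A(q\circ q) = W_A(q) + W_A(q) = 2W_A(q)$. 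Combining the two yields $W_A(q) = 0$.

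Finally I would verify that $q$ meets Definition~\ref{def:idapp}: it is a work process on $A$ with matching input and output state $\sigma$, and for its single atomic subsystem $A$ we have just shown $W_A(q)=0$. Hence $q$ serves as the required identity process $\mathrm{id}_A^\sigma$. I do not expect a genuine obstacle here; the only point demanding care is confirming that self-concatenation is admissible and that it preserves the endpoints on $A$, since that is precisely what licenses the application of part (ii) of the first law in the step $W_A(q) = W_A(q\circ q)$.
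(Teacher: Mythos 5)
Your proof is correct and follows essentially the same route as the paper: reflexivity of the preorder from the first law gives a cyclic work process $q\in\mathcal{P}_A$ on $\sigma$, self-concatenation combined with Postulate~\ref{post:firstapp}~(ii) and additivity of work forces $W_A(q)=2W_A(q)$, hence $W_A(q)=0$, making $q$ the desired identity process. The only difference is that you spell out the admissibility checks (Postulate~\ref{post:conc} for the self-concatenation and Lemma~\ref{lemma:closedapp} for closure of $\mathcal{P}_A$) which the paper leaves implicit.
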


\begin{proof}
Since $\rightarrow$ is a preorder, in particular reflexive, we know that for any $\sigma\in\Sigma_A$ there is a work process $q\in\mathcal{P}_A$ on $A$ such that $\sigma\stackrel{q}{\longrightarrow}\sigma$.
This process acts on $A$ alone and initial and final states match. Thus it can be concatenated with itself. 
The state change of the process in which $q$ is applied twice is obviously the same as the state change under $q$ itself. 
Therefore Postulate~\ref{post:firstapp} (ii) requires 
\begin{align}
W_A(q) \stackrel{\mathrm{(ii)}}{=}
W_A(q \circ q) = 
W_A(q) + W_A(q) = 2 W_A(q)\,,
\end{align}
which is to say that the work cost of such processes is zero. Obviously, this makes it an identity process on $A$ for the state $\sigma$ which can rightfully be called $\mathrm{id}_S^\sigma \rceil_A$.
\end{proof}

Using Lemma~\ref{lemma:idexistatomicapp} it is then possible to show that identity processes exist for all states on any thermodynamic system $S\in\mathcal{S}$.

\begin{lemma}[Existence of identity process for all states]
\label{lemma:idexistapp}
For any atomic system $S\in\mathcal{S}$ and any state $\sigma\in\Sigma_S$ there exists an identity process $\mathrm{id}_S^\sigma\in\mathcal{P}_S$ with  
$\lfloor \mathrm{id}_S^\sigma \rfloor_S = \sigma 
= \lceil \mathrm{id}_S^\sigma \rceil_S$.
\end{lemma}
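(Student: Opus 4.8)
The plan is to produce the required identity process directly from the first law, exploiting the fact that for an atomic $S$ the only atomic subsystem is $S$ itself, so that controlling the total work $W_S$ already controls the work on every atomic subsystem. Concretely, I would exhibit a cyclic work process on $S$ whose total work cost vanishes and then read off that it satisfies Definition~\ref{def:idapp}.

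First I would invoke the first law, Postulate~\ref{post:firstapp} (i), applied to the pair $\sigma_1=\sigma_2=\sigma$. Both alternatives in the disjunction then coincide and yield a work process $q\in\mathcal{P}_S$ with $\lfloor q\rfloor_S=\sigma=\lceil q\rceil_S$. Thus $q$ is already cyclic on $S$; what remains is to upgrade it to an identity process, i.e.\ to show that its work cost vanishes on every atomic subsystem.

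Next I would concatenate $q$ with itself. Since $S$ is the only atomic system involved and $\lceil q\rceil_S=\lfloor q\rfloor_S=\sigma$, the concatenation condition of Postulate~\ref{post:conc} is met, so $q\circ q\in\mathcal{P}_S$ is defined and again has input and output state $\sigma$. By additivity of work under concatenation (Postulate~\ref{post:addwork}) one has $W_S(q\circ q)=W_S(q)+W_S(q)=2W_S(q)$, while the first law (ii), Postulate~\ref{post:firstapp} (ii), forces $W_S(q\circ q)$ to depend only on the (coinciding) input and output states $\sigma$, hence $W_S(q\circ q)=W_S(q)$. Comparing the two expressions gives $W_S(q)=2W_S(q)$ and therefore $W_S(q)=0$.

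The final step uses atomicity decisively: because $S\in\mathcal{A}$, we have $\mathrm{Atom}(S)=\{S\}$, so the requirement ``$W_A(q)=0$ for all $A\in\mathrm{Atom}(S)$'' in the definition of an identity process (Definition~\ref{def:idapp}) is nothing but $W_S(q)=0$, which we have just established. Together with $\lfloor q\rfloor_S=\sigma=\lceil q\rceil_S$ this shows that $q$ is an identity process on $S$ for the state $\sigma$, and we may set $\mathrm{id}_S^\sigma:=q$. I expect the only subtle point — and the reason the statement is genuinely about atomic systems — to be precisely this last step: the self-concatenation argument only controls the total work $W_S$, and this coincides with the per-atomic-subsystem work exactly when $S$ has a single atomic subsystem. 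For a composite $S$ the same process $q$ would merely be catalytic on $S$ and need not be an identity process, so there one would instead have to assemble the identity from its atomic constituents.
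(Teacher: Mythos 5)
Your self-concatenation argument is correct as far as it goes: taking $\sigma_1=\sigma_2=\sigma$ in Postulate~\ref{post:firstapp}~(i) yields a cyclic work process $q\in\mathcal{P}_S$, the concatenation $q\circ q$ is defined and has the same endpoints, so Postulate~\ref{post:addwork} together with Postulate~\ref{post:firstapp}~(ii) forces $W_S(q)=2W_S(q)=0$, and for atomic $S$ the condition $\mathrm{Atom}(S)=\{S\}$ turns $W_S(q)=0$ into the per-subsystem condition of Definition~\ref{def:idapp}. But be aware that this is precisely the paper's proof of the \emph{preceding} Lemma~\ref{lemma:idexistatomicapp}. The word ``atomic'' in the statement you were given is evidently a slip: the lemma is titled ``for all states'', quantifies over $S\in\mathcal{S}$ rather than $A\in\mathcal{A}$, and is the extension of the atomic result to arbitrary systems.

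Relative to that intended statement your proposal has a gap, which — to your credit — you diagnose exactly in your closing remark but do not close. The paper's proof of this lemma is the assembly step you sketch in one sentence: write $S=A_1\vee\cdots\vee A_n$ with distinct atomic subsystems and $\sigma=\sigma_1\vee\cdots\vee\sigma_n$, take the atomic identities $\mathrm{id}_{A_i}^{\sigma_i}$ supplied by Lemma~\ref{lemma:idexistatomicapp}, and form the joint work process $\mathrm{id}_S^{\sigma}:=\mathrm{id}_{A_1}^{\sigma_1}\vee\cdots\vee\mathrm{id}_{A_n}^{\sigma_n}$ (Definition~\ref{def:jointwpapp}); this is a cyclic work process on $S$ with initial and final state $\sigma$ and satisfies $W_{A_i}(\mathrm{id}_S^{\sigma})=0$ for every $i$ by construction, hence is an identity process. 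Your explanation of why the direct self-concatenation route cannot replace this is right on target: Postulate~\ref{post:firstapp}~(ii) constrains only the total $W_S$, so for composite $S$ one obtains merely a catalytic cyclic process — e.g.\ a cyclic process doing work $W>0$ on $A_1$ while drawing $W$ from $A_2$ passes the total-work test but violates Definition~\ref{def:idapp}. So: read literally, your proof is correct but reproves the atomic lemma; read as intended, the one-sentence sketch at the end is the actual content of the proof and must be carried out as above.
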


\begin{proof}
Using the decomposition of an arbitrary system into its atomic subsystems it is possible to construct identity process for all states of arbitrary thermodynamic systems.
To see this, let $S=A_1\vee \cdots \vee A_n$ with different atomic systems $A_i\neq A_j$ and consider an arbitrary joint state $\sigma = \sigma_1\vee \cdots \vee \sigma_n$. For the atomic states we know that corresponding identity processes $\mathrm{id}_{A_i}^{\sigma_i}$ exist. Therefore, the joint process $\mathrm{id}_S^\sigma := \mathrm{id}_{A_1}^{\sigma_1}\vee \mathrm{id}_{A_n}^{\sigma_n}$ is a cyclic work process on $S$. But this process also fulfils $W_{A_i}(\mathrm{id}_S^\sigma) = 0$ for all $i=1,\dots,n$ by construction, thus it is an identity process.
We conclude that identity processes exist for all states of all thermodynamic systems. 
\end{proof}

As was stated before the first law guarantees that every system $S$ has a well-defined internal energy function.
We are now in the position to define this function.

\begin{definition}[Internal energy]
\label{def:Ustrongapp}
For a system $S\in\mathcal{S}$ fix an arbitrary reference state $\sigma_0\in\Sigma_S$ and an arbitrary reference energy $U_S^0\in\mathbbm{R}$.
The \emph{internal energy of a state $\sigma\in\Sigma_S$} is defined as
\begin{align}
\label{eq:Ustrong1app}
&U_S(\sigma) := U_S^0 + W_S(p) \,, \qquad  \,
\text{where $p\in\mathcal{P}_S$ is s.t.\ 
$\lfloor p \rfloor_s =\sigma_0$ and $\lceil p \rceil_S = \sigma$.}\\
\label{eq:Ustrong2app}
&U_S(\sigma) := U_S^0 - W_S(p') \,, \qquad  
\text{where $p'\in\mathcal{P}_S$ is s.t.\ 
$\lfloor p' \rfloor_s =\sigma$ and $\lceil p' \rceil_S = \sigma_0$.}
\end{align} 
\end{definition}

Only differences $\Delta U_S$ of internal energies physically matter. Thus the choice of $U_S^0$ is arbitrary at this point.
Likewise, the reference state $\sigma_0\in\Sigma_S$ is arbitrary independently for each system $S$. 

\begin{definition}[State function]
\label{def:statefuncapp}
A \emph{state function on a system $S\in\mathcal{S}$} (also \emph{state variable}) is a function $Z:\Sigma_S\rightarrow\mathcal{Z}$ from the state space $\Sigma_S$ to a target space $\mathcal{Z}$. The co-domain $\mathcal{Z}$ is typically $\mathbbm{R}^n$, most often $n=1$. 
When a system $S$ undergoes a process $p\in\mathcal{P}$ we denote the change in any state function $Z_S$ using an abbreviated notation by $\Delta Z_S(p) := Z_S(\lceil p \rceil_S) - Z_S(\lfloor p \rfloor_S)$.\footnote{This of course only works if a ``minus operation`'' is defined on the co-domain $\mathcal{Z}$. For all practical purposes considered here this is the case.} On the left hand side the dependence of $\Delta Z_S$ on $p$ may be omitted if the context makes clear which process is meant.
 \end{definition}

It must now be proven that $U_S$ is a state function on $S$.

\begin{lemma}[Internal energy is a state function]
\label{lemma:Uwelldefapp}
Internal energy as defined in Definition \ref{def:Ustrongapp} 
is a well-defined state function.
\end{lemma}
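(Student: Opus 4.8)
The plan is to verify the two requirements that make $U_S$ a state function in the sense of Definition~\ref{def:statefuncapp}: that the assignment $\sigma \mapsto U_S(\sigma)$ is defined for every $\sigma \in \Sigma_S$, and that its value is independent of the choices left open in Definition~\ref{def:Ustrongapp} — namely which connecting work process is used, and via which of the two formulas \eqref{eq:Ustrong1app} or \eqref{eq:Ustrong2app}. Existence is immediate from the first law: by Postulate~\ref{post:firstapp}~(i), for the fixed reference state $\sigma_0$ and any target state $\sigma$ there is either a work process $p \in \mathcal{P}_S$ with $\lfloor p \rfloor_S = \sigma_0$ and $\lceil p \rceil_S = \sigma$ (so \eqref{eq:Ustrong1app} applies), or a work process $p'$ in the reverse direction (so \eqref{eq:Ustrong2app} applies), and possibly both. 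Hence at least one formula always assigns a real number to $\sigma$, and $U_S$ is a map $\Sigma_S \to \mathbbm{R}$.

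For well-definedness I would split the argument into two claims. First, any two connecting processes used within the same formula give the same value: this is a direct consequence of Postulate~\ref{post:firstapp}~(ii), since if $p_1, p_2 \in \mathcal{P}_S$ both satisfy $\lfloor p_i \rfloor_S = \sigma_0$ and $\lceil p_i \rceil_S = \sigma$ then $W_S(p_1) = W_S(p_2)$, the total work cost depending only on input and output states; the analogous statement holds for two reverse processes used in \eqref{eq:Ustrong2app}.

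The remaining, and main, step is the cross-consistency: whenever both a forward process $p$ (from $\sigma_0$ to $\sigma$) and a reverse process $p'$ (from $\sigma$ to $\sigma_0$) exist, formulas \eqref{eq:Ustrong1app} and \eqref{eq:Ustrong2app} must agree, i.e.\ $W_S(p) = -W_S(p')$. Here I would form the concatenation $p' \circ p$, which is defined because $\lceil p \rceil_S = \sigma = \lfloor p' \rfloor_S$ matches on every atomic subsystem of $S$, and which lies in $\mathcal{P}_S$ by Lemma~\ref{lemma:closedapp}. By additivity of work under concatenation (Lemma~\ref{lemma:addworkconcapp}) one has $W_S(p' \circ p) = W_S(p) + W_S(p')$. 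But $p' \circ p$ is a work process on $S$ with $\lfloor p' \circ p \rfloor_S = \sigma_0 = \lceil p' \circ p \rceil_S$, so by Postulate~\ref{post:firstapp}~(ii) its work cost equals that of any other work process on $S$ with these same input and output states — in particular that of the identity process $\mathrm{id}_S^{\sigma_0}$, which exists by Lemma~\ref{lemma:idexistapp} and has work cost $0$ by definition. Thus $W_S(p) + W_S(p') = W_S(p' \circ p) = 0$, giving $U_S^0 + W_S(p) = U_S^0 - W_S(p')$ and closing the argument.

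The conceptual heart — and the only place where more than a routine appeal to the first law is needed — is the observation that a cyclic work process on $S$ has vanishing total work cost. I expect this to be the main obstacle in the sense that it genuinely combines several prior results (closure of $\mathcal{P}_S$, additivity under concatenation, the existence of identity processes, and part~(ii) of the first law), whereas everything else reduces directly to Postulate~\ref{post:firstapp}. Once $U_S(\sigma)$ is seen to be independent of all these choices, $U_S$ is a well-defined function $\Sigma_S \to \mathbbm{R}$ and hence a state function in the sense of Definition~\ref{def:statefuncapp}.
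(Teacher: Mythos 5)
Your proof is correct and follows essentially the same route as the paper's: existence of $U_S(\sigma)$ from Postulate~\ref{post:firstapp}~(i), within-formula uniqueness from Postulate~\ref{post:firstapp}~(ii), and agreement between the two formulas when both directions exist. The only notable difference is that where the paper settles this last point by citing the sign-flip of work under reverse processes, you derive $W_S(p')=-W_S(p)$ explicitly by concatenating $p'\circ p$ into a cyclic work process on $S$ and comparing it, via Postulate~\ref{post:firstapp}~(ii), with the identity process $\mathrm{id}_S^{\sigma_0}$ of Lemma~\ref{lemma:idexistapp} --- which is in fact the more careful justification, since $p'$ need not be a reverse process of $p$ in the technical sense of Definition~\ref{def:revapp}.
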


\begin{proof}
By the definition and the fact that identity processes exist for any state and have a total work cost of zero it holds $U_S(\sigma_0)=U_S^0$.
This is in agreement with both Eq.~(\ref{eq:Ustrong1app}) and Eq.~(\ref{eq:Ustrong2app}).
For an arbitrary state $\sigma$, Postulate~\ref{post:firstapp} (i) guarantees the existence of at least one of the work processes $p$ and $p'$ used to define $U_S$, hence $U_S(\sigma)$ is never undefined. \\
Furthermore, if both $p$ in one direction and $p'$ in the other direction exist, then there is agreement between the two possibilities of computing $U_S$ due to the fact that the work cost of reverse processes are always the negative of the total work cost of the forward process.\\
Finally, if more than one work process exist with input state $\sigma_0$ and output state $\sigma$, then their work cost must be the same due to Postulate~\ref{post:firstapp} (ii). Hence, it does not matter which one is used to compute $U_S(\sigma)$ by Eq.~(\ref{eq:Ustrong1app}). The same argument also works if $\sigma_0$ and $\sigma$ play exchanged roles, as in Eq.~(\ref{eq:Ustrong2app}).
\end{proof}

To conclude this section, we show that the internal energy function is additive for any system. This is a consequence of the additivity of the work cost functions discussed in the last section.

\begin{prop} [Additivity of $U$]
\label{prop:Uadd}
For a disjointly composite system $S= S_1 \vee S_2$, $S_1\wedge S_2=\emptyset$, that undergoes a work process $p\in\mathcal{P}_S$ the change in internal energy of $S$ equals the sum of the changes of the individual subsystems,
\begin{align}
\label{eq:Uadd}
\Delta U_S = \Delta U_{S_1} + \Delta U_{S_2}\,.
\end{align}
\end{prop}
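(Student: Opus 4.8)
The plan is to reduce the claim to the statement that internal energy is additive as a \emph{state function}, up to an additive constant: for all $\sigma_1\in\Sigma_{S_1}$ and $\sigma_2\in\Sigma_{S_2}$,
\begin{align}
U_{S_1\vee S_2}(\sigma_1\vee\sigma_2) = U_{S_1}(\sigma_1) + U_{S_2}(\sigma_2) + c\,,
\end{align}
where $c$ depends only on the chosen reference states and reference energies. Granting this, the proposition follows at once: applying the relation to the two states $\lfloor p\rfloor_S = \lfloor p\rfloor_{S_1}\vee\lfloor p\rfloor_{S_2}$ and $\lceil p\rceil_S = \lceil p\rceil_{S_1}\vee\lceil p\rceil_{S_2}$ and subtracting, the constant $c$ cancels and leaves exactly $\Delta U_S = \Delta U_{S_1} + \Delta U_{S_2}$. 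Note that $\Delta U_{S_i}(p)$ depends on $p$ only through the reduced initial and final states $\lfloor p\rfloor_{S_i},\lceil p\rceil_{S_i}$, so the process $p$ itself plays no further role once the state-function identity is in hand.

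To establish the additivity relation I would fix the reference state of $S$ to be the join $\sigma_0 := \sigma_0^{(1)}\vee\sigma_0^{(2)}$ of the reference states of the subsystems (a different reference only shifts $c$) and then build a work process on $S$ from $\sigma_0$ to $\sigma_1\vee\sigma_2$ that changes one subsystem at a time. Concretely, let $q_1\in\mathcal{P}_{S_1}$ connect $\sigma_0^{(1)}$ and $\sigma_1$ and let $q_2\in\mathcal{P}_{S_2}$ connect $\sigma_0^{(2)}$ and $\sigma_2$ (these exist by the first law, Postulate~\ref{post:firstapp} (i)). Using the identity processes guaranteed by Lemma~\ref{lemma:idexistapp}, form the two joint work processes $r_1 := q_1\vee\mathrm{id}_{S_2}^{\sigma_0^{(2)}}$ and $r_2 := \mathrm{id}_{S_1}^{\sigma_1}\vee q_2$ (Definition~\ref{def:jointwpapp}); their concatenation $r_2\circ r_1\in\mathcal{P}_S$ is a work process on $S$ transforming $\sigma_0$ into $\sigma_1\vee\sigma_2$. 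Computing $U_S(\sigma_1\vee\sigma_2)$ from Definition~\ref{def:Ustrongapp} via $r_2\circ r_1$, and invoking additivity of work under concatenation (Lemma~\ref{lemma:addworkconcapp}) together with the work cost of joint work processes (Lemma~\ref{lemma:wjointapp}) and the vanishing work cost of identity processes, gives $W_S(r_2\circ r_1)=W_{S_1}(q_1)+W_{S_2}(q_2)$. Since each $q_i$ has one endpoint at the reference $\sigma_0^{(i)}$, the definition of $U_{S_i}$ yields $W_{S_i}(q_i)=U_{S_i}(\sigma_i)-U_{S_i}^0$, and substituting produces the additivity relation with $c = U_S^0 - U_{S_1}^0 - U_{S_2}^0$.

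The main obstacle is that the first law only guarantees a connecting work process in \emph{one} of the two directions, so $q_1$ or $q_2$ may run from $\sigma_i$ \emph{to} $\sigma_0^{(i)}$ rather than the other way, and the composite $r_2\circ r_1$ may correspondingly present itself in reverse. I expect this to be absorbed entirely by the well-definedness of internal energy (Lemma~\ref{lemma:Uwelldefapp}): whenever a process runs in the reverse direction one evaluates $U$ through Eq.~(\ref{eq:Ustrong2app}) instead of Eq.~(\ref{eq:Ustrong1app}), which flips the sign of the corresponding work term, and the identity $W_{S_i}(q_i)=\pm\bigl(U_{S_i}(\sigma_i)-U_{S_i}^0\bigr)$ still reproduces $\Delta U_{S_i}$ with the correct orientation. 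Thus the computation of the previous paragraph goes through in every combination of directions, and the additivity relation, and hence the proposition, holds in full generality. A stylistically cleaner alternative would be to first record the auxiliary fact that $\Delta U_S(r)=W_S(r)$ for \emph{every} work process $r\in\mathcal{P}_S$, proved by concatenating $r$ with a process linking the reference state to an endpoint and applying Postulate~\ref{post:firstapp} (ii); after that the one-subsystem-at-a-time construction becomes a direct calculation. I would mention this shortcut but keep the explicit construction as the backbone of the argument.
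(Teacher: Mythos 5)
Your reduction to the state-function identity $U_{S_1\vee S_2}(\sigma_1\vee\sigma_2)=U_{S_1}(\sigma_1)+U_{S_2}(\sigma_2)+c$ is sound in principle, and your construction does work when $q_1$ and $q_2$ both run \emph{away from} the reference states, or both run \emph{towards} them (in the latter case $r_2\circ r_1$ connects $\sigma_1\vee\sigma_2$ to $\sigma_0$ and Eq.~(\ref{eq:Ustrong2app}) applies). The genuine gap is exactly the point you flag and then wave away: the \emph{mixed} case, say $q_1:\sigma_0^{(1)}\to\sigma_1$ while only $q_2:\sigma_2\to\sigma_0^{(2)}$ exists (and $q_2$ need not be reversible, so you cannot turn it around). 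In that case no arrangement of $q_1$, $q_2$ and identity processes yields a work process on $S$ with endpoints $\sigma_0$ and $\sigma_1\vee\sigma_2$: with your definitions, $r_1=q_1\vee\mathrm{id}_{S_2}^{\sigma_2}$ and $r_2=\mathrm{id}_{S_1}^{\sigma_0^{(1)}}\vee q_2$ both \emph{start} at $\sigma_0^{(1)}\vee\sigma_2$, so neither $r_2\circ r_1$ nor $r_1\circ r_2$ is even defined, and every composite you can build connects a ``diagonal'' pair such as $\sigma_0^{(1)}\vee\sigma_2$ with $\sigma_1\vee\sigma_0^{(2)}$. Since Definition~\ref{def:Ustrongapp} requires a work process with one endpoint at the reference state, switching between Eq.~(\ref{eq:Ustrong1app}) and Eq.~(\ref{eq:Ustrong2app}) cannot absorb this: there is no connecting process to feed into either formula. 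So the claim that the computation ``goes through in every combination of directions'' fails for the two mixed combinations, and the same hole persists in your shortcut via $\Delta U_S(r)=W_S(r)$.

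The repair needs one further idea, and it is the one the paper's own proof is built on: invoke the first law, Postulate~\ref{post:firstapp}~(i), on the \emph{composite} system $S$ to obtain a work process $P\in\mathcal{P}_S$ connecting $\sigma_0$ and $\sigma_1\vee\sigma_2$ in one direction or the other, and then pin down $W_S(P)$ with Postulate~\ref{post:firstapp}~(ii) and a concatenation. For instance, if $P:\sigma_0\to\sigma_1\vee\sigma_2$, then $(\mathrm{id}_{S_1}^{\sigma_1}\vee q_2)\circ P$ and $q_1\vee\mathrm{id}_{S_2}^{\sigma_0^{(2)}}$ are both work processes on $S$ from $\sigma_0$ to $\sigma_1\vee\sigma_0^{(2)}$, whence $W_S(P)+W_{S_2}(q_2)=W_{S_1}(q_1)$, i.e.\ $W_S(P)=\bigl(U_{S_1}(\sigma_1)-U_{S_1}^0\bigr)+\bigl(U_{S_2}(\sigma_2)-U_{S_2}^0\bigr)$, which is your identity; the opposite orientation of $P$ is analogous. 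This is precisely the paper's case~(ii), where the given process $p$ itself plays the role of $P$ and the key step equates $W_S\bigl(p\circ(\mathrm{id}_{S_1}^{\sigma_1^{\mathrm{in}}}\vee p_2)\bigr)$ with $W_S(p_1\vee\mathrm{id}_{S_2})$ --- one reason the paper, unlike your reduction, keeps $p$ in the argument instead of discarding it after passing to a state-function statement. With this step added your proof closes; without it, the mixed case is unproven.
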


\begin{proof} 
We denote initial and final states of $p$ 
by $\sigma_1^{\mathrm{in}}\vee \sigma_2^\mathrm{in} \in \Sigma_{S_1 \vee S_2}$ and 
$\sigma_1^{\mathrm{out}} \vee \sigma_2^\mathrm{out} \in \Sigma_{S_1 \vee S_2}$.
Hence, the short notation from Eq.~\ref{eq:Uadd} can be extended as 
\begin{align}
U_S(\sigma_1^\mathrm{out}\vee\sigma_2^\mathrm{out}) 
- U_S(\sigma_1^{\mathrm{in}}\vee \sigma_2^\mathrm{in}) 
= \big( U_{S_1}(\sigma_1^\mathrm{out}) - U_{S_1}(\sigma_1^\mathrm{in}) \big) + 
\big( U_{S_2}(\sigma_2^\mathrm{out}) - U_{S_2}(\sigma_\mathrm{in}) \big) \,.
\end{align}

There are essentially two cases to distinguish: 
(i) there exist work processes $p_1\in \mathcal{P}_{S_1}$ and $p_2\in\mathcal{P}_{S_2}$ with 
$\lfloor p_1 \rfloor_{S_1} = \sigma_1^\mathrm{in}$, 
$\lceil p_1 \rceil_{S_1} = \sigma_1^\mathrm{out}$ 
and $\lfloor p_2 \rfloor_{S_2} = \sigma_2^\mathrm{in}$, 
$\lceil p_2 \rceil_{S_2} = \sigma_2^\mathrm{out}$, i.e.\ two work processes in the same direction;
(ii) there exist work processes $p_1\in \mathcal{P}_{S_1}$ and $p_2\in\mathcal{P}_{S_2}$ in different directions, w.l.o.g.\ with 
$\lfloor p_1 \rfloor_{S_1} = \sigma_1^\mathrm{out}$, 
$\lceil p_1 \rceil_{S_1} = \sigma_1^\mathrm{in}$ 
and $\lfloor p_2 \rfloor_{S_2}= \sigma_2^\mathrm{out}$, 
$\lceil p_2 \rceil_{S_2}= \sigma_2^\mathrm{in}$.

The case of two work processes both in the opposite direction of (i) can be treated just like (i) with $-\Delta U_S$ instead of $\Delta U_S$.
The first law guarantees that one of these cases always applies.

(i) In this case the joint work process $p_1\vee p_2\in\mathcal{P}_S$ gives rise to exactly the same state transfer as $p$. Hence the work done during each of these work processes must also be the same. By definition of the internal energy it follows
\begin{align}
\begin{split}
\Delta U_S 
&= U_S(\sigma_1^\mathrm{out}\vee\sigma_2^\mathrm{out}) 
- U_S(\sigma_1^{\mathrm{in}}\vee \sigma_2^\mathrm{in}) 
=W_S(p)
= W_S(p_1\vee p_2) \\
&= W_{S_1}(p_1) + W_{S_2}(p_2) 
= \big( U_{S_1}(\sigma_1^\mathrm{out}) - U_{S_1}(\sigma_1^\mathrm{in}) \big) + 
\big( U_{S_2}(\sigma_2^\mathrm{out}) - U_{S_2}(\sigma_2^\mathrm{in}) \big) \\
&=\Delta U_{S_1} + \Delta U_{S_2} \,,
\end{split}
\end{align}
where Lemma~\ref{lemma:wjointapp} was used to decompose the total work in the joint work process.


(ii) If $p_1\in\mathcal{P}_{S_1}$ exists with $\lfloor p_1 \rfloor_{S_1} = \sigma_1^\mathrm{in}$, 
$\lceil p_1 \rceil_{S_1} = \sigma_1^\mathrm{out}$ and $p_2\in\mathcal{P}_{S_"}$ with 
$\lfloor p_2 \rfloor_{S_2} = \sigma_2^\mathrm{out}$, $\lceil p_2 \rceil_{S_2} = \sigma_2^\mathrm{in}$, then $p$ can be concatenated with the joint work process 
$\mathrm{id}_{S_1}^{\sigma_1^\mathrm{in}} \vee p_2$. Furthermore, the concatenated work process 
$p\circ (\mathrm{id}_{S_1}^{\sigma_1^\mathrm{in}} \vee p_2)$ induces the same state transfer as the joint work process $p_1 \vee \mathrm{id}_{S_2}^{\sigma_2^\mathrm{in}}$. Hence, due to Postulate\ref{post:firstapp}, their work costs are equal and we obtain
\begin{align}
\begin{split}
\Delta U_S - \Delta U_{S_2} 
&= \big( U_S(\sigma_1^\mathrm{out} \vee \sigma_2^\mathrm{out}) 
- U_S(\sigma_1^{\mathrm{in}} \vee \sigma_2^\mathrm{in}) \big) 
- \big( U_{S_2}(\sigma_2^\mathrm{out}) - U_{S_2}(\sigma_2^\mathrm{in}) \big) \\
&= W_S(p) + W_{S_2}(p_2) = W_S(p) + W_S(\mathrm{id}_{S_1}^{\sigma_1^\mathrm{in}} \vee p_2) \\
&= W_S(p\circ (\mathrm{id}_{S_1}^{\sigma_1^\mathrm{in}} \vee p_2)) 
= W_S(p_1 \vee \mathrm{id}_{S_2}^{\sigma_2^\mathrm{in}}) = W_{S_1}(p_1) \\
&= U_{S_1}(\sigma_1^\mathrm{out}) - U_{S_1}(\sigma_1^\mathrm{in}) = \Delta U_{S_1}\,.
\end{split}
\end{align}
\end{proof}

We have proved that when a composite system undergoes a work process, then the total change in internal energy on the composite system is equal to the sum of the internal energy changes on the individual subsystems. However, this result automatically extends to arbitrary processes on a composite system, not just work processes, because internal energy $U_S$ is a state function.

\section{Equivalent systems}
\label{app:equivalentsys}

In this section we explain the technical background of the formalisms introducing thermodynamic isomorphisms and equivalent systems. 
These are important concepts in order to be able to talk about copies of systems.
Some ideas presented here are inspired by the Master's Thesis \cite{Krivachy17} which the authors supervised.

In the beginning we have emphasized that elements of $\mathcal{S}$ are seen as specific physical instances rather than types of systems. With a notion of copies (or equivalent systems) it will nevertheless be possible to talk about types of systems.
Two systems are copies of each other if they can be interchanged without any noticeable thermodynamic differences.
To make this more precise we introduce a type of map that called a ``thermodynamic isomorphism''. It maps thermodynamic processes to other thermodynamic processes while preserving the thermodynamic structure that has been introduced in the previous sections.
The mapping should be such that the two specific systems are swapped and basically nothing else happens.\\

Before defining equivalent systems we discuss the less restrictive notion of a thermodynamic isomorphism and its properties. 

\begin{definition}[Thermodynamic isomorphism]
\label{def:isomorphismapp}
The pair of \emph{bijective} maps $\varphi: \mathcal{P}\rightarrow\mathcal{P}$, 
$\varphi_\mathcal{A}: \mathcal{A}\rightarrow\mathcal{A}$
is called a \emph{thermodynamic isomorphism} if for any thermodynamic processes $p,p'\in\mathcal{P}$ and any atomic system $A\in\mathcal{A}$ it holds
\begin{itemize}
	\item [(i)] $\varphi(p'\circ p) = \varphi(p')\circ\varphi(p)$ 
	whenever the concatenation $p'\circ p$ or $\varphi(p')\circ\varphi(p)$
	is defined,
	\item [(ii)] $A$ is involved in $p$ if and only if $\varphi_\mathcal{A}(A)$ is involved in $\varphi(p)$, and
	\item [(iii)] $W_{\varphi_\mathcal{A}(A)}(\varphi(p)) = W_{A}(p)$.
\end{itemize}
\end{definition}

The requirements on an isomorphism emphasize the fundamental structure behind the basic thermodynamic concepts. These are (i) the \emph{thermodynamic processes} with \emph{concatenation}, (ii) \emph{atomic systems}, linked to processes through the notion of \emph{an atomic system being involved in a process}, and (iii) \emph{work}.

We have already established that input and output states are always either both defined or both undefined. Hence (ii) is equivalent to saying that:
$\lfloor \varphi(p)\rfloor_{\varphi_\mathcal{A}(A)}$ is defined $\Leftrightarrow$ 
$\lfloor p \rfloor_{A}$ is defined.

We directly start with the notion of an isomorphism without introducing homomorphisms first, as one might expect when discussing algebraic structures. This is because the notion of an isomorphisms is exactly what we will need for defining equivalences while a detailed discussion of homomorphisms would go beyond the scope of this work. 
Investigations of less restrictive mappings than an isomorphism may be the topic of future work.

Even though both mappings $\varphi$ and $\varphi_\mathcal{A}$ are part of the definition of a thermodynamic isomorphism they are not independent degrees of freedom, as the next lemma shows. 
In this sense, $\varphi_\mathcal{A}$ is determined by $\varphi$ and one could think of coming up with a more minimal Definition~\ref{def:isomorphismapp} such that it only talks about $\varphi$, while $\varphi_\mathcal{A}$ is derived from it afterwards. 
Even though this is possible it would make the definition much less readable and intuitive. Therefore we do not go further into this.
Nevertheless it is good to know that the fundamental mapping is the one on processes and the other concepts depend on this. 

\begin{lemma}[$\varphi$ and $\varphi_\mathcal{A}$ not independent]
\label{lemma:isomdofapp}
Let $\varphi,\varphi_\mathcal{A}$ and $\varphi,\varphi'_\mathcal{A}$ be thermodynamic isomorphisms.
Then $\varphi_\mathcal{A} = \varphi'_\mathcal{A}$. 
\end{lemma}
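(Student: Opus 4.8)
The plan is to show that the single-atom involvement structure already pins down $\varphi_\mathcal{A}$ completely, so that two isomorphisms sharing the same map $\varphi$ on processes cannot differ on $\mathcal{A}$. The key observation is that property (ii) of Definition~\ref{def:isomorphismapp}, together with the surjectivity of $\varphi_\mathcal{A}$, lets us read off the image of a single atomic system from the set of atomic systems involved in the corresponding image process.

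First I would fix an arbitrary atomic system $A \in \mathcal{A}$ and produce a process that involves $A$ and nothing else. By Lemma~\ref{lemma:idexistatomicapp} there exists, for any state $\sigma \in \Sigma_A$, an identity process $\mathrm{id}_A^\sigma \in \mathcal{P}_A$; as a work process on the atomic system $A$ its set of involved atomic systems is exactly the singleton $\mathcal{A}_{\mathrm{id}_A^\sigma} = \{A\}$, since $A = \bigvee\mathcal{A}_{\mathrm{id}_A^\sigma}$ and $A$ is atomic. (If $\Sigma_A$ were empty, i.e.\ $A$ appeared in no process at all, property (ii) would force its image under either map to be an atomic system appearing in no process; such degenerate systems are either excluded from the theory or treated separately, and the relevant case is $\Sigma_A \neq \emptyset$.)

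Next I would apply property (ii) to $p := \mathrm{id}_A^\sigma$. For every atomic system $B$ we have that $B$ is involved in $p$ exactly when $\varphi_\mathcal{A}(B)$ is involved in $\varphi(p)$, and since the only atomic system involved in $p$ is $A$, this means $\varphi_\mathcal{A}(B)$ is involved in $\varphi(p)$ precisely when $B = A$. Because $\varphi_\mathcal{A}$ is surjective, every atomic system involved in $\varphi(p)$ is of the form $\varphi_\mathcal{A}(B)$, so $\mathcal{A}_{\varphi(p)} = \{\varphi_\mathcal{A}(A)\}$. Running the identical argument with the second isomorphism $(\varphi,\varphi'_\mathcal{A})$ — and using that the map on processes $\varphi$, hence the process $\varphi(p)$ and its involved set $\mathcal{A}_{\varphi(p)}$, is the \emph{same} for both pairs — yields $\mathcal{A}_{\varphi(p)} = \{\varphi'_\mathcal{A}(A)\}$. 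Comparing the two descriptions of this single set gives $\{\varphi_\mathcal{A}(A)\} = \{\varphi'_\mathcal{A}(A)\}$, hence $\varphi_\mathcal{A}(A) = \varphi'_\mathcal{A}(A)$; as $A$ was arbitrary, $\varphi_\mathcal{A} = \varphi'_\mathcal{A}$.

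I do not expect a genuine obstacle here, as the lemma is essentially bookkeeping about involvement. The only point requiring care is guaranteeing the existence of a process whose involved set is the singleton $\{A\}$, which is exactly what the existence of identity processes (Lemma~\ref{lemma:idexistatomicapp}) provides, together with the remark that $\mathcal{A}_{\varphi(p)}$ depends only on the common process map $\varphi$ and therefore cannot distinguish which of the two candidate maps on $\mathcal{A}$ is in use. Note that I would deliberately rely on (ii) rather than the work property (iii): knowing $W_{\varphi_\mathcal{A}(A)} = W_{\varphi'_\mathcal{A}(A)}$ as functions on $\mathcal{P}$ would not suffice to conclude equality, since an atomic system involved only in zero-work processes has the same work function as certain others, whereas the involvement argument is insensitive to this.
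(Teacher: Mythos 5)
Your proof is correct and is essentially the paper's own argument: both fix an arbitrary $A\in\mathcal{A}$, take a work process on $A$ alone (you use $\mathrm{id}_A^\sigma$ via Lemma~\ref{lemma:idexistatomicapp}, the paper takes any $p\in\mathcal{P}_A$), and then combine Definition~\ref{def:isomorphismapp}~(ii) with surjectivity of the atomic maps to identify the single involved system of $\varphi(p)$ as both $\varphi_\mathcal{A}(A)$ and $\varphi'_\mathcal{A}(A)$. Your closing remark that only (ii) and bijectivity are needed is exactly the observation the paper makes after its proof.
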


\begin{proof}
Consider an arbitrary atomic system $A\in\mathcal{A}$ and a work process $p\in\mathcal{P}_A$ on $A$. 
By Definition~\ref{def:isomorphismapp} (ii) we know that 
$\lfloor \varphi(p) \rfloor_{\varphi_\mathcal{A}(A')}$ is defined if and only if $A'=A$ and likewise, 
$\lfloor \varphi(p) \rfloor_{\varphi'_\mathcal{A}(A')}$ is defined if and only if $A'=A$. In fact, these statements are equivalent. 

Since the maps $\varphi_\mathcal{A}$ and $\varphi'_\mathcal{A}$ are bijective the statements can rephrased as
$\lfloor \varphi(p) \rfloor_{A''}$ is defined if and only if $A''=\varphi_\mathcal{A}(A)$ and
$\lfloor \varphi(p) \rfloor_{A''}$ is defined if and only if $A''=\varphi'_\mathcal{A}(A)$.
From this it immediately follows that $\varphi'_\mathcal{A}(A) = \varphi_\mathcal{A}(A)$.
\end{proof}

Notice that we only used (ii) of Definition~\ref{def:isomorphismapp} in this proof (together with bijectivity of the maps). This is also the point that would become more complicated in a version of the definition that does not make use of $\varphi_\mathcal{A}$ but allows one to derive it. 

The mapping of atomic systems under a thermodynamic isomorphism can be naturally extended to a mapping on all systems as
\begin{align}
\varphi_\mathcal{S} (S) := \bigvee_{A\in\mathrm{Atom}(S)} \varphi_\mathcal{A} (A)\,.
\end{align}
By construction this mapping is bijective, too.\\

We now aim at showing that thermodynamic isomorphisms preserve all thermodynamic properties that have been introduced so far. 

\begin{lemma}[Work processes under thermodynamic isomorphisms]
\label{lemma:isomwpapp}
Let $\varphi, \varphi_\mathcal{A}$ be a thermodynamic isomorphism 
and $p\in\mathcal{P}$. Then for all $A\in\mathcal{A}$ and all $S\in\mathcal{S}$:
\begin{itemize}
	\item [(i)] $p\in\mathcal{P}_{S} \ \Longleftrightarrow \ \varphi(p) \in\mathcal{P}_{\varphi_\mathcal{S}(S)}$,
	and in particular $p\in\mathcal{P}_A \ \Longleftrightarrow \ \varphi(p) \in\mathcal{P}_{\varphi_\mathcal{A}(A)}$,
	\item [(ii)] $p\in\mathcal{P}_{S}$ is an identity process 
	$\Longleftrightarrow \ \varphi(p) \in\mathcal{P}_{\varphi_\mathcal{S}(S)}$ 
	is an identity process,
	\item [(iii)] $p\in\mathcal{P}_S$ is reversible $\Longleftrightarrow \ \varphi(p)\in\mathcal{P}_{\varphi_\mathcal{S}(S)}$ is reversible
\end{itemize}
\end{lemma}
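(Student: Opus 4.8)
The plan is to prove the three biconditionals in sequence, each building on the previous one, and to reduce every ``$\Leftarrow$'' direction (where not already obtained for free) to the corresponding ``$\Rightarrow$'' direction by first observing that $(\varphi^{-1},\varphi_\mathcal{A}^{-1})$ is again a thermodynamic isomorphism. Bijectivity is immediate, and conditions (i)--(iii) of Definition~\ref{def:isomorphismapp} are each invertible: applying $\varphi^{-1}$ to $\varphi(p'\circ p)=\varphi(p')\circ\varphi(p)$ yields (i) for the inverse, while (ii) and (iii) are symmetric under the substitution $p\mapsto\varphi^{-1}(q)$, $A\mapsto\varphi_\mathcal{A}^{-1}(B)$. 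With this remark I would argue only one direction of (ii) and (iii) explicitly.

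For (i), the key observation is that the set of atomic systems involved in $\varphi(p)$ is exactly the $\varphi_\mathcal{A}$-image of the set of atomic systems involved in $p$. This is a direct consequence of condition (ii) of Definition~\ref{def:isomorphismapp} together with the bijectivity of $\varphi_\mathcal{A}$: a given $B\in\mathcal{A}$ is involved in $\varphi(p)$ iff $\varphi_\mathcal{A}^{-1}(B)$ is involved in $p$. Since $\varphi_\mathcal{A}$ is injective we have $\mathrm{Atom}(\varphi_\mathcal{S}(S))=\{\varphi_\mathcal{A}(A)\mid A\in\mathrm{Atom}(S)\}=\varphi_\mathcal{A}(\mathrm{Atom}(S))$. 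Now $p\in\mathcal{P}_S$ means the involved atoms of $p$ are precisely $\mathrm{Atom}(S)$; applying the bijection $\varphi_\mathcal{A}$, this is equivalent to the involved atoms of $\varphi(p)$ being precisely $\varphi_\mathcal{A}(\mathrm{Atom}(S))=\mathrm{Atom}(\varphi_\mathcal{S}(S))$, i.e.\ $\varphi(p)\in\mathcal{P}_{\varphi_\mathcal{S}(S)}$. This establishes (i) as a genuine biconditional, and the special case $S=A$ atomic, for which $\varphi_\mathcal{S}(A)=\varphi_\mathcal{A}(A)$, is immediate.

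For (ii), an identity process on $S$ is characterized (on top of $p\in\mathcal{P}_S$, handled by (i)) by $\lfloor p\rfloor_S=\lceil p\rceil_S$ and $W_A(p)=0$ for all $A\in\mathrm{Atom}(S)$. The work condition transfers cleanly: by condition (iii) of Definition~\ref{def:isomorphismapp}, $W_{\varphi_\mathcal{A}(A)}(\varphi(p))=W_A(p)$, so $W_A(p)=0$ for all $A\in\mathrm{Atom}(S)$ iff $W_B(\varphi(p))=0$ for all $B\in\mathrm{Atom}(\varphi_\mathcal{S}(S))$, again because $\varphi_\mathcal{A}$ maps $\mathrm{Atom}(S)$ bijectively onto $\mathrm{Atom}(\varphi_\mathcal{S}(S))$. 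The cyclicity condition is the delicate point. I would invoke the induced state map $\varphi_\Sigma$, which by Lemma~\ref{lemma:isomstatebijapp} is a well-defined bijection with $\varphi_\Sigma(\lfloor p\rfloor_S)=\lfloor\varphi(p)\rfloor_{\varphi_\mathcal{S}(S)}$, and I would additionally need that it commutes with output states, $\varphi_\Sigma(\lceil p\rceil_S)=\lceil\varphi(p)\rceil_{\varphi_\mathcal{S}(S)}$. The latter can be extracted by writing $\sigma:=\lceil p\rceil_S$, using the identity process $\mathrm{id}_S^\sigma$ (Lemma~\ref{lemma:idexistapp}) as the representative defining $\varphi_\Sigma(\sigma)$, and applying the concatenation-preservation (i) of Definition~\ref{def:isomorphismapp} to the defined concatenation $\mathrm{id}_S^\sigma\circ p$: definedness of $\varphi(\mathrm{id}_S^\sigma)\circ\varphi(p)=\varphi(\mathrm{id}_S^\sigma\circ p)$ forces $\lceil\varphi(p)\rceil_{\varphi_\mathcal{S}(S)}=\lfloor\varphi(\mathrm{id}_S^\sigma)\rfloor_{\varphi_\mathcal{S}(S)}=\varphi_\Sigma(\sigma)$. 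Granting both relations, $\lfloor p\rfloor_S=\lceil p\rceil_S$ gives $\lfloor\varphi(p)\rfloor_{\varphi_\mathcal{S}(S)}=\lceil\varphi(p)\rceil_{\varphi_\mathcal{S}(S)}$, and injectivity of $\varphi_\Sigma$ delivers the converse. I expect this state-map bookkeeping to be the main obstacle of the lemma.

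For (iii), I would show that $\varphi$ carries a reverse process to a reverse process. If $p$ is reversible with reverse $p^\mathrm{rev}\in\mathcal{P}_S$, then $p^\mathrm{rev}\circ p$ is an identity process on $S$; by condition (i) of Definition~\ref{def:isomorphismapp} the concatenation $\varphi(p^\mathrm{rev})\circ\varphi(p)=\varphi(p^\mathrm{rev}\circ p)$ is defined, and by part (ii) of this lemma (just proved) it is an identity process on $\varphi_\mathcal{S}(S)$. Since part (i) gives $\varphi(p^\mathrm{rev})\in\mathcal{P}_{\varphi_\mathcal{S}(S)}$, this exhibits $\varphi(p^\mathrm{rev})$ as a reverse process for $\varphi(p)$, so $\varphi(p)$ is reversible. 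The remaining ``$\Leftarrow$'' directions of (ii) and (iii) then follow by running the same arguments for the inverse isomorphism $(\varphi^{-1},\varphi_\mathcal{A}^{-1})$, as arranged at the outset.
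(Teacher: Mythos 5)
Your parts (i) and (iii) are sound and essentially match the paper's own proof: for (i), Definition~\ref{def:isomorphismapp}~(ii) together with bijectivity of $\varphi_\mathcal{A}$ identifies the involved atoms of $\varphi(p)$ with $\varphi_\mathcal{A}(\mathcal{A}_p)=\mathrm{Atom}(\varphi_\mathcal{S}(S))$, and for (iii) one maps a reverse process to a reverse process using Definition~\ref{def:isomorphismapp}~(i) and part (ii). Your device of establishing once that $(\varphi^{-1},\varphi_\mathcal{A}^{-1})$ is again a thermodynamic isomorphism and using it for all converse directions is legitimate and self-contained (it uses only Definition~\ref{def:isomorphismapp}); the paper handles the converses more tersely and proves the inverse statement explicitly only later, in Lemma~\ref{lemma:equiinverseapp}.

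The gap is in part (ii). You invoke the induced state map $\varphi_\Sigma$, its well-definedness and its bijectivity (Definition~\ref{def:isomstateapp}, Lemma~\ref{lemma:isomstatebijapp}) to transfer the cyclicity condition. But in the paper's development these results come \emph{after} the present lemma and depend on it: well-definedness of $\varphi_\Sigma$ rests on Lemma~\ref{lemma:isomstateapp}, whose proof uses parts (i) and (ii) of this very lemma (in particular that $\varphi(\mathrm{id}_A^\sigma)$ is an identity process on $\varphi_\mathcal{A}(A)$), and the surjectivity argument in Lemma~\ref{lemma:isomstatebijapp} likewise assumes that preimages of identity processes are identity processes. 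So as organized, your argument for (ii) is circular. Note also that even your ``self-contained'' derivation of $\varphi_\Sigma(\lceil p\rceil_S)=\lceil\varphi(p)\rceil_{\varphi_\mathcal{S}(S)}$ presupposes that $\varphi_\Sigma(\sigma)$ is independent of the representative process used to define it, which is exactly the downstream well-definedness statement you may not yet use. The fix is short and is what the paper does: for a work process $p\in\mathcal{P}_S$, cyclicity on $S$ is equivalent, by Postulate~\ref{post:conc}, to the definedness of the self-concatenation $p\circ p$ (the shared involved atoms are exactly $\mathrm{Atom}(S)$), and Definition~\ref{def:isomorphismapp}~(i) transfers definedness of $p\circ p$ to definedness of $\varphi(p)\circ\varphi(p)$ and back. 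This yields both directions of the cyclicity transfer at once, with no reference to state maps; combined with your (correct) treatment of the work costs via Definition~\ref{def:isomorphismapp}~(iii), it closes part (ii).
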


\begin{proof} 
Let $A'\in\mathcal{A}$ and $S'\in\mathcal{S}$ arbitrary.
\begin{itemize}
	\item [(i)] By Definition~\ref{def:wp} $p\in\mathcal{P}_{S}$ if and only if 
	$\lfloor p \rfloor_{A'}$ is defined for $A'\in\mathrm{Atom}(S)$ only.
	Using Definition~\ref{def:isomorphismapp} (ii) this implies
	\begin{align}
	\begin{split}
	\lfloor p \rfloor_{A'} \text{ def. iff } A'\in\mathrm{Atom}(S) 
	&\Leftrightarrow \lfloor \varphi(p) \rfloor_{\varphi_\mathcal{A}(A')} \text{ def. iff } A'\in\mathrm{Atom}(S) \\
	&\Leftrightarrow \lfloor \varphi(p) \rfloor_{\varphi_\mathcal{A}(A')} \text{ def. iff } \varphi_\mathcal{A}(A')\in\mathrm{Atom}(\varphi_\mathcal{S}(S)) \\
	&\Leftrightarrow \lfloor \varphi(p) \rfloor_{A''} \text{ def. iff } A''\in\mathrm{Atom}(\varphi_\mathcal{S}(S)) \\
	\end{split}
	\end{align}
	In the second line we used $A'\in\mathrm{Atom}(S) \Leftrightarrow \varphi_\mathcal{A}(A')\in\mathrm{Atom}(\varphi_\mathcal{S}(S)$), while the third line holds because $\varphi_\mathcal{A}$ is bijective. 
	The third line, however, is equivalent to $\lfloor \varphi(p) \rfloor_{A''}$ being defined for $A''\in\mathrm{Atom}(\varphi_\mathcal{S}(S))$ and thus true if and only if $\varphi(p)\in\mathcal{P}_{\varphi_\mathcal{S}(S)}$.
	\item [(ii)] Let $p\in\mathcal{P}_S$ be an identity process. By Definition~\ref{def:id} this is the case if and only if it is cyclic on $S$, i.e.\ if and only if $p$ can be concatenated with itself, and $W_A(p) = 0$ for all $A\in\mathrm{Atom}(S)$.
	By Definition~\ref{def:isomorphismapp} (i) it holds
	\begin{align}
	p\circ p \text{ def.} \Leftrightarrow \varphi(p) \circ \varphi(p) \text{ def.}\,,
	\end{align}
	which means that $\varphi(p)$ is a cyclic work process on $\varphi_\mathcal{S}(S)$ if and only
	if $p$ is a cyclic work process on $S$. Furthermore, for $A\in\mathrm{Atom}(S)$ we have 
	$\varphi_\mathcal{S}(A)\in\mathrm{Atom}(\varphi_\mathcal{S}(S))$ and 
	\begin{align}
	W_{\varphi_\mathcal{S}(A)}(\varphi(p)) = W_A(p)
	\end{align}		
	for all $A\in\mathrm{Atom}(S)$. Hence one side of this equation is zero if and only if the other is.
	\item [(iii)] By Definition~\ref{def:rev} $p\in\mathcal{P}_S$ is reversible 
	if there exists a reverse work process on the same system. Thus
	\begin{align}
	\begin{split}
	p\in\mathcal{P}_S \text{ is reversible}
	&\Leftrightarrow \exists p^\mathrm{rev}\in\mathcal{P}_S \text{ s.t. } 
	p^\mathrm{rev}\circ p \text{ is identity} \\
	&\Rightarrow \varphi(p^\mathrm{rev})\circ\varphi(p) \text{ is identity and } 
	\varphi(p^\mathrm{rev})\in\mathcal{P}_{\varphi_\mathcal{S}(S)} \\
	&\Rightarrow \varphi(p^\mathrm{rev})\in\mathcal{P}_{\varphi_\mathcal{S}(S)} 
	\text{ is reversible} \,.
	\end{split}
	\end{align}
	We used the previously proved (i) and (ii).
	Since $\varphi$ is bijective by Definition~\ref{def:isomorphismapp}, the argument also works 
	in the other direction. 
\end{itemize}
\end{proof}

\begin{lemma}[States under thermodynamic isomorphisms]
\label{lemma:isomstateapp}
Let $p,q\in\mathcal{P}$ and $A\in\mathcal{A}$. Furthermore, let $\varphi, \varphi_\mathcal{A}$ be a thermodynamic isomorphism. Then:
\begin{itemize}
	\item [(i)] $\lfloor p \rfloor_A = \lfloor q \rfloor_A \ \Longleftrightarrow \ 
	\lfloor \varphi(p) \rfloor_{\varphi_\mathcal{A}(A)} 
	= \lfloor \varphi(q) \rfloor_{\varphi_\mathcal{A}(A)}$ ,
	\item [(ii)] $\lfloor p \rfloor_A = \lceil q \rceil_A \ \Longleftrightarrow \ 
	\lfloor \varphi(p) \rfloor_{\varphi_\mathcal{A}(A)} 
	= \lceil \varphi(q) \rceil_{\varphi_\mathcal{A}(A)}$ ,
	\item [(iii)]$\lceil p \rceil_A = \lceil q \rceil_A \ \Longleftrightarrow \ 
	\lceil \varphi(p) \rceil_{\varphi_\mathcal{A}(A)} 
	= \lceil \varphi(q) \rceil_{\varphi_\mathcal{A}(A)}$ .
\end{itemize}
\end{lemma}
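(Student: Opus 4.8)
The plan is to prove the three equivalences of Lemma~\ref{lemma:isomstateapp} essentially simultaneously, since they are minor variations of one another. The key observation is that the defining properties of a thermodynamic isomorphism (Definition~\ref{def:isomorphismapp}) do not mention states directly, so I cannot argue about input/output states by fiat. Instead, the natural strategy is to encode the state-matching conditions as concatenability conditions, because concatenation \emph{is} preserved by $\varphi$ (property (i) of the definition) and \emph{does} reference states through Postulate~\ref{post:conc}. The crucial bridge is that for an atomic system $A$, the statement $\lceil q \rceil_A = \lfloor p \rfloor_A$ is exactly the condition under which $p \circ q$ is defined on $A$; and since the involvement structure is preserved, $p \circ q$ is defined if and only if $\varphi(p) \circ \varphi(q)$ is defined.

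First I would prove (ii), as it is the most directly connected to concatenation. Consider atomic $A$ and processes $p, q$ with $A$ involved in both (the cases where $A$ is not involved in one of them are handled by noting that then the relevant states are undefined on both sides by Definition~\ref{def:isomorphismapp}~(ii), so the equivalence holds vacuously). The condition $\lfloor p \rfloor_A = \lceil q \rceil_A$ is precisely what Postulate~\ref{post:conc} requires for $p \circ q$ to be concatenable along $A$. To isolate the single atomic system $A$, I would restrict attention to processes that agree on $A$ but involve possibly other systems, and use that concatenability is decided atomic-system by atomic-system. Then by Definition~\ref{def:isomorphismapp}~(i), $p \circ q$ is defined iff $\varphi(p) \circ \varphi(q)$ is defined, and by (ii) of the same definition, $A$ being involved in $p$ (resp.\ $q$) corresponds to $\varphi_\mathcal{A}(A)$ being involved in $\varphi(p)$ (resp.\ $\varphi(q)$). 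Combining these gives that the output state of $\varphi(q)$ on $\varphi_\mathcal{A}(A)$ matches the input state of $\varphi(p)$ on $\varphi_\mathcal{A}(A)$, which is the right-hand side of (ii). Bijectivity of $\varphi$ yields the reverse implication by applying the argument to $\varphi^{-1}$.

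Once (ii) is in hand, (i) and (iii) follow by a reduction trick using identity processes, whose existence is guaranteed by Lemma~\ref{lemma:idexistatomicapp} and which map to identity processes under $\varphi$ by Lemma~\ref{lemma:isomwpapp}~(ii). Specifically, to obtain (i) I would relate the input state $\lfloor q \rfloor_A$ to an output state by pre-composing with an appropriate identity process $\mathrm{id}_A^\sigma$: one has $\lfloor q \rfloor_A = \lceil \mathrm{id}_A^{\lfloor q\rfloor_A} \rceil_A$, so matching of input states of $p$ and $q$ can be rephrased as matching an input state against an output state, reducing (i) to (ii). An analogous post-composition with an identity process handles (iii). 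I would spell out that $\varphi$ sends $\mathrm{id}_A^\sigma$ to an identity process on $\varphi_\mathcal{A}(A)$, so the identity-process scaffolding survives the isomorphism.

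The main obstacle I anticipate is the isolation step: the definition of an isomorphism speaks only about whole processes and involvement of atomic systems, but Lemma~\ref{lemma:isomstateapp} makes a claim about a \emph{single} atomic system $A$ while $p$ and $q$ may involve many others. I must be careful that the concatenability of $p \circ q$ is governed by agreement on \emph{all} shared atomic systems, not just $A$, so I cannot naively read off the condition on $A$ alone from mere definedness of $p \circ q$. The clean way around this is to construct auxiliary processes that involve only $A$ (work processes on $A$ obtained by suitable use of identity processes and the freedom-of-description postulate, or simply by choosing $p, q \in \mathcal{P}_A$ as the lemma's universally quantified $p,q$ already allow), so that concatenability is decided by $A$ exclusively; then the equivalence transfers cleanly. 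Handling the undefined (not-involved) cases separately, and verifying that the identity-process reduction respects involvement, are the remaining details to check, but these are routine given the earlier lemmas.
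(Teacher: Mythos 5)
Your proposal is correct and rests on the same key device as the paper's own proof: an identity process $\mathrm{id}_A^{\sigma}\in\mathcal{P}_A$ used as a probe, so that concatenability with it encodes state equality on $A$ alone (sidestepping the shared-systems isolation problem you correctly identified), transported through $\varphi$ via preservation of concatenation and of identity processes, with the converse obtained from the inverse maps. The only differences are organizational—the paper proves (i) directly this way and notes (ii), (iii) follow identically, whereas you prove (ii) first and reduce (i), (iii) to it—though note that your parenthetical fallback of ``simply choosing $p,q\in\mathcal{P}_A$'' would prove only a special case of the universally quantified statement; your identity-process fix is the one that actually works.
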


\begin{proof}
We prove (i). The remaining points (ii) and (iii) follow in the very same way. Again we make use of Definition~\ref{def:isomorphismapp} (i) and (ii), where in particular(i) is important. 
We also use Lemma~\ref{lemma:isomwpapp} (i). \\
Let $\sigma := \lfloor p \rfloor_A = \lfloor q \rfloor_A$ and consider
$\mathrm{id}_A^\sigma\in\mathcal{P}_A$. Both 
$p \circ \mathrm{id}_A^\sigma$ and $q \circ \mathrm{id}_A^\sigma$ are defined and hence by
Definition~\ref{def:isomorphismapp} (i) so are $\varphi(p) \circ \varphi(\mathrm{id}_A^\sigma)$ 
and $\varphi(q) \circ \varphi(\mathrm{id}_A^\sigma)$.
Furthermore, $\varphi(\mathrm{id}_A^\sigma)$ is an identity process on $\varphi_\mathcal{A}(A)$, has in particular defined input and output states on this system, 
and we know that by Definition~\ref{def:isomorphismapp} (ii) both 
$\lfloor \varphi(p) \rfloor_{\varphi_\mathcal{A}(A)}$ 
and $\lfloor \varphi(q) \rfloor_{\varphi_\mathcal{A}(A)}$ are defined.
Hence, they must be equal, as Postulate~\ref{post:conc} (concatenation of processes) requires
$\lfloor \varphi(p) \rfloor_{\varphi_\mathcal{A}(A)} 
= \lfloor \varphi(q) \rfloor_{\varphi_\mathcal{A}(A)}$.
The opposite direction can be argued in the same way with the inverse maps.
\end{proof}

With Lemma~\ref{lemma:isomstateapp} it is possible to define a mapping of the states induced by a thermodynamic isomorphism.

\begin{definition}[States under thermodynamic isomorphisms]
\label{def:isomstateapp}
Given a thermodynamic isomorphism $\varphi, \varphi_\mathcal{A}$ together with its corresponding map $\varphi_\mathcal{S}$ we define the corresponding mapping of atomic states
$\varphi_\Sigma: \bigcup_{A\in\mathcal{A}} \Sigma_A \rightarrow \bigcup_{A\in\mathcal{A}} \Sigma_A$
by
\begin{align}
\varphi_\Sigma(\lfloor p \rfloor_A) := \lfloor \varphi(p) \rfloor_{\varphi_\mathcal{A}(A)}
\end{align}
and its extension to arbitrary states 
$\varphi_\Sigma: \bigcup_{S\in\mathcal{S}} \Sigma_S \rightarrow \bigcup_{S\in\mathcal{S}} \Sigma_S$
by
\begin{align}
\varphi_\Sigma(\lfloor p \rfloor_S) := \lfloor \varphi(p) \rfloor_{\varphi_\mathcal{S}(S)} \,.
\end{align}
\end{definition}

Remember that we assumed that for two different systems, in particular for two different atomic systems, their state spaces are disjoint. This means that the unions
$\bigcup_{A\in\mathcal{A}} \Sigma_A$ and $\bigcup_{S\in\mathcal{S}} \Sigma_S$ are disjoint unions. 
It is immediate that the two functions agree on the intersection of their domains, namely on 
$\bigcup_{A\in\mathcal{A}} \Sigma_A$. This is a consequence of the fact that any system's state can be written as the composition of its atomic subsystems' states, i.e.\
$\lfloor p \rfloor_S = \bigvee_{A\in\mathrm{Atom}(S)} \lfloor p \rfloor_S$
for any $p\in\mathcal{P}$.
In particular, every property of $\varphi_\Sigma$ proved for atomic systems automatically extends to $\varphi_\Sigma$ in general.

Lemma~\ref{lemma:isomstateapp} proves that Definition~\ref{def:isomstateapp} is well-defined. It says that even for two different processes with equal input states (or output states, or input state on one process and output state on the other) the mapping of the states under $\varphi_\Sigma$ is unique. 
Hence the choice of defining $\varphi_\Sigma$ using the input state function $\lfloor \cdot \rfloor$ was arbitrary -- we could have done it with the output state function $\lceil \cdot \rceil$ as well. 
Furthermore, by Definition~\ref{def:isomorphismapp} the output of this function is defined if and only if the input is defined. \\

We remark that by definition $\varphi_\Sigma(\Sigma_{A}) \subset \Sigma_{\varphi_\mathcal{A}(A)}$ for any atomic system $A\in\mathcal{A}$. 
In fact, $\varphi_\Sigma$ is bijective, as the next lemma shows. In particular, this implies that 
$\varphi_\Sigma(\Sigma_{A}) = \Sigma_{\varphi_\mathcal{A}(A)}$ since the domain (and the codomain, which are equal sets here) is a disjoint union of such sets. 

\begin{lemma}[$\varphi_\Sigma$ is bijective]
\label{lemma:isomstatebijapp}
The mapping of states $\varphi_\Sigma$ associated with a thermodynamic isomorphisms $\varphi,\varphi_\mathcal{A}$ defined in Definition~\ref{def:isomstateapp} is bijective.
\end{lemma}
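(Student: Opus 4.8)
The plan is to show that $\varphi_\Sigma$ is injective and surjective by exploiting the fact that the underlying process map $\varphi$ is a bijection with a well-behaved inverse. The crucial observation is that if $\varphi,\varphi_\mathcal{A}$ is a thermodynamic isomorphism, then so is its inverse $\varphi^{-1}, \varphi_\mathcal{A}^{-1}$: the axioms in Definition~\ref{def:isomorphismapp} are all symmetric under inversion. Property (i) is preserved because $\varphi^{-1}(p'\circ p) = \varphi^{-1}(p')\circ\varphi^{-1}(p)$ follows by applying $\varphi^{-1}$ to both sides of the corresponding identity for $\varphi$; property (ii) is an ``if and only if'', hence symmetric; and property (iii) rearranges to $W_{A}(p) = W_{\varphi_\mathcal{A}(A)}(\varphi(p))$, which upon substituting $p \mapsto \varphi^{-1}(p)$ and $A \mapsto \varphi_\mathcal{A}^{-1}(A)$ gives exactly the work-preservation statement for the inverse pair.

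Once this is established, I would verify that the state map associated to the inverse isomorphism, call it $(\varphi^{-1})_\Sigma$, is a two-sided inverse of $\varphi_\Sigma$. First I would note that $(\varphi^{-1})_\mathcal{A} = \varphi_\mathcal{A}^{-1}$ and $(\varphi^{-1})_\mathcal{S} = \varphi_\mathcal{S}^{-1}$, the latter because $\varphi_\mathcal{S}$ acts by applying $\varphi_\mathcal{A}$ atomic-subsystem-wise and inverting this is the same as applying $\varphi_\mathcal{A}^{-1}$ atomic-subsystem-wise. Then, for any state $\sigma \in \Sigma_A$, write $\sigma = \lfloor p \rfloor_A$ for some process $p$ (which exists since $\Sigma_A$ is by definition the union of well-defined outputs of the state functions). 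Applying the definitions,
\begin{align}
(\varphi^{-1})_\Sigma\big(\varphi_\Sigma(\lfloor p \rfloor_A)\big)
= (\varphi^{-1})_\Sigma\big(\lfloor \varphi(p) \rfloor_{\varphi_\mathcal{A}(A)}\big)
= \lfloor \varphi^{-1}(\varphi(p)) \rfloor_{\varphi_\mathcal{A}^{-1}(\varphi_\mathcal{A}(A))}
= \lfloor p \rfloor_A\,,
\end{align}
which shows $(\varphi^{-1})_\Sigma \circ \varphi_\Sigma = \mathrm{id}$. The symmetric computation with the roles of $\varphi$ and $\varphi^{-1}$ exchanged gives $\varphi_\Sigma \circ (\varphi^{-1})_\Sigma = \mathrm{id}$. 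A map possessing a two-sided inverse is bijective, completing the argument for atomic states, and by the disjoint-union remark preceding the lemma the result extends automatically to all of $\bigcup_{S\in\mathcal{S}}\Sigma_S$.

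The one subtlety I would be careful about — and which I expect to be the main obstacle — is well-definedness of the composition at each step, namely checking that every state arising in the computation is genuinely the image of some process and that $(\varphi^{-1})_\Sigma$ is applied to states in its domain. This is where Lemma~\ref{lemma:isomstateapp} does the real work: it guarantees that $\varphi_\Sigma$ does not depend on which representing process $p$ is chosen for $\sigma$, so the intermediate expression $\lfloor \varphi(p)\rfloor_{\varphi_\mathcal{A}(A)}$ is an unambiguous element of $\Sigma_{\varphi_\mathcal{A}(A)}$, and the known relation $\varphi_\Sigma(\Sigma_A) \subset \Sigma_{\varphi_\mathcal{A}(A)}$ ensures the second state map is being evaluated on a legitimate input. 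An alternative, slightly more pedestrian route would prove injectivity and surjectivity separately: injectivity follows directly from Lemma~\ref{lemma:isomstateapp}(i) read as ``$\varphi_\Sigma(\sigma) = \varphi_\Sigma(\sigma')$ implies $\sigma = \sigma'$'', while surjectivity follows from the surjectivity of $\varphi$ onto $\mathcal{P}$ together with the fact that every target state $\tau \in \Sigma_{\varphi_\mathcal{A}(A)}$ is $\lfloor q \rfloor_{\varphi_\mathcal{A}(A)}$ for some $q$, and writing $q = \varphi(p)$ exhibits $\tau = \varphi_\Sigma(\lfloor p\rfloor_A)$. I would favour the inverse-isomorphism approach since it is conceptually cleaner and reuses the symmetry of the axioms rather than re-deriving surjectivity by hand.
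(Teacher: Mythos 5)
Your proof is correct, but it takes a genuinely different route from the paper. The paper proves injectivity and surjectivity directly, and its main tool is identity processes: for surjectivity it takes $\mathrm{id}_S^\sigma$, pulls it back through the bijection $\varphi$ to an identity process $\mathrm{id}_{S'}^{\sigma'}$ with $\varphi_\mathcal{S}(S')=S$, and reads off $\varphi_\Sigma(\sigma')=\sigma$; for injectivity it splits into two cases, using disjointness of state spaces of different systems when $S_1\neq S_2$, and, when $S_1=S_2$, the fact that $\mathrm{id}_{S_1}^{\sigma_1}\circ \mathrm{id}_{S_1}^{\sigma_2}$ is undefined for $\sigma_1\neq\sigma_2$ together with Definition~\ref{def:isomorphismapp}~(i), which forces the images to be non-concatenable identity processes and hence to have distinct states. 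You instead verify that $\varphi^{-1},\varphi_\mathcal{A}^{-1}$ is again a thermodynamic isomorphism and exhibit $(\varphi^{-1})_\Sigma$ as a two-sided inverse of $\varphi_\Sigma$. Your argument is sound: the inversion-symmetry of (i)--(iii) is exactly as you describe, and your composition computation is the same one the paper itself performs later in Lemma~\ref{lemma:equiinverseapp} (stated there only for isomorphisms realizing $A_1\hat=A_2$, but the parts you need never use properties (iv)--(v), so there is no circularity). What each approach buys: yours is more structural, avoids the case analysis, and delivers the identity $(\varphi_\Sigma)^{-1}=(\varphi^{-1})_\Sigma$ for free, effectively front-loading part of Lemma~\ref{lemma:equiinverseapp}; the paper's is more elementary, needing only bijectivity of $\varphi$ and $\varphi_\mathcal{S}$ plus the concatenation postulate, without first certifying the inverse pair as an isomorphism. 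One small point to tighten: $\Sigma_A$ is by definition the union of outputs of \emph{both} $\lfloor\cdot\rfloor_A$ and $\lceil\cdot\rceil_A$, so writing an arbitrary $\sigma$ as $\lfloor p\rfloor_A$ is not purely definitional; it is guaranteed by the existence of identity processes (Lemma~\ref{lemma:idexistatomicapp}), which is the same device the paper relies on throughout.
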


\begin{proof}
We show that the map $\varphi_\Sigma$ is surjective and injective.\\
For surjectivity consider $S\in\mathcal{S}$ and $\sigma\in\Sigma_S$ together with the corresponding identity process $\mathrm{id}_S^\sigma\in\mathcal{P}_S$. 
Since both $\varphi$ and $\varphi_\mathcal{S}$ are bijective, we know that there exist $S'\in\mathcal{S}$ such that $\varphi_\mathcal{S}(S')=S$ and $\mathrm{id}_{S'}^{\sigma'}\in\mathcal{P}_{S'}$ such that $\varphi(\mathrm{id}_{S'}^{\sigma'}) = \mathrm{id}_S^\sigma$.
Therefore, 
\begin{align}
\varphi_\Sigma(\sigma') 
= \varphi_\Sigma(\lfloor \mathrm{id}_{S'}^{\sigma'} \rfloor_{S'})
= \lfloor \varphi(\mathrm{id}_{S'}^{\sigma'}) \rfloor_{\varphi_\mathcal{S}(S')}
= \lfloor \mathrm{id}_S^\sigma \rfloor_S = \sigma\,.
\end{align}
For injectivity, let $\sigma_1\in\Sigma_{S_1}$ and $\sigma_2\in\Sigma_{S_2}$ such that $\sigma_1 \neq \sigma_2$.
Consider again the identity processes $\mathrm{id}_{S_i}^{\sigma_i}\in\mathcal{P}_{S_i}$ for $i=1,2$. 
Then we can write
\begin{align}
\varphi_\Sigma(\sigma_i) 
= \varphi_\Sigma(\lfloor \mathrm{id}_{S_i}^{\sigma_i} \rfloor_{S_i})
= \lfloor \varphi(\mathrm{id}_{S_i}^{\sigma_i}) \rfloor_{\varphi_\mathcal{S}(S_i)}\,.
\end{align}
If $S_1\neq S_2$, $\varphi_\Sigma(\sigma_i)\in\Sigma_{\varphi_\mathcal{S}(S_i)}$ are contained in disjoint sets (because $\varphi_\mathcal{S}$ is bijective and state spaces of different systems are disjoint). Hence $\varphi_\Sigma(\sigma_2) \neq \varphi_\Sigma(\sigma_2)$.\\
If $S_1=S_2$, $\sigma_1\neq\sigma_2$ implies that 
$\mathrm{id}_{S_1}^{\sigma_1} \circ \mathrm{id}_{S_2}^{\sigma_2}$ is not defined (Postulate~\ref{post:conc}). 
By Definition~\ref{def:isomorphismapp} (i) this implies that 
$\varphi(\mathrm{id}_{S_1}^{\sigma_1}) \circ \varphi(\mathrm{id}_{S_2}^{\sigma_2})$ is not defined either and hence
also in this case $\varphi_\Sigma(\sigma_2) \neq \varphi_\Sigma(\sigma_2)$.
\end{proof}

So far we have discussed the mapping of processes, systems and states with their properties. 
For all of this Definition~\ref{def:isomorphismapp} (iii) has not been touched.
When showing that the internal energy function also transforms naturally, this point will become important. 

\begin{lemma}[Internal energy under thermodynamic isomorphisms]
\label{lemma:isomUapp}
Let $\varphi, \varphi_\mathcal{A}$ be a thermodynamic isomorphism with associated mappings $\varphi_\mathcal{S}$ and $\varphi_\Sigma$ for arbitrary systems and states, respectively. 
Then for all $S\in\mathcal{S}$ and $p\in\mathcal{P}$ 
\begin{align}
\Delta U_{\varphi_\mathcal{S}(S)}(\varphi(p)) = \Delta U_S(p)\,.
\end{align}
\end{lemma}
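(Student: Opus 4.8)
The statement to prove is that internal energy changes are preserved under a thermodynamic isomorphism: $\Delta U_{\varphi_\mathcal{S}(S)}(\varphi(p)) = \Delta U_S(p)$ for all $S \in \mathcal{S}$ and $p \in \mathcal{P}$. The key idea is that $\Delta U_S$ is computed entirely from work costs of work processes (Definition~\ref{def:Ustrongapp}), and Definition~\ref{def:isomorphismapp}~(iii) guarantees that work costs are preserved under $\varphi$. So the whole proof reduces to expressing $\Delta U_S(p)$ as a work cost, transporting that work cost through the isomorphism, and recognizing the result as $\Delta U_{\varphi_\mathcal{S}(S)}(\varphi(p))$.

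**Main steps.** First I would reduce the general process $p$ to a work process on $S$. Since $\Delta U_S(p) = U_S(\lceil p \rceil_S) - U_S(\lfloor p \rfloor_S)$ depends only on the input and output states (internal energy is a state function, Lemma~\ref{lemma:Uwelldefapp}), and by the first law (Postulate~\ref{post:firstapp}~(i)) there exists a work process $q \in \mathcal{P}_S$ on $S$ realizing exactly this state change in one direction or the other, I can write $\Delta U_S(p) = \pm W_S(q)$, with the sign depending on the direction of $q$. Second, I would apply the isomorphism: by Lemma~\ref{lemma:isomwpapp}~(i), $\varphi(q) \in \mathcal{P}_{\varphi_\mathcal{S}(S)}$ is a work process on $\varphi_\mathcal{S}(S)$, and by Lemma~\ref{lemma:isomstateapp}, the input/output states of $\varphi(q)$ are $\varphi_\Sigma(\lfloor q \rfloor_S)$ and $\varphi_\Sigma(\lceil q \rceil_S)$, which match the endpoints $\varphi_\Sigma(\lfloor p \rfloor_S)$, $\varphi_\Sigma(\lceil p \rceil_S)$ of $\varphi(p)$. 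Third, I would compute $W_{\varphi_\mathcal{S}(S)}(\varphi(q))$ by summing over atomic subsystems and invoking Definition~\ref{def:isomorphismapp}~(iii) termwise:
\begin{align}
W_{\varphi_\mathcal{S}(S)}(\varphi(q))
= \sum_{A \in \mathrm{Atom}(S)} W_{\varphi_\mathcal{A}(A)}(\varphi(q))
= \sum_{A \in \mathrm{Atom}(S)} W_A(q)
= W_S(q)\,,
\end{align}
where I use that $\mathrm{Atom}(\varphi_\mathcal{S}(S)) = \varphi_\mathcal{A}(\mathrm{Atom}(S))$ by construction of $\varphi_\mathcal{S}$, so the sum reindexes cleanly. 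Finally, since $\varphi(q)$ is a work process on $\varphi_\mathcal{S}(S)$ with the correct endpoints, Definition~\ref{def:Ustrongapp} gives $\Delta U_{\varphi_\mathcal{S}(S)}(\varphi(p)) = \pm W_{\varphi_\mathcal{S}(S)}(\varphi(q)) = \pm W_S(q) = \Delta U_S(p)$, with the sign matching throughout.

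**The main obstacle.** The only genuinely delicate point is bookkeeping the \emph{sign / direction} consistently. The first law only guarantees a work process in \emph{one} of the two directions, so $\Delta U_S(p)$ might be $+W_S(q)$ (if $q$ goes $\lfloor p \rfloor_S \to \lceil p \rceil_S$) or $-W_S(q')$ (if only the reverse work process $q'$ exists). I would handle this by treating the two cases symmetrically: in either case Lemma~\ref{lemma:isomwpapp}~(i) and Lemma~\ref{lemma:isomstateapp} guarantee $\varphi$ maps $q$ (resp.\ $q'$) to a work process on $\varphi_\mathcal{S}(S)$ with correspondingly ordered endpoints, and the work-cost equality from~(iii) carries the same sign across, so both formulas in Definition~\ref{def:Ustrongapp} transform consistently. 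A clean way to avoid repetition is to note that the identity $W_{\varphi_\mathcal{S}(S)}(\varphi(q)) = W_S(q)$ holds for \emph{any} work process $q$ on $S$ (not just one of a fixed direction), so whichever direction the first law supplies, the same transported work cost yields the result. Everything else is routine once this is fixed.
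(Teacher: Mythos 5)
Your core argument coincides with the paper's: reduce the internal-energy change to the work cost of a work process supplied by the first law, push that process through $\varphi$, and invoke Definition~\ref{def:isomorphismapp}~(iii) to preserve the cost; your sign bookkeeping is handled correctly, and the organizational difference (you treat arbitrary $S$ directly and reindex the sum over $\mathrm{Atom}(S)$ using $\mathrm{Atom}(\varphi_\mathcal{S}(S)) = \varphi_\mathcal{A}(\mathrm{Atom}(S))$, while the paper argues atomically and then invokes additivity of $U$, Proposition~\ref{prop:Uadd}) is immaterial.

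There is, however, a genuine gap: the lemma quantifies over \emph{all} $p\in\mathcal{P}$, including processes in which some (or even all) atomic subsystems of $S$ are \emph{not involved}. For such $p$ the states $\lfloor p \rfloor_S$ and $\lceil p \rceil_S$ are undefined, so your very first step --- writing $\Delta U_S(p) = U_S(\lceil p \rceil_S) - U_S(\lfloor p \rfloor_S)$ and invoking Postulate~\ref{post:firstapp}~(i) to produce a work process $q\in\mathcal{P}_S$ with exactly these endpoints --- is simply not available. In that regime $\Delta U_S(p)$ is defined instead by the convention $\Delta U_S(p) = \sum_{A\in\mathrm{Atom}(S)} \Delta U_A(p)$ with $\Delta U_A(p)=0$ for uninvolved $A$ (Eq.~(\ref{eq:deltaUp})), and the claim for uninvolved atoms needs its own (short) argument: by Definition~\ref{def:isomorphismapp}~(ii), $A\notin\mathcal{A}_p$ if and only if $\varphi_\mathcal{A}(A)\notin\mathcal{A}_{\varphi(p)}$, so both sides vanish. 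This is exactly the case the paper disposes of first in its proof. Your argument becomes complete if you split $\mathrm{Atom}(S)$ into the atoms involved in $p$ and the rest, run your reduction on the composition of the involved atoms (where the endpoints do exist and the first law applies), kill the uninvolved atoms via~(ii), and sum.
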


\begin{proof}	
We use the previous results together with Definition~\ref{def:isomorphismapp} (iii), which is the only part of the definition on equivalent atomic systems that makes a statement about work costs. In addition, we directly apply the first law (Postulate~\ref{post:first}). 
Let $S=A\in\mathcal{A}$ be an atomic system for the moment.\\
If $A$ is not involved in $p$ the neither is $\varphi_\mathcal{A}(A)$ involved in $\varphi(p)$ (due to Definition~\ref{def:isomorphismapp} (ii)). Hence in this case
\begin{align}
\Delta U_A (p) = 0 = \Delta U_{\varphi_\mathcal{A}(A)}(\varphi(p))\,.
\end{align}
If $A$ is involved in $p$, suppose there exists a work process $q\in\mathcal{P}_{A}$ with 
$\lfloor q \rfloor_{A} = \lfloor p \rfloor_A$ and $\lceil q \rceil_{A} = \lceil p \rceil_A$.
By definition of $U_{A}$ it follows $\Delta U_{A}(p) = W_{A}(q)$. 
We now consider the image of $q$ under $\varphi$. It follows
$\lfloor \varphi(q) \rfloor_{\varphi_\mathcal{A}(A)} = \varphi_\Sigma(\lfloor q \rfloor_{A}) = \varphi_\Sigma(\lfloor p \rfloor_A)$ 
and
$\lceil \varphi(q) \rceil_{\varphi_\mathcal{A}(A)} = \varphi_\Sigma(\lceil q \rceil_{A}) = \varphi_\Sigma(\lceil p \rceil_A)$
together with $W_{A}(q) = W_{\varphi_\mathcal{A}(A)}(\varphi(q))$.
Hence, using again the definition of internal energy, 
\begin{align}
\Delta U_{\varphi_\mathcal{A}(A)}(\varphi(p)) = W_{\varphi_\mathcal{A}(A)}(\varphi(q)) = W_{A}(q) = \Delta U_{A}(p)\,.
\end{align}
If such a $q\in\mathcal{P}_A$ does not exist, then there exists one in the other direction, as the first law guarantees, and the proof works analogously (with a minus sign).\\
Finally, the proof extends to arbitrary $S\in\mathcal{A}$ due to the additivity of internal energy under composition.
\end{proof}

In Lemma~\ref{lemma:isomUapp} we only talk about differences of internal energies as the reference energies of $S$ and $\varphi_\mathcal{S}(S)$ can be chosen independently according to Definition~\ref{def:Ustrong}.
However, one can always adjust the reference energies of the two atomic systems such that
equality of internal energies for corresponding states
also holds for absolute values of internal energies.\\

We conclude that a thermodynamic isomorphism $\varphi, \varphi_\mathcal{A}$, gives rise to corresponding mappings $\varphi_\mathcal{S}$ on the set of systems and $\varphi_\Sigma$ on the set of all states. These mappings are compatible in a natural way and preserve all concepts introduced so far, as discussed in the results above. 
The definition of a thermodynamic isomorphism is hence a sensible way of introducing such a concept.\\

It is now possible to further restrict the attention to isomorphisms with special properties. 
In particular, if it only swaps two atomic system $A_1,A_2\in\mathcal{A}$ we use it to define the notion of equivalent atomic systems.

\begin{definition}[Equivalence of atomic systems]
\label{def:equiatomapp}
Two atomic systems $A_1,A_2\in\mathcal{A}$ are called \emph{equivalent}, and we write $A_1 \hat = A_2$, if there exists a thermodynamic isomorphism $\varphi, \varphi_\mathcal{A}$ which additionally fulfils 

\begin{itemize}
	\item [(iv)] $\varphi_\mathcal{A}(A_1)=A_2, \varphi_\mathcal{A}(A_2)=A_1$ and 
	$\varphi_\mathcal{A}(A)=A$ for all $A\in\mathcal{A}\smallsetminus\{A_1,A_2\}$, and

	\item [(v)] for $A\in\mathcal{A} \smallsetminus\{A_1,A_2\}$ it holds
	$\lfloor \varphi(p)\rfloor_A=\lfloor p \rfloor_A$ and 
	$\lceil \varphi(p)\rceil_A=\lceil p \rceil_A$.
\end{itemize}
We say that $\varphi, \varphi_\mathcal{A}$ is a \emph{thermodynamic isomorphism for $A_1\hat=A_2$}.
\end{definition}

Essentially, such an isomorphism swaps the thermodynamic roles of the two atomic systems with all their belongings such as states, work processes, work functions and so on, such that ``nothing else changes''. The notion of a thermodynamic isomorphisms allows us to make this precise.

Looking at the mapping of states $\varphi_\Sigma$ for an equivalence we see that (iv) and (v) imply the following.
For all atomic states $\sigma\in\bigcup_{A\in\mathcal{A}\smallsetminus\{A_1,A_2\}} \Sigma_A$
it holds $\varphi_\Sigma(\sigma) = \sigma$. That is, $\varphi_\Sigma$ is equal to 
the identity map outside $\Sigma_{A_1}\cup\Sigma_{A_2}$.
On the other hand, $\varphi_\Sigma |_{\Sigma_{A_1}}$ and $\varphi_\Sigma |_{\Sigma_{A_2}}$ are bijective and imply a 1-1 correspondence of states in $\Sigma_{A_1}$ with states in $\Sigma_{A_2}$.\\

In order to prove that $\hat=$ is an equivalence relation on $\mathcal{A}$ (i.e.\ in order to justify the name given to $A_1\hat=A_2$) we have to work a little more. 

\begin{lemma}[Inverse of thermodynamic isomorphism]
\label{lemma:equiinverseapp}
Let $\varphi, \varphi_\mathcal{A}$ be a thermodynamic isomorphism for the atomic systems $A_1\hat=A_2\in\mathcal{A}$. 
Then its inverse $\varphi^{-1}, \varphi_\mathcal{A}^{-1}$ is also a thermodynamic isomorphism for $A_1\hat=A_2$ and the corresponding maps for systems and states fulfil
\begin{align}
(\varphi^{-1})_\mathcal{S} = (\varphi_\mathcal{S})^{-1} \quad \text{and} \quad 
(\varphi^{-1})_\Sigma = (\varphi_\Sigma)^{-1} \,.
\end{align}
\end{lemma}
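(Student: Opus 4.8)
The plan is to verify directly that the inverse pair $\varphi^{-1}, \varphi_\mathcal{A}^{-1}$ satisfies all of (i)--(v) of Definitions~\ref{def:isomorphismapp} and~\ref{def:equiatomapp}, and then to establish the two identities for the induced maps by showing each composes with its counterpart to the identity and invoking bijectivity. Since $\varphi$ and $\varphi_\mathcal{A}$ are bijections, their inverses exist and are bijections, so the structural requirement on the maps is immediate. The remaining axioms follow by systematic relabelling: given $q,q'\in\mathcal{P}$, write $p:=\varphi^{-1}(q)$ and $p':=\varphi^{-1}(q')$, so that $q=\varphi(p)$ and $q'=\varphi(p')$. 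For (i), property (i) for $\varphi$ asserts $\varphi(p'\circ p)=\varphi(p')\circ\varphi(p)$ whenever either concatenation is defined, so the concatenations $p'\circ p$ and $\varphi(p')\circ\varphi(p)=q'\circ q$ are defined simultaneously; hence $q'\circ q$ defined gives $\varphi^{-1}(q'\circ q)=p'\circ p=\varphi^{-1}(q')\circ\varphi^{-1}(q)$, which is exactly (i) for $\varphi^{-1}$. Conditions (ii) and (iii) transfer in the same way: applying (ii) and (iii) for $\varphi$ to the atomic system $B:=\varphi_\mathcal{A}^{-1}(A)$ and the process $p:=\varphi^{-1}(q)$ yields precisely the involvement equivalence and the work equality $W_A(q)=W_{\varphi_\mathcal{A}^{-1}(A)}(\varphi^{-1}(q))$ required of $\varphi^{-1}$.

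Next I would check the equivalence-specific conditions (iv) and (v). Inverting the three equations in (iv) for $\varphi_\mathcal{A}$ gives $\varphi_\mathcal{A}^{-1}(A_1)=A_2$, $\varphi_\mathcal{A}^{-1}(A_2)=A_1$ and $\varphi_\mathcal{A}^{-1}(A)=A$ for all $A\notin\{A_1,A_2\}$, so (iv) holds for $\varphi_\mathcal{A}^{-1}$. For (v), relabelling once more with $p:=\varphi^{-1}(q)$, property (v) for $\varphi$ gives $\lfloor q\rfloor_A=\lfloor\varphi(p)\rfloor_A=\lfloor p\rfloor_A=\lfloor\varphi^{-1}(q)\rfloor_A$ for every $A\notin\{A_1,A_2\}$, and likewise for $\lceil\cdot\rceil_A$; definedness is consistent on both sides because $\varphi_\mathcal{A}^{-1}(A)=A$ together with (ii) preserves involvement of $A$. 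This establishes that $\varphi^{-1},\varphi_\mathcal{A}^{-1}$ is a thermodynamic isomorphism for $A_1\hat=A_2$.

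Finally I would prove the two induced-map identities. The key preliminary observation is that, since $\varphi_\mathcal{A}$ is injective, $\varphi_\mathcal{S}(S)=\bigvee_{A\in\mathrm{Atom}(S)}\varphi_\mathcal{A}(A)$ is a disjoint composition of distinct atomic systems, whence $\mathrm{Atom}(\varphi_\mathcal{S}(S))=\{\varphi_\mathcal{A}(A):A\in\mathrm{Atom}(S)\}$. Using this, $(\varphi^{-1})_\mathcal{S}(\varphi_\mathcal{S}(S))=\bigvee_{B\in\mathrm{Atom}(\varphi_\mathcal{S}(S))}\varphi_\mathcal{A}^{-1}(B)=\bigvee_{A\in\mathrm{Atom}(S)}\varphi_\mathcal{A}^{-1}(\varphi_\mathcal{A}(A))=\bigvee_{A\in\mathrm{Atom}(S)}A=S$, so $(\varphi^{-1})_\mathcal{S}\circ\varphi_\mathcal{S}=\mathrm{id}_\mathcal{S}$; since $\varphi_\mathcal{S}$ is a bijection this yields $(\varphi^{-1})_\mathcal{S}=(\varphi_\mathcal{S})^{-1}$. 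For the states, take any $\sigma=\lfloor p\rfloor_S$ and compute $(\varphi^{-1})_\Sigma(\varphi_\Sigma(\sigma))=(\varphi^{-1})_\Sigma(\lfloor\varphi(p)\rfloor_{\varphi_\mathcal{S}(S)})=\lfloor\varphi^{-1}(\varphi(p))\rfloor_{(\varphi^{-1})_\mathcal{S}(\varphi_\mathcal{S}(S))}=\lfloor p\rfloor_S=\sigma$, using the identity just proved for $(\varphi^{-1})_\mathcal{S}$. Hence $(\varphi^{-1})_\Sigma\circ\varphi_\Sigma=\mathrm{id}$, and since $\varphi_\Sigma$ is bijective by Lemma~\ref{lemma:isomstatebijapp}, $(\varphi^{-1})_\Sigma=(\varphi_\Sigma)^{-1}$. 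The whole argument is routine relabelling; the only points demanding care are the partial-definedness clauses in (i) and (v) — one must use that axiom (i) for $\varphi$ forces the two concatenations to be defined simultaneously, and that involvement, hence definedness of states, is preserved on the fixed atomic systems.
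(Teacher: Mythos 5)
Your proposal is correct, and the bulk of it — the point-by-point verification of (i)--(v) via the relabelling $p:=\varphi^{-1}(q)$, $B:=\varphi_\mathcal{A}^{-1}(A)$ — is exactly the paper's argument up to swapping the names of the given processes and their images; your care with the simultaneous-definedness clause in (i) matches the paper's use of it. The one place where you genuinely diverge is the identity $(\varphi^{-1})_\mathcal{S} = (\varphi_\mathcal{S})^{-1}$: the paper exploits the special swap structure of an equivalence isomorphism, noting that (iv) makes $\varphi_\mathcal{A}$ an involution, so $(\varphi^{-1})_\mathcal{S} = \varphi_\mathcal{S}$ and $\varphi_\mathcal{S}\circ\varphi_\mathcal{S} = \mathrm{id}$ together give the claim, whereas you prove $(\varphi^{-1})_\mathcal{S}\circ\varphi_\mathcal{S} = \mathrm{id}_\mathcal{S}$ generically, using injectivity of $\varphi_\mathcal{A}$ and the atomic decomposition $\mathrm{Atom}(\varphi_\mathcal{S}(S)) = \{\varphi_\mathcal{A}(A) : A\in\mathrm{Atom}(S)\}$, without ever invoking (iv). Your route is marginally longer but strictly more general — it establishes the induced-map identities for \emph{any} thermodynamic isomorphism with a bijective inverse pair, not just those of swap type — while the paper's is shorter but tied to the equivalence setting. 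Both treatments of $(\varphi^{-1})_\Sigma = (\varphi_\Sigma)^{-1}$ then coincide: a one-sided composition identity plus bijectivity of $\varphi_\Sigma$ (Lemma~\ref{lemma:isomstatebijapp}) suffices, exactly as in the paper.
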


\begin{proof}
First of all, since $\varphi^{-1}$ and $\varphi_\mathcal{A}^{-1}$ are the inverses of bijective maps, They are bijective themselves. Let now $p,p'\in\mathcal{P}$ and define their images under $\varphi^{-1}$ to be $q:=\varphi^{-1}(p)$ and $q':=\varphi^{-1}(p')$.
Likewise, let $A\in\mathcal{A}$ and define $A' := \varphi^{-1}_\mathcal{A}(A)$.
We check Definitions~\ref{def:isomorphismapp} and~\ref{def:equiatomapp} point by point.
\begin{itemize}
	\item [(i)] 
	If $p'\circ p$ is defined, then 
	\begin{align}
	\begin{split}
	\varphi^{-1}(p'\circ p) 
	&= \varphi^{-1}\big(\varphi(q') \circ \varphi(q)\big)
	= \varphi^{-1}\big(\varphi(q'\circ q)\big) 
	= q'\circ q = \varphi^{-1}(p')\circ\varphi^{-1}(p)\,,
	\end{split}
	\end{align}
	where Definition~\ref{def:isomorphismapp} (i) was used for $\varphi$ in the second equality.
	Writing the equation in terms of $q$ and $q'$ it follows that $\varphi^{-1}$ satisfies (i). 
	\item [(ii)] Using (ii) for $\varphi,\varphi_\mathcal{A}$, we immediately obtain
	\begin{align}
	\lfloor p \rfloor_A \text{ def.} 
	\Leftrightarrow \lfloor \varphi(q) \rfloor_{\varphi_\mathcal{A}(A')} \text{ def.}
	\Leftrightarrow \lfloor q \rfloor_{A'} \text{ def.}
	\Leftrightarrow \lfloor \varphi^{-1}(p) \rfloor_{\varphi^{-1}_\mathcal{A}(A)} \text{ def.}\,.
	\end{align}
	\item [(iii)] We directly check
	$W_{\varphi^{-1}_\mathcal{A}(A)}(\varphi^{-1}(p)) 
	= W_{A'}(q) = W_{\varphi_\mathcal{A}(A')}(\varphi(q)) = W_{A}(p)$ .
	\item [(iv)] Since $\varphi_\mathcal{A}$ is the simple map that 
	swaps $A_1$ with $A_2$ it follows that its inverse does exactly the same.
	\item[(v)] For $A\neq A_1,A_2$ we compute 
	$\lfloor \varphi^{-1}(p)\rfloor_A = \lfloor q \rfloor_A = \lfloor \varphi(q) \rfloor_A = \lfloor p \rfloor_A$ and similarly for $\lceil \cdot \rceil_\cdot$.

\end{itemize}
Finally, we consider the corresponding maps on systems and states. Since $\varphi^{-1}, \varphi_\mathcal{A}^{-1}$ is a thermodynamic isomorphism for the same two systems as $\varphi$ we obtain $(\varphi^{-1})_\mathcal{S} = \varphi_\mathcal{S}$, and because $\varphi_\mathcal{S}(\varphi_\mathcal{S}(S)) = S$ for all $S\in\mathcal{S}$ this implies that $(\varphi^{-1})_\mathcal{S} = (\varphi_\mathcal{S})^{-1}$.\\
As for the mapping of states, by Definition~\ref{def:isomstateapp} for every $S\in\mathcal{S}$
and $p\in\mathcal{P}$ it holds
\begin{align}
\varphi_\Sigma(\lfloor p \rfloor_S) = \lfloor \varphi(p) \rfloor_{\varphi_\mathcal{S}(S)} \quad
\text{and} \quad (\varphi^{-1})_\Sigma(\lfloor p \rfloor_S) 
= \lfloor \varphi^{-1}(p)\rfloor_{\varphi^{-1}_\mathcal{S}(S)} \,.
\end{align}
Therefore 
\begin{align}
\varphi_\Sigma \left( (\varphi^{-1})_\Sigma(\lfloor p \rfloor_S) \right) 
&= \varphi_\Sigma\left( \lfloor \varphi^{-1}(p)\rfloor_{\varphi^{-1}_\mathcal{S}(S)} \right)
= \lfloor \varphi(\varphi^{-1}(p)) \rfloor_{\varphi_\mathcal{S}(\varphi^{-1}_\mathcal{S}(S))}
= \lfloor p \rfloor_S 
\end{align}
and likewise $\varphi^{-1}_\Sigma( \varphi_\Sigma(\lfloor p \rfloor_S)) = \lfloor p \rfloor_S$, 
which means that $(\varphi^{-1})_\Sigma = (\varphi_\Sigma)^{-1}$.
\end{proof}


\begin{lemma}[Conjugating thermodynamic isomorphisms]
\label{lemma:equiconjapp}
Let $A_1,A_2,A_3\in\mathcal{A}$ be three different atomic systems.
If $\varphi, \varphi_\mathcal{A}$ is a thermodynamic isomorphism for $A_1\hat=A_2$ and $\psi, \psi_\mathcal{A}$ is one for $A_2\hat=A_3$, then $\varphi^{-1}\circ\psi\circ\varphi, \varphi_\mathcal{A}^{-1}\circ\psi_\mathcal{A}\circ\varphi_\mathcal{A}$ is a thermodynamic isomorphism for $A_1\hat=A_3$.
Furthermore, the corresponding mapping of systems and states transform analogously, 
\begin{align}
(\varphi^{-1} \circ \psi \circ \varphi)_\mathcal{S} 
= \varphi^{-1}_\mathcal{S} \circ \psi_\mathcal{S} \circ \varphi_\mathcal{S} \quad
\text{and} \quad
(\varphi^{-1} \circ \psi \circ \varphi)_\Sigma
= \varphi^{-1}_\Sigma \circ \psi_\Sigma \circ \varphi_\Sigma \,.
\end{align}
\end{lemma}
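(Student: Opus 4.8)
The plan is to set $\chi := \varphi^{-1}\circ\psi\circ\varphi$ on $\mathcal{P}$ and $\chi_\mathcal{A} := \varphi_\mathcal{A}^{-1}\circ\psi_\mathcal{A}\circ\varphi_\mathcal{A}$ on $\mathcal{A}$, and then to verify the five conditions (i)--(v) of Definitions~\ref{def:isomorphismapp} and~\ref{def:equiatomapp} for the pair $(\chi,\chi_\mathcal{A})$ with respect to the equivalence $A_1\hat=A_3$. Bijectivity of both maps is immediate since each is a composition of bijections; here I would invoke Lemma~\ref{lemma:equiinverseapp}, which guarantees that $\varphi^{-1},\varphi_\mathcal{A}^{-1}$ is itself a thermodynamic isomorphism (in fact one for $A_1\hat=A_2$), so that in particular $\varphi_\mathcal{A}^{-1}$ again transposes $A_1,A_2$ and fixes the rest.

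First I would check that $(\chi,\chi_\mathcal{A})$ is a thermodynamic isomorphism, i.e.\ conditions (i)--(iii). These are purely ``functorial'' and follow by chaining the corresponding properties of $\varphi$, $\psi$ and $\varphi^{-1}$. For (i), if $p'\circ p$ is defined then applying Definition~\ref{def:isomorphismapp}~(i) successively to $\varphi$, then $\psi$, then $\varphi^{-1}$ turns $\chi(p'\circ p)$ into $\chi(p')\circ\chi(p)$, with definedness propagating at each step because the same clause asserts the equivalence of definedness of the two concatenations. Condition (ii) follows because ``being involved'' is preserved by each of the three maps in turn, and (iii) follows by applying the work-preservation clause (iii) three times so that $W_{\chi_\mathcal{A}(A)}(\chi(p))$ collapses to $W_A(p)$.

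Next I would verify the swap condition (iv). Since $\varphi_\mathcal{A}$ transposes $A_1,A_2$ and fixes every other atomic system, and $\psi_\mathcal{A}$ transposes $A_2,A_3$ and fixes the rest, both are involutions, so $\chi_\mathcal{A}$ is the conjugate $(A_1\,A_2)\circ(A_2\,A_3)\circ(A_1\,A_2)$. A direct computation then gives $\chi_\mathcal{A}(A_1)=A_3$, $\chi_\mathcal{A}(A_3)=A_1$, and $\chi_\mathcal{A}(A)=A$ for every other $A$; the only non-obvious case is $A=A_2$, where the inner and outer applications of $\varphi_\mathcal{A}$ cancel against the action of $\psi_\mathcal{A}$ to return $A_2$, confirming that $A_2$ is fixed.

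Finally, for (v) and the stated composition formulas, I would first record that $\chi_\mathcal{S}=\varphi_\mathcal{S}^{-1}\circ\psi_\mathcal{S}\circ\varphi_\mathcal{S}$ and $\chi_\Sigma=\varphi_\Sigma^{-1}\circ\psi_\Sigma\circ\varphi_\Sigma$. The system formula follows from $\mathrm{Atom}(\varphi_\mathcal{S}(S))=\varphi_\mathcal{A}(\mathrm{Atom}(S))$ together with the atom-wise definition of $\varphi_\mathcal{S}$; the state formula follows by applying the defining relation $\varphi_\Sigma(\lfloor p\rfloor_S)=\lfloor\varphi(p)\rfloor_{\varphi_\mathcal{S}(S)}$ of Definition~\ref{def:isomstateapp} three times in succession. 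To obtain (v), I would show that $\chi_\Sigma$ fixes $\Sigma_A$ for every $A\notin\{A_1,A_3\}$: for $A\notin\{A_1,A_2,A_3\}$ all three maps act as the identity, while for the delicate case $A=A_2$ the map $\varphi_\Sigma$ carries $\Sigma_{A_2}$ into $\Sigma_{A_1}$, on which $\psi_\Sigma$ acts trivially (since $A_1\notin\{A_2,A_3\}$), after which $\varphi_\Sigma^{-1}$ returns the state unchanged. Combining this with $\chi_\mathcal{A}(A)=A$ from (iv) and the identity $\lfloor\chi(p)\rfloor_A=\chi_\Sigma(\lfloor p\rfloor_A)$ yields $\lfloor\chi(p)\rfloor_A=\lfloor p\rfloor_A$, and symmetrically for $\lceil\cdot\rceil$. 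I expect this $A=A_2$ case to be the main obstacle, since $A_2$ is the one system moved non-trivially by both component maps yet left invariant by their conjugate; getting the bookkeeping of the disjoint state spaces right there is the crux of the argument.
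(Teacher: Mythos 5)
Your proposal is correct and follows essentially the same route as the paper's proof: bijectivity via Lemma~\ref{lemma:equiinverseapp}, conditions (i)--(iii) by chaining the corresponding properties of $\varphi$, $\psi$ and $\varphi^{-1}$, condition (iv) by the direct computation showing the conjugate transposes $A_1,A_3$ while fixing $A_2$, and (v) together with the composition formulas from the defining relations of $\varphi_\mathcal{S}$ and $\varphi_\Sigma$. If anything, your treatment of the delicate $A=A_2$ case in (v) --- routing the state through $\Sigma_{A_1}$, where $\psi_\Sigma$ acts trivially, and back via $\varphi_\Sigma^{-1}$ --- is the cleaner version of the paper's displayed chain of equalities, which implicitly identifies states across the disjoint state spaces $\Sigma_{A_1}$ and $\Sigma_{A_2}$.
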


\begin{proof}
As before, the fact that $\varphi^{-1}\circ\psi\circ\varphi$ and 
$\varphi^{-1}_\mathcal{A}\circ\psi_\mathcal{A}\circ\varphi_\mathcal{A}$ are bijective is trivial to see. 
Let $p,p'\in\mathcal{P}$. The proof uses several times the fact that $\varphi^{-1}$ is a thermodynamic isomorphism for $A_1\hat=A_2$, established by Lemma~\ref{lemma:equiinverseapp}.
It is important that the three atomic systems are all different. Otherwise the intuition that the conjugated isomorphism essentially swaps $A_1$ with $A_3$ is wrong.\footnote{If $A_2=A_3$, for instance, then $\psi$ would not do anything (except for maybe relabelling some processes). Hence the total map would not do anything either as everything done by $\varphi$ would be undone by $\varphi^{-1}$ and $A_1$ would not be swapped with $A_3=A_2$. }
We will write $\varphi^{-1}(\psi(\varphi(\cdot)))$ instead of $\varphi^{-1}\circ\psi\circ\varphi$ in order not to confuse the concatenation of these mapping with the concatenation of thermodynamic processes.\\
Let $p,p'\in\mathcal{P}$ and $A\in\mathcal{A}$.
\begin{itemize}
	\item [(i)] 
	Suppose $p'\circ p$ is defined, then:
	\begin{align}
	\begin{split}
	\varphi^{-1}(\psi(\varphi(p'\circ p))) 
	&= \varphi^{-1}(\psi(\varphi(p')\circ\varphi(p))) 
	= \varphi^{-1}(\psi(\varphi(p'))\circ\psi(\varphi(p)))  \\
	&= \varphi^{-1}(\psi(\varphi(p')))\circ\varphi^{-1}(\psi(\varphi(p))) \,.
	\end{split}
	\end{align}
	If the right hand side is defined, the equation must hold, too.
	\item [(ii)] Since all involved maps are thermodynamic isomorphisms and fulfil (ii)
	individually, it easy to see shell by shell that
	\begin{align}
	\lfloor p \rfloor_A \text{ def.} 
	\Leftrightarrow \cdots \Leftrightarrow 
	\lfloor \varphi^{-1}(\psi(\varphi(p)))) \rfloor_{\varphi^{-1}_\mathcal{A}(\psi_\mathcal{A}(\varphi_\mathcal{A}(A)))} \text{ def.}\,.
	\end{align}
	\item [(iii)] With the same logic as in (i) and (ii) it follows
	\begin{align}
	W_{\varphi^{-1}_\mathcal{A}(\psi_\mathcal{A}(\varphi_\mathcal{A}(A)))}(\varphi^{-1}(\psi(\varphi(p)))) = W_{\psi_\mathcal{A}(\varphi_\mathcal{A}(A))}(\psi(\varphi(p))) 
	= W_{\varphi_\mathcal{A}(A)}(\varphi(p)) = W_A(p)\,.
	\end{align}
	\item [(iv)] We check
	\begin{align}
	\begin{split}
	\varphi^{-1}_\mathcal{A}(\psi_\mathcal{A}(\varphi_\mathcal{A}(A_1)))
	&= \varphi^{-1}_\mathcal{A}(\psi_\mathcal{A}(A_2))
	= \varphi^{-1}_\mathcal{A}(A_3) 
	= A_3 \,, \\
	\varphi^{-1}_\mathcal{A}(\psi_\mathcal{A}(\varphi_\mathcal{A}(A_2)))
	&= \varphi^{-1}_\mathcal{A}(\psi_\mathcal{A}(A_1))
	= \varphi^{-1}_\mathcal{A}(A_1) 
	= A_2 \,, \\
	\varphi^{-1}_\mathcal{A}(\psi_\mathcal{A}(\varphi_\mathcal{A}(A_3)))
	&= \varphi^{-1}_\mathcal{A}(\psi_\mathcal{A}(A_3))
	= \varphi^{-1}_\mathcal{A}(A_2) 
	= A_1 \,.
	\end{split}
	\end{align}
	For all other atomic systems $A\in\mathcal{A}\smallsetminus\{A_1,A_2,A_3\}$ it follows
	$\varphi^{-1}_\mathcal{A}(\psi_\mathcal{A}(\varphi_\mathcal{A}(A)))=A$ since they are
	mapped to themselves under all three mappings.
	\item [(v)] For $A\in\mathcal{A}\smallsetminus\{A_1,A_2,A_3\}$ this follows shell by shell.
	For $A_2$ we shortly check
	\begin{align}
	\begin{split}
	\lfloor \varphi^{-1}(\psi(\varphi(p)))\rfloor_{A_2} 
	&= \lfloor \psi(\varphi(p))\rfloor_{\varphi_\mathcal{A}(A_2)} 
	= \lfloor \psi(\varphi(p))\rfloor_{A_1} 
	= \lfloor \varphi(p)\rfloor_{\psi^{-1}_\mathcal{A}(A_1)} \\
	&= \lfloor \varphi(p)\rfloor_{A_1} 
	= \lfloor p\rfloor_{\varphi^{-1}_\mathcal{A}(A_1)}
	= \lfloor p \rfloor_{A_2}\,.
	\end{split}
	\end{align}
	The same follows for $\lceil \cdot \rceil_\cdot$.
	
\end{itemize}
Checking $(\varphi^{-1} \circ \psi \circ \varphi)_\mathcal{S} 
= \varphi^{-1}_\mathcal{S} \circ \psi_\mathcal{S} \circ \varphi_\mathcal{S} $
is trivial after (iv). For the mapping of states let $S\in\mathcal{S}$ and $p\in\mathcal{P}$ and it follows
\begin{align}
\begin{split}
\varphi^{-1}_\Sigma(\psi_\Sigma(\varphi_\Sigma(\lfloor p \rfloor_S)))
&= \varphi^{-1}_\Sigma(\psi_\Sigma( \lfloor \varphi( p ) \rfloor_{\varphi_\mathcal{S}(S)}))
= \varphi^{-1}_\Sigma(\lfloor \psi(\varphi( p )) \rfloor_{\psi_{\mathcal{S}}(\varphi_\mathcal{S}(S))}) \\
&= \lfloor \varphi^{-1}(\psi(\varphi( p ))) \rfloor_{\varphi^{-1}_\mathcal{S}(\psi_{\mathcal{S}}(\varphi_\mathcal{S}(S)))}
= (\varphi^{-1}\circ\psi\circ\varphi)_\Sigma(\lfloor p \rfloor_S)\,.
\end{split}
\end{align}
\end{proof}

\begin{prop}[$\hat=$ is equivalence relation on $\mathcal{A}$]
\label{prop:equirelA}
The relation $\hat=$ is reflexive, symmetric and transitive, hence it is an equivalence relation on the set of atomic systems $\mathcal{A}$.
\end{prop}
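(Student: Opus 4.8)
The plan is to verify the three defining properties of an equivalence relation directly, leaning on the isomorphism machinery already in place so that essentially all of the substantive work is delegated to Lemmas~\ref{lemma:equiinverseapp} and~\ref{lemma:equiconjapp}. For reflexivity I would exhibit the identity pair $\varphi = \mathrm{id}_\mathcal{P}$, $\varphi_\mathcal{A} = \mathrm{id}_\mathcal{A}$ as a witness for $A \hat= A$. Both maps are trivially bijective; Definition~\ref{def:isomorphismapp}~(i)--(iii) hold because the identity commutes with $\circ$, preserves the ``involved'' relation, and leaves every $W_A$ untouched; and conditions (iv)--(v) of Definition~\ref{def:equiatomapp} collapse (with $A_1 = A_2 = A$) to $\varphi_\mathcal{A}(A) = A$ together with the automatic statement that every process is fixed on all atomic systems other than $A$.

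For symmetry I would simply observe that the defining conditions of Definition~\ref{def:equiatomapp} are literally invariant under exchanging the labels $A_1 \leftrightarrow A_2$: requirement (iv) already demands both $\varphi_\mathcal{A}(A_1) = A_2$ and $\varphi_\mathcal{A}(A_2) = A_1$, and requirement (v) quantifies over $A \in \mathcal{A} \smallsetminus \{A_1, A_2\} = \mathcal{A} \smallsetminus \{A_2, A_1\}$. Hence the very same thermodynamic isomorphism $\varphi, \varphi_\mathcal{A}$ that witnesses $A_1 \hat= A_2$ also witnesses $A_2 \hat= A_1$, so no inversion is even required (though Lemma~\ref{lemma:equiinverseapp} would supply an alternative witness if one preferred).

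For transitivity, suppose $A_1 \hat= A_2$ via $\varphi, \varphi_\mathcal{A}$ and $A_2 \hat= A_3$ via $\psi, \psi_\mathcal{A}$. The principal case is when $A_1, A_2, A_3$ are pairwise distinct: here I would apply Lemma~\ref{lemma:equiconjapp} directly to these two witnessing isomorphisms, obtaining that the conjugated pair $\varphi^{-1} \circ \psi \circ \varphi$, $\varphi_\mathcal{A}^{-1} \circ \psi_\mathcal{A} \circ \varphi_\mathcal{A}$ is a thermodynamic isomorphism for $A_1 \hat= A_3$, which is exactly the claim. The degenerate cases are then handled separately and trivially: if $A_1 = A_3$ the assertion $A_1 \hat= A_3$ is just reflexivity; if $A_1 = A_2$ then $A_2 \hat= A_3$ already reads $A_1 \hat= A_3$; and if $A_2 = A_3$ then $A_1 \hat= A_2$ already reads $A_1 \hat= A_3$.

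The main obstacle lies not in the proposition itself but in matching its hypotheses to those of Lemma~\ref{lemma:equiconjapp}, which explicitly assumes three \emph{distinct} systems; thus the only point demanding care is the case analysis in transitivity, where the coincidences among $A_1, A_2, A_3$ must be peeled off before invoking the conjugation lemma. Once reflexivity provides the identity witness and the conjugation lemma provides the composite witness, the remainder of the argument is purely formal.
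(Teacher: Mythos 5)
Your proposal is correct and follows essentially the same route as the paper's proof: the identity pair as the witness for reflexivity, the label symmetry of Definition~\ref{def:equiatomapp} for symmetry, and Lemma~\ref{lemma:equiconjapp} for transitivity with the coincidence cases $A_1=A_2$, $A_2=A_3$, $A_1=A_3$ peeled off separately. If anything, your explicit case analysis for the degenerate transitivity cases is slightly more careful than the paper's one-line remark that these follow from reflexivity and symmetry.
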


\begin{proof}
Reflexive: 
Consider the identity $\varphi(p) = p$ for all $p\in\mathcal{P}$. This map clearly fulfils all requirements in Definitions~\ref{def:isomorphismapp} and~\ref{def:equiatomapp} for the atomic system(s) $A$ and $A$ and thus 
$A\hat=A$ always.\\
Symmetric:
By Definition~\ref{def:isomorphismapp} the roles of $A_1$ and $A_2$ are symmetric. Hence if $A_1\hat=A_2$ then also $A_2\hat=A_1$.\\
Transitive:
Lemma~\ref{lemma:equiconjapp} states that if $A_1\hat=A_2$ and $A_2\hat=A_3$ holds for three different atomic systems, then the two associated thermodynamic isomorphisms can be conjugated to a new thermodynamic isomorphism for $A_1\hat=A_3$, which proves that in this case also $A_1\hat=A_3$.
If two or all of the three systems are the same, then transitivity follows from reflexivity and symmetry.
\end{proof}

A comment on the repeated application of a thermodynamic isomorphism for $A_1\hat=A_2$: by definition of 
$\varphi_\mathcal{S}$ it is idempotent, i.e.\ applying this function twice yields the identity, 
$\varphi_\mathcal{S}(\varphi_\mathcal{S}(S))=S$ for all $S\in\mathcal{S}$. 
One could thus expect that something similar holds for $\varphi$ and $\varphi_\Sigma$. 
This is not the case in general.

For $\varphi$, the mapping of thermodynamic processes, idempotency is in general wrong because there may be two thermodynamic process with exactly the same thermodynamic properties, i.e.\ they act on the same systems, induce the same state changes, have the same work cost, and so on. Under $\varphi(\varphi(\cdot))$ these may be interchanged, which makes $\varphi(\varphi(\cdot))$
different from the identity mapping, even though thermodynamically nothing has changed.\footnote{
We do not use the $\circ$ notation for the concatenation of $\varphi$ with itself on purpose in order not to confuse the reader, as $\circ$ is already used for the concatenation of thermodynamic processes. Instead, we used the notation $\varphi(\cdot)$ to still be able to denote a concatenation of such functions.} 
As for the mapping of states $\varphi_\Sigma$, it does not have to be idempotent either. Finding an easy example where this is the case but intuition does not fail is a bit trickier than in the case of processes. However, the point is that $\varphi_\Sigma$ must map thermodynamic states to a ``thermodynamically equivalent'' states, i.e.\ a state with exactly the same thermodynamic properties. This can in principle also happen when $\varphi_\Sigma(\varphi_\Sigma(\cdot))$ is not an identity mapping. \\

These observation are the reason why we had go via Lemmas~\ref{lemma:equiinverseapp} and~\ref{lemma:equiconjapp} to prove that $\hat=$ is an equivalence relation.\\


We can now extend the definition of equivalent systems from atomic to arbitrary systems.
Namely, two arbitrary systems are equivalent if their atomic subsystems are pairwise equivalent.
The extended relation will keep the status of an equivalence relation, as we show below.

\begin{definition}[Equivalence of systems]
\label{def:equisysapp}
Let $S_1,S_2\in\mathcal{S}$ be two arbitrary systems. They are \emph{equivalent}, and we write $S_1\hat=S_2$, if 
there exists a bijection between $\mathrm{Atom}(S_1)$ and $\mathrm{Atom}(S_2)$ which respects the equivalence classes of $\hat=$ for atomic systems. 
\end{definition}

Definition~\ref{def:equisysapp} can be rephrased as: The two systems are equivalent if 
$|\mathrm{Atom}(S_1)| = |\mathrm{Atom}(S_2)| =: n$ and there exists a labelling 
$\{A_k^{(i)}\}_{i=1,..,n} = \mathrm{Atom}(S_k)$ for $k=1,2$ such that
\begin{align}
A_1^{(i)} \hat= A_2^{(i)} \quad \text{for } i=1,\dots,n \,.
\end{align}
Again we need to prove that $\hat=$ is an equivalence relation, now on $\mathcal{S}$.

\begin{prop}[$\hat=$ is equivalence relation on $\mathcal{S}$]
\label{prop:equirel}
The relation $\hat =$ on $\mathcal{S}$ is an equivalence relation.
\end{prop}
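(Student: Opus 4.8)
The plan is to reduce the three defining properties of the relation $\hat=$ on $\mathcal{S}$ entirely to the corresponding properties for atomic systems, which have already been established in Proposition~\ref{prop:equirelA}. The key observation is that Definition~\ref{def:equisysapp}, in its rephrased form, characterizes $S_1\hat=S_2$ by the existence of a bijection $f:\mathrm{Atom}(S_1)\rightarrow\mathrm{Atom}(S_2)$ such that $A\hat=f(A)$ for every $A\in\mathrm{Atom}(S_1)$. Since bijections compose and invert, and since $\hat=$ on $\mathcal{A}$ is reflexive, symmetric and transitive, each property transfers directly. So the whole proof is a matter of lifting the atomic relation through the witnessing bijections.

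First, for reflexivity I would take the identity bijection $\mathrm{id}:\mathrm{Atom}(S)\rightarrow\mathrm{Atom}(S)$; because $A\hat=A$ holds for every atomic system by Proposition~\ref{prop:equirelA}, this bijection respects the atomic equivalence classes, and hence $S\hat=S$. For symmetry, given $S_1\hat=S_2$ with witnessing bijection $f$, I would use its inverse $f^{-1}:\mathrm{Atom}(S_2)\rightarrow\mathrm{Atom}(S_1)$. Writing $B=f(A)$, the relation $A\hat=f(A)$ together with symmetry of $\hat=$ on $\mathcal{A}$ gives $B\hat=f^{-1}(B)$ for all $B\in\mathrm{Atom}(S_2)$, so $f^{-1}$ witnesses $S_2\hat=S_1$.

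For transitivity, given $S_1\hat=S_2$ via $f$ and $S_2\hat=S_3$ via $g$, I would compose to obtain the bijection $g\circ f:\mathrm{Atom}(S_1)\rightarrow\mathrm{Atom}(S_3)$. For each $A\in\mathrm{Atom}(S_1)$ one has $A\hat=f(A)$ and $f(A)\hat=g(f(A))$, and transitivity of $\hat=$ on $\mathcal{A}$ yields $A\hat=g(f(A))$, so $g\circ f$ witnesses $S_1\hat=S_3$. In particular the cardinalities $|\mathrm{Atom}(S_1)|$, $|\mathrm{Atom}(S_2)|$, $|\mathrm{Atom}(S_3)|$ all agree throughout, since only bijections between the atomic decompositions are used.

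The argument is essentially bookkeeping, so I do not expect any genuine obstacle. The only points requiring mild care are to confirm that ``respecting the equivalence classes'' is correctly captured by the pointwise condition $A\hat=f(A)$ (the rephrased form of Definition~\ref{def:equisysapp}), and that inverses and compositions of the witnessing bijections are again bijections between the relevant sets of atomic subsystems; both are immediate. It is worth noting that, unlike the atomic case in Proposition~\ref{prop:equirelA}, this proof does not need to construct or conjugate any thermodynamic isomorphism explicitly, since all the isomorphism machinery is already absorbed into the atomic relation $\hat=$ on $\mathcal{A}$.
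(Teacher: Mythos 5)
Your proof is correct and follows essentially the same route as the paper's own proof: both reduce reflexivity, symmetry and transitivity of $\hat=$ on $\mathcal{S}$ to the atomic case established in Proposition~\ref{prop:equirelA}, with your explicit bijections $f$, $f^{-1}$ and $g\circ f$ playing exactly the role of the paper's labellings and its relabelling step in the transitivity argument.
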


\begin{proof}
Reflexive: If $S_1=S_2$, i.e.\ $\mathrm{Atom}(S_1) = \mathrm{Atom}(S_2)$, then a possible labelling is $A_1^{(i)}=A_2^{(i)}$ for $i=1,\dots,|\mathrm{Atom}(S_1)| =: n$. \\
Symmetry: Again, the roles of $S_1$ and $S_2$ in Definition~\ref{def:equisysapp} are interchangeable. \\
Transitivity: Suppose $S_1\hat=S_2$ and $S_2\hat=S_3$. This means that there are two labellings 
$\{A_k^{(i)}\}_{i=1,..,n} = \mathrm{Atom}(S_k)$ for $k=1,2$ and 
$\{\tilde A_l^{(j)}\}_{j=1,..,n} = \mathrm{Atom}(S_l)$ for $l=2,3$ which fulfil
\begin{align}
\begin{split}
A_1^{(i)} &\hat= A_2^{(i)} \quad \text{for } i=1,\dots,n \,, \\
\tilde A_2^{(j)} &\hat= \tilde A_3^{(j)} \quad \text{for } j=1,\dots,n \,.
\end{split}
\end{align}
If the two labellings do not agree on $\mathrm{Atom}(S_2)$, one can relabel the $j$'s such that $\tilde A_2^{(j)} = A_2^{(j)}$ for $j=1,\dots,n$.
By transitivity of $\hat=$ for atomic systems, we then find
\begin{align}
A_1^{(j)} \hat= \tilde A_3^{(j)} \quad \text{for } j=1,\dots,n
\end{align}
and thus $S_1\hat=S_3$. 
\end{proof}

\begin{lemma}[Equivalence and composition]
\label{lemma:equicomp}
Let $A_1,A_2\in\mathcal{A}$ and $S\in\mathcal{S}$ an arbitrary system disjoint with $A_1$ and $A_2$, $\mathrm{Atom}(S) \cap \{A_1,A_2\} = \emptyset$.
Then:
$A_1\hat=A_2 \Longleftrightarrow A_1\vee S \hat=A_2\vee S$.
\end{lemma}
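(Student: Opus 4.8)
The plan is to prove the two implications separately, using the characterisation of equivalence for arbitrary systems from Definition~\ref{def:equisysapp}: $S_1\hat=S_2$ holds precisely when there is a bijection $\mathrm{Atom}(S_1)\to\mathrm{Atom}(S_2)$ sending each atomic subsystem to an equivalent one. Throughout I would use that $\hat=$ is an equivalence relation on $\mathcal{A}$ (Proposition~\ref{prop:equirelA}), and that the hypothesis $\mathrm{Atom}(S)\cap\{A_1,A_2\}=\emptyset$ makes $\mathrm{Atom}(A_i\vee S)=\{A_i\}\cup\mathrm{Atom}(S)$ a disjoint union for $i=1,2$.

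For the forward direction ($A_1\hat=A_2\Rightarrow A_1\vee S\hat=A_2\vee S$) I would simply exhibit the required bijection: map $A_1\mapsto A_2$ and fix every $B\in\mathrm{Atom}(S)$. This is a bijection between $\mathrm{Atom}(A_1\vee S)$ and $\mathrm{Atom}(A_2\vee S)$, and it respects equivalence classes because $A_1\hat=A_2$ by assumption and $B\hat=B$ by reflexivity. This direction is essentially immediate.

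The backward direction ($A_1\vee S\hat=A_2\vee S\Rightarrow A_1\hat=A_2$) is where the real work lies, and it is the step I expect to be the main obstacle. The subtlety is that the bijection $f:\mathrm{Atom}(A_1\vee S)\to\mathrm{Atom}(A_2\vee S)$ provided by the hypothesis need not send $A_1$ to $A_2$; it could, for instance, send $A_1$ into $\mathrm{Atom}(S)$ and map some other atomic subsystem onto $A_2$. To circumvent this I would argue by counting within equivalence classes. The key observation is that since $f(A)\hat=A$ for every $A$, the map $f$ restricts, for each $\hat=$-equivalence class $c$, to a bijection between $\mathrm{Atom}(A_1\vee S)\cap c$ and $\mathrm{Atom}(A_2\vee S)\cap c$; surjectivity of this restriction follows from bijectivity of $f$ together with the fact that a preimage of an element of $c$ again lies in $c$. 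Hence $|\mathrm{Atom}(A_1\vee S)\cap c|=|\mathrm{Atom}(A_2\vee S)\cap c|$ for every class $c$.

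I would then apply this equality to the class $c=[A_1]$, the $\hat=$-equivalence class of $A_1$. Writing $m:=|\mathrm{Atom}(S)\cap[A_1]|$ and using the disjoint decompositions above, the left-hand count is $1+m$, the summand $1$ coming from $A_1$ itself, while the right-hand count equals $m$ if $A_2\notin[A_1]$ and $m+1$ if $A_2\in[A_1]$. Equating the two forces $A_2\in[A_1]$, i.e.\ $A_1\hat=A_2$, which is the claim. The only points to verify carefully are that $A_1\notin\mathrm{Atom}(S)$ and $A_2\notin\mathrm{Atom}(S)$ (guaranteed by the disjointness hypothesis), so that these counts are exact, and reflexivity and symmetry of $\hat=$, so that $[A_1]$ is a genuine equivalence class. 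Everything else is bookkeeping.
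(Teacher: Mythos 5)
Your proof is correct and follows essentially the same route as the paper's: the forward direction via the obvious bijection fixing $\mathrm{Atom}(S)$, and the backward direction via a pigeonhole argument exploiting that a class-respecting bijection restricts to a bijection on each $\hat=$-class. The paper phrases this as choosing a labelling $A_1\hat=A_S^{(1)}\hat=\cdots\hat=A_S^{(k)}$ and observing there is ``no space'' for $A_2$ among the remaining atomic subsystems, which is exactly your cardinality count $1+m=m+\bigl|\{A_2\}\cap[A_1]\bigr|$ in slightly less formal clothing.
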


\begin{proof}
Let $\{A_S^{(1)},\dots,A_S^{(n)}\} = \mathrm{Atom}(S)$.
By assumption, $A_1$ and $A_2$ are not in this list, i.e.\ all atomic systems we deal with are different.
The direction $A_1\hat=A_2 \Rightarrow A_1\vee S \hat=A_2\vee S$ follows immediately with the labelling 
\begin{align}
\begin{split}
A_1 &\hat= A_2 \,,\\
A_S^{(j)} &\hat= \tilde A_S^{(j)} \quad \text{for } j=1,\dots,n \,.
\end{split}
\end{align}
Let now $A_1\vee S \hat=A_2\vee S$. 
Consider the labelling of atomic system in this equivalence. W.l.o.g.\ the numbering can be chosen such that $A_1 \hat= A_S^{(1)} \hat= \cdots \hat= A_S^{(k)}$.
It holds $k\geq1$ because by assumption $A_1$ is equivalent to at least one atomic system in 
$\mathrm{Atom}(S)\cup A_2$ and it is not $A_2$.
If $k=n$, it means that all atomic systems at hand are equivalent to each other and consequently $A_1\hat=A_2$ must hold.
If $k<n$ then the labelling must be such that members of $\{A_S^{(k+1)},\dots,A_S^{(n)}\}$ are equivalent to one or more other members of this set but not to any members of $\{A_S^{(1)},\dots,A_S^{(k)}\}$.
I.e.\ the labelling is such that the ``sublabelling'' of the subset 
$\{A_S^{(k+1)},\dots,A_S^{(n)}\}$ of atomic systems is closed. 
But this means that there is no space for $A_2$ in this part of the labelling. Hence $A_2$ is must be equivalent to at least one member of $\{A_S^{(1)},\dots,A_S^{(k)}\}$ and thus by transitivity also $A_2\hat=A_1$. 
\end{proof}

This section of the appendix was opened with a definition thermodynamic isomorphisms on which the concept of equivalent atomic is based.
Having now defined when two arbitrary systems are called equivalent it may strike us that for this definition an explicit notion of a more general thermodynamic isomorphism associated with $A_1\hat=S_2$ was not necessary. This is because the definition of equivalence of arbitrary systems relies on the equivalences of the atomic subsystems. However, if we want to generalize the specific results derived for atomic equivalences (e.g.\ Lemma~\ref{lemma:isomwpapp}, Lemma~\ref{lemma:isomstateapp}, Lemma~\ref{lemma:isomstatebijapp}, Lemma~\ref{lemma:isomUapp}) it is necessary to explicitly talk about the correspondence of processes.

For this, we construct a thermodynamic isomorphism for the equivalence of two arbitrary equivalent systems.
\begin{definition}[Thermodynamic isomorphism for $S_1\hat=S_2$]
\label{def:equisysisomapp}
Let $S_1\hat=S_2$ according to Definition~\ref{def:equisysapp}
and let $n:=|\mathrm{Atom}(S_1)| = |\mathrm{Atom}(S_2)|$. Furthermore, let 
$l:=|\mathrm{Atom}(S_1)\cap\mathrm{Atom}(S_1)|$ be the number of shared atomic subsystems. 
Choose a labelling of the atomic subsystems of each system such that 
\begin{align}
\label{eq:pairequal}
\begin{split}
A_1^{(1)} = A_2^{(1)}, \dots,
A_1^{(l)} = A_2^{(l)}, 
A_1^{(l+1)} \hat= A_2^{(l+1)}, \dots, 
A_1^{(n)} \hat= A_2^{(n)}
\end{split}
\end{align}
(this implies that we can choose the isomorphisms for the first $l$ atomic equivalences as identities, $\varphi^{(i)}(p)=p$ and $\varphi_\mathcal{A}(A) = A$ for $i=1,\dots,l$).
Denote the remaining (non-identity) thermodynamic isomorphisms by $\varphi^{(i)}, \varphi_\mathcal{A}^{(i)}$ for $i=l+1,\dots,n$.
Then define the \emph{thermodynamic isomorphism $\varphi, \varphi_\mathcal{A}$ for $S_1\hat=S_2$} by
\begin{align}
\label{eq:equisysisom}
\begin{split}
\varphi(p) &:=  \varphi^{(l+1)}(\varphi^{(l+2)}(\cdots\varphi^{(n)}(p)\cdots))\,, \\
\varphi_\mathcal{A}(A) &:=  \varphi_\mathcal{A}^{(l+1)}(\varphi_\mathcal{A}^{(l+2)}(\cdots\varphi_\mathcal{A}^{(n)}(A)\cdots))\,.
\end{split}
\end{align} 
\end{definition}

Having defined a thermodynamic isomorphism for two equivalent atomic systems we must now check whether this generalization fulfils the expected generalized properties (i)-(v) from Definitions~\ref{def:isomorphismapp} and~\ref{def:equiatomapp}.

\begin{prop}[]
\label{prop:equisysisomapp}
A thermodynamic isomorphism $\varphi,\varphi_\mathcal{A}$ for $S_1\hat=S_2$ as in Definition~\ref{def:equisysisomapp} is indeed a thermodynamic isomorphism (i.e.\ fulfils Definition~\ref{def:isomorphismapp} (i)-(iii)) and
\begin{itemize}
	\item [(iv)]  for 
	$A\in\mathcal{A}\smallsetminus\big( \mathrm{Atom}(S_1)\cup\mathrm{Atom}(S_2) \big)$
	we have $\varphi_\mathcal{A}(A) = A$, while 
	$\varphi_\mathcal{A}(A_1^{(i)}) = A_2^{(i)}$ and
	$\varphi_\mathcal{A}(A_2^{(i)}) = A_1^{(i)}$, and
	\item [(v)] for all $p\in\mathcal{P}$ and 
	$A\in\mathcal{A}\smallsetminus\big( \mathrm{Atom}(S_1)\cup\mathrm{Atom}(S_2) \big)$
	we have 
	$\lfloor \varphi(p) \rfloor_A = \lfloor p \rfloor_A$
	and the same for $\lceil \cdot \rceil_\cdot$.
\end{itemize}
\end{prop}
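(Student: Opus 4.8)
The plan is to split the verification into two parts: first, that the composite pair $\varphi, \varphi_\mathcal{A}$ satisfies the three defining conditions (i)--(iii) of a thermodynamic isomorphism (Definition~\ref{def:isomorphismapp}), and second, that the additional swap conditions (iv) and (v) hold. The first part reduces to the observation that a composition of thermodynamic isomorphisms is again a thermodynamic isomorphism. Since each $\varphi^{(i)}, \varphi_\mathcal{A}^{(i)}$ for $i=l+1,\dots,n$ is a thermodynamic isomorphism for the atomic equivalence $A_1^{(i)}\hat=A_2^{(i)}$, and $\varphi, \varphi_\mathcal{A}$ is by Eq.~(\ref{eq:equisysisom}) their iterated composition, it suffices to prove that the composition of two thermodynamic isomorphisms is one and then argue by induction on the number of factors.

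For the composition step, I would take two thermodynamic isomorphisms $\varphi,\varphi_\mathcal{A}$ and $\psi,\psi_\mathcal{A}$ and check that $\psi(\varphi(\cdot)), \psi_\mathcal{A}(\varphi_\mathcal{A}(\cdot))$ is one. Bijectivity is immediate as a composition of bijections. For (i), whenever $p'\circ p$ is defined I apply (i) for $\varphi$ and then for $\psi$ to obtain $\psi(\varphi(p'\circ p)) = \psi(\varphi(p'))\circ\psi(\varphi(p))$. Condition (ii) follows shell by shell: $\lfloor p\rfloor_A$ is defined iff $\lfloor\varphi(p)\rfloor_{\varphi_\mathcal{A}(A)}$ is defined iff $\lfloor\psi(\varphi(p))\rfloor_{\psi_\mathcal{A}(\varphi_\mathcal{A}(A))}$ is defined. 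Condition (iii) follows by composing the two work equalities, $W_{\psi_\mathcal{A}(\varphi_\mathcal{A}(A))}(\psi(\varphi(p))) = W_{\varphi_\mathcal{A}(A)}(\varphi(p)) = W_A(p)$. These are precisely the computations already performed inside the proof of Lemma~\ref{lemma:equiconjapp}, so I would simply point to them. By induction over the factors in Eq.~(\ref{eq:equisysisom}), $\varphi,\varphi_\mathcal{A}$ then satisfies (i)--(iii).

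The heart of the proof is (iv) and (v), and the structural fact I would exploit is that the pairs $\{A_1^{(i)},A_2^{(i)}\}$ for distinct $i>l$ are mutually disjoint (the atomic subsystems within each of $\mathrm{Atom}(S_1)$ and $\mathrm{Atom}(S_2)$ are distinct, and only the first $l$ are shared). Hence each factor $\varphi_\mathcal{A}^{(i)}$ swaps only its own pair and fixes every other atomic system. To compute $\varphi_\mathcal{A}(A_1^{(i)})$ for $l<i\le n$ I would trace the composition from the innermost map $\varphi_\mathcal{A}^{(n)}$ outward: every factor $\varphi_\mathcal{A}^{(j)}$ with $j\neq i$ fixes both $A_1^{(i)}$ and $A_2^{(i)}$, while $\varphi_\mathcal{A}^{(i)}$ sends $A_1^{(i)}\mapsto A_2^{(i)}$, so the net effect is $\varphi_\mathcal{A}(A_1^{(i)})=A_2^{(i)}$, and symmetrically $\varphi_\mathcal{A}(A_2^{(i)})=A_1^{(i)}$. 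For $i\le l$ one has $A_1^{(i)}=A_2^{(i)}$, a system untouched by every non-identity factor, so the required equality holds trivially. For $A\notin\mathrm{Atom}(S_1)\cup\mathrm{Atom}(S_2)$ every factor fixes $A$, giving $\varphi_\mathcal{A}(A)=A$; this establishes (iv).

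Finally, (v) follows by iterating property (v) of the individual atomic isomorphisms. For $A\notin\mathrm{Atom}(S_1)\cup\mathrm{Atom}(S_2)$ the system $A$ lies outside every pair $\{A_1^{(i)},A_2^{(i)}\}$, so each $\varphi^{(i)}$ leaves the input and output states on $A$ invariant; applying this successively to $\varphi^{(n)}(p)$, then $\varphi^{(n-1)}(\varphi^{(n)}(p))$, and so on, yields $\lfloor\varphi(p)\rfloor_A=\lfloor p\rfloor_A$ and likewise for $\lceil\cdot\rceil_A$. I expect the only delicate bookkeeping to be the disjointness argument underlying (iv): one must ensure that the shared atomic subsystems ($i\le l$) are carried by identity factors and that the swaps for $i>l$ genuinely commute, so that composing them in the fixed order of Eq.~(\ref{eq:equisysisom}) produces exactly the pairwise swap with no interference between distinct pairs.
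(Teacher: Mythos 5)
Your proposal is correct and takes essentially the same approach as the paper's proof: bijectivity as a composition of bijections, properties (i)--(iii) obtained by chaining the corresponding properties of the factors $\varphi^{(i)}$ through the composition, and (iv)--(v) from the key structural fact that the non-identity factors act non-trivially only on mutually disjoint pairs $\{A_1^{(i)},A_2^{(i)}\}$. If anything, your argument is slightly more complete, since the paper's written proof stops after item (iv) and leaves the shell-by-shell verification of (v) implicit, which you spell out.
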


\begin{proof}
Any thermodynamic isomorphism for two equivalent atomic systems is bijective by definition, hence so is a finite subsequent application of such maps.
For the remainder of the proof we note that the construction of $\varphi$ is such that the non-identity thermodynamic isomorphisms act non-trivially on disjoint pairs $\{A_1^{(i)}, A_2^{(i)}\}$ of atomic systems. 
For instance, $\varphi^{(n)}$ acts non-trivially on $A_1^{(n)}$ and $A_2^{(n)}$, but any $A_k^{(i)}$ with $i\neq n$ is untouched by $\varphi^{(n)}$ according to Definition~\ref{def:equiatomapp} (same for $\varphi_\mathcal{A}^{(n)}$). 
This observation is key for proving the five points characterizing thermodynamic isomorphisms for equivalences.\\
Let $p,p'\in\mathcal{P}$ be thermodynamic processes and $A\in\mathcal{A}$ am atomic system.
\begin{itemize}
	\item [(i)] If $p'\circ p$ is defined we obtain
	\begin{align}
	\begin{split}
	\varphi(p'\circ p)& = \varphi^{(l+1)}(\cdots\varphi^{(n)}(p'\circ p)\cdots) \\
	&= \varphi^{(l+1)}(\cdots\varphi^{(n-1)}(
		\varphi^{(n)}(p')\circ\varphi^{(n)}(p))\cdots) \\
	&= \dots \\
	&= \varphi^{(l+1)}(\cdots\varphi^{(n)}(p')\cdots) \circ
		\varphi^{(l+1)}(\cdots\varphi^{(n)}( p)\cdots) \\
		&= \varphi(p')\circ\varphi(p) \,,
	\end{split}	
	\end{align}
	by using Definition~\ref{def:isomorphismapp} (i) for $\varphi^{(i)}$ for $i=l+1,\dots,n$ subsequently.
	If instead $\varphi(p')\circ\varphi(p)$ is defined the equation still holds, 
	thus concluding the proof of (i).

	\item [(ii)] We use that the $\varphi^{(i)}$ fulfil (ii) individually: 
	\begin{align}
	\begin{split}
	&\lfloor \varphi(p) \rfloor_{\varphi_\mathcal{A}(A)} 
	= \lfloor \varphi^{(l+1)}(\cdots\varphi^{(n)}(p)\cdots) \rfloor_{\varphi_\mathcal{A}^{(l+1)}(\cdots\varphi^{(n)}_\mathcal{A}(A)\cdots)} 
	 \text{ def.} \\
	&\Leftrightarrow  
	\lfloor \varphi^{(l+2)}(\cdots\varphi^{(n)}(p)\cdots) \rfloor_{\varphi_\mathcal{A}^{(l+2)}(\cdots\varphi^{(n)}_\mathcal{A}(A)\cdots)} 
	\text{ def.} \\
	&\Leftrightarrow \cdots\\
	&\Leftrightarrow \lfloor p \rfloor_A \text{ def.}
	\end{split}
	\end{align}
	
	\item [(iii)] Like in the previous points we compute step by step, using the fact that the 
	$\varphi^(i),\varphi_\mathcal{A}^{(i)}$ fulfil (iii) individually:
	\begin{align}
	\begin{split}
	W_{\varphi_\mathcal{A}(A)}(\varphi(p)) 
	&= W_{\varphi^{(l+1)}_\mathcal{A}(\cdots \varphi^{(n)}_\mathcal{A}((A)\cdots)}(\varphi^{(l+1)}(\cdots \varphi^{(n)}(p)\cdots)) \\
	&= W_{\varphi^{(l+2)}_\mathcal{A}(\cdots \varphi^{(n)}_\mathcal{A}((A)\cdots)}(\varphi^{(l+2)}(\cdots \varphi^{(n)}(p)\cdots)) \\
	&= \cdots \\
	&= W_A(p)\,.
	\end{split}
	\end{align}
	
	\item [(iv)] Since all atomic systems 
	$A\in\mathcal{A}\smallsetminus\big( \mathrm{Atom}(S_1)\cup\mathrm{Atom}(S_2) \big)$
	are untouched by the $\varphi_\mathcal{A}^{(i)}$, i.e.\ mapped to themselves, 
	it immediately follows that $\varphi_\mathcal{A}(A) = A$ for those.\\ 
	For $i=1,\dots,l$ we see with the same argument that 
	$\varphi_\mathcal{A}(A_1^{(i)}) = A_1^{(i)} = A_2^{(i)} = \varphi_\mathcal{A}(A_2^{(i)})$,
	where we used that the labelling is chosen such that the first $l$ atomic subsystems are
	pairwise equal. \\
	For $i=l+1,\dots,n$ on the other hand, we know that the the $A_k^{(i)}$ are mapped to 
	themselves by the $\varphi_\mathcal{A}^{(j)}$ except for $j=i$, in which case 
	$\varphi_\mathcal{A}^{(i)}(A_1^{(i)}) = A_2^{(i)}$ and vice versa. 
	Thus $\varphi_\mathcal{A}(A_1^{(i)}) = A_2^{(i)}$ 
	and $\varphi_\mathcal{A}(A_2^{(i)}) = A_1^{(i)}$. 
\end{itemize}
\end{proof}

Proposition~\ref{prop:equisysisomapp} establishes the natural generalizations of the properties listed in Definition~\ref{def:equiatomapp} for thermodynamic isomorphisms of atomic equivalences to thermodynamic isomorphisms for generic equivalent systems. 
Since $\varphi,\varphi_\mathcal{A}$ from Definition~\ref{def:equisysisomapp} is in particular a thermodynamic isomorphism we can define associated bijective mappings  $\varphi_\mathcal{S}$ and $\varphi_\Sigma$ of arbitrary systems and states (Definition~\ref{def:isomstateapp}, Lemma~\ref{lemma:isomstatebijapp}). 
Also, the results about properties of work processes (Lemma~\ref{lemma:isomwpapp}), state changes (Lemma~\ref{lemma:isomstateapp}), and internal energy (Lemma~\ref{lemma:isomUapp}) under thermodynamic isomorphisms can be directly applied. They yield invariance of the quantities and properties under $\varphi,\varphi_\mathcal{A}$.\\

These observations conclude the construction of thermodynamic isomorphisms describing equivalences for general systems. 
We have shown that all relevant concepts introduced so far are invariant under equivalences. 
The concepts that are introduced from here on will also have this property, as will be shown for each.

\section{Heat and heat reservoirs}
\label{app:heat}

The heat flow into a system $S$ due to a process $p\in\mathcal{P}$, $Q_S(p)$, denoted by $Q_s$ in the following, is defined as the difference of the change in internal energy and work done during a thermodynamic process (Definition~\ref{def:heat}).
%
%
Heat $Q_S$ defined as such inherits all thermodynamically relevant properties from the work function $W_S$ and the internal energy $U_S$. 
If no atomic subsystem of $S$ is involved in $p$, then $Q_S(p)=0$;
likewise, $Q_S(\mathrm{id}) = 0$ for identity processes;
$Q_S(p)=0$ for all work processes $p\in\mathcal{P}_S$ on $S$;
heat flows in reverse processes change their signs;
$Q_S$ is additive under composition (for disjoint systems);
it is additive under concatenation (Lemma~\ref{lemma:heatconccompapp});
and finally, heat flows are invariant under thermodynamic isomorphisms (Lemma~\ref{lemma:equiheatapp}).

While the first five statements are direct to see, the latter two need a bit more discussion. They are states in the following two lemmas.

\begin{lemma}[Heat under concatenation]
\label{lemma:heatconccompapp}
Let $p, p'\in\mathcal{P}$ be arbitrary processes such that $p'\circ p$ is defined 
and consider an atomic system $A\in\mathcal{A}$. Then
\begin{align}
Q_A(p'\circ p) = Q_A(p) + Q_A(p')\,.
\end{align}
This statement naturally extends to arbitrary systems by additivity under composition.
\end{lemma}

\begin{proof}
We distinguish three cases. 
First, assume that $A$ is neither involved in $p$ nor in $p'$. This implies that both 
$Q_A(p)=Q_A(p')=0$, and that $A$ is not involved in $p'\circ p$ either.
But then $Q_A(p'\circ p) = 0 = Q_A(p) + Q_A(p')$ is obviously fulfilled.

Second, if $A$ is involved in $p$ but not in $p'$, only $Q_A(p')=0$ and $W_A(p')=0$ necessarily. In this case, according to the postulate introducing concatenation, $A$ is involved in $p'\circ p$ and undergoes the same state change (thus also the same change in internal energy) as it does under $p$ alone. We find 
\begin{align}
Q_A(p'\circ p) = \Delta U_A (p'\circ p) - W_A(p'\circ p) 
= \Delta U_A (p) - W_A(p') - W_A(p)
= Q_A(p) = Q_A(p) + Q_A(p')\,.
\end{align}
If $A$ is involved in $p'$ but not in $p$ the argument works analogously.

Third, assume that $A$ is involved in both $p$ and $p'$. In this case we argue directly that both 
$\Delta U_A$ and $W_A$ are additive under concatenation, the former by the fact that
it is a state variable, the latter by the additivity postulate for the work function. 
With Definition~\ref{def:heat} for heat this implies that also $Q_A$ is additive under concatenation.
\end{proof}

\begin{lemma}[Heat under isomorphisms]
\label{lemma:equiheatapp}
Let $\varphi, \varphi_\mathcal{A}$ be a thermodynamic isomorphism. 
Then for any process $p\in\mathcal{P}$ it holds 
\begin{align}
Q_{\varphi_\mathcal{A}(A)}(\varphi(p)) = Q_{A}(p) \,.
\end{align}
\end{lemma}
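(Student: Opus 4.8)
The plan is to reduce the statement directly to the two invariance results already established for thermodynamic isomorphisms, since heat is defined purely in terms of the work function and the internal energy (Definition~\ref{def:heat}). Working first with an atomic system $A\in\mathcal{A}$, I would start from
\begin{align}
Q_{\varphi_\mathcal{A}(A)}(\varphi(p))
= \Delta U_{\varphi_\mathcal{A}(A)}(\varphi(p)) - W_{\varphi_\mathcal{A}(A)}(\varphi(p))\,,
\end{align}
which is simply the definition of heat applied to the image system $\varphi_\mathcal{A}(A)$ under the image process $\varphi(p)$.

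Next I would treat the two summands separately. For the work term, Definition~\ref{def:isomorphismapp}~(iii) gives immediately $W_{\varphi_\mathcal{A}(A)}(\varphi(p)) = W_A(p)$. For the internal energy term, I would invoke Lemma~\ref{lemma:isomUapp}, which states $\Delta U_{\varphi_\mathcal{S}(S)}(\varphi(p)) = \Delta U_S(p)$ for all systems $S$; specializing to $S=A$ atomic and using that $\varphi_\mathcal{S}$ restricted to atomic systems coincides with $\varphi_\mathcal{A}$ (this holds by the definition $\varphi_\mathcal{S}(S) := \bigvee_{A\in\mathrm{Atom}(S)}\varphi_\mathcal{A}(A)$, so that $\varphi_\mathcal{S}(A) = \varphi_\mathcal{A}(A)$ for a singleton), yields $\Delta U_{\varphi_\mathcal{A}(A)}(\varphi(p)) = \Delta U_A(p)$. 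Substituting both identities back gives
\begin{align}
Q_{\varphi_\mathcal{A}(A)}(\varphi(p))
= \Delta U_A(p) - W_A(p) = Q_A(p)\,,
\end{align}
as claimed for atomic systems.

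Finally, I would note that the statement extends to arbitrary systems $S\in\mathcal{S}$ by additivity of heat under composition, using $\mathrm{Atom}(\varphi_\mathcal{S}(S)) = \{\varphi_\mathcal{A}(A) : A\in\mathrm{Atom}(S)\}$, so summing the atomic identities over $\mathrm{Atom}(S)$ gives $Q_{\varphi_\mathcal{S}(S)}(\varphi(p)) = Q_S(p)$. I do not expect any genuine obstacle here: all the substance lives in the already-proved Lemma~\ref{lemma:isomUapp} (whose proof, in turn, is where the first law and the work invariance were genuinely combined). The only point requiring a moment's care is the bookkeeping identification of $\varphi_\mathcal{S}$ with $\varphi_\mathcal{A}$ on atomic systems and the matching of the ``not involved'' cases, where both $\Delta U$ and $W$ vanish on each side consistently thanks to Definition~\ref{def:isomorphismapp}~(ii); but these are automatically handled because they are already built into the cited lemma and into Definition~\ref{def:isomorphismapp}~(iii).
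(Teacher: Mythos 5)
Your proposal is correct and follows essentially the same route as the paper's proof: expand $Q_{\varphi_\mathcal{A}(A)}(\varphi(p))$ via Definition~\ref{def:heat}, substitute Lemma~\ref{lemma:isomUapp} for the internal-energy term and Definition~\ref{def:isomorphismapp}~(iii) for the work term. The only differences are cosmetic — the paper treats the ``not involved'' case as an explicit second case via Definition~\ref{def:isomorphismapp}~(ii), whereas you correctly observe it is already absorbed into the cited lemma, and your extension to arbitrary systems by additivity is a harmless bonus beyond the atomic statement.
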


\begin{proof}
If $A_1$ is involved in $p$ we can make use of Lemma~\ref{lemma:isomUapp} implying that 
$\Delta U_{\varphi_\mathcal{A}(A)}(\varphi(p)) = \Delta U_{A}(p)$.
Together with Definition~\ref{def:isomorphism} (iii) it immediately follows
\begin{align}
\begin{split}	
Q_{\varphi\mathcal{A}(A)}(\varphi(p)) 
&= \Delta U_{\varphi_\mathcal{A}(A)}(\varphi(p))- W_{\varphi_\mathcal{A}(A)}(\varphi(p)) \\
&= \Delta U_{A}(p) - W_{A}(p) \\
&= Q_{A}(p)\,.
\end{split}
\end{align}
By Definition~\ref{def:equiatom} (ii) the atomic system $A$ is involved in $p$ if and only if $\varphi_\mathcal{A}(A)$ is involved in $\varphi(p)$.
Hence, if $A$ is not involved in $p$, $Q_{A}(p)=0=Q_{\varphi_\mathcal{A}(A)}(\varphi(p))$.
\end{proof}


Definition~\ref{def:heat} says what amount of heat flows into a specific system $S$, but it does not specify where this heat comes from. 
In a bipartite setting such as the one discussed in the main text, where a work process $p\in\mathcal{P}_{S_1\vee S_2}$ is considered, it is possible to say that heat $Q_{S_1}(p)$ flows from $S_2$ to $S_1$. Likewise in the opposite view, one can say that the heat $Q_{S_2}(p) \equiv -Q_{S_1}(p)$ flows from $S_1$ to $S_2$. 
However, in a more complex composite system such a statement is not necessarily possible, unless one splits it up into two subsystems to be considered, in which case we end up with the bipartite setting again.
This becomes relevant in the pictorial representation of work and heat flows, as is discussed in the main text.


\section{Carnot's Theorem}
\label{app:carnot}

As a direct consequence of the second law (Postulate~\ref{post:sec}) and Definition~\ref{def:heatreservoir} for heat reservoirs, we prove the following result on the signs of heat flows to cyclic machines operating between two reservoirs.

\begin{lemma}[Direction of heat flows in Carnot engines]
\label{lemma:carnotapp}
Consider a (not necessarily reversible) process $p\in\mathcal{P}_{R_1\vee S\vee R_2}$ operating on a cyclic system $S$ and two reservoirs $R_1,R_2\in\mathcal{R}$.
If $p$ is non-trivial, i.e.\ if not both heat flows $Q_{R_1}(p)$ and $Q_{R_2}$ are zero, then
\begin{itemize}
\item [(i)] at least one of the heat flows is strictly positive, and
\item [(ii)] if $p$ is reversible, one of the heat flows is strictly positive and the other one strictly negative.
\end{itemize}
\end{lemma}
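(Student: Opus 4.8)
The plan is to reduce everything to the second law (Postulate~\ref{post:sec}) applied to the composite system $S \vee R$ for a single reservoir. The key observation is that the setting of Lemma~\ref{lemma:carnotapp} involves two reservoirs, but the second law as stated talks about only one reservoir together with a cyclic system. So the main idea is to merge one reservoir into the ``cyclic machine'' part and apply the second law to the other.

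For part (i), I would argue by contradiction. Suppose both heat flows are non-positive, i.e.\ $Q_{R_1}(p) \le 0$ and $Q_{R_2}(p) \le 0$, with at least one strictly negative (since the process is non-trivial). Recall that $Q_{R_i}(p) = \Delta U_{R_i}(p)$ because $W_{R_i}(p)=0$ for reservoirs in the relevant setting, so a negative heat flow means energy is leaving the reservoir. Now I would consider the system $S' := S \vee R_1$ and treat $p$ as a process on $R_2 \vee S'$. The crucial point is whether $S'$ can be made cyclic: $S$ is cyclic by assumption, but $R_1$ generally is \emph{not}, since $Q_{R_1}(p) < 0$ changes its internal energy. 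To fix this, I would use Definition~\ref{def:heatreservoir}~(iii), the translation invariance of reservoirs: after the process $R_1$ has lost energy, one can find a corresponding process that restores $R_1$ to its original state. More carefully, the plan is to exploit the reservoir's ability to ``cycle'' by adding a process that pumps the extracted heat back — but this must be done without introducing new work. The cleanest route is: since $W_S(p) = -Q_{R_1}(p) - Q_{R_2}(p) \ge 0$ would be forced (as $\Delta U_S = 0$), positive work is drawn from the composite only if the signs cooperate; then applying the second law to $S \vee R_2$ with $R_1$ absorbed appropriately yields $W_{S \vee R_1}(p) \ge 0$, which combined with $Q_{R_2}(p) \le 0$ gives the contradiction. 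The heart of the matter is setting up the single-reservoir picture correctly so that $S \vee R_1$ is genuinely cyclic; translation invariance (iii) is what guarantees we may assume $R_1$ returns to its initial internal energy in an auxiliary restoring step.

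For part (ii), I would use reversibility to upgrade the weak conclusion of (i) to a strict one. Since $p$ is reversible, its reverse process $p^{\mathrm{rev}}$ is also a process on $R_1 \vee S \vee R_2$ that is cyclic on $S$, and the heat flows simply change sign: $Q_{R_i}(p^{\mathrm{rev}}) = -Q_{R_i}(p)$. Applying part (i) to both $p$ and $p^{\mathrm{rev}}$ gives: at least one of $Q_{R_1}(p), Q_{R_2}(p)$ is strictly positive, and at least one of $-Q_{R_1}(p), -Q_{R_2}(p)$ is strictly positive, i.e.\ at least one of $Q_{R_1}(p), Q_{R_2}(p)$ is strictly negative. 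Finally, I would rule out the remaining bad case where one heat flow is strictly positive and the other is zero. If, say, $Q_{R_2}(p) > 0$ and $Q_{R_1}(p) = 0$, then effectively heat flows into $R_2$ while $R_1$ is untouched; then $R_1$ is cyclic, so $p$ is a process on $S \vee R_1$ (cyclic on $S \vee R_1$) interacting with the single reservoir $R_2$ — and the reverse process $p^{\mathrm{rev}}$ would extract heat from the single reservoir $R_2$ and deliver work, violating the second law. Hence both heat flows must be strictly non-zero, one positive and one negative.

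The main obstacle I anticipate is the bookkeeping in part (i): making ``$S \vee R_1$ is cyclic'' rigorous. The reservoir $R_1$ does change its internal energy under $p$, so it is not literally cyclic, and I must invoke translation invariance (Definition~\ref{def:heatreservoir}~(iii)) together with Postulate~\ref{post:freedom} (freedom of description) to legitimately absorb $R_1$ into the machine and present the scenario as a single-reservoir second-law violation. Getting the signs consistent — keeping track of which heat flow corresponds to energy entering versus leaving, given the sign convention that $Q_{R_i}$ is positive when heat flows \emph{into} the reservoir — is where care is needed, but the conceptual content is entirely carried by the second law plus the defining properties of reservoirs.
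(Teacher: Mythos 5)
Your overall strategy for part (i) is the same as the paper's (make the machine plus one reservoir cyclic, then invoke the Kelvin--Planck statement with the other reservoir as the single external one), but two steps in your execution genuinely fail as written. First, the restoration step: Definition~\ref{def:heatreservoir}~(iii) (translation invariance) does not do what you want. It only supplies energy-shifted copies of existing processes; it does not supply a process that carries the absorbed reservoir from its final (lower-energy) state back to its initial state. The tool you actually need is the consequence of Definition~\ref{def:heatreservoir}~(ii) together with the first law, noted in the paper right after the definition: for any two reservoir states with $U_R(\sigma')\geq U_R(\sigma)$ there exists a work process on $R$ alone from $\sigma$ to $\sigma'$. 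That process does \emph{positive} work on the reservoir, and your worry that the restoration ``must be done without introducing new work'' is a misconception -- the second law only lower-bounds the work done on the cyclic composite, so paying work into the absorbed reservoir is harmless. Second, your contradiction does not close. With $R_1$ absorbed and restored and $R_2$ external, the second law gives $Q_{R_2}(p)\geq 0$; combined with your hypothesis $Q_{R_2}(p)\leq 0$ this yields only $Q_{R_2}(p)=0$, which is not a contradiction. You must arrange (w.l.o.g., by relabelling) that the \emph{strictly} negative heat flow sits on the \emph{external} reservoir: if $Q_{R_2}(p)<0$, absorb and restore $R_1$, and then $Q_{R_2}(p)\geq 0$ is a genuine contradiction. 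This is exactly the paper's bookkeeping (it takes $Q_{R_1}(p)<0$ and absorbs $R_2$). Alternatively one can close your version in two steps -- deduce $Q_{R_2}(p)=0$, note that injectivity of $U_{R_2}$ then makes $R_2$ itself cyclic, and apply the second law once more with $R_1$ external -- but you would have to say so. Two minor points: your energy balance has the wrong sign ($0=\Delta U_S = W_S + Q_S$ with $Q_S = -(Q_{R_1}+Q_{R_2})$ gives $W_S = Q_{R_1}+Q_{R_2}$, not $-Q_{R_1}-Q_{R_2}$), and Postulate~\ref{post:freedom} is not needed to ``absorb'' a reservoir -- regrouping $R_1\vee S\vee R_2 = (S\vee R_1)\vee R_2$ is already a legitimate instance of the second law, since it holds for an arbitrary system in the cyclic slot.

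Your part (ii) is correct and matches the paper: the reverse process is cyclic on $S$ with sign-flipped heat flows, and applying (i) to it forces a strictly negative flow alongside the strictly positive one from (i) applied to $p$. Note, though, that your final case analysis (ruling out one flow positive and the other zero) is redundant: that configuration already violates ``at least one of $Q_{R_1}(p),Q_{R_2}(p)$ is strictly negative,'' which you obtained from applying (i) to $p^{\mathrm{rev}}$, so no separate second-law argument is required there.
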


\begin{proof}
We prove (i) and then show that (ii) is a consequence of it. Suppose by contradiction that $Q_{R_1}(p) \leq 0$ and $Q_{R_2}(p) \leq 0$.
Excluding the trivial case, w.l.o.g.\ we can assume that $Q_{R_1}(p)<0$ (otherwise swap the labels $1\leftrightarrow2$. 
Since $\Delta U_{R_2}(p) = Q_{R_2}(p) \leq 0$ there exists a process $q\in\mathcal{P}_{R_2}$ with $W_{R_2}(q) = -\Delta U_{R_2}(p) \geq 0$ and well-defined concatenation $q\circ p$. This is a consequence of Definition~\ref{def:heatreservoir} (ii). 
By construction we find that under $q\circ p$
\begin{align}
\Delta U_{R_2}(q\circ p) = \Delta U_{R_2}(q) + \Delta U_{R_2}(p) = 0\,
\end{align}
implying that not only $S$ but $S\vee R_2$ is cyclic, and
\begin{align}
Q_{R_1}(q\circ p) = Q_{R_1}(p)<0\,.
\end{align}
This contradicts the second law, requiring $Q_{R_1}(q\circ p) \geq 0$, and proves (i).

For a reversible $p$ w.l.o.g.\ $Q_{R_2}(p)>0$. If $p^\mathrm{rev}$ is a reverse process, we know that $Q_{R_2}(p^\mathrm{rev}) = -Q_{R_2}(p) <0$ and due to (i) applied to $p^\mathrm{rev}$ it must be that $Q_{R_1}(p^\mathrm{rev})>0$.
Going back to $p$, this implies that $Q_{R_1}(p)<0$, which concludes the proof.
\end{proof}

\section{Absolute temperature}
\label{app:abstemp}


We investigate the reversible heat flows between equivalent reservoirs and a cyclic machine.
Lemma~\ref{lemma:ratioequivapp} proves that reversible heat flows to reversible engines from equivalent reservoirs are exactly opposite and hence $-\tfrac{Q_{R_1}}{Q_{R_2}}=1$.
In addition, we know from Lemma~\ref{lemma:equiheatapp} that swapping equivalent system leaves all heat flows invariant and thus the same holds for the ratio of heat flows in Carnot engines. 

\begin{lemma}[Reversible engine with equivalent reservoirs.]
\label{lemma:ratioequivapp}
Consider a reversible $p\in\mathcal{P}_{R_1\vee S\vee R_2}$, cyclic on $S$, with equivalent heat reservoirs $R_1\hat=R_2\in\mathcal{R}$. 
Then the heat flows to $R_1$ and $R_2$ must be exactly opposite, $Q_{R_2}(p) = -Q_{R_1}(p)$.
\end{lemma}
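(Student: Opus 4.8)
The plan is to prove the equivalent statement $W_S(p)=0$. Indeed, since $p$ is a work process on $R_1\vee S\vee R_2$ its total heat vanishes, so $Q_{R_1}(p)+Q_S(p)+Q_{R_2}(p)=0$; cyclicity on $S$ gives $\Delta U_S(p)=0$, whence $W_S(p)=-Q_S(p)=Q_{R_1}(p)+Q_{R_2}(p)$. Thus $Q_{R_2}(p)=-Q_{R_1}(p)$ is the same as $W_S(p)=0$. Writing $Q_i:=Q_{R_i}(p)$, I note that $p^{\mathrm{rev}}$ is again reversible and cyclic on $S$ with $Q_{R_i}(p^{\mathrm{rev}})=-Q_i$, so replacing $p$ by $p^{\mathrm{rev}}$ flips the sign of $Q_1+Q_2$. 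Hence I may assume without loss of generality that $Q_1+Q_2\le 0$, and it suffices to derive $Q_1+Q_2\ge 0$ from the second law.

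First I would invoke the thermodynamic isomorphism $\varphi,\varphi_\mathcal{A}$ witnessing $R_1\hat=R_2$, which swaps $R_1$ and $R_2$ and fixes every other atomic system. By Lemma~\ref{lemma:isomwpapp} the image $\varphi(p)$ is again a reversible work process on $\varphi_\mathcal{S}(R_1\vee S\vee R_2)=R_1\vee S\vee R_2$; by property (v) of Definition~\ref{def:equiatomapp} it induces exactly the same state change on $S$, so it is cyclic on $S$ with $W_S(\varphi(p))=W_S(p)$; and by Lemma~\ref{lemma:equiheatapp} its reservoir heat flows are the swapped ones, $Q_{R_1}(\varphi(p))=Q_2$ and $Q_{R_2}(\varphi(p))=Q_1$. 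The idea is to run $p$ and then $\varphi(p)$ so that each reservoir absorbs $Q_1+Q_2$ while work $2(Q_1+Q_2)$ is spent on $S$, and then to absorb $R_2$ into the engine so that only the single reservoir $R_1$ remains external.

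The main obstacle is that $\varphi(p)$ cannot be concatenated after $p$ directly, because the reservoir states produced by $p$ need not match those $\varphi(p)$ starts from. Here I would use Definition~\ref{def:heatreservoir}(iii): since a reservoir is invariant under translations of its internal energy and its internal energy is injective (Definition~\ref{def:heatreservoir}(i)), I can replace $\varphi(p)$ by a translated process $\widetilde{\varphi(p)}$ with identical work and heat flows whose initial $R_1$- and $R_2$-states coincide with the final states of $p$. Translation leaves the $S$-states untouched, so $r:=\widetilde{\varphi(p)}\circ p$ is defined, is cyclic on $S$, has $\Delta U_{R_i}(r)=Q_1+Q_2$ for $i=1,2$, and satisfies $W_S(r)=2(Q_1+Q_2)$.

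Finally, since $Q_1+Q_2\le 0$, reservoir property (ii) supplies a pure work process $s\in\mathcal{P}_{R_2}$ with $W_{R_2}(s)=-(Q_1+Q_2)\ge 0$, which (after a further state-matching translation) restores the internal energy of $R_2$; by injectivity $s\circ r$ is then cyclic on $R_2$ as well as on $S$, hence cyclic on $S':=S\vee R_2$. Because $p$ and $\varphi(p)$ are reversible, their work on reservoirs vanishes, so a short computation gives $W_{S'}(s\circ r)=W_S(r)+W_{R_2}(s)=2(Q_1+Q_2)-(Q_1+Q_2)=Q_1+Q_2$. As $s\circ r$ is a work process on $R_1\vee S'$ that is cyclic on $S'$ and involves the single reservoir $R_1$, the second law (Postulate~\ref{post:sec}) forces $W_{S'}(s\circ r)=Q_1+Q_2\ge 0$. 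Combined with the reduction $Q_1+Q_2\le 0$ this yields $Q_1+Q_2=0$, i.e.\ $W_S(p)=0$ and $Q_{R_2}(p)=-Q_{R_1}(p)$. I expect the bookkeeping of reservoir states under translation invariance to be the only genuinely delicate point; everything else is additivity of work and heat together with the preservation results for isomorphisms.
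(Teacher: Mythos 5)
Your proof is correct, but it takes a genuinely different route from the paper's. The paper argues by contradiction in the style of its proof of Carnot's Theorem: assuming $-Q_{R_1}(p)/Q_{R_2}(p)>1$, it picks integers $k,l$ with $-Q_{R_1}(p)/Q_{R_2}(p)>k/l>1$, runs $l$ translated copies of $p$ followed by $k$ translated copies of the swapped process $\varphi(p)$, and shows that then \emph{both} total reservoir heat flows are strictly negative, contradicting Lemma~\ref{lemma:carnotapp}~(i). You instead prove the equivalent statement $W_S(p)=0$ by a two-sided bound: after the reduction $Q_1+Q_2\le 0$ via the reverse process, you concatenate $p$ with a single translated copy of $\varphi(p)$ so that both reservoirs undergo the same energy change $Q_1+Q_2$, restore $R_2$ by a pure work process supplied by Definition~\ref{def:heatreservoir}~(ii), absorb $R_2$ into the cyclic machine, and apply the Kelvin--Planck second law (Postulate~\ref{post:sec}) directly to the single remaining reservoir $R_1$. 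What your approach buys is the elimination of the rational-approximation trick and of any appeal to Lemma~\ref{lemma:carnotapp}: one application of each process and the second law suffice, and the intermediate formulation $W_S(p)=0$ is itself a clean statement. What the paper's approach buys is uniformity of technique (the same $k/l$ construction as in Theorem~\ref{thm:carnot}) and the fact that it never needs the restoration step on $R_2$, hence never invokes reservoir property (ii); both proofs, however, rely equally on translation invariance (Definition~\ref{def:heatreservoir}~(iii)) for concatenability, on the heat-flow preservation under the isomorphism (Lemma~\ref{lemma:equiheatapp}), and on the fact that reversible processes do zero work on reservoirs. One small simplification to your write-up: the restoring process $s$ can be chosen with endpoint states exactly $\lceil r\rceil_{R_2}$ and $\lfloor r\rfloor_{R_2}$, so no ``further state-matching translation'' and no appeal to injectivity of $U_{R_2}$ is needed there.
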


\begin{proof}
If $p$ is trivial, i.e.\ $Q_{R_1}(p)=Q_{R_2}(p)=0$, we are done. So we assume that it is non-trivial.
W.l.o.g.\ $Q_{R_2}(p)>0$, which implies that $Q_{R_1}(p)<0$ due to Lemma~\ref{lemma:carnotapp}.
For equivalent reservoirs $R_1\hat=R_2$ there exists a corresponding thermodynamic isomorphism $\varphi$.
Define $q:=\varphi(p)\in\mathcal{P}_{R_1\vee S \vee R_2}$. By definition, $q$ fulfils
$Q_{R_1}(q) = Q_{R_2}(p)$ and $Q_{R_2}(q) = Q_{R_1}(p)$.
Suppose now by contradiction that $-\tfrac{Q_{R_1}(p)}{Q_{R_2}(p)} \neq 1$.
W.l.o.g.\ $-\tfrac{Q_{R_1}(p)}{Q_{R_2}(p)} > 1$ (otherwise use the reverse process to $p$ and swap the labels $1\leftrightarrow2$).
Choose positive integers $k,l\in\mathbbm{N}$ such that
\begin{align}
-\frac{Q_{R_1}(p)}{Q_{R_2}(p)} > \frac{k}{l} > 1
\end{align}
and apply $p$ $l$ times followed by $k$ applications of $q$.\footnote{
As in the proof of Carnot's Theorem, by ``$k$ applications of $p$'' we mean that one applies $p$, then the corresponding process to $p$ for the new initial states of the reservoirs which exists due to Definition~\ref{def:heatreservoir} (iii) and so on, $k$ times in total. Then apply $q$ or, if the initial states of the reservoirs do not match, a translated version of $q$, $l$ times in the same manner.}
The total process is still cyclic on $S$ and the heat flows sum up to 
\begin{align}
Q_{R_1}^\mathrm{tot} = l\, Q_{R_1}(p) + k\, Q_{R_1}(q) 
= l\, Q_{R_1}(p) + k\, Q_{R_2}(p) 
= \underbrace{\left(\frac{k}{l}-\left(-\frac{Q_{R_1}(p)}{Q_{R_2}(p)}\right)\right)}_{< 0} 
  \cdot \underbrace{l\, Q_{R_2}(p)}_{> 0} < 0\,, \\
Q_{R_2}^\mathrm{tot} = l\, Q_{R_2}(p) + k\, Q_{R_2}(q)
l\, Q_{R_2}(p) + k\, Q_{R_1}(p) 
= \underbrace{\left(\frac{k}{l}-\left(-\frac{Q_{R_2}(p)}{Q_{R_1}(p)}\right)\right)}_{> 0} 
  \cdot \underbrace{l\, Q_{R_1}(p)}_{< 0} < 0\,,
\end{align}
where we used in the second line that $-\tfrac{Q_{R_2}(p)}{Q_{R_1}(p)}<1<\tfrac{k}{l}$.
This contradicts Lemma~\ref{lemma:carnotapp}, and thus concludes the proof.
\end{proof}

Postulate~\ref{post:Carnotexist} on the existence of reversible Carnot engines guarantees that the universality statement of Carnot's Theorem~\ref{thm:carnot} is meaningful for an arbitrary pair of heat reservoirs. 
As discussed, it is a comparability statement in this sense. 
Based on this we define the temperature ratio $\tau$ as follows.

\begin{definition}[Temperature ratio]
\label{def:tauapp}
The \emph{temperature ratio $\tau$ of two equivalence classes} $[R_1],[R_2]\in\sfrac{\mathcal{R}}{\hat=}$ is
\begin{align}
\label{eq:taudefapp}
\begin{split}
\tau: \ \sfrac{\mathcal{R}}{\hat=}\times\sfrac{\mathcal{R}}{\hat=}  &\longrightarrow \mathbbm{R}_{>0}\\
([R_1],[R_2]) &\longmapsto -\tfrac{Q_{R_1}}{Q_{R_2}}\,
\end{split}
\end{align}
where $R_1$ and $R_2$ are two different\footnote{
A Carnot engine operates between two \emph{different} heat reservoirs. If $R_1=R_2$, then $R_1\vee S \vee R_2 = R_1\vee S$ and there would be no two heat flows to compare.}
heat reservoirs and $Q_{R_1}$ and $Q_{R_2}$ are the heat flows of a (non-trivial) reversible Carnot engine operating between them such that $Q_{R_2}>0$. 
Extending this definition to the set of pairs of heat reservoirs, we use the same symbol $\tau$ and write for the \emph{temperature ratio of two heat reservoirs}
\begin{align}
\label{eq:taudef2app}
\begin{split}
\tau(R_1,R_2):=\tau([R_1],[R_2])\,.
\end{split}
\end{align}
\end{definition}

$\tau$ is well-defined, as discussed in the main text. 
Furthermore, it obviously fulfils $\tau([R],[R])=1$ and $\tau([R_2],[R_1])=\tau([R_1],[R_2])^{-1}$.

In the coming lemma about the temperature ratio $\tau$ it is proven that it actually behaves as a ratio, meaning that multiplying $\tau(R_1,R_2)$ with $\tau(R_2,R_3)$, where $R_2$ first shows up in the second argument and then in the first, is equal to $\tau(R_1,R_3)$, independently of the reservoir $R_2$.

\begin{lemma}[$\tau$ as a ratio]
\label{lemma:tautransapp}
Let $R_1,R_2,R_3\in\mathcal{R}$ be three arbitrary heat reservoirs. Then
%
$\tau(R_1,R_2)\cdot\tau(R_2,R_3) = \tau(R_1,R_3)$.
\end{lemma}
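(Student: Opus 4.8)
The plan is to realize $\tau(R_1,R_3)$ by chaining a reversible Carnot engine between $R_1$ and $R_2$ with one between $R_2$ and $R_3$, tuned so that the heat deposited in $R_2$ by the first engine is exactly the heat withdrawn from $R_2$ by the second. The middle reservoir then returns to its initial state and, being work-free, can be absorbed into a single cyclic machine operating between $R_1$ and $R_3$; the universality part of Carnot's Theorem~\ref{thm:carnot}(ii) then lets me read off the product directly. As a preliminary reduction, I would assume $R_1,R_2,R_3$ are pairwise distinct atomic systems: since $\tau$ depends only on the $\hat=$-classes and, by Postulate~\ref{post:copies}, every reservoir admits arbitrarily many disjoint equivalent copies, I may replace any coinciding reservoirs by distinct copies in the same class without changing any value of $\tau$.

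For the construction I would fix $Q_0>0$ and invoke Postulate~\ref{post:Carnotexist} (with the reservoirs listed so as to control the flow into $R_2$) to obtain a reversible work process $p\in\mathcal{P}_{R_1\vee S\vee R_2}$, cyclic on $S$, with $Q_{R_2}(p)=Q_0$; Definition~\ref{def:tau} then gives $Q_{R_1}(p)=-\tau(R_1,R_2)\,Q_0$. Likewise I would take a reversible $p'\in\mathcal{P}_{R_2\vee S'\vee R_3}$, cyclic on $S'$, with $Q_{R_2}(p')=-Q_0$, whence $Q_{R_3}(p')=Q_0/\tau(R_2,R_3)>0$. Choosing $S,S'$ disjoint from each other and from all reservoirs, the only shared atomic system of $p$ and $p'$ is $R_2$; using Definition~\ref{def:heatreservoir}(iii) to translate $p'$ so that $\lfloor p'\rfloor_{R_2}=\lceil p\rceil_{R_2}$ (which leaves every work and heat flow unchanged) makes the concatenation $p'\circ p$ defined, and it is reversible since $p$ and $p'$ are.

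Next I would verify that $R_2$ is cyclic under $p'\circ p$. Reversibility forces $W_{R_2}(p)=W_{R_2}(p')=0$, so by additivity of heat under concatenation (Lemma~\ref{lemma:heatconccompapp}) one gets $\Delta U_{R_2}(p'\circ p)=Q_{R_2}(p)+Q_{R_2}(p')=Q_0-Q_0=0$, and injectivity of $U_{R_2}$ (Definition~\ref{def:heatreservoir}(i)) upgrades this to $\lceil p'\circ p\rceil_{R_2}=\lfloor p'\circ p\rfloor_{R_2}$. Hence the composite $M:=S\vee R_2\vee S'$ is cyclic and $p'\circ p\in\mathcal{P}_{R_1\vee M\vee R_3}$ is a reversible work process with $W_{R_1}=W_{R_3}=0$, so Theorem~\ref{thm:carnot} applies with $M$ as the (arbitrary) machine. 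Since $R_1\notin\mathcal{A}_{p'}$ and $R_3\notin\mathcal{A}_{p}$, additivity of heat gives $Q_{R_1}(p'\circ p)=-\tau(R_1,R_2)Q_0$ and $Q_{R_3}(p'\circ p)=Q_0/\tau(R_2,R_3)>0$, so by universality the ratio of this engine is $\tau(R_1,R_3)$ and
\begin{align}
\tau(R_1,R_3)=-\frac{Q_{R_1}(p'\circ p)}{Q_{R_3}(p'\circ p)}
=-\frac{-\tau(R_1,R_2)\,Q_0}{Q_0/\tau(R_2,R_3)}
=\tau(R_1,R_2)\,\tau(R_2,R_3).
\end{align}

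The main obstacle I anticipate is the bookkeeping that makes $R_2$ \emph{exactly} cyclic: this hinges on simultaneously tuning the two heat flows via Postulate~\ref{post:Carnotexist}, combining work-freeness of reversible processes on reservoirs with injectivity of $U_{R_2}$ to pass from ``zero net internal-energy change'' to ``identical initial and final state'', and justifying that a cyclic machine which happens to contain a reservoir is still an admissible machine for Carnot's universality statement.
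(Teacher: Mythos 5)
Your proof is correct, and it shares the paper's core strategy -- chain two tunable reversible Carnot engines through the intermediate reservoir, check that everything except $R_1$ and $R_3$ is cyclic, and read off $\tau(R_1,R_3)$ from the universality statement of Theorem~\ref{thm:carnot}(ii) -- but your handling of the middle reservoir is genuinely different and leaner. The paper never lets the two engines touch the same reservoir: it introduces a second copy $R_2'\hat= R_2$, runs the second engine against $R_2'$ with $Q_{R_2'}(p')=-Q_{R_2}(p)$, and then restores both copies with a third machine $S''$ that shuttles the heat from $R_2$ to $R_2'$ at zero work cost, which requires Lemma~\ref{lemma:ratioequivapp} (unit ratio for equivalent reservoirs) and Lemma~\ref{lemma:equiheatapp} (to identify $\tau(R_2',R_3)$ with $\tau(R_2,R_3)$). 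You instead share the single reservoir $R_2$ between the two engines, tune the two flows to $\pm Q_0$ via Postulate~\ref{post:Carnotexist}, and make $R_2$ exactly cyclic by combining work-freeness of reversible processes on reservoirs, additivity of heat under concatenation, translation invariance (Definition~\ref{def:heatreservoir} (iii)) for concatenability, and injectivity of $U_{R_2}$ (Definition~\ref{def:heatreservoir} (i)) to upgrade $\Delta U_{R_2}=0$ to genuine cyclicity of the state. This removes the copy $R_2'$, the machine $S''$, and the two equivalence lemmas from the argument, at the price of the state-matching bookkeeping you flagged as the main obstacle; note, however, that the paper needs exactly the same translation-invariance bookkeeping to concatenate $p''\circ p'\circ p$, and it treats the reversibility of the translated processes with the same brevity you do (``the construction could be made analogously for the reverse processes''). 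Your closing worry about whether a cyclic machine containing a reservoir is admissible in Carnot's Theorem is harmless: the machine there is an arbitrary $S\in\mathcal{S}$, and the paper's own composite machine $S\vee R_2\vee R_2'\vee S'\vee S''$ contains two reservoirs.
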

%

\begin{proof} 
W.l.o.g.\ the reservoirs $R_1\neq R_2\neq R_3\neq R_1$ are different representatives of their respective equivalence class. If this was not the case, take equivalent but different reservoirs (this is always possible due to the postulate on the existence of arbitrarily many copies).
For the constructive proof, we need two copies of $R_2$ in the beginning. Call them $R_2$ and $R_2'$ and choose them to be different from each other and all the others, too.

Let $S$ be a Carnot engine operating between $R_1$ and $R_2$ (all systems pairwise disjoint) through a reversible work process $p\in\mathcal{P}_{R_1\vee S \vee R_2}$ with $Q_{R_1}:= Q_{R_1}(p)<0$ and $Q_{R_2}:=Q_{R_2}(p)>0$. Likewise, let $S'$ and $p'\in\mathcal{P}_{R_1\vee S' \vee R_2}$ be an analogous machine and reversible process for the reservoirs $R_2'$ and $R_3$ with $Q_{R_2'}:=Q_{R_2'}(p')=-Q_{R_2}<0$ and $Q_{R_3}:=Q_{R_3}(p')>0$. Such a machine together with the reversible work process on $R_2'\vee S\vee R_3$ exists due to Postulate~\ref{post:Carnotexist}.\footnote{Notice that this is the first time where we use the fact that we can tune one of the heat flows. In all previous proofs we only used the fact that non-trivial machines exist between any two reservoirs.}
The described setting is depicted in Figure~\ref{fig:taulemma} (a).
It follows that 
\begin{align}
\tau(R_1,R_2)=-\frac{Q_{R_1}}{Q_{R_2}} \quad \text{and} \quad \tau(R_2',R_3)=-\frac{Q_{R_2'}}{Q_{R_3}}\,.
\end{align}

We now use an additional machine $S''$ with a reversible process $p''\in\mathcal{P}_{R_1\vee S'' \vee R_2'}$ that transfers the heat $Q_{R_2}$ from $R_2$ to $R_2'$. Such a machine exists due to Postulate~\ref{post:Carnotexist} and has no work cost according to Lemma~\ref{lemma:ratioequivapp}. 
Due to the translation invariance of reservoirs (Definition~\ref{def:heatreservoir} (iii)) it is possible to find $p, p'$ and $p''$ which can be concatenated to $p''\circ p' \circ p$. 

Under the concatenated process the reservoirs $R_2$ and $R_2'$ are cyclic by construction, while the machines $S$, $S'$ and $S''$ are also cyclic. 
%
%
Hence, under this extension the situation changes to the one depicted in Figure~\ref{fig:taulemma} (b), where the reservoirs $R_1$ and $R_3$ interact with a cyclic machine while taking up the heat flows $Q_{R_1}$ and $Q_{R_3}$, respectively. 
The construction could be made analogously for the reverse processes and as a consequence the constructed process is reversible. 
We conclude that $\tau(R_1,R_3)=-\tfrac{Q_{R_1}}{Q_{R_3}}$, which implies
\begin{align}
\tau(R_1,R_2)\cdot\tau(R_2',R_3)
= \left( -\frac{Q_{R_1}}{Q_{R_2}} \right) \cdot \left( -\frac{Q_{R_2'}}{Q_{R_3}} \right)
= \left( -\frac{Q_{R_1}}{Q_{R_2}} \right) \cdot \left( \frac{Q_{R_2}}{Q_{R_3}} \right)
= -\frac{Q_{R_1}}{Q_{R_3}} 
=\tau(R_1,R_3)\,.
\end{align}
Together with the observation that heat flows do not change when exchanging equivalent systems (Lemma~\ref{lemma:equiheatapp}), and hence $\tau(R_2',R_3) = \tau(R_2,R_3)$, this concludes the proof.

\begin{figure}

\begin{center} 
\begin{tikzpicture}[scale=.65]
	\draw[] (-.8,5) node[] {(a)};

	\begin{scope}[yshift=-1.1cm]
	\draw[very thick] (.5,4.5) node[above] {$R_1$}-- (1,4.5) -- (1,5.5) -- (0,5.5) -- (0,4.5) -- (.5,4.5); 
	\draw[very thick] (.5,0) node[above] {$R_2$} -- (1,0) -- (1,1) -- (0,1) -- (0,0) -- (.5,0); 
	\draw[] (.5,2.75) node[] {$S$} circle (.6cm);

	\draw[<-] (.5,4.4) -- (.5,3.5) node [above right] {$Q_{R_1}$};
	\draw[<-] (.5,1.1) -- (.5,2) node[below right] {$Q_{R_2} $};
	\draw[->] (2.15,2.75) -- (1.25,2.75) node[above right] {$W_S$};
	\end{scope}
	
	\begin{scope}[xshift = 4cm, yshift = -1.1cm]
		\draw[very thick] (.5,4.5) node[above] {$R_2'$}-- (1,4.5) -- (1,5.5) -- (0,5.5) -- (0,4.5) -- (.5,4.5); 
		\draw[very thick] (.5,0) node[above] {$R_3$} -- (1,0) -- (1,1) -- (0,1) -- (0,0) -- (.5,0); 
		\draw[] (.5,2.75) node[] {$S'$} circle (.6cm);
	
		\draw[<-] (.5,4.4) -- (.5,3.5) node [above right] {$Q_{R_2'} = -Q_{R_2}$};
		\draw[<-] (.5,1.1) -- (.5,2) node[below right] {$Q_{R_3}$};
		\draw[->] (2.15,2.75) -- (1.25,2.75) node[above right] {$W_{S'}$};
	\end{scope}
	
	\begin{scope}[xshift=10cm]
	\draw[] (-1,5) node[left] {(b)};	
	\draw[very thick] (.5,4.5) node[above] {$R_1$}-- (1,4.5) -- (1,5.5) -- (0,5.5) -- (0,4.5) -- (.5,4.5); 
	\draw[<-] (.5,4.4) -- (.5,3.5) node [above right] {$Q_{R_1}$};
	\draw[] (.5,2.35) node[] {$S \vee$};
	\draw[] (.5,1.65) node[] {$R_2\vee R_2'\vee$} circle (1.7cm);
	\draw[] (.5,.95) node[] {$S' \vee S''$};
	\draw[->] (3.2,1.65) -- (2.3,1.65) node[above right] {$W_{S}+W_{S'}$}; 
	\draw[very thick, yshift=-2.2cm] (.5,0) node[above] {$R_3$} -- (1,0) -- (1,1) -- (0,1) -- (0,0) -- (.5,0); 
	\draw[->, yshift=-4.6cm] (.5,4.4) -- (.5,3.5) node [above right] {$Q_{R_3}$};
	\end{scope}

	\end{tikzpicture}
\end{center}

\caption{(a) The two reversibly operating cyclic machines $S$ and $S'$ between $R_1$, $R_2$ and $R_2'$, $R_3$, respectively, are such that $Q_{R_2}=-Q_{R_2'}>0$ and consequently $Q_{R_1}<0$, $Q_{R_3}>0$. 
(b) After making use of Postulate~\ref{post:Carnotexist} the copies $R_2$ and $R_2'$ we obtain a reversibly operating cyclic machine $S\vee R_2\vee R_2' \vee S' \vee S''$ between $R_1$ and $R_3$.
}
\label{fig:taulemma}

\end{figure}
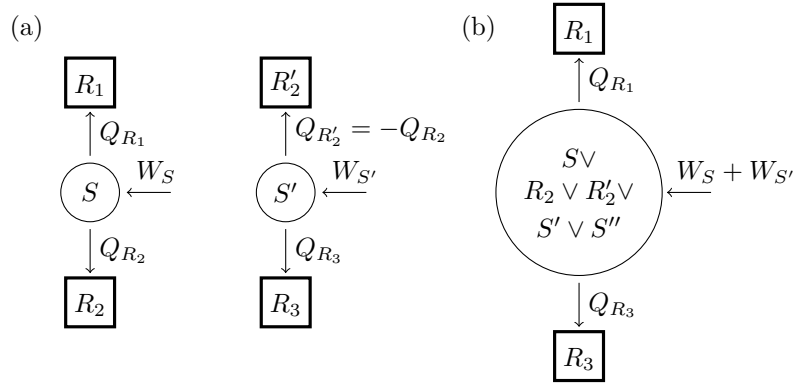

\end{proof}

It is then possible to define the absolute temperature of a heat reservoir up to the free choice of a reference temperature $T_\mathrm{ref}$ and a reference reservoir $R_\mathrm{ref}$.

\begin{definition}[Absolute temperature]
\label{def:abstempresapp}
The \emph{absolute temperature of a heat reservoir} $R\in\mathcal{R}$ is defined as
\begin{align}
T := \tau(R,R_\mathrm{ref}) \cdot T_\mathrm{ref}\,.
\end{align}
\end{definition}

It follows that absolute temperature, just like the heat flows in reversible Carnot engines, is a property of the reservoir as a whole and independent of its state.
This is what one expects for reservoirs, which are systems that should not change its properties and how they interact with other systems, after finite amounts of energy have been exchanged.\\

Systems at equal temperature are considered to be in thermal equilibrium. For reservoirs we can now make this definition precise.

\begin{definition}[Thermal equilibrium for reservoirs, $\sim$]
\label{def:simapp}
Let $R_1,R_2\in\mathcal{R}$ be heat reservoirs.
We say that they are in \emph{thermal equilibrium} if $\tau(R_1,R_2)=1$ and write $R_1\sim R_2$.
\end{definition}

$\hat=$, restricted to the set of heat reservoirs $\mathcal{R}$, is a sub-relation of $\sim$.
In addition, just like $\hat=$, $\sim$ is an equivalence relation.

\begin{lemma}[$\sim$ is equivalence relation]
\label{lemma:simequirelapp}
The relation $\sim$ defined in Definition~\ref{def:simapp} is an equivalence relation.
\end{lemma}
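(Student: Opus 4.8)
The plan is to verify the three defining properties of an equivalence relation — reflexivity, symmetry, and transitivity — each of which reduces immediately to an already-established property of the temperature ratio $\tau$ from Definition~\ref{def:tauapp}. Since $\sim$ is defined by $R_1 \sim R_2 \Leftrightarrow \tau(R_1, R_2) = 1$, the entire argument is a short computation with $\tau$, and I expect no genuine obstacle; the only point requiring a moment's care is ensuring that each invocation of $\tau$ refers to a well-defined quantity.

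First I would establish reflexivity. By Lemma~\ref{lemma:ratioequivapp} we have $\tau([R],[R]) = 1$ for every reservoir $R \in \mathcal{R}$, and hence $\tau(R,R) = 1$, which is exactly $R \sim R$. Here I would note in passing that although a Carnot engine operates between two \emph{different} reservoirs, the value $\tau([R],[R])$ is nonetheless meaningful: Postulate~\ref{post:copies} guarantees the existence of a distinct copy $R' \mathrel{\hat=} R$ to play the role of the second reservoir, so the ratio is genuinely defined.

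Next, for symmetry, suppose $R_1 \sim R_2$, i.e.\ $\tau(R_1, R_2) = 1$. Using the identity $\tau(R_2, R_1) = \tau(R_1, R_2)^{-1}$ derived in the main text — obtained by reversing the reversible Carnot process so that both heat flows change sign — it follows that $\tau(R_2, R_1) = 1^{-1} = 1$, and therefore $R_2 \sim R_1$.

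Finally, for transitivity, assume $R_1 \sim R_2$ and $R_2 \sim R_3$, so that $\tau(R_1, R_2) = \tau(R_2, R_3) = 1$. By Lemma~\ref{lemma:tautransapp}, $\tau$ composes multiplicatively, $\tau(R_1, R_2)\cdot\tau(R_2, R_3) = \tau(R_1, R_3)$, whence $\tau(R_1, R_3) = 1 \cdot 1 = 1$ and thus $R_1 \sim R_3$. This last step is the one that carries the real content, but all of the work has already been done in proving Lemma~\ref{lemma:tautransapp}; here it is merely specialized to the value $1$. Having verified all three properties, $\sim$ is an equivalence relation on $\mathcal{R}$.
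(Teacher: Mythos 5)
Your proof is correct and follows exactly the paper's own argument: reflexivity from Lemma~\ref{lemma:ratioequivapp}, symmetry from $\tau(R_2,R_1)=\tau(R_1,R_2)^{-1}$, and transitivity from Lemma~\ref{lemma:tautransapp}. Your extra remark that Postulate~\ref{post:copies} ensures $\tau([R],[R])$ is well-defined is a welcome bit of care the paper also makes, just earlier, in the discussion following Definition~\ref{def:tau}.
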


\begin{proof}
We need to show that it is reflexive, symmetric, and transitive. Reflexivity follows directly from Lemma~\ref{lemma:ratioequivapp}, while symmetry holds due to $\tau([R_2],[R_1])=\tau([R_1],[R_2])^{-1}$ for all $R_1,R_2\in\mathcal{R}$. 
Finally, transitivity is a consequence of Lemma~\ref{lemma:tautransapp}. 
\end{proof}

A thorough discussion of absolute temperature, the relation for thermal equilibrium and it implication is done in the main text.

\section{The temperature of heat flows}
\label{app:theatflow}

We shortly repeat the definition of the temperature of a heat flows. 

\begin{definition}[Heat at temperature $T$]
\label{def:heatattapp}
Let $S=S_1\vee S_2\in\mathcal{S}$ be composed of two disjoint subsystems and undergo an arbitrary work process $p\in\mathcal{P}_{S_1\vee S_2}$ with $Q:= Q_{S_2}(p)\neq0$.
We say that the \emph{heat $Q$ flows at temperature $T$} if
there exist two different reservoirs $R_1\sim R_2$ at temperature $T$ with processes $p_1\in\mathcal{P}_{S_1\vee R_1}$ and $p_2\in\mathcal{P}_{S_2\vee R_2}$ s.t.\ 
$W_{A}(p_i) = W_{A}(p)$ for all atomic systems 
$A\in\mathcal{A}\smallsetminus\mathrm{Atom}(S_{i+1})$
and the state changes on $S_i$ under $p_i$ are the same as under $p$, i.e.\ 
$\lfloor p_i \rfloor_{S_i} = \lfloor p \rfloor_{S_i}$ and 
$\lceil p_i \rceil_{S_i} = \lceil p \rceil_{S_i}$.
\end{definition}

In this section we technically investigate this definition.
The case of reversible heat flows will be of particular interest. 

\begin{lemma}[Reversible $p_i$ for reversible $p$]
\label{lemma:revpiapp}
Consider the setting described in Definition~\ref{def:heatattapp}.
If $p$ is reversible, then so are the $p_i$.
\end{lemma}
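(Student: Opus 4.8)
The plan is to prove that $p_1$ is reversible by explicitly constructing a reverse work process $\hat p_1\in\mathcal{P}_{S_1\vee R_1}$, and then to obtain reversibility of $p_2$ by the symmetric construction (exchanging the roles $1\leftrightarrow 2$). Write $\sigma_i:=\lfloor p\rfloor_{S_i}$ and $\sigma_i':=\lceil p\rceil_{S_i}$, and let $p^{\mathrm{rev}}\in\mathcal{P}_{S_1\vee S_2}$ be a reverse of the (reversible) $p$. Recall from Definition~\ref{def:heatattapp} that $p_1$ drives $S_1$ along $\sigma_1\to\sigma_1'$, that $W_A(p_1)=W_A(p)$ for $A\in\mathrm{Atom}(S_1)$ and $W_{R_1}(p_1)=0$, and that (computing $Q_{R_1}(p_1)=-Q_{S_1}(p_1)=-Q_{S_1}(p)=Q$) the process $p_1$ deposits heat $Q$ into $R_1$, while $p_2$ draws $Q$ out of $R_2$. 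The central idea is to reconstruct the reverse of the whole heat exchange out of objects already available: the reverse $p^{\mathrm{rev}}$, the partner process $p_2$, and a reversible, zero–work transfer $q\in\mathcal{P}_{R_1\vee R_2}$ of the heat $Q$ from $R_1$ to $R_2$, which exists because $R_1\sim R_2$ has $\tau(R_1,R_2)=1$ (a consequence of Carnot's Theorem and Postulate~\ref{post:Carnotexist}, as recorded in Section~\ref{sec:abstemp}); being reversible on reservoirs it satisfies $W_{R_1}(q)=W_{R_2}(q)=0$.

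First I would form the work process $P_1:=q\circ p^{\mathrm{rev}}\circ p_2$ on $S_1\vee R_1\vee S_2\vee R_2$, using the translation invariance of reservoirs (Definition~\ref{def:heatreservoir}(iii)) to shift the reservoir states so that the three factors concatenate; this is consistent since $\lceil p_2\rceil_{S_2}=\sigma_2'=\lfloor p^{\mathrm{rev}}\rfloor_{S_2}$. Running $p_2$ first sends $S_2$ along $\sigma_2\to\sigma_2'$ and draws $Q$ out of $R_2$; then $p^{\mathrm{rev}}$ sends $S_1\vee S_2$ along $(\sigma_1',\sigma_2')\to(\sigma_1,\sigma_2)$; finally $q$ moves $Q$ from $R_1$ to $R_2$. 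The net effect is $S_1:\sigma_1'\to\sigma_1$ and $R_1$ loses $Q$, while $S_2$ and $R_2$ return to their initial states. Hence $P_1$ is cyclic on $S_2\vee R_2$, and the additivity of work gives $W_{S_2}(P_1)=W_{S_2}(p_2)+W_{S_2}(p^{\mathrm{rev}})+W_{S_2}(q)=W_{S_2}(p)-W_{S_2}(p)+0=0$ and likewise $W_{R_2}(P_1)=0$, so $W_{S_2\vee R_2}(P_1)=0$ and $P_1$ is catalytic on $S_2\vee R_2$.

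By the freedom of description (Postulate~\ref{post:freedom}) there is then a work process $\hat p_1\in\mathcal{P}_{S_1\vee R_1}$ with the same state change and the same per-atom work costs as $P_1$ on $S_1\vee R_1$; in particular $\hat p_1$ sends $(\sigma_1',r_1')\to(\sigma_1,r_1)=\lfloor p_1\rfloor_{S_1\vee R_1}$. It remains to verify that $\hat p_1$ is a reverse of $p_1$, i.e. that $\hat p_1\circ p_1$ is an identity process (Definition~\ref{def:rev}). This concatenation is cyclic on $S_1$ and on $R_1$ by construction, so the only content is that all per-atom work costs vanish, which reduces to $W_A(P_1)=-W_A(p_1)$ for $A\in\mathrm{Atom}(S_1\vee R_1)$. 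For atomic $A\subset S_1$ the $p_2$ and $q$ contributions vanish ($S_1$ is not involved in them), so $W_A(P_1)=W_A(p^{\mathrm{rev}})=-W_A(p)=-W_A(p_1)$ by Lemma~\ref{lemma:reverseworkapp} and the defining property of $p_1$; and $W_{R_1}(P_1)=0=-W_{R_1}(p_1)$ since $R_1$ is untouched by $p_2$ and $p^{\mathrm{rev}}$ and carries zero work under the reversible $q$. Thus $\hat p_1\circ p_1$ is an identity process and $p_1$ is reversible; the analogous process $P_2:=q'\circ p^{\mathrm{rev}}\circ p_1$, catalytic on $S_1\vee R_1$, yields a reverse of $p_2$.

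The subtle point, and the reason the construction takes this particular shape, is the routing of the heat. A reverse built from $p^{\mathrm{rev}}$ alone fails, because $p_1\vee p_2$ passes the heat $Q$ \emph{through} the reservoirs (from $S_1$ into $R_1$ and from $R_2$ into $S_2$), whereas $p^{\mathrm{rev}}$ transfers it \emph{directly} between $S_1$ and $S_2$, leaving the reservoirs out of balance; inserting the reversible transfer $q$ between the equal-temperature reservoirs $R_1\sim R_2$ is exactly what rebalances them, and it costs no work precisely because $\tau(R_1,R_2)=1$ (cf.\ Lemma~\ref{lemma:ratioequivapp} and Postulate~\ref{post:Carnotexist}). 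I expect the main difficulty to be avoiding a circularity: one might instead try to reverse the combined process $q\circ(p_1\vee p_2)$ and peel off the factors via Proposition~\ref{prop:p1p2revapp}, but exhibiting a reverse of that combined process seems to demand the reverses of $p_1$ and $p_2$ themselves, which is what must be proved. Substituting the \emph{given} reverse $p^{\mathrm{rev}}$ for those unknown reverses, and then discarding the catalytic $S_2\vee R_2$ through Postulate~\ref{post:freedom}, is the device that breaks the loop.
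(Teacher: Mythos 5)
Your proof is correct and follows essentially the same route as the paper's: both build the reverse of one divided process by concatenating the partner process, the given reverse $p^{\mathrm{rev}}$, and a zero-work reversible heat transfer between the equal-temperature reservoirs, then observe the result is catalytic on the partner subsystem plus its reservoir and invoke Postulate~\ref{post:freedom} to discard it. The only differences are cosmetic: you reverse $p_1$ (the paper reverses $p_2$, i.e.\ the labels $1\leftrightarrow 2$ are swapped), and you compose the reservoir transfer sequentially where the paper writes it in parallel as $p^{\mathrm{rev}}\vee p_C$.
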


\begin{proof}
Let $p^\mathrm{rev}\in\mathcal{P}_{S_1\vee S_2}$ be a reverse process for $p$.
Furthermore let $p_C\in\mathcal{P}_{R_1\vee R_2}$ be the process that reversibly transports the heat $Q$ from $R_1$ to $R_2$ starting from the states $\lfloor p_1\rfloor_{R_1}$ and $\lceil p_2\rceil_{R_2}$. The process $p_C$ exists according to Postulate~\ref{post:Carnotexist}.
We then claim that the process 
$p_2^\mathrm{rev} := (p^\mathrm{rev}\vee p_C)\circ(p_1\vee \mathrm{id}_{S_2\vee R_2})\in\mathcal{P}_{S_1\vee R_1\vee S_2\vee R_2}$ is well-defined, where $\mathrm{id}_{S_2\vee R_2}$ is the identity process on the corresponding initial states of $p_C$ and $p^\mathrm{rev}$ for $R_2$ and $S_2$, respectively. See Figure~\ref{fig:revpiapp} for an illustration of the actions of the relevant processes.
Furthermore, this process is a reverse process for $p_2\vee \mathrm{id}_{S_1\vee R_1}$ on the composite system $S_1\vee R_1\vee S_2\vee R_2$ that is catalytic on $S_1\vee R_1$. 
Hence, according to the catalysis postulate there exists a process $\tilde p_2^\mathrm{rev}\in\mathcal{P}_{S_2\vee R_2}$ with the same thermodynamic properties, which then must be a reverse process for $p_2$. Thus $p_2$ (and if we do the same for $p_1$ also this one) must be reversible.\\
We are left with the task to show the above claims about $p_2^\mathrm{rev}$.
The fact that it is well-defined (i.e.\ the concatenation of the two processes exists) follows from checking that the input and output states of the two processes match.
Regarding the cyclicity on $S_1\vee R_1$ the consideration is equally easy.
Finally, we check the net work costs of $S_1$ and $R_1$
\begin{align}
W_{S_1}(p_2^\mathrm{rev}) &= W_{S_1}(p^\mathrm{rev}) +W_{S_1}(p_1) 
= W_{S_1}(p^\mathrm{rev}) +W_{S_1}(p) 
=0\,,
\label{eq:S1cat}\\
W_{R_1}(p_2^\mathrm{rev}) &= W_{R_1}(p_C) = 0\,.
\end{align}
In the last equality of Eq.~(\ref{eq:S1cat}) we used the fact that the work flows in $p^\mathrm{rev}$ are exactly opposite to the ones during $p$.
Hence, $p_2^\mathrm{rev}$ is indeed catalytic on $S_1\vee R_1$. 
\end{proof}

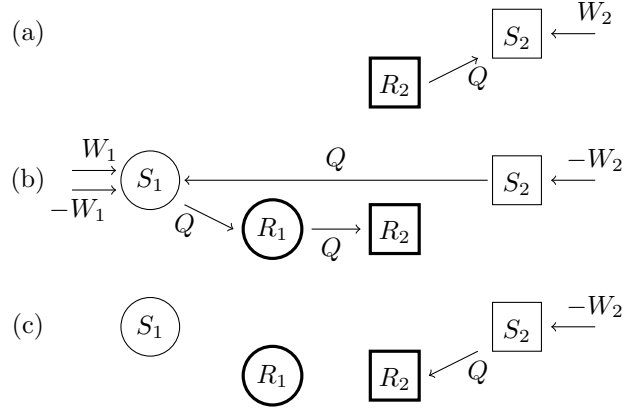
\begin{figure}
\begin{center} 
	\begin{tikzpicture}[scale=.65]

		\draw[](-2,0) node[] {(a)};
		
		\draw[->, xshift=5cm] (1.2,-1) -- (2.2,-.5) node[below] {$Q$};
		\draw[->, xshift=5cm] (4.6,0) node[above] {$W_2$} -- (3.7,0);

		\draw[xshift=5cm, yshift=-1cm, very thick] (.5,-.5) node[above] {$R_2$} -- (1,-.5) -- (1,.5) -- (0,.5) -- (0,-.5) -- (.5,-.5);
		\draw[xshift = 5cm] (3,-.5) node[above] {$S_2$} -- (3.5,-.5) -- (3.5,.5) -- (2.5,.5) -- (2.5,-.5) -- (3,-.5);
		
		\begin{scope}[yshift=-3cm]
		\draw[](-2,0) node[] {(b)};
		
		\draw[] (.5,0) node[] {$S_1$} circle (.6cm);
		\draw[very thick] (3,-1) node[] {$R_1$} circle (.6cm);
		\draw[<-] (1.2,0) -- (4.3,0) node[above] {$Q$} -- (7.4,0);
		\draw[->, yshift=.2cm, xshift=-.1cm] (-1,0) node[above right] {$W_1$} -- (-.1,0);
		\draw[->, yshift=-.2cm, xshift=-.1cm] (-1,0) -- (-.1,0) node[below left] {$-W_1$};
		\draw[->] (3.8,-1) node[below right] {$Q$} -- (4.8,-1);
		\draw[->] (1.2,-.5) node[below] {$Q$} -- (2.2,-1);
		\draw[->, xshift=5cm] (4.6,0) node[above] {$-W_2$} -- (3.7,0);

		\draw[xshift=5cm, yshift=-1cm, very thick] (.5,-.5) node[above] {$R_2$} -- (1,-.5) -- (1,.5) -- (0,.5) -- (0,-.5) -- (.5,-.5);
		\draw[xshift = 5cm] (3,-.5) node[above] {$S_2$} -- (3.5,-.5) -- (3.5,.5) -- (2.5,.5) -- (2.5,-.5) -- (3,-.5);
		\end{scope}

		\begin{scope}[yshift=-6cm]
		\draw[](-2,0) node[] {(c)};
		
		\draw[] (.5,0) node[] {$S_1$} circle (.6cm);
		\draw[very thick] (3,-1) node[] {$R_1$} circle (.6cm);
		\draw[<-, xshift=5cm] (1.2,-1) -- (2.2,-.5) node[below] {$Q$};
		\draw[->, xshift=5cm] (4.6,0) node[above] {$-W_2$} -- (3.7,0);

		\draw[xshift=5cm, yshift=-1cm, very thick] (.5,-.5) node[above] {$R_2$} -- (1,-.5) -- (1,.5) -- (0,.5) -- (0,-.5) -- (.5,-.5);
		\draw[xshift = 5cm] (3,-.5) node[above] {$S_2$} -- (3.5,-.5) -- (3.5,.5) -- (2.5,.5) -- (2.5,-.5) -- (3,-.5);
		\end{scope}

	\end{tikzpicture}
\end{center}
\caption{(a) Under $p_2\in\mathcal{P}_{S_2\vee R_2}$ the heat $Q$ flows from $R_2$ to $S_2$, according to Definition~\ref{def:heatattapp}. 
(b) In the construction of $p_2^\mathrm{rev}\in\mathcal{P}_{S_1\vee R_1\vee S_2\vee R_2}$ the heat $Q$ flows from $S_2$ via $S_1$ and $R_1$ to $R_2$ and the state change is exactly the opposite, as well as the work done on $S_2$. The systems $S_1$ and $R_1$ are catalytic.
(c) Essentially, under $p_2^\mathrm{rev}$ the heat $Q$ flows from $S_2$ to $R_2$ while nothing is done on the systems $S_1$ and $R_1$. Formally, Postulate~\ref{post:freedom} on the freedom of description then guarantees that the process can also be seen as a work process $\tilde p_2^\mathrm{rev}\in\mathcal{P}_{S_2\vee R_2}$ alone. Since the state changes are indeed opposite to the ones under $p_2$, we have found a reverse process for it.
}
\label{fig:revpiapp}
\end{figure}
%

Based on the previous lemma it is possible to show that non-zero reversible heat flows an assigned temperature is unique.

\begin{lemma}[Unique temperature for reversible heat flows]
\label{lemma:revheatuniquetapp}
Consider again the setting described in Definition~\ref{def:heatattapp}, in particular the heat flow is non-zero, $Q\neq0$.
If $p$ is reversible, then the heat flow in the reverse process has the same temperature.
Furthermore, this temperature is unique. 
\end{lemma}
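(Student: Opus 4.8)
The plan is to establish the two assertions separately, in both cases exploiting the reversibility of the divided processes $p_1,p_2$ guaranteed by Lemma~\ref{lemma:revpiapp}. Throughout I write $\sigma,\sigma'$ for the initial and final state of $S_2$ under $p$, and I use that heat and work change sign under reversal (Lemma~\ref{lemma:reverseworkapp} and the definition of heat).

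For the first assertion I would show that the reverse process $p^{\mathrm{rev}}$ again satisfies Definition~\ref{def:heatattapp}, with the \emph{same} reservoirs $R_1\sim R_2$ at temperature $T$ but with the divided processes $p_1^{\mathrm{rev}},p_2^{\mathrm{rev}}$. Since $p$ is reversible, Lemma~\ref{lemma:revpiapp} makes each $p_i$ reversible, so $p_i^{\mathrm{rev}}\in\mathcal{P}_{S_i\vee R_i}$ exists. By Lemma~\ref{lemma:reverseworkapp} the work costs flip sign, so for every $A\in\mathcal{A}\smallsetminus\mathrm{Atom}(S_{i+1})$ one has $W_A(p_i^{\mathrm{rev}})=-W_A(p_i)=-W_A(p)=W_A(p^{\mathrm{rev}})$, and the state change of $S_i$ under $p_i^{\mathrm{rev}}$ is the reverse of the one under $p_i$, hence equals the state change of $S_i$ under $p^{\mathrm{rev}}$. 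As $Q_{S_2}(p^{\mathrm{rev}})=-Q\neq0$, this certifies that the reversed heat $-Q$ flows at the same temperature $T$.

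For uniqueness, suppose $Q$ flows both at $T$ (witnessed by $R_2,p_2$) and at $T'$ (witnessed by $R_2',p_2'$). I would first record that each divided process transfers exactly $Q$ to $S_2$: since every atom of $S_2$ lies outside $\mathrm{Atom}(S_1)$, the definition gives $W_{S_2}(p_2)=W_{S_2}(p)$, and the state change yields $\Delta U_{S_2}(p_2)=\Delta U_{S_2}(p)$, so $Q_{S_2}(p_2)=Q$; likewise $Q_{S_2}(p_2')=Q$, and both induce the same change $\sigma\to\sigma'$ on $S_2$. Both are reversible by Lemma~\ref{lemma:revpiapp}. Assuming $R_2\neq R_2'$ (the case $R_2=R_2'$ being trivial), I would then form $r:=(p_2')^{\mathrm{rev}}\circ p_2\in\mathcal{P}_{R_2\vee S_2\vee R_2'}$; the concatenation is defined because the only shared system $S_2$ matches in state $\sigma'$. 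This $r$ is reversible, cyclic on $S_2$, satisfies $W_{R_2}(r)=W_{R_2'}(r)=0$ (reversible processes do no work on reservoirs), and by additivity of heat under concatenation has the exactly opposite heat flows $Q_{R_2}(r)=-Q$ and $Q_{R_2'}(r)=+Q$. Thus $r$ is a non-trivial reversible Carnot engine between $R_2$ and $R_2'$, and since its heat flows are negatives of each other, Carnot's Theorem~\ref{thm:carnot} together with Definition~\ref{def:tauapp} forces $\tau(R_2,R_2')=1$; by Lemma~\ref{lemma:tautransapp} this ratio equals $T/T'$, whence $T=T'$.

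The main obstacle is the bookkeeping in the uniqueness step: verifying that the assembled process $r$ genuinely lands in the two-reservoir Carnot setting to which Carnot's Theorem applies, i.e.\ checking that it is a reversible work process on $R_2\vee S_2\vee R_2'$, cyclic on $S_2$, with vanishing reservoir work and with the two heat flows coming out as exactly $\mp Q$. Once this is in place the contradiction is immediate, because opposite heat flows in a reversible engine pin the temperature ratio to $1$. (Alternatively, one could shortcut the argument by invoking Lemma~\ref{lemma:heatratioQTapp}: the two reversible heat flows are equal and induce the same state change on $S_2$, so $\tfrac{Q}{T}=\tfrac{Q}{T'}$ and $Q\neq0$ gives $T=T'$ directly; I would present the Carnot-engine construction as the primary route since it relies only on machinery already available at this point.)
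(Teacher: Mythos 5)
Your proof is correct and follows essentially the same route as the paper: the first assertion is handled identically via Lemma~\ref{lemma:revpiapp}, and for uniqueness you concatenate one witness's divided process with the reverse of the other's to obtain a reversible cyclic engine with opposite heat flows $\pm Q$, to which Carnot's Theorem~\ref{thm:carnot} is applied. The only (immaterial) difference is that you build this engine from $S_2$ operating between $R_2$ and $R_2'$, whereas the paper's proof uses $S_1$ operating between $R_1$ and $R_1'$.
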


\begin{proof}
We know from Lemma~\ref{lemma:revpiapp} that for a reversible process $p$ fulfilling Definition~\ref{def:heatattapp} the corresponding divided processes $p_i$ are also reversible. Hence, for the reverse process, $p^\mathrm{rev}$, the corresponding reverse processes of $p_i$ will fulfil the definition as well, now of course for the reverse heat flow $-Q$. 
Since the reservoirs in the reverse processes of the $p_i$ are the same, we have shown that for the particular reverse process $p^\mathrm{rev}$ the reverse heat flow also has temperature $T$.\\
However, this alone does not exclude that more than one temperature could be assigned to the reverse heat flow (as well as the forward heat flow). 
So let us assume that the heat flow $Q$ exchanged under the reversible process $p\in\mathcal{P}_{S_1\vee S_2}$ can be assigned the two temperatures $T$ and $T'$. According to Definition~\ref{def:heatattapp} this means that there exist processes $p_i\in\mathcal{P}_{S_i\vee R_i}$ and $p_i'\in\mathcal{P}_{S_i\vee R_i'}$ that divide the process $p$, where $R_1\sim R_2$ and $R_1'\sim R_2'$. Since $p$ is reversible, all four processes $p_1,p_2,p_1',p_2'$ are reversible, too. Call the reverse processes $p_i^\mathrm{rev}$ and $p_i^\mathrm{rev}{}'$, respectively. If we now concatenate the processes $p_1$ and $p_1^\mathrm{rev}{}'$ to a new process 
$(\mathrm{id}_{R_1}\vee p_1^\mathrm{rev}{}') \circ (p_1\vee \mathrm{id}_{R_1'})$, as depicted in Figure~\ref{fig:revheatunique}, we can construct a reversible engine $S_1$ operating between $R_1$ and $R_1'$. 
Reversibility follows from the fact that all processes used to construct the engine are reversible. 
According to Carnot's Theorem, which applies to this situation, we find that 
\begin{align}
\frac{T}{T'} = -\frac{-Q}{Q} = 1\,,
\end{align}
i.e.\ $T=T'$. This implies that $R_1\sim R_2\sim R_1'\sim R_2'$.
Remember that for this conclusion it is important that $Q\neq0$, which is a condition in Definition~\ref{def:heatattapp}.
Thus, for reversible processes, if a temperature can be assigned to a heat flow, this temperature is unique. 
\end{proof}

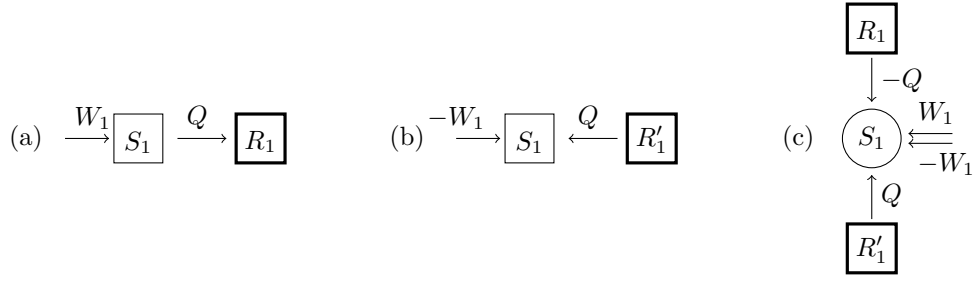
\begin{figure}
\begin{center} 
	\begin{tikzpicture}[scale=.65]
	
		\draw[](-1.8,0) node[] {(a)};
		
		\draw[] (.5,-.5) node[above] {$S_1$} -- (1,-.5) -- (1,.5) -- (0,.5) -- (0,-.5) -- (.5,-.5);
		\draw[very thick] (3,-.5) node[above] {$R_1$} -- (3.5,-.5) -- (3.5,.5) -- (2.5,.5) -- (2.5,-.5) -- (3,-.5);
		\draw[->] (1.3,0) node[above right] {$Q$} -- (2.3,0);
		\draw[->] (-1,0) node[above right] {$W_1$} -- (-.1,0);
		
		\begin{scope}[xshift=8cm]
		\draw[](-2,0) node[] {(b)};
		
		\draw[] (.5,-.5) node[above] {$S_1$} -- (1,-.5) -- (1,.5) -- (0,.5) -- (0,-.5) -- (.5,-.5);
		\draw[very thick] (3,-.5) node[above] {$R_1'$} -- (3.5,-.5) -- (3.5,.5) -- (2.5,.5) -- (2.5,-.5) -- (3,-.5);
		\draw[<-] (1.3,0) node[above right] {$Q$} -- (2.3,0);
		\draw[->] (-1,0) node[above] {$-W_1$} -- (-.1,0);
		\end{scope}		
		
		\begin{scope}[xshift=15cm, yshift=-2.75cm]
		\draw[](-1,2.75) node[] {(c)};

		\draw[very thick] (.5,4.5) node[above] {$R_1$}-- (1,4.5) -- (1,5.5) -- (0,5.5) -- (0,4.5) -- (.5,4.5); 
		\draw[very thick] (.5,0) node[above] {$R_1'$} -- (1,0) -- (1,1) -- (0,1) -- (0,0) -- (.5,0); 
		\draw[] (.5,2.75) node[] {$S_1$} circle (.6cm);

		\draw[->] (.5,4.4) -- (.5,3.5) node [above right] {$-Q$};
		\draw[->] (.5,1.1) -- (.5,2) node[below right] {$Q$};
		\draw[->,yshift=.1cm] (2.15,2.75) -- (1.25,2.75) node[above right] {$W_1$};
		\draw[->,yshift=-.1cm] (2.15,2.75) -- (1.25,2.75) node[below right] {$-W_1$};
		\end{scope}
	\end{tikzpicture}
\end{center}
\caption{(a) Under $p_1\in\mathcal{P}_{S_1\vee R_1}$ the state change in system $S_1$ is exactly the opposite compared to the state change under $p_1^\mathrm{rev}{}'\in\mathcal{P}_{S_1\vee R_1'}$ depicted in (b). If we concatenate the two reversible processes the total process will be cyclic on $S_1$ and reversible, since the construction with $p_1^\mathrm{rev}$ and $p_1'$ will yield a reverse process. Hence Carnot's Theorem can be directly applied to the situation in (c).
}
\label{fig:revheatunique}
\end{figure}

We conclude that a non-zero reversible heat flow can either be assigned a unique temperature or no temperature at all. \\

Referring to Example~\ref{ex:heatrevdifft} from the main text, one can investigate the relation between different reversible heat flows inducing the same state change on a system. 
%

%

\begin{lemma}[Different reversible heat flows inducing the same state change]
\label{lemma:heatratioQTapp}
Consider the setting as described in Definition~\ref{def:heatattapp} with a reversible process $p\in\mathcal{P}_{S_1\vee S_2}$. In addition, consider a different reversible process $p'\in\mathcal{P}_{S_1\vee S_2}$ with $\lfloor p'\rfloor_{S_1}=\lfloor p\rfloor_{S_1}$ 
and $\lceil p'\rceil_{S_1}=\lceil p\rceil_{S_1}$,
i.e.\ both $p$ and $p'$ reversibly induce the same state change on $S_1$.
If the heat $Q$ under $p$ is exchanged at temperature $T$ and $Q'$ under $p'$ at temperature $T'$, and both $Q\neq0$ and $Q'\neq0$,
then:
\begin{align}
\frac{Q}{T} = \frac{Q'}{T'}\,.
\end{align}
\end{lemma}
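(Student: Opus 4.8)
The plan is to reduce the statement to a single application of Carnot's Theorem by building, out of $p$ and $p'$, a reversible cyclic engine on $S_1$ that runs between the two reservoirs at temperatures $T$ and $T'$. First I would invoke Definition~\ref{def:heatattapp} twice: for $p$ it supplies a reservoir pair $R_1\sim R_2$ at temperature $T$ together with divided processes $p_1\in\mathcal{P}_{S_1\vee R_1}$ and $p_2\in\mathcal{P}_{S_2\vee R_2}$, and for $p'$ it supplies $R_1'\sim R_2'$ at temperature $T'$ with $p_1'\in\mathcal{P}_{S_1\vee R_1'}$ and $p_2'\in\mathcal{P}_{S_2\vee R_2'}$. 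Since heat flows are invariant under thermodynamic isomorphisms (Lemma~\ref{lemma:equiheatapp}) and arbitrarily many copies of any reservoir exist (Postulate~\ref{post:copies}), I may assume without loss of generality that the four reservoirs are pairwise distinct and disjoint from $S_1$. Because $p$ and $p'$ are reversible, Lemma~\ref{lemma:revpiapp} guarantees that $p_1$ and $p_1'$ are reversible as well.

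Next I would carry out the heat bookkeeping on $S_1$. Since $p_1$ reproduces both the state change and the work on $S_1$ that $p$ induces, and $U_{S_1}$ is a state function, we get $\Delta U_{S_1}(p_1)=\Delta U_{S_1}(p)$ and $W_{S_1}(p_1)=W_{S_1}(p)$, so $Q_{S_1}(p_1)=Q_{S_1}(p)=-Q$; as $p_1$ is a work process on the bipartite system $S_1\vee R_1$ this gives $Q_{R_1}(p_1)=Q$. The same argument yields $Q_{R_1'}(p_1')=Q'$, and hence $Q_{R_1'}((p_1')^{\mathrm{rev}})=-Q'$ in the reverse process. Using $\lceil p_1\rceil_{S_1}=\lceil p\rceil_{S_1}=\lceil p'\rceil_{S_1}=\lfloor (p_1')^{\mathrm{rev}}\rfloor_{S_1}$, the concatenation
\[
P := (p_1')^{\mathrm{rev}}\circ p_1 \in \mathcal{P}_{S_1\vee R_1\vee R_1'}
\]
is well-defined: the only shared atomic systems are those of $S_1$, and their states match, while the reservoir initial states can be aligned using the translation-invariance of reservoirs (Definition~\ref{def:heatreservoir}(iii)). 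By construction $P$ is cyclic on $S_1$ and, being a concatenation of reversible processes, reversible; thus it is a reversible Carnot engine with machine $S_1$ operating between $R_1$ (temperature $T$) and $R_1'$ (temperature $T'$), with heat flows $Q_{R_1}(P)=Q$ and $Q_{R_1'}(P)=-Q'$.

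Finally I would read off the temperature ratio. The ratio $-Q_{R_1}(P)/Q_{R_1'}(P)$ is positive, since Lemma~\ref{lemma:carnotapp} forces the two reservoir heat flows of a non-trivial reversible engine to have opposite signs, so $Q$ and $Q'$ share a sign. By the universality statement in Carnot's Theorem~\ref{thm:carnot}(ii) together with Definition~\ref{def:tauapp}, this ratio equals $\tau(R_1,R_1')$:
\[
\tau(R_1,R_1') = -\frac{Q_{R_1}(P)}{Q_{R_1'}(P)} = -\frac{Q}{-Q'} = \frac{Q}{Q'}.
\]
On the other hand, Definition~\ref{def:abstempresapp} together with the multiplicativity of $\tau$ (Lemma~\ref{lemma:tautransapp}) gives $T/T' = \tau(R_1,R_1')$. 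Combining the two displays yields $T/T' = Q/Q'$, which rearranges to the claimed identity $\tfrac{Q}{T}=\tfrac{Q'}{T'}$.

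The main obstacle I anticipate is the careful verification that $P$ is a bona fide reversible two-reservoir engine: one must confirm the concatenation is defined (matching $S_1$-states, and compatible reservoir states via the reservoirs' energy-translation invariance), secure the distinctness of $R_1$ and $R_1'$ so that a genuine Carnot setting arises, and track all heat-flow signs through the reversal. The degenerate case $T=T'$ is then covered automatically, since there $R_1\sim R_1'$ forces $\tau(R_1,R_1')=1$ (Lemma~\ref{lemma:ratioequivapp} and the definition of $\sim$), hence $Q=Q'$, consistent with $\tfrac{Q}{T}=\tfrac{Q'}{T'}$.
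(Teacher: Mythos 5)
Your proof is correct and follows essentially the same route as the paper's: use Lemma~\ref{lemma:revpiapp} to obtain reversible divided processes $p_1$ and $p_1'$, concatenate one with the reverse of the other to get a reversible cyclic engine on $S_1$ operating between $R_1$ and $R_1'$, and apply Carnot's Theorem to conclude $Q/Q'=T/T'$. The only differences are cosmetic: the paper runs the cycle in the opposite orientation ($p_1^{\mathrm{rev}}\circ p_1'$ instead of $(p_1')^{\mathrm{rev}}\circ p_1$) and leaves implicit the bookkeeping you spell out (reservoir distinctness, state alignment via Definition~\ref{def:heatreservoir}(iii), and the sign tracking through $\tau$).
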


\begin{proof}
Let $p_1\in\mathcal{P}_{S_1\vee R_1}$ and $p_1'\in\mathcal{P}_{S_1\vee R_1'}$ be the divided processes for $p$ and $p'$, respectively, and let $p_1^\mathrm{rev}$ and 
$p_1^\mathrm{rev}{}'$ be reverse processes for them. These exist since $p_1$ and $p_1'$ are reversible according to Lemma~\ref{lemma:revpiapp}.

Just like in the proof of Lemma~\ref{lemma:revheatuniquetapp} we now construct a reversible process
$(p_1^\mathrm{rev}\vee\mathrm{id}_{R_1'}) \circ (p_1'\vee \mathrm{id}_{R_1})$, which is essentially a Carnot process on $S_1$ between the reservoirs $R_1$ and $R_1'$.
The concatenation is well-defined since the final states of $p_1$ and $p_1'$ match on $S_1$.
During this process, heat $Q$ flows from $R_1$ to $S_1$ and heat $Q'$ flows from $S_1$ to $R_1'$. 
$S_1$ itself will end up in its initial state since the initial states of $p_1$ and $p_1'$ also match on $S_1$. 
Hence, $S_1$ indeed acts as a cyclic machine between the two reservoirs. 
Finally, the constructed process is reversible since all its parts are.
According to Carnot's Theorem, we must have
\begin{align}
\frac{Q}{Q'} = \frac{T}{T'}\,,
\end{align}
which concludes the proof.
\end{proof}



Importantly, even though Definition~\ref{def:heatattapp} explicitly talks about both systems $S_1$ and $S_2$, the ratio $\tfrac{Q}{T}$ only depends on the state change on $S_1$. Thus, Lemma~\ref{lemma:heatratioQTapp} does not make any reference to the state changes on $S_2$ under $p$ and $p'$. Only the state change on $S_1$ needs to match.
This is an important observation which, after all, is key to be able to use the ratio $\tfrac{Q}{T}$ to define a state function (the entropy).

\section{Clausius' Theorem and thermodynamic entropy}
\label{app:clausius}

We here prove that entropy is additive under composition of disjoint systems. 

\begin{lemma}[Additivity of entropy]
\label{lemma:entropyaddapp}
Let $S_1,S_2\in\mathcal{S}$ be disjoint systems and write $S=S_1\vee S_2$. 
Furthermore, let $\sigma,\sigma'\in\Sigma_S$ be such that $\sigma= \sigma_1\vee \sigma_2$ and $\sigma'= \sigma_1'\vee \sigma_2'$ with $\sigma_i, \sigma_i' \in\Sigma_{S_i}$.
Write $\Delta S_S := S_S(\sigma') - S_S(\sigma)$ 
and $\Delta S_{S_i} := S_{S_i}(\sigma_i') - S_{S_i}(\sigma)$.
Then
\begin{align}
\label{eq:entropyaddapp}
\Delta S_S = \Delta S_{S_1} + \Delta S_{S_2}\,,
\end{align}
i.e.\ entropy differences are additive under composition of disjoint systems.
\end{lemma}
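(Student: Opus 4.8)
The plan is to compute $\Delta S_S$ along a conveniently chosen sequence of reversible processes and to recognize its summands as the individual entropy differences. First I would record the basic consequence of Clausius' Theorem~\ref{thm:clausius} and the well-definedness of Definition~\ref{def:entropy}: for any $\tau,\tau'\in\Sigma_S$, the difference $S_S(\tau')-S_S(\tau)$ equals $\sum_i \tfrac{Q_S(q_i)}{T_i}$ for \emph{any} sequence of reversible concatenable processes $\{q_i\}$ with $q_i\in\mathcal{P}_{S\vee R_i}$, $W_{R_i}(q_i)=0$, transforming $\tau$ into $\tau'$. This follows by taking a reference sequence $\sigma_0\to\tau$ and appending a sequence $\tau\to\tau'$; path-independence (Clausius) shows the concatenation is a valid sequence computing $S_S(\tau')$, so the reference contribution cancels in the difference. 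Thus it suffices to exhibit one connecting sequence on $S$ and read off its total.

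Next I would build such a sequence in two stages. By Postulate~\ref{post:entropy} applied to $S_1$ there is a reversible sequence $\{a_i\}_{i=1}^{M}$, $a_i\in\mathcal{P}_{S_1\vee R_i}$ with $W_{R_i}(a_i)=0$, transforming $\sigma_1$ into $\sigma_1'$; similarly a sequence $\{b_j\}_{j=1}^{N}$, $b_j\in\mathcal{P}_{S_2\vee R_j'}$, transforming $\sigma_2$ into $\sigma_2'$. Using the existence of identity processes (Lemma~\ref{lemma:idexistapp}), I lift these to $S=S_1\vee S_2$ by setting $\tilde a_i := a_i\vee \mathrm{id}_{S_2}^{\sigma_2}$ and $\tilde b_j := b_j\vee \mathrm{id}_{S_1}^{\sigma_1'}$. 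Each $\tilde a_i$ is a work process on $S\vee R_i$ that is reversible (its reverse is $a_i^{\mathrm{rev}}\vee \mathrm{id}_{S_2}^{\sigma_2}$) and satisfies $W_{R_i}(\tilde a_i)=W_{R_i}(a_i)=0$, and analogously for $\tilde b_j$. The full list $\tilde a_1,\dots,\tilde a_M,\tilde b_1,\dots,\tilde b_N$ is concatenable: the $\tilde a_i$ inherit concatenability on $S_1$ while $S_2$ rests at $\sigma_2$, the first stage ends in the joint state $\sigma_1'\vee\sigma_2$, and the second stage begins there as well (with $S_1$ held at $\sigma_1'$). Hence it transforms $\sigma=\sigma_1\vee\sigma_2$ into $\sigma'=\sigma_1'\vee\sigma_2'$ on $S$.

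Finally I would evaluate the heat flows and their temperatures. By additivity of heat under composition of disjoint systems and the fact that an identity process carries zero heat, $Q_S(\tilde a_i)=Q_{S_1}(a_i)+Q_{S_2}(\mathrm{id}_{S_2}^{\sigma_2})=Q_{S_1}(a_i)$, and likewise $Q_S(\tilde b_j)=Q_{S_2}(b_j)$. Moreover this heat is exchanged directly with the reservoir $R_i$ (resp.\ $R_j'$), so by Definition~\ref{def:heatatt} and Example~\ref{ex:heatflowres} it flows at that reservoir's temperature $T_i$ (resp.\ $T_j'$), the very temperature appearing in Definition~\ref{def:entropy}. Therefore
\begin{align}
\Delta S_S = \sum_{i=1}^{M}\frac{Q_S(\tilde a_i)}{T_i} + \sum_{j=1}^{N}\frac{Q_S(\tilde b_j)}{T_j'} = \sum_{i=1}^{M}\frac{Q_{S_1}(a_i)}{T_i} + \sum_{j=1}^{N}\frac{Q_{S_2}(b_j)}{T_j'} = \Delta S_{S_1} + \Delta S_{S_2},
\end{align}
where the first equality uses the reduction of the opening paragraph and the last uses the same reduction applied to $S_1$ and $S_2$ separately. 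The main obstacle I expect is the bookkeeping needed to justify that the lifted processes $\tilde a_i,\tilde b_j$ genuinely qualify as members of an entropy-defining sequence on $S$ — that they are reversible work processes on $S\vee R_i$ with vanishing reservoir work and correctly matched junction states — together with the verification that the temperature of the $S$-level heat flow coincides with that of the $S_1$- or $S_2$-level flow; once these are in place the computation is immediate.
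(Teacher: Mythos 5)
Your proposal follows essentially the same route as the paper's own proof: take entropy-defining sequences on $S_1$ and $S_2$ separately (Postulate~\ref{post:entropy}), lift each process to the composite system by tensoring with an identity process on the resting subsystem (Lemma~\ref{lemma:idexistapp}), concatenate the two lifted stages, observe that the lifted heat flows and their temperatures coincide with the original ones, and invoke Clausius' Theorem~\ref{thm:clausius} so that this particular sequence computes $\Delta S_S$ regardless of which valid sequence is used. The computation at the end matches Eq.~(\ref{eq:entropyaddproofapp}) of the paper almost line by line.

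There is one bookkeeping step you gloss over, and it is precisely the step the paper spells out. Your justification of concatenability of the list $\tilde a_1,\dots,\tilde a_M,\tilde b_1,\dots,\tilde b_N$ only checks the states on $S_1\vee S_2$ at the junction. But concatenation (Postulate~\ref{post:conc}) requires matching states on \emph{all} atomic systems involved in both the accumulated process and the next one, and the reservoirs are involved systems. If some reservoir $R_i$ used in the first stage coincides with some $R_j'$ used in the second stage, the state in which the first stage leaves that reservoir need not equal the initial reservoir state that $\tilde b_j$ demands, so the concatenation can fail to be defined. The paper's proof removes this obstruction by assuming w.l.o.g.\ that the two sequences act on disjoint systems: any reservoir acted on by both sequences is replaced in one of them by an equivalent copy, which exists by Postulate~\ref{post:copies} and leaves all heat flows and temperatures unchanged (Lemma~\ref{lemma:equiheatapp}). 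With that one-line patch inserted before your concatenability claim, your argument goes through as written.
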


\begin{proof}
Let $\{p_i\}_{i=1}^N\subset\mathcal{P}$ be a sequence of processes that allows us to compute the entropy difference $\Delta S_{S_1}$, i.e. the processes $p_i\in\mathcal{P}_{S_1\vee R_i^{(1)}}$ in the sequence are concatenable and reversible, and under $p_i$ the heat $Q_{S_1}(p_i)$ is exchanged at temperature $T_i$. The total initial and final states of the sequence are $\sigma_1$ and $\sigma_1'$. 
We assume that such sequences exist between any two states of a system.
Let $\{q_j\}_{j=1}^M$ be an analogous sequence for system $S_2$ and the states $\sigma_2$ and $\sigma_2'$, where the temperature of the heat flows are denoted $\tilde T_j$. 
By the conditions on $\{p_i\}_i$ and $\{q_j\}_j\subset\mathcal{P}$ it holds
\begin{align}
\Delta S_{S_1} = \sum_i \frac{Q_{S_1}(p_i)}{T_i}
\quad \text{and} \quad
\Delta S_{S_2} = \sum_j \frac{Q_{S_2}(p_j)}{\tilde T_j}\,.
\end{align}
W.l.o.g.\ we can assume that the sequences of processes $\{p_i\}_i$ and $\{q_j\}_j$ act on disjoint systems $\tilde S_1$ and $\tilde S_2$ which include $S_1$ and $S_2$, respectively. This can be done due to Postulate~\ref{post:copies} on the existence of copies of systems. 
If a reservoirs was acted on by both sequences, replace it in one of them with a copy. 
We now construct $p\in\mathcal{P}_{\tilde S_1 \vee\tilde S_2}$ as the sequence of processes consisting of the joined sequences $\{p_i\}_i$ and $\{q_j\}_j$,
\begin{align}
p = ((p_N\vee\mathrm{id}_{S_2})\circ \cdots \circ (p_1\vee\mathrm{id}_{S_2})) \circ ((q_M\vee\mathrm{id}_{S_1})\circ \cdots \circ (q_1\vee\mathrm{id}_{S_1})) \,,
\end{align}
where $\mathrm{id}_{S_i}$ is a fitting identity on $S_i$.
This process is well-defined.
The total initial and final states on $S=S_1\vee S_2$ of $p$ are $\sigma = \sigma_1\vee\sigma_2$ and $\sigma'=\sigma_1'\vee\sigma_2'$.
The sequence defining $p$ fulfils all requirements necessary for it to be used to determine the entropy difference of its initial and final states on $S$ (reversibility, heat flows at well-defined temperature) and thus we find
\begin{align}
\label{eq:entropyaddproofapp}
\begin{split}
\Delta S_S 
&= S_{S_1\vee S_2}(\sigma') - S_{S_1\vee S_2}(\sigma) \\
&= \left( \sum_i \frac{Q_{S_1\vee S_2}(p_i\vee \mathrm{id}_{S_2})}{T_i} \right) 
+ \left( \sum_j \frac{Q_{S_1\vee S_2}(q_j\vee\mathrm{id}_{S_1})}{\tilde T_j} \right) \\
&= \left( \sum_i \frac{Q_{S_1}(p_i)}{T_i} \right) 
+ \left( \sum_j \frac{Q_{S_2}(q_j)}{\tilde T_j} \right)\\
&= \Delta S_{S_1} + \Delta S_{S_2}\,.
\end{split}
\end{align}
In the second to last equality we used that $Q_{S_1\vee S_2}(p_i\vee\mathrm{id}_{S_2}) = Q_{S_1}(p_i)$ since $\mathrm{id}_{S_2}$ is an identity on $S_{2}$, and likewise for exchanged roles of $S_1$ and $S_2$.
Due to Clausius' Theorem we know that $\Delta S_S=\Delta S_{S_1} + \Delta S_{S_2}$ also holds for any other valid way of computing $\Delta S_S$.
Therefore, entropy differences must be additive.
\end{proof}

By choosing the reference entropies of the systems accordingly, one can even achieve that not just entropy differences but entropy as a state variable is additive.\\

We conclude this section with a discussion of reversible processes with a net heat flow of zero.
Consider a reversible process $p\in\mathcal{P}_{S\vee S'}$ with $Q_S(p)=0$. 
If $S'\in\mathcal{R}$ is a reservoir and $W_{S'}(p)=0$, as is the case for the processes in Clausius' Theorem (Theorem~\ref{thm:clausius}), then $p$ is essentially a work process. 
In this case $S'$ is a catalytic system and according to Postulate~\ref{post:freedom} on the freedom of description there exists a work process on $S$ that does exactly the same, $\tilde p\in\mathcal{P}_S$, and acts on $S$ alone. 
For a work processes $\tilde p$ on $S$, we truly have $\tfrac{Q_S(\tilde p)}{T}=0$. 

If on the other hand $S'\in\mathcal{S}$ is an arbitrary thermodynamic system this is not true, as the following example shows. 

\begin{example}[Reversible net heat flow of zero]
\label{ex:zeroheat}
Consider an ideal gad $S$ undergoing a process $p=p_2\circ p'\circ p_1$, with two reservoirs $R_1$ and $R_2$, where the temperatures of the reservoirs are different, say $T_1 > T_2$.
Let $p_1\in\mathcal{P}_{S\vee R_1}$ be an isothermal compression of the gas, in which the heat $Q_S(p_1)>0$ is transferred from $S$ to $R_1$. This heat flows at temperature $T_1$. 
Next, the work process $p'\in\mathcal{P}_S$ is a reversible expansion of the gas, such that the final temperature of the gas is $T_2<T_1$. 
Now, apply $p_2\in\mathcal{P}_{S\vee R_2}$, which is an isothermal expansion at temperature $T_2$, which is such that the heat $Q_S(p_2) =-Q_S(p_1)<0$ flows from $S$ to $R_2$. 

In total, the heat $Q_S(p)=Q_S(p_1)+Q_S(p_2) = 0$ flows into $S$, i.e.\ we have a net heat flows of zero between $S$ and $R_1\vee R_2$. 
Nevertheless, according to Definition~\ref{def:entropy}, the entropy change under $p$ is 
\begin{align}
\tfrac{Q_S(p_1)}{T_1} + \tfrac{Q_S(p_2)}{T_2} < 0\,.
\end{align}
\end{example}

This example allows us to observe that it would not be a clever idea to define ``adiabatic processes'' on $S$ as those thermodynamic processes with $Q_S=0$.
Naively, having the traditional definition of an adiabatic process in mind, this sounds plausible.
However, as the example shows, during such processes the entropy of $S$ can decrease, i.e., such processes fail to fulfil Theorem~\ref{thm:entropythm} (Entropy Theorem), while adiabatic processes in the traditional sense are usually thought of those which fulfil the Entropy Theorem.\\

The above example also shows why the definition of entropy, Definition~\ref{def:entropy}, asks the sequence of processes $\{p_i\}_i$ connecting two states on $S$ to be such that $p_i\in\mathcal{P}_{S\vee R_i}$ and $W_{R_i}(p_i)=0$, where $R_i$ is a heat reservoir. 
Only in this setting a net heat flow of zero during $p_i$ implies that $\tfrac{Q}{T}=0$ is actually true. 
Beautifully enough, this definition still allows the simple extension to arbitrary heat flows between arbitrary systems to determine the entropy difference between two states, as long as these heat flows have a well-defined temperature. 
This follows from Definition~\ref{def:heatatt} and the considerations discussed in Section~\ref{sec:theatflow}.

\newpage
\addcontentsline{toc}{part}{References}
\bibliographystyle{unsrt}
\bibliography{references}

\begin{thebibliography}{10}

\bibitem{Clausius50}
R.~Clausius.
\newblock {\"U}ber die bewegende {K}raft der {W}\"arme und die {G}esetze,
  welche sich daraus fü\"ur die {W}\"armelehre selbst ableiten lassen.
\newblock {\em Annalen der Physik}, 79:368--397, 500--524, 1850.

\bibitem{Rankine50}
W.G. Rankine.
\newblock {\em {\"U}ber die mechanische {T}heorie der {W}\"arme}.
\newblock Wiley, 1850.

\bibitem{Kelvin51}
W.~Thomson.
\newblock On the {D}ynamical {T}heory of {H}eat, with numerical results deduced
  from {M}r. {J}oule's equivalent of a {T}hermal {U}nit, and {M}r. {R}egnault's
  {O}bservations on {S}team.
\newblock {\em Transactions of the Royal Society of Edinburgh}, XX:261--268,
  289--298, 1851.

\bibitem{Clausius54}
R.~Clausius.
\newblock {U}eber eine ver\"anderte {F}orm des zweiten {H}auptsatzes der
  mechanischen {W}\"armetheorie.
\newblock {\em Annalen der Physikun Chemie}, 12:31, 1854.

\bibitem{Maxwell71}
J.C. Maxwell.
\newblock {\em The Theory of Heat}.
\newblock Longmans, Green, and Co., 1871.

\bibitem{Bardeen73}
J.M. Bardeen, B.~Carter, and S.W. Hawking.
\newblock The {F}our {L}aws of {B}lack {H}ole {M}echanics.
\newblock {\em Commun. math. Phys.}, 31:161--170, 1973.

\bibitem{Carnot24}
S.~Carnot.
\newblock {\em R\'eflexions sur la puissance motrice du feu et sur les machines
  propres \`a d\'evelopper cette puissance}.
\newblock Bechelier, 1824.

\bibitem{Planck97}
M.~Planck.
\newblock {\em Vorlesungen \"uber Thermodynamik}.
\newblock Veit \& Comp., 1897.

\bibitem{Caratheodory09}
C.~Carath\'eodory.
\newblock {U}ntersuchung \"uber die {G}rundlagen der {T}hermodynamik.
\newblock {\em Math. Annalen}, 67:355--386, 1909.

\bibitem{Fowler39}
R.~Fowler and E.A. Guggenheim.
\newblock {\em Statistical Thermodynamics: A version of Statistical Mechanics
  for Students of Physics and Chemistry}.
\newblock Cambridge University Press, 1939.

\bibitem{Planck14}
M.~Planck.
\newblock {\em The Theory of Heat Radiation}.
\newblock P. Blakiston's Son \& Co., 1914.

\bibitem{Fermi56}
E.~Fermi.
\newblock {\em Thermodynamics}.
\newblock Dover, 1956.

\bibitem{Feynman63}
R.P. Feynman, B.~Leighton, and L.~Sands.
\newblock {\em The Feynman Lectures on Physics}.
\newblock Reading, Mass: Addison-Wesley Pub. Co, 1963.

\bibitem{Giles64}
R.~Giles.
\newblock {\em Mathematical Foundations of Thermodynamics}.
\newblock Pergamon, 1964.

\bibitem{Buchdahl66}
H.A. Buchdahl.
\newblock {\em The Concepts of Classical Thermodynamics}.
\newblock Cambridge University Press, 1966.

\bibitem{Zemansky68}
M.W. Zemansky.
\newblock {\em Heat and Thermodynamics}.
\newblock McGraw-Hill Book Company, 1968.

\bibitem{Pauli73}
W.~Pauli.
\newblock {\em Thermodynamics and the {K}inetic {T}heory of {G}ases}.
\newblock MIT Press, Cambridge, 1973.

\bibitem{Adkins83}
C.J. Adkins.
\newblock {\em Equilibrium {T}hermodynamics}.
\newblock Cambridge University Press, 1983.

\bibitem{Callen85}
H.~B. Callen.
\newblock {\em Thermodynamics and an {I}ntroduction to {T}hermostatistics}.
\newblock Wiley, 1985.

\bibitem{Huang87}
K.~Huang.
\newblock {\em Statistical {M}echanics}.
\newblock Wiley, 1987.

\bibitem{Neumaier07}
A.~Neumaier.
\newblock On the foundations of thermodynamics, 2007.
\newblock arxiv:0705.3790v1.

\bibitem{Thess11}
A.~Thess.
\newblock {\em The {E}ntropy {P}rinciple: {T}hermodynamics for the
  {U}nsatisfied}.
\newblock Springer-Verlag, Berlin, 2011.

\bibitem{Hulse18}
A.~Hulse, B.~Schumacher, and M.D. Westmoreland.
\newblock Axiomatic information thermodynamics, 2018.
\newblock arxiv:1801.05015.

\bibitem{Graf05}
G.M. Graf.
\newblock Theorie der {W}\"arme.
\newblock Lecture Notes, 2005.

\bibitem{Salem06}
W.K.~Abou Salem and J.~Fr\"ohlich.
\newblock Status of the {F}undamental {L}aws of {T}hermodynamics, 2006.
\newblock arxiv:math-ph/0604067v3.

\bibitem{LY99}
E.H. Lieb and J.~Yngvason.
\newblock The physics and mathematics of the second law of thermodynamics.
\newblock {\em Physics Reports}, pages 1--96, 1999.

\bibitem{LY02}
E.H. Lieb and J.~Yngvason.
\newblock The {M}athematical {S}tructure of the {S}econd {L}aw of
  {T}hermodynamics.
\newblock {\em Current Developments in Mathematics, 2001}, pages 89--130, 2002.

\bibitem{Talkner07}
P.~Talkner, E.~Lutz, and P.~H\"anggi.
\newblock Fluctuation theorems: {W}ork is not an observable.
\newblock {\em Phys. Rev. E}, 75:050102, 2007.

\bibitem{Aberg14}
J.~\AA{}berg.
\newblock Catalytic {C}oherence.
\newblock {\em Phys. Rev. Lett.}, 113:150402, 2014.

\bibitem{Perarnau17}
M.~Perarnau-Llobet, E.~B\"aumer, K.~Hovhannisyan, M.~Huber, and A.~Ac\'in.
\newblock No-{G}o {T}heorem for the {C}haracterization of {W}ork {F}luctuations
  in {C}oherent {Q}uantum {S}ystems.
\newblock {\em Phys. Rev. Lett.}, 118:070601, 2017.

\bibitem{Kammerlander18}
P.~Kammerlander and R.~Renner.
\newblock The zeroth law of thermodynamics is redundant, 2018.
\newblock arxiv:1804.09726.

\bibitem{Perarnau15}
M.~Perarnau-Llobet, K.~Hovhannisyan, M.~Huber, P.~Skrzypczyk, N.~Brunner, and
  A.~Ac\'in.
\newblock {E}xtractable {W}ork from {C}orrelations.
\newblock {\em Phys. Rev. X}, 5:041011, 2015.

\bibitem{Krivachy17}
T.~Kriv\'achy.
\newblock When are two system isomorphic in thermodynamics?
\newblock Master's thesis, ETH Zurich, March 2017.

\bibitem{Rudin76}
W.~Rudin.
\newblock {\em Principals of Mathematical Analysis}.
\newblock McGraw-Hill, Inc., 1976.

\bibitem{script19}
R.~Renner.
\newblock Theorie der {W}\"arme.
\newblock Lecture Notes, 2019.

\bibitem{Turner61}
L.A. Turner.
\newblock Zeroth {L}aw of {T}hermodynamics.
\newblock {\em American Journal of Physics}, 29:71, 1961.

\bibitem{Turner63}
L.A. Turner.
\newblock Temperature and {C}arath\'eodory's {T}reatment of {T}hermodynamics.
\newblock {\em The Journal of Chemical Physics}, 38:1163, 1963.

\bibitem{Ehrlich81}
P.~Ehrlich.
\newblock The concept of temperature and its dependence on the laws of
  thermodynamics.
\newblock {\em American Journal of Physics}, 49:622, 1981.

\bibitem{Buchdahl86}
H.A. Buchdahl.
\newblock On the redundancy of the zeroth law of thermodynamics.
\newblock {\em J. Phys. A: Math. Gen}, 19:L561--L564, 1986.

\bibitem{Miller52}
A.R. Miller.
\newblock The {C}oncept of {T}emperature.
\newblock {\em American Journal of Physics}, 20:488, 1952.

\bibitem{Walter89}
J.~Walter.
\newblock On {H} {B}uchdahl's project of a thermodynamics without empirical
  temperature as a primitive concept.
\newblock {\em J. Phys. A: Math. Gen.}, 22:341--342, 1989.

\bibitem{Buchdahl89}
H.A. Buchdahl.
\newblock Reply to comment by {J} {W}alter on `{O}n the redundancy of the
  zeroth law of thermodynamics'.
\newblock {\em J. Phys A: Math. Gen.}, 22:343, 1989.

\bibitem{Turner05}
L.A. Turner.
\newblock Further {R}emarks on the {Z}eroth {L}aw.
\newblock {\em American Journal of Physics}, 30:804, 2005.

\bibitem{Helsdon82}
R.M. Helsdon.
\newblock The zeroth law of thermodynamics.
\newblock {\em Phys. Educ.}, 17:114, 1982.

\bibitem{Clayton82}
D.G. Clayton and R.M. Helsdon.
\newblock Zeroth law of thermodynamics.
\newblock {\em Phys. Educ.}, 17:251, 1982.

\bibitem{Home77}
D.~Home.
\newblock Concept of temperature without the zeroth law.
\newblock {\em American Journal of Physics}, 45:1203, 1977.

\end{thebibliography}

%

\end{document}